\newcolumntype{.}{D{.}{.}{-1}}
\newcolumntype{d}[1]{D{.}{.}{#1}}
\titlespacing*{\section}{0pt}{.5ex}{.5ex}
\titlespacing*{\subsection}{0pt}{.5ex}{.5ex}
\titlespacing*{\subsubsection}{0pt}{.5ex}{.5ex}
\titlespacing*{\paragraph}{0pt}{.5ex}{.5ex}
\pgfplotsset{compat=1.18}
\newtheorem{theorem}{Theorem}[section]
\newtheorem{lemma}[theorem]{Lemma}
\newtheorem{remark}[theorem]{Remark}
\newcommand*\email[1]{\href{mailto:#1}{\nolinkurl{#1}}}
\def\spacingset#1{\renewcommand{\baselinestretch}{#1}\small\normalsize}
\setlist[itemize]{nosep,leftmargin=*}
\setlist[enumerate]{nosep,leftmargin=*}
\begin{document}


\newcommand{\blind}{0}
\newcommand{\tit}{\bf Fast Penalized Generalized Estimating Equations for Large Longitudinal Functional Datasets}

\if0\blind

{\title{\tit}
\author{Gabriel Loewinger$^1$, Alexander W. Levis$^2$, Erjia Cui$^3$, Francisco Pereira$^1$ \\  \\
  $^1$Machine Learning Core\\
  National Institute of Mental Health \\ 3D41, Building 10, Bethesda, MD 20892; \email{gloewinger@gmail.com}\\ \\
      $^2$Department of Biostatistics, Epidemiology and Informatics, \\
  University of Pennsylvania\\
  \\
  $^3$Division of Biostatistics and Health Data Science, \\
    University of Minnesota }

\date{}

\maketitle
}\fi

\if1\blind
\title{\bf \tit}
\maketitle
\fi
  \thispagestyle{empty}

\vspace{-0.8cm} 

\begin{abstract}
Longitudinal binary or count functional data are common in neuroscience, but are often too large to analyze with existing functional regression 
methods.
We propose a one-step 
penalized generalized estimating equations framework that supports {generalized functional outcomes (e.g., count, binary, proportion, continuous-valued)} and
is fast even when datasets have a large number of clusters and large cluster sizes. The method applies to functional and scalar covariates and the
one-step estimation framework enables 
efficient smoothing parameter selection, bootstrapping, and joint confidence interval construction. 
Importantly, this semi-parametric
approach yields coefficient confidence intervals that are provably valid asymptotically even under working correlation misspecification.
By developing a general theory for adaptive one-step M-estimation, we prove that the coefficient estimates are asymptotically normal and as efficient as the fully-iterated estimator; we verify these theoretical properties in simulations.
We illustrate the benefits of our approach for analyzing large-scale neural recordings by applying it to a recent calcium imaging dataset published in \textit{Nature}. We show that our method reveals important timing effects obscured in non-functional analyses. In doing so, we also demonstrate scaling to common neuroscience dataset sizes: the one-step estimator fits to a dataset with 150,000 (binary) functional outcomes, each observed at 120 functional domain points, in only {$\sim6.5$ minutes on a laptop without parallelization. We release our methods in the $\texttt{R}$ package $\texttt{fastFGEE}$, which supports a wide range of link functions and working covariances. } 
\end{abstract}

\noindent {\it Keywords: calcium imaging, functional data analysis, generalized estimating equations, longitudinal data analysis, one-step estimators}

\thispagestyle{empty}

\clearpage
\spacingset{2}
\setcounter{page}{1}
\newpage
\section{Introduction} \label{sec:intro}

Neuroscience studies in animal models provide an invaluable tool to identify the neural mechanisms underpinning 
psychiatric disorders. Researchers can estimate moment-by-moment associations between experimental covariates (e.g. behavior) and the activity of hundreds of neurons per animal, with widely used in vivo brain recording techniques like calcium imaging \citep{calImaging_review} and \textit{Neuropixels} \citep{STEINMETZ201892}.
A neuroscientist might study brain-behavior associations on, for example, a learning task in which an animal learns to press a lever for a food reward. 
These tasks are often performed over hundreds of experimental replicates called ``trials'' (longitudinal observations akin to ``patient visits''). Each trial might be defined as a five second interval starting at extension of the lever and ending with delivery of the food reward. To test whether, for example, mean neural activity is higher on trials when animals press the lever, a common strategy is to 
analyze scalar summaries of each trial's neuronal firing activity. For instance, analysts might calculate a firing rate  of neuron $i$ on trial $j$ by averaging the response, $Y_{i,j}(s)\in \{0,1\}$, across within-trial timepoints indexed by $s$: $\bar{Y}_{i,j} = \frac{1}{|\mathcal{S}|} \sum_{s \in \mathcal{S}} Y_{i,j}(s)$, where $\mathcal{S} \subset [0,5]$ denotes a grid of timepoints at which the outcome is observed. One might then test whether $\mathbb{E}(\bar{Y}_{i,j} \mid X_{i,j}=1) - \mathbb{E}(\bar{Y}_{i,j} \mid X_{i,j}=0) \neq 0$, where $X_{i,j}$ is an indicator that the animal on which neuron $i$ was recorded pressed the lever on trial $j$. 
This discards important temporal information{, however,} by summarizing across timepoints.

Alternatively, the neural response of each five second trial can be conceptualized as a functional outcome, with within-trial timepoints, $s$, representing locations along the functional domain. This allows one to apply functional data analysis 
(FDA) techniques to test how brain–behavior relationships evolve within and across trials \citep{loewinger2025} as both the responses and the covariates (e.g., behavior) {can be modeled as functional variables}. 
%
The size and complexity of our dataset, however, require specialized FDA methods. First, analyses must account for the longitudinal {correlation}, since each neuron's activity is collected across many trials.
Second, the large number of clusters ({500} neurons), large cluster sizes ({300 trials/neuron), and densely sampled functional outcomes ($|\mathcal{S}| = 120$ timepoints/trial) render} many longitudinal methods for {generalized} functional outcomes computationally impractical. {
Our semiparametric approach models correlation in functional and longitudinal directions to improve statistical efficiency, and adopts a variance estimator robust to covariance misspecification to ensure valid inference.}

To conduct inference in longitudinal FDA with large datasets, we propose a one-step estimator for functional generalized estimating equations (fGEE). Procedurally, we first fit a function-on-scalar regression with a working independence correlation structure to obtain a consistent but potentially inefficient initial estimate of the functional coefficients. We then 
update the initial estimate 
with one Newton-Raphson update step, derived from an estimating equation that models intra-cluster correlation. This approach can scale to large datasets and has desirable statistical properties. The initial estimate can be formed quickly because it ignores correlation; using only `one step' in the update is fast because it greatly reduces the number of times potentially large working covariance matrices are inverted. Importantly, our approach still captures much of the statistical efficiency afforded by modeling intra-cluster correlation in longitudinal and/or functional directions with a fully-iterated fGEE. 
In fact, we prove the one-step fGEE is asymptotically as efficient as the fully-iterated version.

Our implementation supports functional data observed on regular, irregular, dense and sparse grids with functional and scalar covariates (see Appendix~\ref{app:comp}).
We propose fast strategies for smoothing parameter tuning, cluster bootstrapping, and joint confidence interval construction. Our general theory for adaptive one-step M-estimation may be of independent interest. 
{Our methods build on the rich literature for conditional and marginal methods for longitudinal function-on-scalar regression \citep{zhu2019fmem, pffr, annurev_fda, morris2006wavelet, ramsay2005functional, guo2002functional}.}
%
Functional mixed models are a versatile conditional strategy for longitudinal FDA. For non-Gaussian functional outcomes, many existing approaches do not scale well to large cluster sizes or cluster numbers. {For example, Functional Additive Mixed Models, implemented in $\texttt{refund::pffr()}$ \citep{pffr}, can become computationally intensive when the number of clusters is in the low hundreds, even when specifying random functional intercept-only models: each participant adds $k_b$ columns to the design matrix, where $k_b$ is the number of spline bases used for the random intercept \citep{cui2022}.}
\cite{cui2022, loewinger2025} proposed a fast functional mixed models approach based on univariate mixed models fits at each functional domain point. For non-Gaussian outcomes this relies, however, on a cluster bootstrap for inference, which 
can be slow for large datasets.
Moreover, for non-Gaussian outcomes these approaches yield coefficient estimates that are only interpretable as conditional on the random effects. In many applications, estimates with marginal interpretations are desirable.

Functional GEE and Quadratic Inference Functions (QIF) are marginal methods for longitudinal function-on-scalar regression. \cite{fQIF} and \cite{fQIF2025} proposed QIF-based methods applicable to FDA, but, to the best of our understanding, these works focused on a single observation of a functional outcome per subject. \cite{gee_functional} proposed a penalized GEE for longitudinal FDA that serves as part of the inspiration for our work.
The method, however, requires inverting an $n_i L \times n_i L$ matrix at each step of model fitting, where $n_i$ is the size of cluster $i$, and $L$ is the number of points in the functional domain. \cite{li_2022} proposed a marginal estimator for continuous data, but it has not been extended to {generalized} outcomes. Taken together, marginal approaches for longitudinal functional regression with {generalized} outcomes do not scale well, limiting their widespread adoption.

{Our work builds on existing one-step approaches to scale GEE to large problem sizes. 
\cite{bootstrap_consistency} establish that a one-step GEE cluster bootstrap yields asymptotically equivalent inference as a fully-iterated GEE cluster bootstrap (see Section~\ref{sec:onestep}). This provides a basis for the computational steps we use for cross-validation and confidence interval construction. \cite{gee_onestep} use a non-adaptive, unpenalized one-step estimator to scale non-functional GEE to large cluster sizes when adopting an exchangeable working correlation. By contrast, our theory and methods are for adaptive penalized one-step M-estimation and can be used for functional data with many working correlations. 
One-step estimators are also used to enable inference for GEE in high dimensional, sparse settings.
For example, \cite{shojaie_2022} propose a one-step estimator using a projected estimating equation to enable inference of linear functionals of high dimensional GEE coefficients in 
non-functional data settings. In contrast, we focus on functional data with low-dimensional covariates where we do not encourage sparsity.} 

{Our paper is organized as follows. We 
present our methodology} in Section~\ref{sec:method}, provide theoretical results in Section~\ref{sec:theory}, simulations in Section~\ref{sec:simulations}, and a data application in Section~\ref{sec:application}.

\section{Methods} \label{sec:method}
We begin by introducing notation, adopting that used in \cite{li_2022} where possible. We suppose that we observe the functional outcome $Y_{i,j}(s)$ at point $s$ for cluster $i \in [N] \coloneqq \{1,\ldots,N\}$, at longitudinal observation (e.g. trial or visit) $j \in [n_i]$. 
 We express grids as regular (i.e. $n_i = n_i(s)~~ \forall~s \in \{s_1,\ldots,s_L\}$) and evenly spaced for ease of notation, but our methods also apply to irregular and unevenly spaced grids.
 We denote $\boldsymbol{Y}_i(s) \in \mathbb{R}^{n_i}$ as the functional outcome vector at point $s$ for cluster $i$, concatenating all observations $Y_{i,j}(s)$ for $j \in [n_i]$, and write $\boldsymbol{Y}_i = [\boldsymbol{Y}_i(s_1)^T,\ldots,\boldsymbol{Y}_i(s_L)^T]^T \in \mathbb{R}^{n_i L}$. We denote covariate vector $\mathbf{X}_{i,j} \in \mathbb{R}^{q}$ for cluster $i$ on observation $j$, and $\mathbf{X}_i = [\mathbf{X}_{i,1},\ldots, \mathbf{X}_{i,n_i}]^T \in \mathbb{R}^{n_i \times q}$. We write covariates as scalar for ease of notation, though our method and theory applies to functional covariates.

\subsection{Functional Generalized Estimating Equations}
 We consider the marginal function-on-scalar regression with link function $g$,
 \begin{equation}
 g\{\mathbb{E}(Y_{i,j}(s) \mid \mathbf{X}_{i,j})\} = \eta_{i,j}(s),~~  \eta_{i,j}(s) = \beta_0(s) + \sum_{r=1}^q X_{i,j,r} \beta_r(s) , ~~ \text{Cov} \left ( \boldsymbol{Y}_i \mid \mathbf{X}_{i} \right ) = \mathbb{V}_i^*
 \end{equation}
  where $\beta_r(\cdot)$ is a (smooth) coefficient function for covariate $r \in [q]$. We let $\mu_{i,j}(s) = g^{-1}(\eta_{i,j}(s))$ denote the {(assumed)} mean function{; $\mathbb{V}_i^*$ represents the true within-cluster covariance, which is arbitrary and unknown}.
 We now discuss estimation of $\mu_{i,j}(s)$ with spline basis expansions of the $\beta_r(\cdot)$, although our methods can be used for other basis functions. For example, denoting $\mathbf{B}(s) = [B_1(s), \ldots, B_m(s)]^T \in \mathbb{R}^m$ as a set of $m$ B-spline basis functions, we can represent the functional coefficients $\beta_r(s) = \sum_{d=1}^m \theta_{r,d} B_d(s)$. {In practice, the dimension $m$ can differ across $r$.} We denote $\boldsymbol{\theta}_r = [\theta_{r,1}, \ldots, \theta_{r,m}]^T \in \mathbb{R}^m$ as an unknown parameter vector associated with covariate $r$, $\mathbf{B} = [\mathbf{B}(s_1), \ldots, \mathbf{B}(s_L)]^T \in \mathbb{R}^{L \times m}$, and the linear predictor for a full observation of the functional outcome as $\boldsymbol{\eta}_{i,j} = [\eta_{i,j}(s_1), 
\ldots, \eta_{i,j}(s_L)]^T = \mathbf{B}\boldsymbol{\theta}_0 + \sum_{r=1}^q X_{i,j,r} \mathbf{B}\boldsymbol{\theta}_r$. We further define $\mathbb{X}_{i,j} = [\mathbf{B} ,~X_{i,j,1}\mathbf{B}, \ldots, X_{i,j,q}\mathbf{B}] \in ~\mathbb{R}^{L \times p}$, where $p = m (q+1)$. 
We then have that $\boldsymbol{\eta}_{i,j}  =  \mathbb{X}_{i,j} \boldsymbol{\theta}$, where $\boldsymbol{\theta} = [\boldsymbol{\theta}_0^T,~\boldsymbol{\theta}_1^T,~\ldots,~\boldsymbol{\theta}_q^T]^T \in \mathbb{R}^p$. Thus, we can estimate
the functional coefficient vector, $\boldsymbol{\beta}_r= [\beta_r(s_1),~\ldots,~\beta_r(s_L)]^T  \in \mathbb{R}^L$, by estimating $\boldsymbol{\theta}$ and calculating $\widehat{\boldsymbol{\beta}}_r = \mathbf{B} \widehat{\boldsymbol{\theta}}_r$.  
 
 We semi-parametrically estimate the $\boldsymbol{\theta}$ with the penalized spline-based fGEE proposed in \cite{gee_functional}. This assumes no likelihood and, if $\mu_{i,j}(s)$ is correctly specified, yields valid inference for $\{\boldsymbol{\beta}_r(s)\}_{s,r}$ even if $\text{Cov}(\boldsymbol{Y}_i \mid \mathbf{X}_i)$ is misspecified. Specifically, the mean model parameters $\boldsymbol{\theta}$ are estimated as the root of the penalized estimating equation 
\begin{align} \label{eq:gee} 
\sum_{i=1}^N \boldsymbol{U}_{\Lambda}(\mathbf{X}_i, \boldsymbol{Y}_i; {\boldsymbol{\theta}}_{\Lambda}) \coloneqq  
\sum_{i=1}^N  \mathbb{D}_i^T \mathbb{V}_i^{-1} \left( \boldsymbol{Y}_i - \boldsymbol{\mu}_i \right) - \Lambda \mathbb{S} \boldsymbol{\theta}_{\Lambda},
\end{align}
where $\mathbb{V}_i\in \mathbb{R}^{n_i L \times n_i L}$ is the working covariance matrix for cluster $i$ ({whose true covariance matrix is $\mathbb{V}_i^*$}), $\mathbb{D}_i = \frac{\partial \boldsymbol{\mu_i}(\boldsymbol{\theta})}{\partial \boldsymbol{\theta}}$, $\mathbb{X}_{i} = [\mathbb{X}_{i,1}^T, \ldots, \mathbb{X}_{i,n_i}^T]^T \in \mathbb{R}^{n_iL \times p}$, and $\boldsymbol{\mu}_{i} = [\boldsymbol{\mu}_{i,1}^T, 
\ldots, \boldsymbol{\mu}_{i,n_i}^T]^T \in \mathbb{R}^{n_iL}$.
The pre-specified penalty matrix, $\mathbb{S} \in \mathbb{R}^{p \times p}$, is associated with the diagonal matrix of smoothing parameters $\Lambda \in \mathbb{R}^{p \times p}$ {(see below for details).}
Although no likelihood is adopted, the estimating equation~\eqref{eq:gee} can be derived from the score equations from, for example, an exponential dispersion family \citep{liang1986longitudinal}; we add the penalty term for improved estimation in finite samples. Compared to a working independence matrix {(e.g., $\mathbb{V}_i = \sigma^2 I_{n_i L}$)}, estimation of $\boldsymbol{\theta}$ can be made more efficient and accurate by exploiting correlation, in functional and longitudinal directions, by choosing the working $\mathbb{V}_i$ to estimate $\mathbb{V}_i^*$. Although such choices for $\mathbb{V}_i$ in this fGEE model yield desirable statistical properties for longitudinal FDA, estimation is computationally intensive: estimating $\boldsymbol{\theta}$ based on equation~\eqref{eq:gee} requires inversion of the $n_i L \times n_i L$ covariance matrix $\mathbb{V}_i$ for each cluster $i$, at each step in an optimization procedure. 
\subsection{One-step fGEE} \label{sec:onestep}
To scale fGEE to large datasets, we propose a one-step estimator of the form $\widehat{\boldsymbol{\theta}}_{ \Lambda_1}^{(1)} = \widehat{\boldsymbol{\theta}}_{ \Lambda_0}^{(0)} -  \left \{ \widehat{\mathbb{E}} \left ( \nabla_{\boldsymbol{\theta}} \boldsymbol{U}_{\Lambda_1}(\mathbf{X}_i, \boldsymbol{Y}_i; \widehat{\boldsymbol{\theta}}_{ \Lambda_0}^{(0)}) \right ) \right \}^{-1} \frac{1}{N} \sum_{i=1}^N\boldsymbol{U}_{\Lambda_1}\left (\mathbf{X}_i, \boldsymbol{Y}_i; \widehat{\boldsymbol{\theta}}_{ \Lambda_0}^{(0)} \right)$, 
 where 
$\widehat{\boldsymbol{\theta}}_{ \Lambda_0}^{(0)}$ is an initial estimate fit with smoothing parameters $\Lambda_0$. 
{We denote $\mathbb{D}_i(\widehat{\boldsymbol{\theta}}^{(0)}_{\Lambda_0}) = \frac{\partial \boldsymbol{\mu_i}(\boldsymbol{\theta})}{\partial \boldsymbol{\theta}} \big |_{\boldsymbol{\theta} = \widehat{\boldsymbol{\theta}}^{(0)}_{\Lambda_0}}$, $\widehat{\mathbb{V}}_i(\widehat{\boldsymbol{\theta}}_{ \Lambda_0}^{(0)})$ as the working covariance estimated with $\widehat{\boldsymbol{\theta}}_{ \Lambda_0}^{(0)}$, and $\widehat{\boldsymbol{\mu}}_i(\widehat{\boldsymbol{\theta}}_{ \Lambda_0}^{(0)}) = g^{-1}(\mathbb{X}_i \widehat{\boldsymbol{\theta}}_{ \Lambda_0}^{(0)})$ with $g^{-1}$ applied component-wise.
Our proposed estimator is formulated as follows}
\begin{align} \label{emp_IF}
\widehat{\boldsymbol{\theta}}_{ \Lambda_1}^{(1)} = \widehat{\boldsymbol{\theta}}_{ \Lambda_0}^{(0)} + \left [  \frac{1}{N} \sum_{i=1}^N \{\mathbb{D}_i(\widehat{\boldsymbol{\theta}}^{(0)}_{\Lambda_0})\}^T \{\widehat{\mathbb{V}}_i(\widehat{\boldsymbol{\theta}}_{ \Lambda_0}^{(0)})\}^{-1} \{\mathbb{D}_i(\widehat{\boldsymbol{\theta}}^{(0)}_{\Lambda_0})\}  + \Lambda_1 \mathbb{S}  \right ]^{-1} \times ~~~~~~~~~~\notag \\ \frac{1}{N} \sum_{i=1}^N\left [ \{\mathbb{D}_i(\widehat{\boldsymbol{\theta}}^{(0)}_{\Lambda_0})\}^T \{\widehat{\mathbb{V}}_i(\widehat{\boldsymbol{\theta}}_{ \Lambda_0}^{(0)})\}^{-1} \left \{ \boldsymbol{Y}_i - \widehat{\boldsymbol{\mu}}_i(\widehat{\boldsymbol{\theta}}^{(0)}_{\Lambda_0}) \right \}  - \Lambda_1 \mathbb{S} \widehat{\boldsymbol{\theta}}_{ \Lambda_0}^{(0)} \right ],
\end{align}
%
where the component of the gradient $\nabla_{\boldsymbol{\theta}} \boldsymbol{U}_{\Lambda_1}(\mathbf{X}_i, \boldsymbol{Y}_i; \widehat{\boldsymbol{\theta}}_{ \Lambda_0}^{(0)})$ corresponding to the derivative of $\mathbb{D}_i$ is evaluated at the population $\boldsymbol{\theta}$ and can be ignored in expectation.
Any consistent estimator for $\boldsymbol{\theta}$ can be used for $\widehat{\boldsymbol{\theta}}_{ \Lambda_0}^{(0)}$; 
{we use a parametric function-on-scalar regression that ignores within-cluster correlation of outcome values across $s$ and $j$, and uses the same penalty and spline bases as the fGEE (similar to solving~\eqref{eq:gee} with a working independence correlation).} 
In practice, we estimate $\widehat{\boldsymbol{\theta}}_{ \Lambda_0}^{(0)}$ with the $\texttt{refund::pffr()}$ function \citep{pffr} 
(see Appendix~\ref{app:pffr} for details). We formalize the necessary consistency properties of $\widehat{\boldsymbol{\theta}}_{ \Lambda_0}^{(0)}$ for the population parameter, $\boldsymbol{\theta}$ in Section~\ref{sec:theory}.

The one-step can be conceptualized as a de-biasing of, or equivalently a single Newton-Raphson step from, the initial estimate $\widehat{\boldsymbol{\theta}}_{ \Lambda_0}^{(0)}$. It is much faster than the fully-iterated fGEE, because it requires inversion of a working covariance matrix only twice per cluster: 1) $\{\widehat{\mathbb{V}}_i(\widehat{\boldsymbol{\theta}}_{ \Lambda_0}^{(0)})\}^{-1}$ to estimate $\widehat{\boldsymbol{\theta}}_{ \Lambda_1}^{(1)}$, and 2) $\{\widehat{\mathbb{V}}_i(\widehat{\boldsymbol{\theta}}_{ \Lambda_1}^{(1)})\}^{-1}$ to estimate $\widehat{\text{Var}} \left( \widehat{\boldsymbol{\theta}}_{\Lambda_1}^{(1)} \right)$ 
(see Section~\ref{sec:sandwich}). {We summarize the estimation steps in Algorithm~\ref{app:fgee_algo} in Appendix~\ref{app:oneStep}.}

\paragraph{Working Correlations}
Although fGEE yields valid inference regardless of the $\mathbb{V}_i$ structure used (under correct mean model specification), the statistical and computational efficiency of fGEE depends heavily on the $\mathbb{V}_i$ form adopted. 
Our one-step estimator is often far faster than the fully-iterated fGEE in \cite{gee_functional} but, if $n_i L$ is large, it still may not scale. Thus, while our theory applies to a one-step with general $\mathbb{V}_i$, we focus on
forms that can be inverted quickly {and improve
statistical efficiency. To see this, we first rewrite $\mathbb{V}_i = \mathbf{A}_i^{1/2} \mathbf{R}_i\mathbf{A}_i^{1/2}$, where
$\mathbf{A}_i = \text{diag}\left(v_{i,1}(s_1), \ldots, v_{i,n_i}(s_1), \ldots v_{i,n_i}(s_L)\right )\in \mathbb{R}^{n_iL \times n_iL}$, $v_{i,j}(s)$ models $\text{Var}(Y_{i,j}(s) \mid \mathbf{X}_{i,j})$, and the elements of $\mathbf{R}_{i} \in \mathbb{R}^{n_iL \times n_iL}$ model $\text{Cor}(Y_{i,j}(s), Y_{i,j'}(s')\mid \mathbf{X}_{i,j})$ for $j,j' \in [n_i], s,s' \in \mathcal{S}$.
%

{For scalability, we focus on two classes of working covariance structures: block diagonal and Kronecker product. Block diagonal structures model correlation in \textit{either} the longitudinal direction $\mathbb{V}_i = \text{bdiag}[\mathbb{V}_i(s_1),\ldots,\mathbb{V}_i(s_L)]$, where $\mathbb{V}_i(s)$ models $\text{Cov}(\boldsymbol{Y}_i(s) \mid \mathbf{X}_i)\in \mathbb{R}^{n_i \times n_i}$, or the functional direction $\mathbb{V}_i = \text{bdiag}[\mathbb{V}_{i,1},\ldots,\mathbb{V}_{i,n_i}]$, where $\mathbb{V}_{i,j}$ models $\text{Cov}(\boldsymbol{Y}_{i,j} \mid \mathbf{X}_i)\in \mathbb{R}^{L \times L}$. Each $\mathbb{V}_i(s_l)$ or $\mathbb{V}_{i,j}$ can be parameterized by a unique $\rho_i(s)$ or $\rho_{i,j}$, respectively. If functional outcomes are observed on the same grid for each longitudinal measurement of a cluster, a Kronecker product-based $\mathbb{V}_i$ can model correlation in \textit{both} directions: $\mathbb{V}_i =\mathbf{A}_i^{1/2} (\mathbf{R}_{i,Lon} \otimes \mathbf{R}_{i,Fun}) \mathbf{A}_i^{1/2}$, where $\mathbf{R}_{i,Lon}$ models $\text{Cor}(\boldsymbol{Y}_i(s) \mid \mathbf{X}_i) \in \mathbb{R}^{n_i \times n_i}$ for all $s \in \mathcal{S}$, and $\mathbf{R}_{i,Fun}$ models $\text{Cor}(\boldsymbol{Y}_{i,j} \mid \mathbf{X}_i)\in \mathbb{R}^{L \times L}$ for all $j \in [n_i]$; we suppress cluster indices $i$ in $\mathbf{R}_{Lon}$ and $\mathbf{R}_{Fun}$, when possible, for ease of notation. The $\mathbf{R}_{Lon}$ and $\mathbf{R}_{Fun}$ are parameterized by $\rho_{Lon}$ and $\rho_{Fun}$ correlation parameters.

These block and Kronecker structures are computationally efficient to invert. First, they avoid inversion of the full $\mathbb{V}_i$ by reducing it to inversions of smaller blocks (block diagonal) or smaller Kronecker factors (Kronecker). For example, if we adopt $\mathbb{V}_i = \text{bdiag}[\mathbb{V}_i(s_1),\ldots,\mathbb{V}_i(s_L)]$, then $\mathbb{V}_i^{-1} = \text{bdiag}[\mathbb{V}^{-1}_i(s_1),\ldots,\mathbb{V}^{-1}_i(s_L)]$. If we adopt $\mathbb{V}_i = \mathbf{A}_i^{1/2} ({\mathbf{R}}_{Lon} \otimes \mathbf{R}_{Fun}) \mathbf{A}_i^{1/2}$, then $\mathbb{V}^{-1}_i = \mathbf{A}_i^{-1/2} ({\mathbf{R}}_{Lon}^{-1} \otimes \mathbf{R}_{Fun}^{-1}) \mathbf{A}_i^{-1/2}$. This can be made more computationally efficient by using Kronecker product identities. For example, denoting
$(\widehat{\mathbf{R}}^{-1}_{Lon} \otimes \mathbf{R}^{-1}_{Fun}) \widehat{\boldsymbol{r}}_i \coloneqq (\widehat{\mathbf{R}}^{-1}_{Lon} \otimes \mathbf{R}^{-1}_{Fun}) \widehat{\mathbf{A}}_i^{-1/2}\left \{ \boldsymbol{Y}_i - \widehat{\boldsymbol{\mu}}_i \right \}$, we calculate $\widehat{\mathbb{V}}_{i}^{-1} \widehat{\boldsymbol{r}}_i$ in \eqref{emp_IF}, 
with the identity $(\widehat{\mathbf{R}}_{Lon}^{-1} \otimes \widehat{\mathbf{R}}_{Fun}^{-1}) \widehat{\boldsymbol{r}}_i = \text{vec}(\widehat{\mathbf{R}}^{-1}_{Lon} \text{mat}(\widehat{\boldsymbol{r}}_i) \widehat{\mathbf{R}}^{-1}_{Fun} )$, where $\text{mat}(\widehat{\boldsymbol{r}}_i) \in \mathbb{R}^{n_i \times L}$ denotes forming $\widehat{\boldsymbol{r}}_i$ into a matrix. Writing this as $\text{vec}(\{ \widehat{\mathbf{R}}_{Fun}^{-1} \{ \widehat{\mathbf{R}}^{-1}_{Lon} \text{mat}(\widehat{\boldsymbol{r}}_i) \}^T \}^T)$ emphasizes how one avoids forming the $n_iL \times n_i L$ matrix $\widehat{\mathbf{R}}_i^{-1}$ by calculating $\tilde{\boldsymbol{r}}_i=\mathbf{R}^{-1}_{Lon} \text{mat}(\widehat{\boldsymbol{r}}_i)$ and then $\widehat{\mathbf{R}}_{Fun}^{-1}\tilde{\boldsymbol{r}}_i$. 
Thus, if $\widehat{\mathbf{R}}_{Fun}$ and $\widehat{\mathbf{R}}_{Lon}$ can be inverted efficiently, $\mathbb{V}_i$ can be inverted efficiently for large $L$ and $n_i$. In fact, if one adopts a working correlation such that the sub-matrices (e.g., ${\mathbf{R}}_{Fun}$, ${\mathbf{R}}_{Lon}$, ${\mathbf{R}}_i(s)$) have Toeplitz structure (e.g., exchangeable, AR1 with $\rho \geq 0$, moving average MA(q), Matern/RBF kernel Gaussian Process covariances), then one can calculate, for example, 
$\widehat{\mathbb{V}}_i^{-1}\widehat{\boldsymbol{r}}_i$ without having to ever construct (or store in memory) the sub-matrices $\widehat{\mathbf{R}}_{Fun}$, ${\mathbf{R}}_{Lon}$, $\mathbf{R}_i(s)$ (or their inverses); the corresponding systems of linear equations can be efficiently solved for large $L$ and $n_i$ with, for example, the generalized Schur algorithm \citep{gschur}
(discussed in Appendices~\ref{app:corr_inv}-\ref{app:corr_param}). Similarly, our implementation also allows for FPCA-based estimation of ${\mathbf{R}}_{Fun}$ which can be inverted efficiently (see Appendix~\ref{sec:fpca}). }} 
\paragraph{Tuning $\Lambda$}
To calculate an initial estimate of $\widehat{\boldsymbol{\theta}}^{(0)}_{\Lambda_0}$, we select the smoothing parameters, denoted as $\Lambda_0$, with fast restricted maximum likelihood \citep{wood2011fast}. 
We found, however, that calculating the one-step estimate with the same $\Lambda_0$ values (i.e. $\boldsymbol{\theta}^{(1)}_{\Lambda_0}$) tends to produce inaccurate coefficient estimates {as the inclusion of $\mathbb{V}_i$ in the estimation equation changes the relative smoothing parameter magnitudes}.
Therefore, we tune the smoothing parameters for the one-step, denoted as $\Lambda_1$, based on the cross-validated prediction performance of the one-step estimator. We develop a fast {cluster} CV for large datasets, by exploiting the fact that the form of the one-step estimator enables calculation of fold-specific estimates with pre-computable terms (see Appendix~\ref{app:cv} for further details {and a summary of the steps as an algorithm}):
rewriting the update as
    $\widehat{\boldsymbol{\theta}}_{ \Lambda_1}^{(1)}~=~\widehat{\boldsymbol{\theta}}_{ \Lambda_0}^{(0)}~+~\left \{  \frac{1}{N} \sum_{i=1}^N  \mathbb{W}_i (\widehat{\boldsymbol{\theta}}^{(0)}_{\Lambda_0}) + \Lambda_1 \mathbb{S}  \right \}^{-1} \times \frac{1}{N} \sum_{i=1}^N \left \{ \mathbf{b}_i(\widehat{\boldsymbol{\theta}}^{(0)}_{\Lambda_0})  - \Lambda_1 \mathbb{S} \widehat{\boldsymbol{\theta}}^{(0)}_{\Lambda_0} \right \}$, 
illustrates that we can pre-compute each cluster's 
$\mathbb{W}_i(\widehat{\boldsymbol{\theta}}^{(0)}_{\Lambda_0}) = \{\mathbb{D}_i(\widehat{\boldsymbol{\theta}}^{(0)}_{\Lambda_0})\}^T \{\widehat{\mathbb{V}}_i(\widehat{\boldsymbol{\theta}}_{ \Lambda_0}^{(0)})\}^{-1} \{\mathbb{D}_i(\widehat{\boldsymbol{\theta}}^{(0)}_{\Lambda_0})\} \in \mathbb{R}^{p \times p}$, and $\mathbf{b}_i(\widehat{\boldsymbol{\theta}}^{(0)}_{\Lambda_0}) = \{\mathbb{D}_i(\widehat{\boldsymbol{\theta}}^{(0)}_{\Lambda_0})\} ^T \{\widehat{\mathbb{V}}_i(\widehat{\boldsymbol{\theta}}_{ \Lambda_0}^{(0)})\}^{-1} \left \{ \boldsymbol{Y}_i - \widehat{\boldsymbol{\mu}}_i(\widehat{\boldsymbol{\theta}}^{(0)}_{\Lambda_0}) \right \} \in \mathbb{R}^{p}$. Moreover, we only need to estimate $\widehat{\boldsymbol{\theta}}^{(0)}_{\Lambda_0}$ once. We can then use that $\widehat{\boldsymbol{\theta}}^{(0)}_{\Lambda_0}$, calculated on the full sample, as the initial estimate for all folds and $\Lambda_1$ values. This is because any consistent initial estimate, $\widehat{\boldsymbol{\theta}}^{(0)}_{\Lambda_0}$, is sufficient to ensure that the one-step estimator of a given fold is consistent for the population $\boldsymbol{\theta}$. 
Finally, assuming $\frac{1}{N} \sum_{i=1}^N  \mathbb{W}_i (\widehat{\boldsymbol{\theta}}^{(0)}_{\Lambda_0}) + \Lambda_1 \mathbb{S} \overset{\mathbb{P}}{\to}  \mathbb{E} \left \{ \nabla_{\boldsymbol{\theta}} \boldsymbol{U}_{\Lambda_1}(\mathbf{X}_i, \boldsymbol{Y}_i; {\boldsymbol{\theta}}_{\Lambda_0})  \right \}$,
we can (heuristically, by Slutsky's theorem) calculate consistent one-step estimates in fold $k$ as
\begin{align}
\widehat{\boldsymbol{\theta}}^{k}_{{\Lambda_1}} = \widehat{\boldsymbol{\theta}}^{(0)}_{{\Lambda_0}} + \left \{  \frac{1}{N} \sum_{i=1}^N  \mathbb{W}_i (\widehat{\boldsymbol{\theta}}^{(0)}_{\Lambda_0}) + \Lambda_1 \mathbb{S} \right \}^{-1} \frac{1}{N} \sum_{i \not\in \mathcal{K}_k} \left \{ \tilde{n}_k \mathbf{b}_i(\widehat{\boldsymbol{\theta}}^{(0)}_{\Lambda_0})  - \Lambda_1\mathbb{S} \widehat{\boldsymbol{\theta}}^{(0)}_{\Lambda_0} \right \},
\end{align}
where $\tilde{n}_k = \frac{\sum_{i=1}^N n_i}{\sum_{i \not\in \mathcal{K}_k}n_i}$ and {$\mathcal{K}_k \subset [N]$} denotes the set of cluster index sets for fold $k$. By using the full sample estimate $\left \{  \frac{1}{N} \sum_{i=1}^N  \mathbb{W}_i (\widehat{\boldsymbol{\theta}}^{(0)}_{\Lambda_0}) + \Lambda_1 \mathbb{S}  \right \}^{-1}$, we only need to invert this $p \times p$ matrix once for each value of $\Lambda_1$, instead of inverting a fold-specific $p \times p$ matrix for each unique $\{k, \Lambda_1\}$ pair. The strategy of keeping $\widehat{\boldsymbol{\theta}}^{(0)}_{{\Lambda_0}}$ and $\left \{  \frac{1}{N} \sum_{i=1}^N  \mathbb{W}_i (\widehat{\boldsymbol{\theta}}^{(0)}_{\Lambda_0}) + \Lambda_1 \mathbb{S}  \right \}^{-1}$ fixed across folds is motivated by an analogous strategy for cluster bootstrapping of unpenalized one-step GEE (see Remark and Theorem 3.3 in \cite{bootstrap_consistency}). Specifically, \cite{bootstrap_consistency} showed that a cluster bootstrap that fixes these two quantities (at the full-sample estimates) across replicates enjoys the same theoretical guarantees asymptotically as an approach that re-estimates these quantities in each replicate-specific sample. In our simulations, our adaptation of this strategy for cluster CV was often dramatically faster than, and performed nearly identically to, a CV strategy that calculates $\widehat{\boldsymbol{\theta}}^{k}_{{\Lambda_1}}$ using the fold-specific estimate 
$\left \{  \frac{1}{N-|\mathcal{K}_k|} \sum_{i \not\in \mathcal{K}_k}  \mathbb{W}_i (\widehat{\boldsymbol{\theta}}^{(0)}_{\Lambda_0}) + \left(\frac{\sum_{i \not\in \mathcal{K}_k}n_i}{\sum_{i=1}^N n_i} \right) \Lambda_1 \mathbb{S} \right \}^{-1}$. 

\paragraph{Uncertainty Quantification} \label{sec:sandwich}
The one-step variance can be estimated with sandwich estimator $\widehat{\boldsymbol{V}}_{\Lambda}(\boldsymbol{\theta})= \frac{1}{N}\{\widehat{\boldsymbol{H}}_\Lambda(\boldsymbol{\theta})\}^{-1} \widehat{\boldsymbol{M}}_\Lambda (\boldsymbol{\theta})\{\widehat{\boldsymbol{H}}_\Lambda(\boldsymbol{\theta})\}^{-1}$,
where $\widehat{\boldsymbol{M}}_\Lambda(\boldsymbol{\theta}) = \frac{1}{N}\sum_{i = 1}^N\{ \boldsymbol{U}_\Lambda(\mathbf{X}_i, \boldsymbol{Y}_i; \boldsymbol{\theta}) \boldsymbol{U}_\Lambda(\mathbf{X}_i, \boldsymbol{Y}_i; \boldsymbol{\theta}) ^T \}$, 
and $\widehat{\boldsymbol{H}}_\Lambda(\boldsymbol{\theta}) = \frac{1}{N}\sum_{i = 1}^N\{\mathbb{D}_i(\boldsymbol{\theta})^T \widehat{\mathbb{V}}_i(\boldsymbol{\theta})^{-1} \mathbb{D}_i(\boldsymbol{\theta})\} + \Lambda \mathbb{S}$.
We estimate $\widehat{\text{Var}}(\widehat{\boldsymbol{\theta}}_{\Lambda_1}^{(1)})$ via $\widehat{\boldsymbol{V}}_{\Lambda_1}(\widehat{\boldsymbol{\theta}}_{\Lambda_1}^{(1)})$, i.e., by plugging in $\widehat{\boldsymbol{\theta}}^{(1)}_{\Lambda_1}$, $\widehat{\mathbb{V}}_i(\widehat{\boldsymbol{\theta}}^{(1)}_{\Lambda_1})$, $\widehat{\mathbf{A}}_i(\widehat{\boldsymbol{\theta}}^{(1)}_{\Lambda_1})$, and $\Lambda_1$. 
For fixed basis $\mathbf{B}$,
$\widehat{\text{Var}}(\widehat{\boldsymbol{\beta}}) =  \text{bdiag}(\widehat{\Sigma}^{(\beta)}_1, \ldots, \widehat{\Sigma}^{(\beta)}_q ) = \mathbb{B} {\widehat{\text{Var}}} \left( \widehat{\boldsymbol{\theta}}_{\Lambda_1}^{(1)} \right) \mathbb{B} ^T$, where $\mathbb{B} = (I_{q+1} \otimes \mathbf{B})$ is a block diagonal matrix, with $\mathbf{B}$ in each block. An asymptotically valid ($1-\alpha)$-level \textit{pointwise} CI 
is given by $\widehat{\beta}_r(s) \pm z_{1-\alpha/2} ~\hat{\sigma}_r^{(\beta)}(s)$, where $\hat{\sigma}_r^{(\beta)}(s)=\sqrt{\widehat{\Sigma}_{r}^{(\beta)}(s)}$, and $\widehat{\Sigma}_{r}^{(\beta)}(s) \in \mathbb{R}$ is diagonal entry $s$ of $\widehat{\Sigma}^{(\beta)}_r$. {We found, however, that smoothing and small-sample bias sometimes led to undercoverage when constructing CIs with Gaussian quantiles, $z_{1-\alpha/2}$. We therefore constructed pointwise and joint CIs as $\widehat{\beta}_r(s) \pm {c}^{pt}_{r} ~\hat{\sigma}_r^{(\beta)}(s)$ and $\widehat{\beta}_r(s) \pm {c}^{jt}_{r} ~\hat{\sigma}_r^{(\beta)}(s)$, respectively, where ${c}^{pt}_{r}= (t_{1-\alpha/2, \text{df}_r} / z_{1-\alpha/2})\tilde{c}^{pt}_{r}$ and ${c}^{jt}_{r}= (t_{1-\alpha/2, \text{df}_r} / z_{1-\alpha/2})\tilde{c}^{jt}_{r}$, $\tilde{c}^{pt}_{r}$ and $\tilde{c}^{jt}_{r}$ are quantiles obtained from a fast score-based wild cluster bootstrap, and $(t_{1-\alpha/2, \tilde{\text{df}_r}} / z_{1-\alpha/2})$ is a small-$N$ effective degree of freedom inflation factor (see Appendix~\ref{app:wild} 
for the algorithm).} 

\section{Theory} \label{sec:theory}
For a fixed sequence $\Lambda_N = \mathrm{diag}(\lambda_{N,1}, \ldots, \lambda_{N,p})$ with $\frac{1}{N} \max_{1 \leq j \leq p} \lambda_{N,j} \to 0$, let $\boldsymbol{\theta}_N$ be the solution to the population estimating equation $\mathbb{E}\{ \boldsymbol{U}_N(\mathbf{X}_i, \boldsymbol{Y}_i; \boldsymbol{\theta}) \} = \boldsymbol{0}$, where $\boldsymbol{U}_N(\mathbf{X}_i, \boldsymbol{Y}_i; \boldsymbol{\theta}) = \mathbb{D}_i^T(\boldsymbol{\theta}) \mathbb{V}_i^{-1}(\boldsymbol{Y}_i - \boldsymbol{\mu}_i(\boldsymbol{\theta})) - \frac{1}{N}\Lambda_N\mathbb{S}\boldsymbol{\theta}$ has components denoted $U_{N,j}(\mathbf{X}_i, \boldsymbol{Y}_i; \boldsymbol{\theta})$ for $j \in [p]$, and let $\boldsymbol{\beta}_{N} = \mathbb{B} \boldsymbol{\theta}_N$, where $\mathbb{B} = (I_{q+1} \otimes \mathbf{B})$. We define 
$\boldsymbol{H}_N(\boldsymbol{\theta}) =  \mathbb{E}\{\mathbb{D}_i^T(\boldsymbol{\theta}) \mathbb{V}_i^{-1} \mathbb{D}_i(\boldsymbol{\theta}) + \frac{1}{N}\Lambda_N \mathbb{S}\}$, and 
$\boldsymbol{M}_N(\boldsymbol{\theta}) = \mathbb{E} \{ \boldsymbol{U}_N(\mathbf{X}_i, \boldsymbol{Y}_i; \boldsymbol{\theta}) \boldsymbol{U}_N(\mathbf{X}_i, \boldsymbol{Y}_i; \boldsymbol{\theta})^T \}$. Note that implicit to these definitions, and to the ensuing theory, is that we treat the working covariance matrices $\mathbb{V}_i$ as fixed or computed with the ``true'' limiting correlation parameters $\rho(s)$ throughout. In practice, we can replace these with their estimated counterparts $\widehat{\mathbb{V}}_i$ under the assumption that $\widehat{\rho}(s) = \rho(s) + o_{\mathbb{P}}({N}^{-1/2})$ for all $s$, for some limiting parameters $\rho(s)$---see Corollary 1 of~\citet{gee_functional}.
\begin{theorem}\label{thm:asymp}
     Suppose that one-step estimators, $\widehat{\boldsymbol{\theta}}^{(1)}_{\Lambda_N}$ and $\widehat{\boldsymbol{\beta}}^{(1)}_{\Lambda_N}$, are constructed using initial estimate, $\widehat{\boldsymbol{\theta}}^{(0)}_{N}$, and that conditions (i)-(vi) in Appendix~\ref{app:theory_conditions} hold:
    then the one-step estimator satisfies 
    $\boldsymbol{W}_N(\widehat{\boldsymbol{\theta}}^{(1)}_{\Lambda_N} - \boldsymbol{\theta}_N) \overset{d}{\to} \mathcal{N}(\boldsymbol{0}, {I}_p)$,
    where $\boldsymbol{W}_N \coloneqq \sqrt{N} \left\{\boldsymbol{M}_N(\boldsymbol{\theta}_N)\right\}^{-1/2} \boldsymbol{H}_N(\boldsymbol{\theta}_N)$. Consequently, $\widehat{\boldsymbol{\beta}}^{(1)}_{\Lambda_N}$ is asymptotically normal, centered around the sequence $\boldsymbol{\beta}_N$, with large sample variance given by
    $\boldsymbol{V}_N = \frac{1}{N}\mathbb{B}\{\boldsymbol{H}_N(\boldsymbol{\theta}_N)\}^{-1} \boldsymbol{M}_N(\boldsymbol{\theta}_N)  \{\boldsymbol{H}_N^T(\boldsymbol{\theta}_N)\}^{-1} \mathbb{B}^T$.
\end{theorem} 

We develop a more general result for adaptive one-step M-estimation that may be of independent interest in Appendix~\ref{app:theory}.
{Theorem~\ref{thm:asymp} provides a basis for asymptotically valid inference, akin to seminal results in non-functional settings \citep{gee_theory}. The result assumes that the true mean function lies in the span of the basis functions, and holds for arbitrary $n_i$ and fixed $L$, $q$ and $p$. Technical conditions (i)-(vi) are standard and hold if typical link functions are used, and the sample, true, and limiting $\boldsymbol{M}_N$ and $\boldsymbol{H}_N$ matrices are full rank, and if the initial estimator converges with sufficient rates. These are discussed further in Appendix~\ref{app:theory_conditions}.  }
\vspace{-7mm}
%
%
\begin{remark}
    Our result shows that the one-step is asymptotically equivalent to the fully-iterated fGEE. Moreover, our result extends to non-linear link functions, while existing fGEE theory is restricted to the linear case \citep{gee_functional}. In the linear case, the one-step shares the same properties as those characterized in \cite{gee_functional}, such as the convergence rates in small knot and large knot regimes in terms of the smoothing parameter, $\Lambda$. We discuss the convergence rates of the coefficient estimates in terms of the parameters $\Lambda$ further in Appendix~\ref{app:theory}.
\end{remark}



\section{Simulations} \label{sec:simulations}
We conducted simulations to assess 95\% CI coverage {and width}, coefficient estimate {error}, and algorithm timing. {We ran $T=300$ simulation replicates. We used $\texttt{pffr}$, a maximum likelihood estimator, for an initial $\widehat{\boldsymbol{\theta}}^{(0)}_{\Lambda_0}$ and benchmark, using penalized P-splines with 15 knots per functional coefficient (see Appendix~\ref{app:pffr} for details). We set $s \in \mathcal{S} \subset [0,1]$ with
$L =50$ evenly-spaced points,  $N\in \{25, 50, 500\}$, and $n_i \in \{5, 100\}$.} 
Denoting $\widehat{\beta}_{r}^t(s)$ as {the} functional coefficient $r$ for simulation replicate $t$ at $s$,
we report 
$\text{RMSE} = \frac{1}{T}\sum_{t=1}^T \left\{\frac{1}{(q+1) | \mathcal{S}|} \sum_{r=0}^{q} \sum_{s \in \mathcal{S}}  \left(\beta_{r}(s) - \widehat{\beta}_{r}^t(s) \right)^2 \right \}^{1/2}$. Denoting $\text{pCI}_r^t(s)$ as the pointwise CI for replicate $t$ for functional coefficient $r$ at point $s$, we report average empirical pointwise coverage as: $\frac{1}{T(q+1) | \mathcal{S}|} \sum_{t=1}^T \sum_{r=0}^{q} \sum_{s \in \mathcal{S}}  \mathds{1}  \left\{\beta_{r}(s) \in \text{pCI}_r^t(s)   \right \}$. We report empirical joint {(uniform)} coverage as: $\frac{1}{T(q+1)} \sum_{t=1}^T \sum_{r=0}^{q}  \mathds{1}  \left\{\beta_{r}(s) \in \text{jCI}_r^t(s)   ~\forall s \in \mathcal{S} \right \}$, where $\text{jCI}_r^t(s)$ denotes the joint CI {at $s$}. 

{We simulated outcomes with mean model $\text{g} \left \{ \mathbb{E}(Y_{i,j}(s) \mid \mathbf{X}_{i,j}) \right \} = \beta_0(s) + X_{1,i}\beta_1(s) + X_{2,i,j}\beta_2(s)$,
where $\beta_0(s) = 1 + \frac{1}{3}\text{sin}(\pi s) + \frac{\sqrt{2}}{3} \text{cos}(3\pi s)$, $\beta_1(s) = 1 + \frac{1}{3}\text{cos}(2\pi s) + \frac{\sqrt{2}}{3} \text{cos}(3\pi s)$, $\beta_2(s) = \frac{5}{3} \phi(\frac{s-0.35}{0.1}) -\frac{5}{3}\phi(\frac{s-0.65}{0.2})$, and $\phi(\cdot)$ denotes the standard normal density function. We drew Gaussian (identity), binary (logit), Poisson (log) and gamma (log) outcomes (link functions, $g$) with a copula method \citep{nelsen2006introduction}. 
Briefly, we drew $\boldsymbol{Z}_i \sim N_{n_iL}( \boldsymbol{0}, \mathbb{V}_i^*)$, transformed it into a vector of uniform random variables with the probability integral transform $\boldsymbol{Q}_i \sim \Phi(\boldsymbol{Z}_i)$, and set $\boldsymbol{Y}_i = F^{-1}(\boldsymbol{Q}_i)$ where we applied the inverse CDF $F^{-1}(\cdot)$ of the target distribution, and the standard normal CDF $\Phi(\cdot)$, element-wise. We used $\mathbb{V}_i^* = (\mathbf{A}_i^*)^{1/2} \mathbf{R}^*_i(\mathbf{A}_i^*)^{1/2}$, with $v_{i,j}(s)=10$ for all $i,j,s$ for Gaussian data and $v_{i,j}(s)=1$ for other families. We set $n_i=n$ for all $i \in [N]$.
We drew $X_{1,i} \sim N(0,1)$, and $X_{2,i,j} = j + e_{i,j}$, where $e_{i,j} \sim N(0.7 e_{i, j-1}, 1)$, and $e_{i,0}=0$. We plot the true functional coefficients and average fGEE estimates in simulations in Appendix~\ref{sec:sims_app}.

We compared performance of the one-step and a fully-iterated fGEE in a setting where the outcome was simulated to be correlated in both longitudinal and functional directions. Namely, we set $\mathbf{R}_i^* \coloneqq\text{Cor}(\boldsymbol{Y}_i \mid \mathbf{X}_i) = \mathbf{R}^*_{Lon} \otimes \mathbf{R}^*_{Fun}$, where the $j,j'$ entry of $\mathbf{R}^*_{Lon} \in \mathbb{R}^{n_i \times n_i}$ contains $\text{Cor}({Y}_{i,j}(s), {Y}_{i,j'}(s) \mid \mathbf{X}_i)$ for $s \in \mathcal{S}, i \in [N]$, and the $s,s'$ entry of $\mathbf{R}^*_{Fun} \in \mathbb{R}^{L \times L}$ contains $\text{Cor}({Y}_{i,j}(s), {Y}_{i,j}(s') \mid \mathbf{X}_i)$ for $i \in [N], j \in [n_i]$. We set $\mathbf{R}^*_{Fun}$ to have an AR1 structure $\text{Cor}(Y_{i,j}(s), Y_{i,j}(s') \mid \mathbf{X}_{i,j}) = (\rho_{Fun}^*)^{|s-s'|}$. We simulated $\mathbf{R}^*_{Lon}$ with both an AR1 $\text{Cor}(Y_{i,j}(s), Y_{i,j'}(s) \mid \mathbf{X}_{i,j}) = (\rho_{Lon}^*)^{|j-j'|}$, and exchangeable structure $\text{Cor}(Y_{i,j}(s), Y_{i,j'}(s) \mid \mathbf{X}_{i,j}) = \rho_{Lon}^*$. We set $\rho^*_{Fun} = \rho^*_{Lon} = 0.75$. 
We then measured how the fGEE performs under varying degrees of working correlation misspecification: working correlations that are 1) correctly specified; specified correctly in the 2) functional or 3) longitudinal direction; or 4) a working independence structure: $\mathbf{R}_i \in \{\mathbf{R}_{Lon} \otimes \mathbf{R}_{Fun}, \mathbf{R}_{Lon} \otimes {I}_L, {I}_{n_i} \otimes \mathbf{R}_{Fun}, {I}_{n_i} \otimes {I}_L \}$. 

We compare performance of two fully-iterated fGEEs: ``Full-1Step'' which selects $\Lambda$ with our fast cluster CV, and ``Full-Full'' which instead selects $\Lambda$ with a cluster CV that fits each fold-specific coefficient estimate, $\widehat{\boldsymbol{\theta}}^{k}_{\Lambda}$, with a fully-iterated fGEE. 
The Full-Full approach does not use one-step methods and thus shows how slow fGEE is without our one-step framework. In fact, Full-Full simulations finished only with small sample sizes even on a high performance cluster (see Appendix~\ref{app:full_sims} for Full-Full results). We describe implementation details in Appendix~\ref{app:full_itr}. 

Table~\ref{tab:rmse_rel} shows that, relative to a $\texttt{pffr}$ fit, the one-step with correctly specified working correlation ($\text{1-Step } {\mathbf{R}}_{\text{Lon}} \otimes {\mathbf{R}}_{\text{Fun}}$) reduces functional coefficient estimation RMSE by as much as 35\% for certain combinations of $N,n_i$ and outcome distributions. Interestingly, 
an fGEE that adopts a working correlation that is correctly specified in just the longitudinal direction 
(e.g., $\text{1-Step } {\mathbf{R}}_{\text{Lon}} \otimes {I}_L$) often performs comparably to an fGEE that correctly specifies correlation in both directions (e.g., $\text{1-Step } {\mathbf{R}}_{\text{Lon}} \otimes {\mathbf{R}}_{\text{Fun}}$). 
Table~\ref{tab:rmse_rel} also shows how the one-step performs nearly identically to a fully-iterated fGEE that uses the same smoothing parameter values: Full-1Step and one-step coefficients with a given working correlation are estimated with the same $\Lambda$ value because they apply the same cluster CV procedure. 
Table~\ref{tab:rmse_rel} also shows how the fast cluster CV reduces estimation error compared to $\texttt{pffr}$. The $\text{1-Step } {I}_{n_i} \otimes {I}_L$ and $\texttt{pffr}$ are fit with a full working independence correlation structure, but the one-step is fit with $\Lambda$ values selected with fast cluster CV, whereas $\texttt{pffr}$ selects $\Lambda$ with a fast REML strategy. 
Interestingly, when $N$ is small, the fully-iterated fGEE sometimes performs far worse than the one-step, potentially because re-estimating correlation parameters at each iteration causes instability. 

Table~\ref{tab:jci} shows that the one-step achieves reasonable joint (uniform) CI coverage in large samples, but sometimes exhibits over/under-coverage. Under-coverage was more common in small $N$ settings. Table~\ref{tab:pci} shows that the one-step achieves pointwise CI coverage at roughly nominal levels for all settings. Importantly, these results show how the ``fast/simple'' approach of constructing CIs with a sandwich estimator based on a $\texttt{pffr}$ fit can yield coverage far below the nominal levels. Thus, we also show CI coverage from $\texttt{pffr}$ CIs constructed with our wild cluster bootstrap approach, which achieve the nominal coverage but can be conservative. In fact, compared to these $\texttt{pffr}$ (wild) CIs, the one-step substantially reduces pointwise CI width (Table~\ref{tab:pt_width}) in most simulation settings. This shows how the fGEE can substantially improve statistical power. Finally,  Table~\ref{tab:time} shows that the one-step is relatively fast for large datasets. 

In Appendix~\ref{sec:sims_app} we show results for an fGEE that uses FPCA to estimate ${\mathbf{R}}_{\text{Fun}}$, and for simulations with gamma-distributed data. In Appendix~\ref{app:Li} we show results from additional simulations to compare our method with the approach of \cite{li_2022} for continuous outcomes. The outcomes are generated to be correlated in both longitudinal and functional directions with a simulation scheme that differs from the Kronecker product approach above. These results show that our one-step achieves comparable coefficient estimation RMSE, yields far better pointwise coverage with large cluster sizes, and scales to bigger datasets. 
}

\begin{table}[!h]
\centering
\caption{\label{tab:rmse_rel} \footnotesize Functional Coefficient Estimation Performance (RMSE) of each method relative to the $\texttt{pffr}$ fit ($\text{RMSE}/\text{RMSE}_{\text{pffr}}$).
    Cells contain the average of 300 replicates $\pm$ SE (SE$=0.00$ indicates a value $<0.01$). Outcomes were simulated with an $\mathbf{R}^* = \mathbf{R}^*_{\text{Lon}} \otimes \mathbf{R}^*_{\text{Fun}}$, where $\mathbf{R}^*_{\text{Fun}}$ had an AR1 structure and the table columns indicate results where $\mathbf{R}^*_{\text{Lon}}$ had exchangeable or AR1 correlation. The ``1-Step'' indicates a one-step was used for tuning and final coefficient estimation, and ``Full-1Step'' indicates one-step tuning and a fully-iterated fGEE for final coefficient estimation, with the indicated working correlation. $*$ indicates that values $\geq100$ were removed from that cell to avoid skewing the mean.}
\centering
\resizebox{\ifdim\width>\linewidth\linewidth\else\width\fi}{!}{
\fontsize{9}{11}\selectfont
\begin{tabular}[t]{>{\raggedright\arraybackslash}p{4.8cm}>{\raggedright\arraybackslash}p{2.1cm}>{\raggedright\arraybackslash}p{2.1cm}>{\raggedright\arraybackslash}p{2.1cm}>{\raggedright\arraybackslash}p{2.1cm}>{\raggedright\arraybackslash}p{2.1cm}>{\raggedright\arraybackslash}p{2.1cm}}
\toprule
\multicolumn{1}{c}{ } & \multicolumn{3}{c}{Exchangeable} & \multicolumn{3}{c}{AR(1)} \\
\cmidrule(l{3pt}r{3pt}){2-4} \cmidrule(l{3pt}r{3pt}){5-7}
Method & Gaussian & Poisson & Binomial & Gaussian & Poisson & Binomial\\
\midrule
\addlinespace[0em]
\multicolumn{7}{l}{\textbf{$N = 25,\; n_i = 5$}}\\
\hspace{1em}$\text{1-Step } {\mathbf{R}}_{\text{Lon}} \otimes {\mathbf{R}}_{\text{Fun}}$ & 0.82 $\pm$ 0.01 & 0.83 $\pm$ 0.01 & 0.88 $\pm$ 0.01 & 0.85 $\pm$ 0.01 & 0.85 $\pm$ 0.01 & 0.92 $\pm$ 0.01\\
\hspace{1em}$\text{1-Step } {\mathbf{R}}_{\text{Lon}} \otimes I_L$ & 0.82 $\pm$ 0.01 & 0.85 $\pm$ 0.01 & 0.88 $\pm$ 0.01 & 0.85 $\pm$ 0.01 & 0.83 $\pm$ 0.01 & 0.92 $\pm$ 0.01\\
\hspace{1em}$\text{1-Step } I_{n_i} \otimes {\mathbf{R}}_{\text{Fun}}$ & 0.86 $\pm$ 0.01 & 0.91 $\pm$ 0.01 & 0.93 $\pm$ 0.01 & 0.90 $\pm$ 0.01 & 0.92 $\pm$ 0.01 & 0.95 $\pm$ 0.01\\
\hspace{1em}$\text{1-Step } I_{n_i} \otimes I_L$ & 0.87 $\pm$ 0.01 & 0.88 $\pm$ 0.00 & 0.93 $\pm$ 0.01 & 0.90 $\pm$ 0.01 & 0.90 $\pm$ 0.00 & 0.95 $\pm$ 0.01\\
\hspace{1em}$\text{Full-1Step } {\mathbf{R}}_{\text{Lon}} \otimes {\mathbf{R}}_{\text{Fun}}$ & 0.82 $\pm$ 0.01 & 0.83 $\pm$ 0.01 & 1.00 $\pm$ 0.09$^*$ & 0.86 $\pm$ 0.01 & 0.85 $\pm$ 0.01 & 0.94 $\pm$ 0.01\\
\hspace{1em}$\text{Full-1Step } {\mathbf{R}}_{\text{Lon}} \otimes I_L$ & 0.81 $\pm$ 0.01 & 1.02 $\pm$ 0.09$^*$ & 1.29 $\pm$ 0.21$^*$ & 0.85 $\pm$ 0.01 & 0.83 $\pm$ 0.01 & 0.93 $\pm$ 0.01\\
\hspace{1em}$\text{Full-1Step } I_{n_i} \otimes {\mathbf{R}}_{\text{Fun}}$ & 0.87 $\pm$ 0.01 & 0.91 $\pm$ 0.01 & 0.96 $\pm$ 0.02 & 0.90 $\pm$ 0.01 & 0.92 $\pm$ 0.01 & 0.97 $\pm$ 0.01\\
\hspace{1em}$\text{Full-1Step } I_{n_i} \otimes I_L$ & 0.87 $\pm$ 0.01 & 0.89 $\pm$ 0.01 & 0.95 $\pm$ 0.01 & 0.90 $\pm$ 0.01 & 0.90 $\pm$ 0.00 & 0.96 $\pm$ 0.01\\
\addlinespace[0.6em]
\multicolumn{7}{l}{\textbf{$N = 25,\; n_i = 100$}}\\
\hspace{1em}$\text{1-Step } {\mathbf{R}}_{\text{Lon}} \otimes {\mathbf{R}}_{\text{Fun}}$ & 0.78 $\pm$ 0.01 & 0.65 $\pm$ 0.01 & 0.82 $\pm$ 0.01 & 0.87 $\pm$ 0.01 & 0.87 $\pm$ 0.01 & 0.89 $\pm$ 0.01\\
\hspace{1em}$\text{1-Step } {\mathbf{R}}_{\text{Lon}} \otimes I_L$ & 0.77 $\pm$ 0.01 & 0.65 $\pm$ 0.01 & 0.81 $\pm$ 0.01 & 0.85 $\pm$ 0.01 & 0.86 $\pm$ 0.01 & 0.88 $\pm$ 0.01\\
\hspace{1em}$\text{1-Step } I_{n_i} \otimes {\mathbf{R}}_{\text{Fun}}$ & 0.80 $\pm$ 0.01 & 0.87 $\pm$ 0.01 & 0.79 $\pm$ 0.01 & 0.93 $\pm$ 0.01 & 0.95 $\pm$ 0.00 & 0.93 $\pm$ 0.01\\
\hspace{1em}$\text{1-Step } I_{n_i} \otimes I_L$ & 0.79 $\pm$ 0.01 & 0.85 $\pm$ 0.01 & 0.79 $\pm$ 0.01 & 0.90 $\pm$ 0.01 & 0.94 $\pm$ 0.00 & 0.92 $\pm$ 0.01\\
\hspace{1em}$\text{Full-1Step } {\mathbf{R}}_{\text{Lon}} \otimes {\mathbf{R}}_{\text{Fun}}$ & 0.78 $\pm$ 0.01 & 0.66 $\pm$ 0.01 & 4.75 $\pm$ 0.69$^*$ & 0.87 $\pm$ 0.01 & 0.87 $\pm$ 0.01 & 0.90 $\pm$ 0.01\\
\hspace{1em}$\text{Full-1Step } {\mathbf{R}}_{\text{Lon}} \otimes I_L$ & 0.77 $\pm$ 0.01 & 0.86 $\pm$ 0.07$^*$ & 7.42 $\pm$ 0.95$^*$ & 0.85 $\pm$ 0.01 & 0.86 $\pm$ 0.01 & 0.88 $\pm$ 0.01\\
\hspace{1em}$\text{Full-1Step } I_{n_i} \otimes {\mathbf{R}}_{\text{Fun}}$ & 0.81 $\pm$ 0.01 & 0.87 $\pm$ 0.01 & 0.83 $\pm$ 0.01 & 0.93 $\pm$ 0.01 & 0.95 $\pm$ 0.00 & 0.94 $\pm$ 0.01\\
\hspace{1em}$\text{Full-1Step } I_{n_i} \otimes I_L$ & 0.79 $\pm$ 0.01 & 0.85 $\pm$ 0.01 & 0.90 $\pm$ 0.09 & 0.90 $\pm$ 0.01 & 0.94 $\pm$ 0.00 & 0.92 $\pm$ 0.01\\
\addlinespace[0.6em]
\multicolumn{7}{l}{\textbf{$N = 50,\; n_i = 5$}}\\
\hspace{1em}$\text{1-Step } {\mathbf{R}}_{\text{Lon}} \otimes {\mathbf{R}}_{\text{Fun}}$ & 0.88 $\pm$ 0.01 & 0.84 $\pm$ 0.01 & 0.88 $\pm$ 0.01 & 0.89 $\pm$ 0.01 & 0.87 $\pm$ 0.01 & 0.89 $\pm$ 0.01\\
\hspace{1em}$\text{1-Step } {\mathbf{R}}_{\text{Lon}} \otimes I_L$ & 0.88 $\pm$ 0.01 & 0.84 $\pm$ 0.01 & 0.87 $\pm$ 0.01 & 0.88 $\pm$ 0.01 & 0.85 $\pm$ 0.01 & 0.89 $\pm$ 0.01\\
\hspace{1em}$\text{1-Step } I_{n_i} \otimes {\mathbf{R}}_{\text{Fun}}$ & 0.93 $\pm$ 0.01 & 0.93 $\pm$ 0.00 & 0.92 $\pm$ 0.01 & 0.94 $\pm$ 0.01 & 0.93 $\pm$ 0.00 & 0.93 $\pm$ 0.01\\
\hspace{1em}$\text{1-Step } I_{n_i} \otimes I_L$ & 0.93 $\pm$ 0.01 & 0.91 $\pm$ 0.01 & 0.92 $\pm$ 0.01 & 0.94 $\pm$ 0.01 & 0.91 $\pm$ 0.00 & 0.92 $\pm$ 0.01\\
\hspace{1em}$\text{Full-1Step } {\mathbf{R}}_{\text{Lon}} \otimes {\mathbf{R}}_{\text{Fun}}$ & 0.88 $\pm$ 0.01 & 0.84 $\pm$ 0.01 & 0.90 $\pm$ 0.01 & 0.89 $\pm$ 0.01 & 0.87 $\pm$ 0.01 & 0.91 $\pm$ 0.01\\
\hspace{1em}$\text{Full-1Step } {\mathbf{R}}_{\text{Lon}} \otimes I_L$ & 0.88 $\pm$ 0.01 & 1.01 $\pm$ 0.07 & 0.96 $\pm$ 0.08$^*$ & 0.88 $\pm$ 0.01 & 0.85 $\pm$ 0.01 & 0.90 $\pm$ 0.01\\
\hspace{1em}$\text{Full-1Step } I_{n_i} \otimes {\mathbf{R}}_{\text{Fun}}$ & 0.94 $\pm$ 0.01 & 0.93 $\pm$ 0.01 & 0.94 $\pm$ 0.01 & 0.95 $\pm$ 0.01 & 0.93 $\pm$ 0.00 & 0.95 $\pm$ 0.01\\
\hspace{1em}$\text{Full-1Step } I_{n_i} \otimes I_L$ & 0.93 $\pm$ 0.01 & 0.91 $\pm$ 0.01 & 0.93 $\pm$ 0.01 & 0.94 $\pm$ 0.01 & 0.91 $\pm$ 0.00 & 0.93 $\pm$ 0.01\\
\addlinespace[0.6em]
\multicolumn{7}{l}{\textbf{$N = 50,\; n_i = 100$}}\\
\hspace{1em}$\text{1-Step } {\mathbf{R}}_{\text{Lon}} \otimes {\mathbf{R}}_{\text{Fun}}$ & 0.85 $\pm$ 0.01 & 0.65 $\pm$ 0.01 & 0.83 $\pm$ 0.01 & 0.90 $\pm$ 0.01 & 0.87 $\pm$ 0.00 & 0.89 $\pm$ 0.01\\
\hspace{1em}$\text{1-Step } {\mathbf{R}}_{\text{Lon}} \otimes I_L$ & 0.85 $\pm$ 0.01 & 0.65 $\pm$ 0.01 & 0.81 $\pm$ 0.01 & 0.86 $\pm$ 0.01 & 0.87 $\pm$ 0.00 & 0.88 $\pm$ 0.01\\
\hspace{1em}$\text{1-Step } I_{n_i} \otimes {\mathbf{R}}_{\text{Fun}}$ & 0.85 $\pm$ 0.01 & 0.89 $\pm$ 0.01 & 0.82 $\pm$ 0.01 & 0.95 $\pm$ 0.01 & 0.97 $\pm$ 0.00 & 0.93 $\pm$ 0.01\\
\hspace{1em}$\text{1-Step } I_{n_i} \otimes I_L$ & 0.84 $\pm$ 0.01 & 0.87 $\pm$ 0.01 & 0.81 $\pm$ 0.01 & 0.92 $\pm$ 0.00 & 0.96 $\pm$ 0.00 & 0.92 $\pm$ 0.01\\
\hspace{1em}$\text{Full-1Step } {\mathbf{R}}_{\text{Lon}} \otimes {\mathbf{R}}_{\text{Fun}}$ & 0.86 $\pm$ 0.01 & 0.66 $\pm$ 0.01 & 2.63 $\pm$ 0.52$^*$ & 0.90 $\pm$ 0.01 & 0.87 $\pm$ 0.00 & 0.89 $\pm$ 0.01\\
\hspace{1em}$\text{Full-1Step } {\mathbf{R}}_{\text{Lon}} \otimes I_L$ & 0.85 $\pm$ 0.01 & 0.76 $\pm$ 0.03 & 4.33 $\pm$ 0.85$^*$ & 0.86 $\pm$ 0.01 & 0.87 $\pm$ 0.00 & 0.88 $\pm$ 0.01\\
\hspace{1em}$\text{Full-1Step } I_{n_i} \otimes {\mathbf{R}}_{\text{Fun}}$ & 0.86 $\pm$ 0.01 & 0.90 $\pm$ 0.01 & 0.85 $\pm$ 0.01 & 0.95 $\pm$ 0.01 & 0.97 $\pm$ 0.00 & 0.93 $\pm$ 0.01\\
\hspace{1em}$\text{Full-1Step } I_{n_i} \otimes I_L$ & 0.84 $\pm$ 0.01 & 0.87 $\pm$ 0.01 & 0.82 $\pm$ 0.01 & 0.92 $\pm$ 0.00 & 0.96 $\pm$ 0.00 & 0.92 $\pm$ 0.01\\
\addlinespace[0.6em]
\multicolumn{7}{l}{\textbf{$N = 500,\; n_i = 5$}}\\
\hspace{1em}$\text{1-Step } {\mathbf{R}}_{\text{Lon}} \otimes {\mathbf{R}}_{\text{Fun}}$ & 0.89 $\pm$ 0.01 & 0.87 $\pm$ 0.01 & 0.91 $\pm$ 0.01 & 0.89 $\pm$ 0.01 & 0.89 $\pm$ 0.01 & 0.92 $\pm$ 0.01\\
\hspace{1em}$\text{1-Step } {\mathbf{R}}_{\text{Lon}} \otimes I_L$ & 0.85 $\pm$ 0.01 & 0.88 $\pm$ 0.01 & 0.89 $\pm$ 0.01 & 0.86 $\pm$ 0.01 & 0.89 $\pm$ 0.01 & 0.90 $\pm$ 0.01\\
\hspace{1em}$\text{1-Step } I_{n_i} \otimes {\mathbf{R}}_{\text{Fun}}$ & 0.95 $\pm$ 0.00 & 0.96 $\pm$ 0.00 & 0.95 $\pm$ 0.01 & 0.96 $\pm$ 0.00 & 0.97 $\pm$ 0.00 & 0.96 $\pm$ 0.01\\
\hspace{1em}$\text{1-Step } I_{n_i} \otimes I_L$ & 0.91 $\pm$ 0.00 & 0.95 $\pm$ 0.00 & 0.93 $\pm$ 0.01 & 0.92 $\pm$ 0.00 & 0.96 $\pm$ 0.00 & 0.94 $\pm$ 0.01\\
\hspace{1em}$\text{Full-1Step } {\mathbf{R}}_{\text{Lon}} \otimes {\mathbf{R}}_{\text{Fun}}$ & 0.89 $\pm$ 0.01 & 0.87 $\pm$ 0.01 & 0.92 $\pm$ 0.01 & 0.89 $\pm$ 0.01 & 0.89 $\pm$ 0.01 & 0.92 $\pm$ 0.01\\
\hspace{1em}$\text{Full-1Step } {\mathbf{R}}_{\text{Lon}} \otimes I_L$ & 0.85 $\pm$ 0.01 & 0.88 $\pm$ 0.01 & 0.89 $\pm$ 0.01 & 0.86 $\pm$ 0.01 & 0.89 $\pm$ 0.01 & 0.90 $\pm$ 0.01\\
\hspace{1em}$\text{Full-1Step } I_{n_i} \otimes {\mathbf{R}}_{\text{Fun}}$ & 0.95 $\pm$ 0.00 & 0.96 $\pm$ 0.00 & 0.95 $\pm$ 0.01 & 0.96 $\pm$ 0.00 & 0.97 $\pm$ 0.00 & 0.96 $\pm$ 0.01\\
\hspace{1em}$\text{Full-1Step } I_{n_i} \otimes I_L$ & 0.91 $\pm$ 0.00 & 0.95 $\pm$ 0.00 & 0.94 $\pm$ 0.01 & 0.92 $\pm$ 0.00 & 0.96 $\pm$ 0.00 & 0.94 $\pm$ 0.01\\
\addlinespace[0.6em]
\multicolumn{7}{l}{\textbf{$N = 500,\; n_i = 100$}}\\
\hspace{1em}$\text{1-Step } {\mathbf{R}}_{\text{Lon}} \otimes {\mathbf{R}}_{\text{Fun}}$ & 0.90 $\pm$ 0.00 & 0.67 $\pm$ 0.01 & 0.96 $\pm$ 0.01 & 0.93 $\pm$ 0.00 & 0.88 $\pm$ 0.01 & 0.94 $\pm$ 0.01\\
\hspace{1em}$\text{1-Step } {\mathbf{R}}_{\text{Lon}} \otimes I_L$ & 0.87 $\pm$ 0.00 & 0.67 $\pm$ 0.01 & 0.94 $\pm$ 0.01 & 0.90 $\pm$ 0.00 & 0.89 $\pm$ 0.00 & 0.93 $\pm$ 0.01\\
\hspace{1em}$\text{1-Step } I_{n_i} \otimes {\mathbf{R}}_{\text{Fun}}$ & 0.92 $\pm$ 0.00 & 0.96 $\pm$ 0.00 & 0.90 $\pm$ 0.01 & 0.99 $\pm$ 0.00 & 0.98 $\pm$ 0.00 & 0.98 $\pm$ 0.00\\
\hspace{1em}$\text{1-Step } I_{n_i} \otimes I_L$ & 0.88 $\pm$ 0.00 & 0.95 $\pm$ 0.00 & 0.89 $\pm$ 0.01 & 0.96 $\pm$ 0.00 & 0.99 $\pm$ 0.00 & 0.97 $\pm$ 0.00\\
\hspace{1em}$\text{Full-1Step } {\mathbf{R}}_{\text{Lon}} \otimes {\mathbf{R}}_{\text{Fun}}$ & 0.90 $\pm$ 0.00 & 0.67 $\pm$ 0.01 & 0.98 $\pm$ 0.01 & 0.93 $\pm$ 0.00 & 0.88 $\pm$ 0.01 & 0.94 $\pm$ 0.01\\
\hspace{1em}$\text{Full-1Step } {\mathbf{R}}_{\text{Lon}} \otimes I_L$ & 0.87 $\pm$ 0.00 & 0.67 $\pm$ 0.01 & 0.95 $\pm$ 0.01 & 0.90 $\pm$ 0.00 & 0.89 $\pm$ 0.00 & 0.93 $\pm$ 0.01\\
\hspace{1em}$\text{Full-1Step } I_{n_i} \otimes {\mathbf{R}}_{\text{Fun}}$ & 0.92 $\pm$ 0.00 & 0.96 $\pm$ 0.00 & 0.91 $\pm$ 0.01 & 0.99 $\pm$ 0.00 & 0.98 $\pm$ 0.00 & 0.98 $\pm$ 0.00\\
\hspace{1em}$\text{Full-1Step } I_{n_i} \otimes I_L$ & 0.88 $\pm$ 0.00 & 0.95 $\pm$ 0.00 & 0.89 $\pm$ 0.01 & 0.96 $\pm$ 0.00 & 0.99 $\pm$ 0.00 & 0.97 $\pm$ 0.00\\
\bottomrule
\end{tabular}}
\end{table}

\begin{table}[!h] 
\centering
\caption{\label{tab:jci} \footnotesize
Joint 95\% CI coverage from 300 replicates $\pm$ SE (SE$=0.00$ indicates a value $<0.01$). 
Table columns indicate if $\mathbf{R}^*_{\text{Lon}}$ had exchangeable or AR1 correlation. 
pffr (Wild) indicates CIs constructed for $\texttt{pffr}$ coefficient estimates with quantiles obtained from a wild cluster bootstrap. pffr ($z_{1-\alpha/2}$) are standard Wald CIs constructed with Gaussian quantiles.
}
\centering
\resizebox{\ifdim\width>\linewidth\linewidth\else\width\fi}{!}{
\fontsize{9}{11}\selectfont
\begin{tabular}[t]{>{\raggedright\arraybackslash}p{4.8cm}>{\raggedright\arraybackslash}p{2.1cm}>{\raggedright\arraybackslash}p{2.1cm}>{\raggedright\arraybackslash}p{2.1cm}>{\raggedright\arraybackslash}p{2.1cm}>{\raggedright\arraybackslash}p{2.1cm}>{\raggedright\arraybackslash}p{2.1cm}}
\toprule
\multicolumn{1}{c}{ } & \multicolumn{3}{c}{Exchangeable} & \multicolumn{3}{c}{AR(1)} \\
\cmidrule(l{3pt}r{3pt}){2-4} \cmidrule(l{3pt}r{3pt}){5-7}
Method & Gaussian & Poisson & Binomial & Gaussian & Poisson & Binomial\\
\midrule
\addlinespace[0em]
\multicolumn{7}{l}{\textbf{$N = 25,\; n_i = 5$}}\\
\hspace{1em}$\text{1-Step } {\mathbf{R}}_{\text{Lon}} \otimes {\mathbf{R}}_{\text{Fun}}$ & 0.90 $\pm$ 0.02 & 0.90 $\pm$ 0.02 & 0.95 $\pm$ 0.01 & 0.90 $\pm$ 0.02 & 0.88 $\pm$ 0.02 & 0.95 $\pm$ 0.01\\
\hspace{1em}$\text{1-Step } {\mathbf{R}}_{\text{Lon}} \otimes I_L$ & 0.92 $\pm$ 0.02 & 0.92 $\pm$ 0.02 & 0.95 $\pm$ 0.01 & 0.91 $\pm$ 0.02 & 0.90 $\pm$ 0.02 & 0.94 $\pm$ 0.01\\
\hspace{1em}$\text{1-Step } I_{n_i} \otimes {\mathbf{R}}_{\text{Fun}}$ & 0.89 $\pm$ 0.02 & 0.84 $\pm$ 0.02 & 0.94 $\pm$ 0.01 & 0.87 $\pm$ 0.02 & 0.85 $\pm$ 0.02 & 0.93 $\pm$ 0.01\\
\hspace{1em}$\text{1-Step } I_{n_i} \otimes I_L$ & 0.91 $\pm$ 0.02 & 0.88 $\pm$ 0.02 & 0.95 $\pm$ 0.01 & 0.90 $\pm$ 0.02 & 0.86 $\pm$ 0.02 & 0.94 $\pm$ 0.01\\
\hspace{1em}$\text{Full-1Step } {\mathbf{R}}_{\text{Lon}} \otimes {\mathbf{R}}_{\text{Fun}}$ & 0.90 $\pm$ 0.02 & 0.90 $\pm$ 0.02 & 0.94 $\pm$ 0.01 & 0.89 $\pm$ 0.02 & 0.88 $\pm$ 0.02 & 0.93 $\pm$ 0.01\\
\hspace{1em}$\text{Full-1Step } {\mathbf{R}}_{\text{Lon}} \otimes I_L$ & 0.93 $\pm$ 0.02 & 0.91 $\pm$ 0.02 & 0.90 $\pm$ 0.02 & 0.91 $\pm$ 0.02 & 0.90 $\pm$ 0.02 & 0.93 $\pm$ 0.01\\
\hspace{1em}$\text{Full-1Step } I_{n_i} \otimes {\mathbf{R}}_{\text{Fun}}$ & 0.89 $\pm$ 0.02 & 0.84 $\pm$ 0.02 & 0.93 $\pm$ 0.01 & 0.86 $\pm$ 0.02 & 0.84 $\pm$ 0.02 & 0.92 $\pm$ 0.02\\
\hspace{1em}$\text{Full-1Step } I_{n_i} \otimes I_L$ & 0.91 $\pm$ 0.02 & 0.87 $\pm$ 0.02 & 0.94 $\pm$ 0.01 & 0.90 $\pm$ 0.02 & 0.86 $\pm$ 0.02 & 0.93 $\pm$ 0.02\\
\hspace{1em}$\text{pffr (Wild)}$ & 0.92 $\pm$ 0.02 & 0.85 $\pm$ 0.02 & 0.95 $\pm$ 0.01 & 0.93 $\pm$ 0.01 & 0.86 $\pm$ 0.02 & 0.96 $\pm$ 0.01\\
\hspace{1em}$\text{pffr ($z_{1-\alpha/2}$)}$ & 0.06 $\pm$ 0.01 & 0.55 $\pm$ 0.03 & 0.32 $\pm$ 0.03 & 0.06 $\pm$ 0.01 & 0.56 $\pm$ 0.03 & 0.32 $\pm$ 0.03\\
\addlinespace[0.6em]
\multicolumn{7}{l}{\textbf{$N = 25,\; n_i = 100$}}\\
\hspace{1em}$\text{1-Step } {\mathbf{R}}_{\text{Lon}} \otimes {\mathbf{R}}_{\text{Fun}}$ & 0.96 $\pm$ 0.01 & 0.98 $\pm$ 0.01 & 0.99 $\pm$ 0.01 & 0.89 $\pm$ 0.02 & 0.88 $\pm$ 0.02 & 0.93 $\pm$ 0.02\\
\hspace{1em}$\text{1-Step } {\mathbf{R}}_{\text{Lon}} \otimes I_L$ & 0.97 $\pm$ 0.01 & 0.98 $\pm$ 0.01 & 0.99 $\pm$ 0.01 & 0.92 $\pm$ 0.02 & 0.88 $\pm$ 0.02 & 0.93 $\pm$ 0.01\\
\hspace{1em}$\text{1-Step } I_{n_i} \otimes {\mathbf{R}}_{\text{Fun}}$ & 0.97 $\pm$ 0.01 & 0.91 $\pm$ 0.02 & 0.99 $\pm$ 0.01 & 0.88 $\pm$ 0.02 & 0.82 $\pm$ 0.02 & 0.93 $\pm$ 0.01\\
\hspace{1em}$\text{1-Step } I_{n_i} \otimes I_L$ & 0.97 $\pm$ 0.01 & 0.91 $\pm$ 0.02 & 0.98 $\pm$ 0.01 & 0.92 $\pm$ 0.02 & 0.83 $\pm$ 0.02 & 0.93 $\pm$ 0.01\\
\hspace{1em}$\text{Full-1Step } {\mathbf{R}}_{\text{Lon}} \otimes {\mathbf{R}}_{\text{Fun}}$ & 0.96 $\pm$ 0.01 & 0.97 $\pm$ 0.01 & 0.81 $\pm$ 0.02 & 0.89 $\pm$ 0.02 & 0.88 $\pm$ 0.02 & 0.93 $\pm$ 0.01\\
\hspace{1em}$\text{Full-1Step } {\mathbf{R}}_{\text{Lon}} \otimes I_L$ & 0.98 $\pm$ 0.01 & 0.98 $\pm$ 0.01 & 0.61 $\pm$ 0.03 & 0.92 $\pm$ 0.02 & 0.88 $\pm$ 0.02 & 0.93 $\pm$ 0.01\\
\hspace{1em}$\text{Full-1Step } I_{n_i} \otimes {\mathbf{R}}_{\text{Fun}}$ & 0.96 $\pm$ 0.01 & 0.91 $\pm$ 0.02 & 0.99 $\pm$ 0.01 & 0.88 $\pm$ 0.02 & 0.82 $\pm$ 0.02 & 0.93 $\pm$ 0.01\\
\hspace{1em}$\text{Full-1Step } I_{n_i} \otimes I_L$ & 0.97 $\pm$ 0.01 & 0.91 $\pm$ 0.02 & 0.98 $\pm$ 0.01 & 0.92 $\pm$ 0.02 & 0.83 $\pm$ 0.02 & 0.93 $\pm$ 0.01\\
\hspace{1em}$\text{pffr (Wild)}$ & 0.79 $\pm$ 0.02 & 0.78 $\pm$ 0.02 & 0.75 $\pm$ 0.03 & 0.99 $\pm$ 0.01 & 0.78 $\pm$ 0.02 & 0.93 $\pm$ 0.01\\
\hspace{1em}$\text{pffr ($z_{1-\alpha/2}$)}$ & 0.49 $\pm$ 0.03 & 0.81 $\pm$ 0.02 & 0.64 $\pm$ 0.03 & 0.23 $\pm$ 0.02 & 0.79 $\pm$ 0.02 & 0.62 $\pm$ 0.03\\
\addlinespace[0.6em]
\multicolumn{7}{l}{\textbf{$N = 50,\; n_i = 5$}}\\
\hspace{1em}$\text{1-Step } {\mathbf{R}}_{\text{Lon}} \otimes {\mathbf{R}}_{\text{Fun}}$ & 0.90 $\pm$ 0.02 & 0.90 $\pm$ 0.02 & 0.95 $\pm$ 0.01 & 0.89 $\pm$ 0.02 & 0.87 $\pm$ 0.02 & 0.94 $\pm$ 0.01\\
\hspace{1em}$\text{1-Step } {\mathbf{R}}_{\text{Lon}} \otimes I_L$ & 0.92 $\pm$ 0.02 & 0.90 $\pm$ 0.02 & 0.95 $\pm$ 0.01 & 0.92 $\pm$ 0.02 & 0.90 $\pm$ 0.02 & 0.95 $\pm$ 0.01\\
\hspace{1em}$\text{1-Step } I_{n_i} \otimes {\mathbf{R}}_{\text{Fun}}$ & 0.89 $\pm$ 0.02 & 0.85 $\pm$ 0.02 & 0.95 $\pm$ 0.01 & 0.89 $\pm$ 0.02 & 0.85 $\pm$ 0.02 & 0.94 $\pm$ 0.01\\
\hspace{1em}$\text{1-Step } I_{n_i} \otimes I_L$ & 0.92 $\pm$ 0.02 & 0.88 $\pm$ 0.02 & 0.95 $\pm$ 0.01 & 0.90 $\pm$ 0.02 & 0.86 $\pm$ 0.02 & 0.94 $\pm$ 0.01\\
\hspace{1em}$\text{Full-1Step } {\mathbf{R}}_{\text{Lon}} \otimes {\mathbf{R}}_{\text{Fun}}$ & 0.90 $\pm$ 0.02 & 0.90 $\pm$ 0.02 & 0.94 $\pm$ 0.01 & 0.89 $\pm$ 0.02 & 0.87 $\pm$ 0.02 & 0.94 $\pm$ 0.01\\
\hspace{1em}$\text{Full-1Step } {\mathbf{R}}_{\text{Lon}} \otimes I_L$ & 0.92 $\pm$ 0.02 & 0.90 $\pm$ 0.02 & 0.94 $\pm$ 0.01 & 0.92 $\pm$ 0.02 & 0.90 $\pm$ 0.02 & 0.94 $\pm$ 0.01\\
\hspace{1em}$\text{Full-1Step } I_{n_i} \otimes {\mathbf{R}}_{\text{Fun}}$ & 0.89 $\pm$ 0.02 & 0.85 $\pm$ 0.02 & 0.94 $\pm$ 0.01 & 0.88 $\pm$ 0.02 & 0.85 $\pm$ 0.02 & 0.93 $\pm$ 0.01\\
\hspace{1em}$\text{Full-1Step } I_{n_i} \otimes I_L$ & 0.92 $\pm$ 0.02 & 0.87 $\pm$ 0.02 & 0.94 $\pm$ 0.01 & 0.90 $\pm$ 0.02 & 0.86 $\pm$ 0.02 & 0.94 $\pm$ 0.01\\
\hspace{1em}$\text{pffr (Wild)}$ & 0.97 $\pm$ 0.01 & 0.91 $\pm$ 0.02 & 0.99 $\pm$ 0.01 & 0.97 $\pm$ 0.01 & 0.91 $\pm$ 0.02 & 0.99 $\pm$ 0.01\\
\hspace{1em}$\text{pffr ($z_{1-\alpha/2}$)}$ & 0.04 $\pm$ 0.01 & 0.59 $\pm$ 0.03 & 0.23 $\pm$ 0.02 & 0.04 $\pm$ 0.01 & 0.57 $\pm$ 0.03 & 0.26 $\pm$ 0.03\\
\addlinespace[0.6em]
\multicolumn{7}{l}{\textbf{$N = 50,\; n_i = 100$}}\\
\hspace{1em}$\text{1-Step } {\mathbf{R}}_{\text{Lon}} \otimes {\mathbf{R}}_{\text{Fun}}$ & 0.96 $\pm$ 0.01 & 0.97 $\pm$ 0.01 & 0.98 $\pm$ 0.01 & 0.93 $\pm$ 0.02 & 0.90 $\pm$ 0.02 & 0.95 $\pm$ 0.01\\
\hspace{1em}$\text{1-Step } {\mathbf{R}}_{\text{Lon}} \otimes I_L$ & 0.97 $\pm$ 0.01 & 0.98 $\pm$ 0.01 & 0.98 $\pm$ 0.01 & 0.95 $\pm$ 0.01 & 0.91 $\pm$ 0.02 & 0.95 $\pm$ 0.01\\
\hspace{1em}$\text{1-Step } I_{n_i} \otimes {\mathbf{R}}_{\text{Fun}}$ & 0.98 $\pm$ 0.01 & 0.91 $\pm$ 0.02 & 0.99 $\pm$ 0.01 & 0.92 $\pm$ 0.02 & 0.84 $\pm$ 0.02 & 0.94 $\pm$ 0.01\\
\hspace{1em}$\text{1-Step } I_{n_i} \otimes I_L$ & 0.98 $\pm$ 0.01 & 0.94 $\pm$ 0.01 & 0.99 $\pm$ 0.01 & 0.95 $\pm$ 0.01 & 0.85 $\pm$ 0.02 & 0.95 $\pm$ 0.01\\
\hspace{1em}$\text{Full-1Step } {\mathbf{R}}_{\text{Lon}} \otimes {\mathbf{R}}_{\text{Fun}}$ & 0.96 $\pm$ 0.01 & 0.98 $\pm$ 0.01 & 0.89 $\pm$ 0.02 & 0.93 $\pm$ 0.02 & 0.90 $\pm$ 0.02 & 0.95 $\pm$ 0.01\\
\hspace{1em}$\text{Full-1Step } {\mathbf{R}}_{\text{Lon}} \otimes I_L$ & 0.97 $\pm$ 0.01 & 0.98 $\pm$ 0.01 & 0.76 $\pm$ 0.03 & 0.95 $\pm$ 0.01 & 0.91 $\pm$ 0.02 & 0.95 $\pm$ 0.01\\
\hspace{1em}$\text{Full-1Step } I_{n_i} \otimes {\mathbf{R}}_{\text{Fun}}$ & 0.98 $\pm$ 0.01 & 0.91 $\pm$ 0.02 & 0.98 $\pm$ 0.01 & 0.92 $\pm$ 0.02 & 0.84 $\pm$ 0.02 & 0.94 $\pm$ 0.01\\
\hspace{1em}$\text{Full-1Step } I_{n_i} \otimes I_L$ & 0.98 $\pm$ 0.01 & 0.94 $\pm$ 0.01 & 0.98 $\pm$ 0.01 & 0.95 $\pm$ 0.01 & 0.85 $\pm$ 0.02 & 0.95 $\pm$ 0.01\\
\hspace{1em}$\text{pffr (Wild)}$ & 0.89 $\pm$ 0.02 & 0.82 $\pm$ 0.02 & 0.85 $\pm$ 0.02 & 1.00 $\pm$ 0.00 & 0.85 $\pm$ 0.02 & 0.99 $\pm$ 0.01\\
\hspace{1em}$\text{pffr ($z_{1-\alpha/2}$)}$ & 0.50 $\pm$ 0.03 & 0.82 $\pm$ 0.02 & 0.70 $\pm$ 0.03 & 0.19 $\pm$ 0.02 & 0.81 $\pm$ 0.02 & 0.65 $\pm$ 0.03\\
\addlinespace[0.6em]
\multicolumn{7}{l}{\textbf{$N = 500,\; n_i = 5$}}\\
\hspace{1em}$\text{1-Step } {\mathbf{R}}_{\text{Lon}} \otimes {\mathbf{R}}_{\text{Fun}}$ & 0.94 $\pm$ 0.01 & 0.93 $\pm$ 0.01 & 0.96 $\pm$ 0.01 & 0.95 $\pm$ 0.01 & 0.93 $\pm$ 0.02 & 0.96 $\pm$ 0.01\\
\hspace{1em}$\text{1-Step } {\mathbf{R}}_{\text{Lon}} \otimes I_L$ & 0.97 $\pm$ 0.01 & 0.94 $\pm$ 0.01 & 0.96 $\pm$ 0.01 & 0.96 $\pm$ 0.01 & 0.93 $\pm$ 0.01 & 0.97 $\pm$ 0.01\\
\hspace{1em}$\text{1-Step } I_{n_i} \otimes {\mathbf{R}}_{\text{Fun}}$ & 0.95 $\pm$ 0.01 & 0.93 $\pm$ 0.02 & 0.96 $\pm$ 0.01 & 0.95 $\pm$ 0.01 & 0.93 $\pm$ 0.02 & 0.96 $\pm$ 0.01\\
\hspace{1em}$\text{1-Step } I_{n_i} \otimes I_L$ & 0.97 $\pm$ 0.01 & 0.94 $\pm$ 0.01 & 0.96 $\pm$ 0.01 & 0.96 $\pm$ 0.01 & 0.93 $\pm$ 0.01 & 0.97 $\pm$ 0.01\\
\hspace{1em}$\text{Full-1Step } {\mathbf{R}}_{\text{Lon}} \otimes {\mathbf{R}}_{\text{Fun}}$ & 0.94 $\pm$ 0.01 & 0.93 $\pm$ 0.01 & 0.96 $\pm$ 0.01 & 0.95 $\pm$ 0.01 & 0.92 $\pm$ 0.02 & 0.96 $\pm$ 0.01\\
\hspace{1em}$\text{Full-1Step } {\mathbf{R}}_{\text{Lon}} \otimes I_L$ & 0.97 $\pm$ 0.01 & 0.94 $\pm$ 0.01 & 0.96 $\pm$ 0.01 & 0.96 $\pm$ 0.01 & 0.93 $\pm$ 0.01 & 0.96 $\pm$ 0.01\\
\hspace{1em}$\text{Full-1Step } I_{n_i} \otimes {\mathbf{R}}_{\text{Fun}}$ & 0.95 $\pm$ 0.01 & 0.93 $\pm$ 0.01 & 0.96 $\pm$ 0.01 & 0.95 $\pm$ 0.01 & 0.93 $\pm$ 0.02 & 0.96 $\pm$ 0.01\\
\hspace{1em}$\text{Full-1Step } I_{n_i} \otimes I_L$ & 0.97 $\pm$ 0.01 & 0.94 $\pm$ 0.01 & 0.96 $\pm$ 0.01 & 0.96 $\pm$ 0.01 & 0.93 $\pm$ 0.02 & 0.97 $\pm$ 0.01\\
\hspace{1em}$\text{pffr (Wild)}$ & 1.00 $\pm$ 0.00 & 1.00 $\pm$ 0.00 & 1.00 $\pm$ 0.00 & 1.00 $\pm$ 0.00 & 1.00 $\pm$ 0.00 & 1.00 $\pm$ 0.00\\
\hspace{1em}$\text{pffr ($z_{1-\alpha/2}$)}$ & 0.01 $\pm$ 0.00 & 0.63 $\pm$ 0.03 & 0.09 $\pm$ 0.02 & 0.01 $\pm$ 0.01 & 0.61 $\pm$ 0.03 & 0.07 $\pm$ 0.01\\
\addlinespace[0.6em]
\multicolumn{7}{l}{\textbf{$N = 500,\; n_i = 100$}}\\
\hspace{1em}$\text{1-Step } {\mathbf{R}}_{\text{Lon}} \otimes {\mathbf{R}}_{\text{Fun}}$ & 0.94 $\pm$ 0.01 & 0.92 $\pm$ 0.02 & 0.97 $\pm$ 0.01 & 0.94 $\pm$ 0.01 & 0.89 $\pm$ 0.02 & 0.95 $\pm$ 0.01\\
\hspace{1em}$\text{1-Step } {\mathbf{R}}_{\text{Lon}} \otimes I_L$ & 0.96 $\pm$ 0.01 & 0.92 $\pm$ 0.02 & 0.97 $\pm$ 0.01 & 0.95 $\pm$ 0.01 & 0.91 $\pm$ 0.02 & 0.96 $\pm$ 0.01\\
\hspace{1em}$\text{1-Step } I_{n_i} \otimes {\mathbf{R}}_{\text{Fun}}$ & 0.95 $\pm$ 0.01 & 0.92 $\pm$ 0.02 & 0.98 $\pm$ 0.01 & 0.95 $\pm$ 0.01 & 0.91 $\pm$ 0.02 & 0.95 $\pm$ 0.01\\
\hspace{1em}$\text{1-Step } I_{n_i} \otimes I_L$ & 0.97 $\pm$ 0.01 & 0.93 $\pm$ 0.01 & 0.98 $\pm$ 0.01 & 0.96 $\pm$ 0.01 & 0.91 $\pm$ 0.02 & 0.95 $\pm$ 0.01\\
\hspace{1em}$\text{Full-1Step } {\mathbf{R}}_{\text{Lon}} \otimes {\mathbf{R}}_{\text{Fun}}$ & 0.94 $\pm$ 0.01 & 0.92 $\pm$ 0.02 & 0.97 $\pm$ 0.01 & 0.94 $\pm$ 0.01 & 0.89 $\pm$ 0.02 & 0.95 $\pm$ 0.01\\
\hspace{1em}$\text{Full-1Step } {\mathbf{R}}_{\text{Lon}} \otimes I_L$ & 0.96 $\pm$ 0.01 & 0.92 $\pm$ 0.02 & 0.98 $\pm$ 0.01 & 0.95 $\pm$ 0.01 & 0.90 $\pm$ 0.02 & 0.96 $\pm$ 0.01\\
\hspace{1em}$\text{Full-1Step } I_{n_i} \otimes {\mathbf{R}}_{\text{Fun}}$ & 0.95 $\pm$ 0.01 & 0.92 $\pm$ 0.02 & 0.98 $\pm$ 0.01 & 0.95 $\pm$ 0.01 & 0.91 $\pm$ 0.02 & 0.95 $\pm$ 0.01\\
\hspace{1em}$\text{Full-1Step } I_{n_i} \otimes I_L$ & 0.97 $\pm$ 0.01 & 0.93 $\pm$ 0.01 & 0.98 $\pm$ 0.01 & 0.96 $\pm$ 0.01 & 0.91 $\pm$ 0.02 & 0.95 $\pm$ 0.01\\
\hspace{1em}$\text{pffr (Wild)}$ & 1.00 $\pm$ 0.00 & 0.93 $\pm$ 0.01 & 0.98 $\pm$ 0.01 & 1.00 $\pm$ 0.00 & 0.96 $\pm$ 0.01 & 1.00 $\pm$ 0.00\\
\hspace{1em}$\text{pffr ($z_{1-\alpha/2}$)}$ & 0.26 $\pm$ 0.03 & 0.84 $\pm$ 0.02 & 0.62 $\pm$ 0.03 & 0.11 $\pm$ 0.02 & 0.83 $\pm$ 0.02 & 0.56 $\pm$ 0.03\\
\bottomrule
\end{tabular}}
\end{table}

\begin{table}[!h]

\centering
\caption{\label{tab:pci} \footnotesize
Pointwise 95\% CI coverage from 300 replicates $\pm$ SE (SE$=0.00$ indicates a value $<0.01$). 
Table columns indicate if $\mathbf{R}^*_{\text{Lon}}$ had exchangeable or AR1 correlation. 
pffr (Wild) indicates CIs constructed for $\texttt{pffr}$ coefficient estimates with quantiles obtained from a wild cluster bootstrap. pffr ($z_{1-\alpha/2}$) are standard Wald CIs constructed with Gaussian quantiles.}
\centering
\resizebox{\ifdim\width>\linewidth\linewidth\else\width\fi}{!}{
\fontsize{9}{11}\selectfont
\begin{tabular}[t]{>{\raggedright\arraybackslash}p{4.8cm}>{\raggedright\arraybackslash}p{2.1cm}>{\raggedright\arraybackslash}p{2.1cm}>{\raggedright\arraybackslash}p{2.1cm}>{\raggedright\arraybackslash}p{2.1cm}>{\raggedright\arraybackslash}p{2.1cm}>{\raggedright\arraybackslash}p{2.1cm}}
\toprule
\multicolumn{1}{c}{ } & \multicolumn{3}{c}{Exchangeable} & \multicolumn{3}{c}{AR(1)} \\
\cmidrule(l{3pt}r{3pt}){2-4} \cmidrule(l{3pt}r{3pt}){5-7}
Method & Gaussian & Poisson & Binomial & Gaussian & Poisson & Binomial\\
\midrule
\addlinespace[0em]
\multicolumn{7}{l}{\textbf{$N = 25,\; n_i = 5$}}\\
\hspace{1em}$\text{1-Step } {\mathbf{R}}_{\text{Lon}} \otimes {\mathbf{R}}_{\text{Fun}}$ & 0.95 $\pm$ 0.01 & 0.95 $\pm$ 0.01 & 0.97 $\pm$ 0.01 & 0.94 $\pm$ 0.01 & 0.95 $\pm$ 0.01 & 0.96 $\pm$ 0.01\\
\hspace{1em}$\text{1-Step } {\mathbf{R}}_{\text{Lon}} \otimes I_L$ & 0.95 $\pm$ 0.01 & 0.95 $\pm$ 0.01 & 0.97 $\pm$ 0.01 & 0.95 $\pm$ 0.01 & 0.95 $\pm$ 0.01 & 0.96 $\pm$ 0.01\\
\hspace{1em}$\text{1-Step } I_{n_i} \otimes {\mathbf{R}}_{\text{Fun}}$ & 0.94 $\pm$ 0.01 & 0.93 $\pm$ 0.01 & 0.96 $\pm$ 0.01 & 0.93 $\pm$ 0.01 & 0.93 $\pm$ 0.01 & 0.96 $\pm$ 0.01\\
\hspace{1em}$\text{1-Step } I_{n_i} \otimes I_L$ & 0.95 $\pm$ 0.01 & 0.94 $\pm$ 0.01 & 0.96 $\pm$ 0.01 & 0.94 $\pm$ 0.01 & 0.94 $\pm$ 0.01 & 0.96 $\pm$ 0.01\\
\hspace{1em}$\text{Full-1Step } {\mathbf{R}}_{\text{Lon}} \otimes {\mathbf{R}}_{\text{Fun}}$ & 0.95 $\pm$ 0.01 & 0.95 $\pm$ 0.01 & 0.96 $\pm$ 0.01 & 0.94 $\pm$ 0.01 & 0.95 $\pm$ 0.01 & 0.96 $\pm$ 0.01\\
\hspace{1em}$\text{Full-1Step } {\mathbf{R}}_{\text{Lon}} \otimes I_L$ & 0.95 $\pm$ 0.01 & 0.95 $\pm$ 0.01 & 0.93 $\pm$ 0.02 & 0.95 $\pm$ 0.01 & 0.95 $\pm$ 0.01 & 0.96 $\pm$ 0.01\\
\hspace{1em}$\text{Full-1Step } I_{n_i} \otimes {\mathbf{R}}_{\text{Fun}}$ & 0.94 $\pm$ 0.01 & 0.93 $\pm$ 0.01 & 0.96 $\pm$ 0.01 & 0.93 $\pm$ 0.01 & 0.93 $\pm$ 0.01 & 0.96 $\pm$ 0.01\\
\hspace{1em}$\text{Full-1Step } I_{n_i} \otimes I_L$ & 0.95 $\pm$ 0.01 & 0.94 $\pm$ 0.01 & 0.96 $\pm$ 0.01 & 0.94 $\pm$ 0.01 & 0.94 $\pm$ 0.01 & 0.96 $\pm$ 0.01\\
\hspace{1em}$\text{pffr (Wild)}$ & 0.96 $\pm$ 0.01 & 0.93 $\pm$ 0.01 & 0.97 $\pm$ 0.01 & 0.97 $\pm$ 0.01 & 0.94 $\pm$ 0.01 & 0.97 $\pm$ 0.01\\
\hspace{1em}$\text{pffr ($z_{1-\alpha/2}$)}$ & 0.67 $\pm$ 0.03 & 0.87 $\pm$ 0.02 & 0.80 $\pm$ 0.02 & 0.66 $\pm$ 0.03 & 0.87 $\pm$ 0.02 & 0.80 $\pm$ 0.02\\
\addlinespace[0.6em]
\multicolumn{7}{l}{\textbf{$N = 25,\; n_i = 100$}}\\
\hspace{1em}$\text{1-Step } {\mathbf{R}}_{\text{Lon}} \otimes {\mathbf{R}}_{\text{Fun}}$ & 0.97 $\pm$ 0.01 & 0.98 $\pm$ 0.01 & 0.98 $\pm$ 0.01 & 0.95 $\pm$ 0.01 & 0.95 $\pm$ 0.01 & 0.96 $\pm$ 0.01\\
\hspace{1em}$\text{1-Step } {\mathbf{R}}_{\text{Lon}} \otimes I_L$ & 0.97 $\pm$ 0.01 & 0.98 $\pm$ 0.01 & 0.99 $\pm$ 0.01 & 0.95 $\pm$ 0.01 & 0.95 $\pm$ 0.01 & 0.96 $\pm$ 0.01\\
\hspace{1em}$\text{1-Step } I_{n_i} \otimes {\mathbf{R}}_{\text{Fun}}$ & 0.97 $\pm$ 0.01 & 0.95 $\pm$ 0.01 & 0.99 $\pm$ 0.01 & 0.94 $\pm$ 0.01 & 0.94 $\pm$ 0.01 & 0.96 $\pm$ 0.01\\
\hspace{1em}$\text{1-Step } I_{n_i} \otimes I_L$ & 0.97 $\pm$ 0.01 & 0.95 $\pm$ 0.01 & 0.98 $\pm$ 0.01 & 0.95 $\pm$ 0.01 & 0.94 $\pm$ 0.01 & 0.96 $\pm$ \vphantom{1} 0.01\\
\hspace{1em}$\text{Full-1Step } {\mathbf{R}}_{\text{Lon}} \otimes {\mathbf{R}}_{\text{Fun}}$ & 0.97 $\pm$ 0.01 & 0.98 $\pm$ 0.01 & 0.84 $\pm$ 0.02 & 0.95 $\pm$ 0.01 & 0.95 $\pm$ 0.01 & 0.96 $\pm$ 0.01\\
\hspace{1em}$\text{Full-1Step } {\mathbf{R}}_{\text{Lon}} \otimes I_L$ & 0.98 $\pm$ 0.01 & 0.98 $\pm$ 0.01 & 0.67 $\pm$ 0.03 & 0.95 $\pm$ 0.01 & 0.95 $\pm$ 0.01 & 0.96 $\pm$ 0.01\\
\hspace{1em}$\text{Full-1Step } I_{n_i} \otimes {\mathbf{R}}_{\text{Fun}}$ & 0.97 $\pm$ 0.01 & 0.95 $\pm$ 0.01 & 0.98 $\pm$ 0.01 & 0.94 $\pm$ 0.01 & 0.94 $\pm$ 0.01 & 0.96 $\pm$ 0.01\\
\hspace{1em}$\text{Full-1Step } I_{n_i} \otimes I_L$ & 0.97 $\pm$ 0.01 & 0.95 $\pm$ 0.01 & 0.98 $\pm$ 0.01 & 0.95 $\pm$ 0.01 & 0.94 $\pm$ 0.01 & 0.96 $\pm$ \vphantom{1} 0.01\\
\hspace{1em}$\text{pffr (Wild)}$ & 0.94 $\pm$ 0.01 & 0.93 $\pm$ 0.01 & 0.92 $\pm$ 0.02 & 0.99 $\pm$ 0.00 & 0.93 $\pm$ 0.01 & 0.96 $\pm$ 0.01\\
\hspace{1em}$\text{pffr ($z_{1-\alpha/2}$)}$ & 0.86 $\pm$ 0.02 & 0.94 $\pm$ 0.01 & 0.90 $\pm$ 0.02 & 0.78 $\pm$ 0.02 & 0.93 $\pm$ 0.01 & 0.89 $\pm$ 0.02\\
\addlinespace[0.6em]
\multicolumn{7}{l}{\textbf{$N = 50,\; n_i = 5$}}\\
\hspace{1em}$\text{1-Step } {\mathbf{R}}_{\text{Lon}} \otimes {\mathbf{R}}_{\text{Fun}}$ & 0.94 $\pm$ 0.01 & 0.94 $\pm$ 0.01 & 0.96 $\pm$ 0.01 & 0.93 $\pm$ 0.01 & 0.94 $\pm$ 0.01 & 0.95 $\pm$ 0.01\\
\hspace{1em}$\text{1-Step } {\mathbf{R}}_{\text{Lon}} \otimes I_L$ & 0.95 $\pm$ 0.01 & 0.95 $\pm$ 0.01 & 0.96 $\pm$ 0.01 & 0.94 $\pm$ 0.01 & 0.94 $\pm$ 0.01 & 0.95 $\pm$ 0.01\\
\hspace{1em}$\text{1-Step } I_{n_i} \otimes {\mathbf{R}}_{\text{Fun}}$ & 0.93 $\pm$ 0.01 & 0.93 $\pm$ 0.01 & 0.95 $\pm$ 0.01 & 0.93 $\pm$ 0.01 & 0.93 $\pm$ 0.01 & 0.95 $\pm$ 0.01\\
\hspace{1em}$\text{1-Step } I_{n_i} \otimes I_L$ & 0.94 $\pm$ 0.01 & 0.93 $\pm$ 0.01 & 0.95 $\pm$ 0.01 & 0.94 $\pm$ 0.01 & 0.93 $\pm$ 0.01 & 0.95 $\pm$ 0.01\\
\hspace{1em}$\text{Full-1Step } {\mathbf{R}}_{\text{Lon}} \otimes {\mathbf{R}}_{\text{Fun}}$ & 0.94 $\pm$ 0.01 & 0.94 $\pm$ 0.01 & 0.95 $\pm$ 0.01 & 0.93 $\pm$ 0.01 & 0.94 $\pm$ 0.01 & 0.95 $\pm$ 0.01\\
\hspace{1em}$\text{Full-1Step } {\mathbf{R}}_{\text{Lon}} \otimes I_L$ & 0.95 $\pm$ 0.01 & 0.94 $\pm$ 0.01 & 0.95 $\pm$ 0.01 & 0.94 $\pm$ 0.01 & 0.94 $\pm$ 0.01 & 0.95 $\pm$ 0.01\\
\hspace{1em}$\text{Full-1Step } I_{n_i} \otimes {\mathbf{R}}_{\text{Fun}}$ & 0.93 $\pm$ 0.01 & 0.93 $\pm$ 0.01 & 0.95 $\pm$ 0.01 & 0.93 $\pm$ 0.01 & 0.93 $\pm$ 0.01 & 0.95 $\pm$ 0.01\\
\hspace{1em}$\text{Full-1Step } I_{n_i} \otimes I_L$ & 0.94 $\pm$ 0.01 & 0.93 $\pm$ 0.01 & 0.95 $\pm$ 0.01 & 0.94 $\pm$ 0.01 & 0.93 $\pm$ 0.01 & 0.95 $\pm$ 0.01\\
\hspace{1em}$\text{pffr (Wild)}$ & 0.98 $\pm$ 0.01 & 0.95 $\pm$ 0.01 & 0.99 $\pm$ 0.01 & 0.99 $\pm$ 0.01 & 0.96 $\pm$ 0.01 & 0.99 $\pm$ 0.01\\
\hspace{1em}$\text{pffr ($z_{1-\alpha/2}$)}$ & 0.65 $\pm$ 0.03 & 0.87 $\pm$ 0.02 & 0.77 $\pm$ 0.02 & 0.65 $\pm$ 0.03 & 0.86 $\pm$ 0.02 & 0.77 $\pm$ 0.02\\
\addlinespace[0.6em]
\multicolumn{7}{l}{\textbf{$N = 50,\; n_i = 100$}}\\
\hspace{1em}$\text{1-Step } {\mathbf{R}}_{\text{Lon}} \otimes {\mathbf{R}}_{\text{Fun}}$ & 0.96 $\pm$ 0.01 & 0.98 $\pm$ 0.01 & 0.98 $\pm$ 0.01 & 0.95 $\pm$ 0.01 & 0.95 $\pm$ 0.01 & 0.96 $\pm$ 0.01\\
\hspace{1em}$\text{1-Step } {\mathbf{R}}_{\text{Lon}} \otimes I_L$ & 0.97 $\pm$ 0.01 & 0.98 $\pm$ 0.01 & 0.98 $\pm$ 0.01 & 0.96 $\pm$ 0.01 & 0.95 $\pm$ 0.01 & 0.96 $\pm$ 0.01\\
\hspace{1em}$\text{1-Step } I_{n_i} \otimes {\mathbf{R}}_{\text{Fun}}$ & 0.97 $\pm$ 0.01 & 0.95 $\pm$ 0.01 & 0.98 $\pm$ 0.01 & 0.95 $\pm$ 0.01 & 0.94 $\pm$ 0.01 & 0.96 $\pm$ 0.01\\
\hspace{1em}$\text{1-Step } I_{n_i} \otimes I_L$ & 0.97 $\pm$ 0.01 & 0.95 $\pm$ 0.01 & 0.98 $\pm$ 0.01 & 0.95 $\pm$ 0.01 & 0.94 $\pm$ 0.01 & 0.96 $\pm$ 0.01\\
\hspace{1em}$\text{Full-1Step } {\mathbf{R}}_{\text{Lon}} \otimes {\mathbf{R}}_{\text{Fun}}$ & 0.97 $\pm$ 0.01 & 0.97 $\pm$ 0.01 & 0.90 $\pm$ 0.02 & 0.95 $\pm$ 0.01 & 0.95 $\pm$ 0.01 & 0.96 $\pm$ 0.01\\
\hspace{1em}$\text{Full-1Step } {\mathbf{R}}_{\text{Lon}} \otimes I_L$ & 0.97 $\pm$ 0.01 & 0.98 $\pm$ 0.01 & 0.79 $\pm$ 0.02 & 0.96 $\pm$ 0.01 & 0.95 $\pm$ 0.01 & 0.96 $\pm$ 0.01\\
\hspace{1em}$\text{Full-1Step } I_{n_i} \otimes {\mathbf{R}}_{\text{Fun}}$ & 0.96 $\pm$ 0.01 & 0.95 $\pm$ 0.01 & 0.98 $\pm$ 0.01 & 0.95 $\pm$ 0.01 & 0.94 $\pm$ 0.01 & 0.96 $\pm$ 0.01\\
\hspace{1em}$\text{Full-1Step } I_{n_i} \otimes I_L$ & 0.97 $\pm$ 0.01 & 0.95 $\pm$ 0.01 & 0.98 $\pm$ 0.01 & 0.95 $\pm$ 0.01 & 0.94 $\pm$ 0.01 & 0.96 $\pm$ 0.01\\
\hspace{1em}$\text{pffr (Wild)}$ & 0.95 $\pm$ 0.01 & 0.93 $\pm$ 0.01 & 0.94 $\pm$ 0.01 & 1.00 $\pm$ 0.00 & 0.93 $\pm$ 0.01 & 0.98 $\pm$ 0.01\\
\hspace{1em}$\text{pffr ($z_{1-\alpha/2}$)}$ & 0.85 $\pm$ 0.02 & 0.93 $\pm$ 0.01 & 0.90 $\pm$ 0.02 & 0.77 $\pm$ 0.02 & 0.93 $\pm$ 0.02 & 0.88 $\pm$ 0.02\\
\addlinespace[0.6em]
\multicolumn{7}{l}{\textbf{$N = 500,\; n_i = 5$}}\\
\hspace{1em}$\text{1-Step } {\mathbf{R}}_{\text{Lon}} \otimes {\mathbf{R}}_{\text{Fun}}$ & 0.95 $\pm$ 0.01 & 0.95 $\pm$ 0.01 & 0.96 $\pm$ 0.01 & 0.95 $\pm$ 0.01 & 0.95 $\pm$ 0.01 & 0.95 $\pm$ 0.01\\
\hspace{1em}$\text{1-Step } {\mathbf{R}}_{\text{Lon}} \otimes I_L$ & 0.96 $\pm$ 0.01 & 0.95 $\pm$ 0.01 & 0.96 $\pm$ 0.01 & 0.96 $\pm$ 0.01 & 0.95 $\pm$ 0.01 & 0.96 $\pm$ 0.01\\
\hspace{1em}$\text{1-Step } I_{n_i} \otimes {\mathbf{R}}_{\text{Fun}}$ & 0.95 $\pm$ 0.01 & 0.95 $\pm$ 0.01 & 0.96 $\pm$ 0.01 & 0.95 $\pm$ 0.01 & 0.95 $\pm$ 0.01 & 0.95 $\pm$ 0.01\\
\hspace{1em}$\text{1-Step } I_{n_i} \otimes I_L$ & 0.95 $\pm$ 0.01 & 0.95 $\pm$ 0.01 & 0.96 $\pm$ 0.01 & 0.96 $\pm$ 0.01 & 0.95 $\pm$ 0.01 & 0.96 $\pm$ 0.01\\
\hspace{1em}$\text{Full-1Step } {\mathbf{R}}_{\text{Lon}} \otimes {\mathbf{R}}_{\text{Fun}}$ & 0.95 $\pm$ 0.01 & 0.95 $\pm$ 0.01 & 0.95 $\pm$ 0.01 & 0.95 $\pm$ 0.01 & 0.95 $\pm$ 0.01 & 0.95 $\pm$ 0.01\\
\hspace{1em}$\text{Full-1Step } {\mathbf{R}}_{\text{Lon}} \otimes I_L$ & 0.96 $\pm$ 0.01 & 0.95 $\pm$ 0.01 & 0.96 $\pm$ 0.01 & 0.96 $\pm$ 0.01 & 0.95 $\pm$ 0.01 & 0.96 $\pm$ 0.01\\
\hspace{1em}$\text{Full-1Step } I_{n_i} \otimes {\mathbf{R}}_{\text{Fun}}$ & 0.95 $\pm$ 0.01 & 0.95 $\pm$ 0.01 & 0.95 $\pm$ 0.01 & 0.95 $\pm$ 0.01 & 0.95 $\pm$ 0.01 & 0.95 $\pm$ 0.01\\
\hspace{1em}$\text{Full-1Step } I_{n_i} \otimes I_L$ & 0.95 $\pm$ 0.01 & 0.95 $\pm$ 0.01 & 0.96 $\pm$ 0.01 & 0.96 $\pm$ 0.01 & 0.95 $\pm$ 0.01 & 0.96 $\pm$ 0.01\\
\hspace{1em}$\text{pffr (Wild)}$ & 1.00 $\pm$ 0.00 & 1.00 $\pm$ 0.00 & 1.00 $\pm$ 0.00 & 1.00 $\pm$ 0.00 & 1.00 $\pm$ 0.00 & 1.00 $\pm$ 0.00\\
\hspace{1em}$\text{pffr ($z_{1-\alpha/2}$)}$ & 0.60 $\pm$ 0.03 & 0.88 $\pm$ 0.02 & 0.71 $\pm$ 0.03 & 0.60 $\pm$ 0.03 & 0.88 $\pm$ 0.02 & 0.71 $\pm$ 0.03\\
\addlinespace[0.6em]
\multicolumn{7}{l}{\textbf{$N = 500,\; n_i = 100$}}\\
\hspace{1em}$\text{1-Step } {\mathbf{R}}_{\text{Lon}} \otimes {\mathbf{R}}_{\text{Fun}}$ & 0.95 $\pm$ 0.01 & 0.95 $\pm$ 0.01 & 0.96 $\pm$ 0.01 & 0.95 $\pm$ 0.01 & 0.94 $\pm$ 0.01 & 0.95 $\pm$ 0.01\\
\hspace{1em}$\text{1-Step } {\mathbf{R}}_{\text{Lon}} \otimes I_L$ & 0.96 $\pm$ 0.01 & 0.95 $\pm$ 0.01 & 0.96 $\pm$ 0.01 & 0.95 $\pm$ 0.01 & 0.94 $\pm$ 0.01 & 0.95 $\pm$ 0.01\\
\hspace{1em}$\text{1-Step } I_{n_i} \otimes {\mathbf{R}}_{\text{Fun}}$ & 0.95 $\pm$ 0.01 & 0.94 $\pm$ 0.01 & 0.96 $\pm$ 0.01 & 0.95 $\pm$ 0.01 & 0.94 $\pm$ 0.01 & 0.95 $\pm$ 0.01\\
\hspace{1em}$\text{1-Step } I_{n_i} \otimes I_L$ & 0.96 $\pm$ 0.01 & 0.94 $\pm$ 0.01 & 0.96 $\pm$ 0.01 & 0.95 $\pm$ 0.01 & 0.94 $\pm$ 0.01 & 0.96 $\pm$ 0.01\\
\hspace{1em}$\text{Full-1Step } {\mathbf{R}}_{\text{Lon}} \otimes {\mathbf{R}}_{\text{Fun}}$ & 0.95 $\pm$ 0.01 & 0.95 $\pm$ 0.01 & 0.96 $\pm$ 0.01 & 0.95 $\pm$ 0.01 & 0.94 $\pm$ 0.01 & 0.95 $\pm$ 0.01\\
\hspace{1em}$\text{Full-1Step } {\mathbf{R}}_{\text{Lon}} \otimes I_L$ & 0.96 $\pm$ 0.01 & 0.95 $\pm$ 0.01 & 0.96 $\pm$ 0.01 & 0.95 $\pm$ 0.01 & 0.94 $\pm$ 0.01 & 0.95 $\pm$ 0.01\\
\hspace{1em}$\text{Full-1Step } I_{n_i} \otimes {\mathbf{R}}_{\text{Fun}}$ & 0.95 $\pm$ 0.01 & 0.94 $\pm$ 0.01 & 0.96 $\pm$ 0.01 & 0.95 $\pm$ 0.01 & 0.94 $\pm$ 0.01 & 0.95 $\pm$ 0.01\\
\hspace{1em}$\text{Full-1Step } I_{n_i} \otimes I_L$ & 0.96 $\pm$ 0.01 & 0.94 $\pm$ 0.01 & 0.96 $\pm$ 0.01 & 0.95 $\pm$ 0.01 & 0.94 $\pm$ 0.01 & 0.95 $\pm$ 0.01\\
\hspace{1em}$\text{pffr (Wild)}$ & 0.99 $\pm$ 0.00 & 0.94 $\pm$ 0.01 & 0.97 $\pm$ 0.01 & 1.00 $\pm$ 0.00 & 0.97 $\pm$ 0.01 & 1.00 $\pm$ 0.00\\
\hspace{1em}$\text{pffr ($z_{1-\alpha/2}$)}$ & 0.80 $\pm$ 0.02 & 0.93 $\pm$ 0.02 & 0.88 $\pm$ 0.02 & 0.74 $\pm$ 0.03 & 0.92 $\pm$ 0.02 & 0.87 $\pm$ 0.02\\
\bottomrule
\end{tabular}}
\end{table}

\begin{table}[!h]
\centering
\caption{\label{tab:pt_width} \footnotesize Relative pointwise CI width (mean $\pm$ SE) vs.\ pffr (Wild). SE$=0.00$ indicates a value $<0.01$. We denote $UB^{(r)}(s)$ and $LB^{(r)}(s)$ and $UB_\text{pffr}^{(r)}(s)/LB_\text{pffr}^{(r)}(s)$ as the upper/lower bounds of the CIs (at $s$) of the indicated method and pffr, respectively. Below we report the average ratio $\frac{1}{(q+1)|\mathcal{S}|}\sum_{r=0}^q\sum_{s \in \mathcal{S}}\frac{UB^{(r)}(s) -LB(s)^{(r)}}{UB^{(r)}_\text{pffr}(s) -LB^{(r)}_\text{pffr}(s)}$ across 300 simulation replicates. Values $<1$ indicate narrower 95\% CIs. Values with $*$ indicate extreme outliers (from poor estimates) were removed from the average of that cell to avoid skewing results.}
\centering
\resizebox{\ifdim\width>\linewidth\linewidth\else\width\fi}{!}{
\fontsize{9}{11}\selectfont
\begin{tabular}[t]{>{\raggedright\arraybackslash}p{4.8cm}>{\raggedright\arraybackslash}p{2.1cm}>{\raggedright\arraybackslash}p{2.1cm}>{\raggedright\arraybackslash}p{2.1cm}>{\raggedright\arraybackslash}p{2.1cm}>{\raggedright\arraybackslash}p{2.1cm}>{\raggedright\arraybackslash}p{2.1cm}}
\toprule
\multicolumn{1}{c}{ } & \multicolumn{3}{c}{Exchangeable} & \multicolumn{3}{c}{AR(1)} \\
\cmidrule(l{3pt}r{3pt}){2-4} \cmidrule(l{3pt}r{3pt}){5-7}
Method & Gaussian & Poisson & Binomial & Gaussian & Poisson & Binomial\\
\midrule
\addlinespace[0em]
\multicolumn{7}{l}{\textbf{$N = 25,\; n_i = 5$}}\\
\hspace{1em}$\text{1-Step } {\mathbf{R}}_{\text{Lon}} \otimes {\mathbf{R}}_{\text{Fun}}$ & 0.75 $\pm$ 0.01 & 0.83 $\pm$ 0.00 & 0.85 $\pm$ 0.01 & 0.75 $\pm$ 0.01 & 0.83 $\pm$ 0.00 & 0.82 $\pm$ 0.01\\
\hspace{1em}$\text{1-Step } {\mathbf{R}}_{\text{Lon}} \otimes I_L$ & 0.78 $\pm$ 0.01 & 0.87 $\pm$ 0.01 & 0.84 $\pm$ 0.01 & 0.77 $\pm$ 0.01 & 0.83 $\pm$ 0.00 & 0.80 $\pm$ 0.01\\
\hspace{1em}$\text{1-Step } I_{n_i} \otimes {\mathbf{R}}_{\text{Fun}}$ & 0.83 $\pm$ 0.01 & 0.92 $\pm$ 0.00 & 0.94 $\pm$ 0.01 & 0.80 $\pm$ 0.01 & 0.91 $\pm$ 0.00 & 0.87 $\pm$ 0.01\\
\hspace{1em}$\text{1-Step } I_{n_i} \otimes I_L$ & 0.86 $\pm$ 0.01 & 0.92 $\pm$ 0.00 & 0.92 $\pm$ 0.01 & 0.83 $\pm$ 0.01 & 0.90 $\pm$ 0.00 & 0.86 $\pm$ 0.01\\
\hspace{1em}$\text{Full-1Step } {\mathbf{R}}_{\text{Lon}} \otimes {\mathbf{R}}_{\text{Fun}}$ & 0.75 $\pm$ 0.01 & 0.83 $\pm$ 0.00 & 0.90 $\pm$ 0.04 & 0.75 $\pm$ 0.01 & 0.83 $\pm$ 0.00 & 0.82 $\pm$ 0.01\\
\hspace{1em}$\text{Full-1Step } {\mathbf{R}}_{\text{Lon}} \otimes I_L$ & 0.78 $\pm$ 0.01 & 0.90 $\pm$ 0.02 & 1.00 $\pm$ 0.05 & 0.77 $\pm$ 0.01 & 0.83 $\pm$ 0.00 & 0.81 $\pm$ 0.01\\
\hspace{1em}$\text{Full-1Step } I_{n_i} \otimes {\mathbf{R}}_{\text{Fun}}$ & 0.83 $\pm$ 0.01 & 0.92 $\pm$ 0.00 & 0.94 $\pm$ 0.01 & 0.80 $\pm$ 0.01 & 0.91 $\pm$ 0.00 & 0.87 $\pm$ 0.01\\
\hspace{1em}$\text{Full-1Step } I_{n_i} \otimes I_L$ & 0.86 $\pm$ 0.01 & 0.92 $\pm$ 0.00 & 0.93 $\pm$ 0.01 & 0.83 $\pm$ 0.01 & 0.90 $\pm$ 0.00 & 0.86 $\pm$ 0.01\\
\addlinespace[0.6em]
\multicolumn{7}{l}{\textbf{$N = 25,\; n_i = 100$}}\\
\hspace{1em}$\text{1-Step } {\mathbf{R}}_{\text{Lon}} \otimes {\mathbf{R}}_{\text{Fun}}$ & 0.79 $\pm$ 0.01 & 0.78 $\pm$ 0.01 & 1.18 $\pm$ 0.01 & 0.51 $\pm$ 0.00 & 0.87 $\pm$ 0.00 & 0.82 $\pm$ 0.00\\
\hspace{1em}$\text{1-Step } {\mathbf{R}}_{\text{Lon}} \otimes I_L$ & 0.83 $\pm$ 0.01 & 0.82 $\pm$ 0.01 & 1.18 $\pm$ 0.01 & 0.50 $\pm$ 0.00 & 0.87 $\pm$ 0.00 & 0.81 $\pm$ 0.00\\
\hspace{1em}$\text{1-Step } I_{n_i} \otimes {\mathbf{R}}_{\text{Fun}}$ & 1.00 $\pm$ 0.01 & 1.06 $\pm$ 0.00 & 1.39 $\pm$ 0.01 & 0.57 $\pm$ 0.00 & 1.01 $\pm$ 0.00 & 0.90 $\pm$ 0.00\\
\hspace{1em}$\text{1-Step } I_{n_i} \otimes I_L$ & 1.02 $\pm$ 0.01 & 1.05 $\pm$ 0.00 & 1.35 $\pm$ 0.01 & 0.56 $\pm$ 0.00 & 1.00 $\pm$ 0.00 & 0.89 $\pm$ 0.00\\
\hspace{1em}$\text{Full-1Step } {\mathbf{R}}_{\text{Lon}} \otimes {\mathbf{R}}_{\text{Fun}}$ & 0.79 $\pm$ 0.01 & 0.78 $\pm$ 0.01 & 2.23 $\pm$ 0.21 & 0.51 $\pm$ 0.00 & 0.87 $\pm$ 0.00 & 0.82 $\pm$ 0.00\\
\hspace{1em}$\text{Full-1Step } {\mathbf{R}}_{\text{Lon}} \otimes I_L$ & 0.83 $\pm$ 0.01 & 0.96 $\pm$ 0.11 & 5.29 $\pm$ 0.58$^*$ & 0.50 $\pm$ 0.00 & 0.87 $\pm$ 0.00 & 0.81 $\pm$ 0.00\\
\hspace{1em}$\text{Full-1Step } I_{n_i} \otimes {\mathbf{R}}_{\text{Fun}}$ & 1.00 $\pm$ 0.01 & 1.06 $\pm$ 0.00 & 1.39 $\pm$ 0.01 & 0.57 $\pm$ 0.00 & 1.01 $\pm$ 0.00 & 0.90 $\pm$ 0.00\\
\hspace{1em}$\text{Full-1Step } I_{n_i} \otimes I_L$ & 1.02 $\pm$ 0.01 & 1.06 $\pm$ 0.00 & 1.37 $\pm$ 0.01 & 0.56 $\pm$ 0.00 & 1.00 $\pm$ 0.00 & 0.89 $\pm$ 0.00\\
\addlinespace[0.6em]
\multicolumn{7}{l}{\textbf{$N = 50,\; n_i = 5$}}\\
\hspace{1em}$\text{1-Step } {\mathbf{R}}_{\text{Lon}} \otimes {\mathbf{R}}_{\text{Fun}}$ & 0.62 $\pm$ 0.00 & 0.72 $\pm$ 0.00 & 0.65 $\pm$ 0.00 & 0.60 $\pm$ 0.00 & 0.72 $\pm$ 0.00 & 0.62 $\pm$ 0.00\\
\hspace{1em}$\text{1-Step } {\mathbf{R}}_{\text{Lon}} \otimes I_L$ & 0.63 $\pm$ 0.01 & 0.74 $\pm$ 0.00 & 0.64 $\pm$ 0.00 & 0.61 $\pm$ 0.01 & 0.72 $\pm$ 0.00 & 0.61 $\pm$ 0.00\\
\hspace{1em}$\text{1-Step } I_{n_i} \otimes {\mathbf{R}}_{\text{Fun}}$ & 0.70 $\pm$ 0.00 & 0.83 $\pm$ 0.00 & 0.72 $\pm$ 0.00 & 0.66 $\pm$ 0.00 & 0.80 $\pm$ 0.00 & 0.66 $\pm$ 0.00\\
\hspace{1em}$\text{1-Step } I_{n_i} \otimes I_L$ & 0.71 $\pm$ 0.01 & 0.83 $\pm$ 0.00 & 0.71 $\pm$ 0.00 & 0.68 $\pm$ 0.01 & 0.80 $\pm$ 0.00 & 0.65 $\pm$ 0.00\\
\hspace{1em}$\text{Full-1Step } {\mathbf{R}}_{\text{Lon}} \otimes {\mathbf{R}}_{\text{Fun}}$ & 0.62 $\pm$ 0.00 & 0.72 $\pm$ 0.00 & 0.65 $\pm$ 0.00 & 0.60 $\pm$ 0.00 & 0.72 $\pm$ 0.00 & 0.62 $\pm$ 0.00\\
\hspace{1em}$\text{Full-1Step } {\mathbf{R}}_{\text{Lon}} \otimes I_L$ & 0.63 $\pm$ 0.01 & 0.76 $\pm$ 0.01 & 0.74 $\pm$ 0.07 & 0.61 $\pm$ 0.01 & 0.72 $\pm$ 0.00 & 0.61 $\pm$ 0.00\\
\hspace{1em}$\text{Full-1Step } I_{n_i} \otimes {\mathbf{R}}_{\text{Fun}}$ & 0.70 $\pm$ 0.00 & 0.83 $\pm$ 0.00 & 0.72 $\pm$ 0.00 & 0.66 $\pm$ 0.00 & 0.80 $\pm$ 0.00 & 0.66 $\pm$ 0.00\\
\hspace{1em}$\text{Full-1Step } I_{n_i} \otimes I_L$ & 0.71 $\pm$ 0.01 & 0.83 $\pm$ 0.00 & 0.71 $\pm$ 0.00 & 0.68 $\pm$ 0.01 & 0.80 $\pm$ 0.00 & 0.66 $\pm$ 0.00\\
\addlinespace[0.6em]
\multicolumn{7}{l}{\textbf{$N = 50,\; n_i = 100$}}\\
\hspace{1em}$\text{1-Step } {\mathbf{R}}_{\text{Lon}} \otimes {\mathbf{R}}_{\text{Fun}}$ & 0.75 $\pm$ 0.00 & 0.70 $\pm$ 0.00 & 1.07 $\pm$ 0.01 & 0.37 $\pm$ 0.00 & 0.83 $\pm$ 0.00 & 0.69 $\pm$ 0.00\\
\hspace{1em}$\text{1-Step } {\mathbf{R}}_{\text{Lon}} \otimes I_L$ & 0.77 $\pm$ 0.00 & 0.73 $\pm$ 0.01 & 1.05 $\pm$ 0.01 & 0.37 $\pm$ 0.00 & 0.83 $\pm$ 0.00 & 0.68 $\pm$ 0.00\\
\hspace{1em}$\text{1-Step } I_{n_i} \otimes {\mathbf{R}}_{\text{Fun}}$ & 0.85 $\pm$ 0.00 & 1.02 $\pm$ 0.00 & 1.18 $\pm$ 0.01 & 0.42 $\pm$ 0.00 & 0.99 $\pm$ 0.00 & 0.76 $\pm$ 0.00\\
\hspace{1em}$\text{1-Step } I_{n_i} \otimes I_L$ & 0.86 $\pm$ 0.00 & 1.03 $\pm$ 0.00 & 1.16 $\pm$ 0.01 & 0.42 $\pm$ 0.00 & 0.99 $\pm$ 0.00 & 0.75 $\pm$ 0.00\\
\hspace{1em}$\text{Full-1Step } {\mathbf{R}}_{\text{Lon}} \otimes {\mathbf{R}}_{\text{Fun}}$ & 0.75 $\pm$ 0.00 & 0.70 $\pm$ 0.00 & 1.42 $\pm$ 0.10 & 0.37 $\pm$ 0.00 & 0.83 $\pm$ 0.00 & 0.69 $\pm$ 0.00\\
\hspace{1em}$\text{Full-1Step } {\mathbf{R}}_{\text{Lon}} \otimes I_L$ & 0.77 $\pm$ 0.00 & 0.74 $\pm$ 0.01 & 4.05 $\pm$ 0.55 & 0.37 $\pm$ 0.00 & 0.83 $\pm$ 0.00 & 0.68 $\pm$ 0.00\\
\hspace{1em}$\text{Full-1Step } I_{n_i} \otimes {\mathbf{R}}_{\text{Fun}}$ & 0.85 $\pm$ 0.00 & 1.02 $\pm$ 0.00 & 1.19 $\pm$ 0.01 & 0.42 $\pm$ 0.00 & 0.99 $\pm$ 0.00 & 0.76 $\pm$ 0.00\\
\hspace{1em}$\text{Full-1Step } I_{n_i} \otimes I_L$ & 0.86 $\pm$ 0.00 & 1.03 $\pm$ 0.00 & 1.16 $\pm$ 0.01 & 0.42 $\pm$ 0.00 & 0.99 $\pm$ 0.00 & 0.75 $\pm$ 0.00\\
\addlinespace[0.6em]
\multicolumn{7}{l}{\textbf{$N = 500,\; n_i = 5$}}\\
\hspace{1em}$\text{1-Step } {\mathbf{R}}_{\text{Lon}} \otimes {\mathbf{R}}_{\text{Fun}}$ & 0.25 $\pm$ 0.00 & 0.36 $\pm$ 0.00 & 0.22 $\pm$ 0.00 & 0.24 $\pm$ 0.00 & 0.35 $\pm$ 0.00 & 0.21 $\pm$ 0.00\\
\hspace{1em}$\text{1-Step } {\mathbf{R}}_{\text{Lon}} \otimes I_L$ & 0.25 $\pm$ 0.00 & 0.36 $\pm$ 0.00 & 0.22 $\pm$ 0.00 & 0.24 $\pm$ 0.00 & 0.36 $\pm$ 0.00 & 0.21 $\pm$ 0.00\\
\hspace{1em}$\text{1-Step } I_{n_i} \otimes {\mathbf{R}}_{\text{Fun}}$ & 0.29 $\pm$ 0.00 & 0.42 $\pm$ 0.00 & 0.25 $\pm$ 0.00 & 0.27 $\pm$ 0.00 & 0.40 $\pm$ 0.00 & 0.23 $\pm$ 0.00\\
\hspace{1em}$\text{1-Step } I_{n_i} \otimes I_L$ & 0.28 $\pm$ 0.00 & 0.42 $\pm$ 0.00 & 0.25 $\pm$ 0.00 & 0.26 $\pm$ 0.00 & 0.40 $\pm$ 0.00 & 0.23 $\pm$ 0.00\\
\hspace{1em}$\text{Full-1Step } {\mathbf{R}}_{\text{Lon}} \otimes {\mathbf{R}}_{\text{Fun}}$ & 0.25 $\pm$ 0.00 & 0.36 $\pm$ 0.00 & 0.22 $\pm$ 0.00 & 0.24 $\pm$ 0.00 & 0.35 $\pm$ 0.00 & 0.21 $\pm$ 0.00\\
\hspace{1em}$\text{Full-1Step } {\mathbf{R}}_{\text{Lon}} \otimes I_L$ & 0.25 $\pm$ 0.00 & 0.36 $\pm$ 0.00 & 0.22 $\pm$ 0.00 & 0.24 $\pm$ 0.00 & 0.36 $\pm$ 0.00 & 0.21 $\pm$ 0.00\\
\hspace{1em}$\text{Full-1Step } I_{n_i} \otimes {\mathbf{R}}_{\text{Fun}}$ & 0.29 $\pm$ 0.00 & 0.42 $\pm$ 0.00 & 0.25 $\pm$ 0.00 & 0.27 $\pm$ 0.00 & 0.40 $\pm$ 0.00 & 0.23 $\pm$ 0.00\\
\hspace{1em}$\text{Full-1Step } I_{n_i} \otimes I_L$ & 0.28 $\pm$ 0.00 & 0.42 $\pm$ 0.00 & 0.25 $\pm$ 0.00 & 0.26 $\pm$ 0.00 & 0.40 $\pm$ 0.00 & 0.23 $\pm$ 0.00\\
\addlinespace[0.6em]
\multicolumn{7}{l}{\textbf{$N = 500,\; n_i = 100$}}\\
\hspace{1em}$\text{1-Step } {\mathbf{R}}_{\text{Lon}} \otimes {\mathbf{R}}_{\text{Fun}}$ & 0.43 $\pm$ 0.00 & 0.57 $\pm$ 0.00 & 0.77 $\pm$ 0.00 & 0.12 $\pm$ 0.00 & 0.69 $\pm$ 0.00 & 0.27 $\pm$ 0.00\\
\hspace{1em}$\text{1-Step } {\mathbf{R}}_{\text{Lon}} \otimes I_L$ & 0.43 $\pm$ 0.00 & 0.58 $\pm$ 0.00 & 0.75 $\pm$ 0.00 & 0.12 $\pm$ 0.00 & 0.69 $\pm$ 0.00 & 0.27 $\pm$ 0.00\\
\hspace{1em}$\text{1-Step } I_{n_i} \otimes {\mathbf{R}}_{\text{Fun}}$ & 0.46 $\pm$ 0.00 & 0.96 $\pm$ 0.00 & 0.76 $\pm$ 0.00 & 0.14 $\pm$ 0.00 & 0.84 $\pm$ 0.00 & 0.30 $\pm$ 0.00\\
\hspace{1em}$\text{1-Step } I_{n_i} \otimes I_L$ & 0.46 $\pm$ 0.00 & 0.97 $\pm$ 0.00 & 0.75 $\pm$ 0.00 & 0.13 $\pm$ 0.00 & 0.85 $\pm$ 0.00 & 0.30 $\pm$ 0.00\\
\hspace{1em}$\text{Full-1Step } {\mathbf{R}}_{\text{Lon}} \otimes {\mathbf{R}}_{\text{Fun}}$ & 0.43 $\pm$ 0.00 & 0.57 $\pm$ 0.00 & 0.77 $\pm$ 0.00 & 0.12 $\pm$ 0.00 & 0.69 $\pm$ 0.00 & 0.27 $\pm$ 0.00\\
\hspace{1em}$\text{Full-1Step } {\mathbf{R}}_{\text{Lon}} \otimes I_L$ & 0.43 $\pm$ 0.00 & 0.58 $\pm$ 0.00 & 0.75 $\pm$ 0.00 & 0.12 $\pm$ 0.00 & 0.69 $\pm$ 0.00 & 0.27 $\pm$ 0.00\\
\hspace{1em}$\text{Full-1Step } I_{n_i} \otimes {\mathbf{R}}_{\text{Fun}}$ & 0.46 $\pm$ 0.00 & 0.96 $\pm$ 0.00 & 0.76 $\pm$ 0.00 & 0.14 $\pm$ 0.00 & 0.84 $\pm$ 0.00 & 0.30 $\pm$ 0.00\\
\hspace{1em}$\text{Full-1Step } I_{n_i} \otimes I_L$ & 0.46 $\pm$ 0.00 & 0.97 $\pm$ 0.00 & 0.75 $\pm$ 0.00 & 0.13 $\pm$ 0.00 & 0.85 $\pm$ 0.00 & 0.30 $\pm$ 0.00\\
\bottomrule
\end{tabular}}
\end{table}

\begin{table}[!h]
\centering
\caption{\label{tab:time} \footnotesize Computation time in seconds (mean $\pm$ SE) averaged across 300 simulation replicates. SE$=0.00$ indicates a value $<0.01$.}
\centering
\resizebox{\ifdim\width>\linewidth\linewidth\else\width\fi}{!}{
\fontsize{9}{11}\selectfont
\begin{tabular}[t]{>{\raggedright\arraybackslash}p{4.8cm}>{\raggedright\arraybackslash}p{2.1cm}>{\raggedright\arraybackslash}p{2.1cm}>{\raggedright\arraybackslash}p{2.1cm}>{\raggedright\arraybackslash}p{2.1cm}>{\raggedright\arraybackslash}p{2.1cm}>{\raggedright\arraybackslash}p{2.1cm}}
\toprule
\multicolumn{1}{c}{ } & \multicolumn{3}{c}{Exchangeable} & \multicolumn{3}{c}{AR(1)} \\
\cmidrule(l{3pt}r{3pt}){2-4} \cmidrule(l{3pt}r{3pt}){5-7}
Method & Gaussian & Poisson & Binomial & Gaussian & Poisson & Binomial\\
\midrule
\addlinespace[0em]
\multicolumn{7}{l}{\textbf{$N = 25,\; n_i = 5$}}\\
\hspace{1em}$\text{1-Step } {\mathbf{R}}_{\text{Lon}} \otimes {\mathbf{R}}_{\text{Fun}}$ & 2.22 $\pm$ 0.05 & 2.47 $\pm$ 0.05 & 2.49 $\pm$ 0.05 & 2.12 $\pm$ 0.01 & 2.25 $\pm$ 0.01 & 2.40 $\pm$ 0.02\\
\hspace{1em}$\text{1-Step } {\mathbf{R}}_{\text{Lon}} \otimes I_L$ & 4.57 $\pm$ 0.12 & 4.71 $\pm$ 0.08 & 4.75 $\pm$ 0.08 & 4.32 $\pm$ 0.02 & 4.67 $\pm$ 0.04 & 5.01 $\pm$ 0.06\\
\hspace{1em}$\text{1-Step } I_{n_i} \otimes {\mathbf{R}}_{\text{Fun}}$ & 1.16 $\pm$ 0.02 & 1.37 $\pm$ 0.03 & 1.38 $\pm$ 0.03 & 1.08 $\pm$ 0.00 & 1.24 $\pm$ 0.01 & 1.33 $\pm$ 0.01\\
\hspace{1em}$\text{1-Step } I_{n_i} \otimes I_L$ & 0.66 $\pm$ 0.01 & 0.83 $\pm$ 0.02 & 0.84 $\pm$ 0.02 & 0.64 $\pm$ 0.00 & 0.74 $\pm$ 0.00 & 0.80 $\pm$ 0.01\\
\hspace{1em}$\text{Full-1Step } {\mathbf{R}}_{\text{Lon}} \otimes {\mathbf{R}}_{\text{Fun}}$ & 7.74 $\pm$ 0.17 & 7.94 $\pm$ 0.18 & 12.56 $\pm$ 0.53 & 6.96 $\pm$ 0.05 & 8.07 $\pm$ 0.28 & 11.95 $\pm$ 0.50\\
\hspace{1em}$\text{Full-1Step } {\mathbf{R}}_{\text{Lon}} \otimes I_L$ & 12.98 $\pm$ 0.28 & 35.67 $\pm$ 2.85 & 32.97 $\pm$ 2.17 & 11.99 $\pm$ 0.07 & 17.72 $\pm$ 0.72 & 24.82 $\pm$ 0.83\\
\hspace{1em}$\text{Full-1Step } I_{n_i} \otimes {\mathbf{R}}_{\text{Fun}}$ & 3.37 $\pm$ 0.07 & 3.22 $\pm$ 0.07 & 4.60 $\pm$ 0.20 & 3.08 $\pm$ 0.02 & 4.37 $\pm$ 0.37 & 6.17 $\pm$ 0.52\\
\hspace{1em}$\text{Full-1Step } I_{n_i} \otimes I_L$ & 0.87 $\pm$ 0.02 & 1.08 $\pm$ 0.02 & 1.33 $\pm$ 0.04 & 0.84 $\pm$ 0.00 & 4.08 $\pm$ 0.70 & 5.03 $\pm$ 0.90\\
\hspace{1em}$\text{pffr (Wild)}$ & 2.59 $\pm$ 0.05 & 2.84 $\pm$ 0.06 & 2.78 $\pm$ 0.05 & 2.42 $\pm$ 0.01 & 2.61 $\pm$ 0.02 & 2.63 $\pm$ 0.02\\
\hspace{1em}$\text{pffr ($z_{1-\alpha/2}$)}$ & 0.36 $\pm$ 0.01 & 0.48 $\pm$ 0.01 & 0.56 $\pm$ 0.01 & 0.35 $\pm$ 0.00 & 0.45 $\pm$ 0.01 & 0.56 $\pm$ 0.01\\
\addlinespace[0.6em]
\multicolumn{7}{l}{\textbf{$N = 25,\; n_i = 100$}}\\
\hspace{1em}$\text{1-Step } {\mathbf{R}}_{\text{Lon}} \otimes {\mathbf{R}}_{\text{Fun}}$ & 6.94 $\pm$ 0.02 & 8.50 $\pm$ 0.02 & 8.65 $\pm$ 0.08 & 8.96 $\pm$ 0.17 & 8.29 $\pm$ 0.04 & 8.51 $\pm$ 0.06\\
\hspace{1em}$\text{1-Step } {\mathbf{R}}_{\text{Lon}} \otimes I_L$ & 7.43 $\pm$ 0.02 & 9.15 $\pm$ 0.02 & 9.38 $\pm$ 0.08 & 9.91 $\pm$ 0.19 & 9.03 $\pm$ 0.04 & 9.27 $\pm$ 0.06\\
\hspace{1em}$\text{1-Step } I_{n_i} \otimes {\mathbf{R}}_{\text{Fun}}$ & 10.71 $\pm$ 0.02 & 12.13 $\pm$ 0.02 & 12.50 $\pm$ 0.11 & 14.09 $\pm$ 0.26 & 11.86 $\pm$ 0.04 & 12.27 $\pm$ 0.08\\
\hspace{1em}$\text{1-Step } I_{n_i} \otimes I_L$ & 4.21 $\pm$ 0.02 & 5.40 $\pm$ 0.02 & 5.80 $\pm$ 0.05 & 4.59 $\pm$ 0.08 & 5.15 $\pm$ 0.03 & 5.68 $\pm$ 0.04\\
\hspace{1em}$\text{Full-1Step } {\mathbf{R}}_{\text{Lon}} \otimes {\mathbf{R}}_{\text{Fun}}$ & 20.78 $\pm$ 0.15 & 22.20 $\pm$ 0.20 & 71.49 $\pm$ 3.56 & 17.17 $\pm$ 0.35 & 15.42 $\pm$ 0.08 & 18.26 $\pm$ 0.15\\
\hspace{1em}$\text{Full-1Step } {\mathbf{R}}_{\text{Lon}} \otimes I_L$ & 18.22 $\pm$ 0.12 & 45.69 $\pm$ 3.03 & 117.68 $\pm$ 5.27 & 19.08 $\pm$ 0.72 & 17.58 $\pm$ 0.62 & 20.11 $\pm$ 0.15\\
\hspace{1em}$\text{Full-1Step } I_{n_i} \otimes {\mathbf{R}}_{\text{Fun}}$ & 37.38 $\pm$ 0.28 & 32.62 $\pm$ 0.21 & 46.32 $\pm$ 1.04 & 29.05 $\pm$ 0.56 & 26.49 $\pm$ 1.37 & 30.41 $\pm$ 0.25\\
\hspace{1em}$\text{Full-1Step } I_{n_i} \otimes I_L$ & 4.63 $\pm$ 0.03 & 6.02 $\pm$ 0.03 & 7.94 $\pm$ 0.14 & 4.84 $\pm$ 0.09 & 5.53 $\pm$ 0.03 & 6.79 $\pm$ 0.05\\
\hspace{1em}$\text{pffr (Wild)}$ & 6.69 $\pm$ 0.01 & 7.11 $\pm$ 0.02 & 6.90 $\pm$ 0.06 & 8.24 $\pm$ 0.14 & 6.95 $\pm$ 0.03 & 6.74 $\pm$ 0.04\\
\hspace{1em}$\text{pffr ($z_{1-\alpha/2}$)}$ & 2.66 $\pm$ 0.02 & 3.00 $\pm$ 0.01 & 2.72 $\pm$ 0.02 & 3.03 $\pm$ 0.06 & 2.92 $\pm$ 0.02 & 2.70 $\pm$ 0.02\\
\addlinespace[0.6em]
\multicolumn{7}{l}{\textbf{$N = 50,\; n_i = 5$}}\\
\hspace{1em}$\text{1-Step } {\mathbf{R}}_{\text{Lon}} \otimes {\mathbf{R}}_{\text{Fun}}$ & 3.57 $\pm$ 0.06 & 3.73 $\pm$ 0.03 & 3.50 $\pm$ 0.01 & 3.40 $\pm$ 0.01 & 3.83 $\pm$ 0.02 & 3.83 $\pm$ 0.02\\
\hspace{1em}$\text{1-Step } {\mathbf{R}}_{\text{Lon}} \otimes I_L$ & 7.71 $\pm$ 0.12 & 7.54 $\pm$ 0.10 & 7.55 $\pm$ 0.02 & 7.51 $\pm$ 0.03 & 7.98 $\pm$ 0.06 & 7.87 $\pm$ 0.04\\
\hspace{1em}$\text{1-Step } I_{n_i} \otimes {\mathbf{R}}_{\text{Fun}}$ & 1.80 $\pm$ 0.03 & 1.97 $\pm$ 0.02 & 2.03 $\pm$ 0.01 & 1.71 $\pm$ 0.01 & 1.93 $\pm$ 0.02 & 1.89 $\pm$ 0.01\\
\hspace{1em}$\text{1-Step } I_{n_i} \otimes I_L$ & 0.84 $\pm$ 0.01 & 0.94 $\pm$ 0.01 & 1.00 $\pm$ 0.00 & 0.80 $\pm$ 0.00 & 0.96 $\pm$ 0.00 & 1.20 $\pm$ 0.01\\
\hspace{1em}$\text{Full-1Step } {\mathbf{R}}_{\text{Lon}} \otimes {\mathbf{R}}_{\text{Fun}}$ & 11.94 $\pm$ 0.22 & 11.80 $\pm$ 0.16 & 16.00 $\pm$ 0.28 & 10.61 $\pm$ 0.06 & 11.65 $\pm$ 0.09 & 16.18 $\pm$ 0.30\\
\hspace{1em}$\text{Full-1Step } {\mathbf{R}}_{\text{Lon}} \otimes I_L$ & 20.36 $\pm$ 0.31 & 43.17 $\pm$ 3.27 & 37.27 $\pm$ 1.85 & 19.67 $\pm$ 0.13 & 24.21 $\pm$ 0.86 & 31.17 $\pm$ 0.28\\
\hspace{1em}$\text{Full-1Step } I_{n_i} \otimes {\mathbf{R}}_{\text{Fun}}$ & 5.23 $\pm$ 0.09 & 4.68 $\pm$ 0.06 & 7.16 $\pm$ 0.09 & 4.53 $\pm$ 0.03 & 4.63 $\pm$ 0.03 & 6.99 $\pm$ 0.10\\
\hspace{1em}$\text{Full-1Step } I_{n_i} \otimes I_L$ & 1.09 $\pm$ 0.02 & 1.20 $\pm$ 0.01 & 1.43 $\pm$ 0.01 & 1.02 $\pm$ 0.00 & 1.17 $\pm$ 0.00 & 1.42 $\pm$ 0.01\\
\hspace{1em}$\text{pffr (Wild)}$ & 2.98 $\pm$ 0.05 & 2.91 $\pm$ 0.03 & 2.85 $\pm$ 0.01 & 2.79 $\pm$ 0.01 & 2.99 $\pm$ 0.03 & 2.84 $\pm$ 0.01\\
\hspace{1em}$\text{pffr ($z_{1-\alpha/2}$)}$ & 0.72 $\pm$ 0.01 & 0.66 $\pm$ 0.01 & 0.51 $\pm$ 0.00 & 0.69 $\pm$ 0.00 & 0.69 $\pm$ 0.01 & 0.51 $\pm$ 0.01\\
\addlinespace[0.6em]
\multicolumn{7}{l}{\textbf{$N = 50,\; n_i = 100$}}\\
\hspace{1em}$\text{1-Step } {\mathbf{R}}_{\text{Lon}} \otimes {\mathbf{R}}_{\text{Fun}}$ & 13.07 $\pm$ 0.07 & 16.17 $\pm$ 0.03 & 17.13 $\pm$ 0.24 & 13.65 $\pm$ 0.20 & 19.73 $\pm$ 0.55 & 20.65 $\pm$ 0.40\\
\hspace{1em}$\text{1-Step } {\mathbf{R}}_{\text{Lon}} \otimes I_L$ & 13.93 $\pm$ 0.07 & 17.90 $\pm$ 0.04 & 18.27 $\pm$ 0.25 & 14.44 $\pm$ 0.22 & 21.12 $\pm$ 0.50 & 21.52 $\pm$ 0.40\\
\hspace{1em}$\text{1-Step } I_{n_i} \otimes {\mathbf{R}}_{\text{Fun}}$ & 19.82 $\pm$ 0.06 & 23.36 $\pm$ 0.04 & 24.25 $\pm$ 0.34 & 20.96 $\pm$ 0.30 & 28.06 $\pm$ 0.65 & 29.24 $\pm$ 0.54\\
\hspace{1em}$\text{1-Step } I_{n_i} \otimes I_L$ & 7.23 $\pm$ 0.06 & 10.67 $\pm$ 0.04 & 10.58 $\pm$ 0.14 & 6.65 $\pm$ 0.09 & 11.41 $\pm$ 0.26 & 12.51 $\pm$ 0.21\\
\hspace{1em}$\text{Full-1Step } {\mathbf{R}}_{\text{Lon}} \otimes {\mathbf{R}}_{\text{Fun}}$ & 31.44 $\pm$ 0.17 & 38.99 $\pm$ 0.29 & 94.21 $\pm$ 4.64 & 24.47 $\pm$ 0.40 & 35.64 $\pm$ 0.86 & 41.81 $\pm$ 0.93\\
\hspace{1em}$\text{Full-1Step } {\mathbf{R}}_{\text{Lon}} \otimes I_L$ & 30.35 $\pm$ 0.16 & 68.70 $\pm$ 4.12 & 167.87 $\pm$ 8.92 & 26.07 $\pm$ 0.38 & 39.29 $\pm$ 0.87 & 51.02 $\pm$ 3.07\\
\hspace{1em}$\text{Full-1Step } I_{n_i} \otimes {\mathbf{R}}_{\text{Fun}}$ & 56.71 $\pm$ 0.33 & 59.48 $\pm$ 0.32 & 81.54 $\pm$ 1.35 & 44.86 $\pm$ 2.11 & 55.77 $\pm$ 1.25 & 70.31 $\pm$ 1.53\\
\hspace{1em}$\text{Full-1Step } I_{n_i} \otimes I_L$ & 7.95 $\pm$ 0.05 & 10.88 $\pm$ 0.06 & 14.01 $\pm$ 0.18 & 7.55 $\pm$ 0.09 & 12.54 $\pm$ 0.32 & 14.86 $\pm$ 0.33\\
\hspace{1em}$\text{pffr (Wild)}$ & 9.67 $\pm$ 0.03 & 11.32 $\pm$ 0.02 & 10.10 $\pm$ 0.12 & 9.97 $\pm$ 0.13 & 12.77 $\pm$ 0.36 & 11.94 $\pm$ 0.20\\
\hspace{1em}$\text{pffr ($z_{1-\alpha/2}$)}$ & 4.67 $\pm$ 0.02 & 5.62 $\pm$ 0.02 & 4.56 $\pm$ 0.06 & 4.68 $\pm$ 0.06 & 6.34 $\pm$ 0.16 & 5.47 $\pm$ 0.10\\
\addlinespace[0.6em]
\multicolumn{7}{l}{\textbf{$N = 500,\; n_i = 5$}}\\
\hspace{1em}$\text{1-Step } {\mathbf{R}}_{\text{Lon}} \otimes {\mathbf{R}}_{\text{Fun}}$ & 30.28 $\pm$ 0.27 & 32.41 $\pm$ 0.43 & 31.90 $\pm$ 0.26 & 29.85 $\pm$ 0.31 & 38.94 $\pm$ 0.79 & 32.75 $\pm$ 0.07\\
\hspace{1em}$\text{1-Step } {\mathbf{R}}_{\text{Lon}} \otimes I_L$ & 56.86 $\pm$ 0.46 & 60.19 $\pm$ 0.71 & 59.45 $\pm$ 0.44 & 57.88 $\pm$ 0.50 & 71.17 $\pm$ 1.27 & 61.01 $\pm$ 0.14\\
\hspace{1em}$\text{1-Step } I_{n_i} \otimes {\mathbf{R}}_{\text{Fun}}$ & 12.59 $\pm$ 0.10 & 14.28 $\pm$ 0.17 & 14.55 $\pm$ 0.11 & 12.47 $\pm$ 0.11 & 16.52 $\pm$ 0.30 & 14.49 $\pm$ 0.03\\
\hspace{1em}$\text{1-Step } I_{n_i} \otimes I_L$ & 3.42 $\pm$ 0.03 & 5.11 $\pm$ 0.06 & 5.38 $\pm$ 0.04 & 3.39 $\pm$ 0.03 & 5.79 $\pm$ 0.11 & 5.36 $\pm$ 0.01\\
\hspace{1em}$\text{Full-1Step } {\mathbf{R}}_{\text{Lon}} \otimes {\mathbf{R}}_{\text{Fun}}$ & 70.57 $\pm$ 0.65 & 89.43 $\pm$ 1.20 & 104.10 $\pm$ 1.00 & 70.65 $\pm$ 0.73 & 102.98 $\pm$ 2.20 & 104.37 $\pm$ 0.49\\
\hspace{1em}$\text{Full-1Step } {\mathbf{R}}_{\text{Lon}} \otimes I_L$ & 122.42 $\pm$ 1.26 & 162.84 $\pm$ 2.16 & 200.21 $\pm$ 1.59 & 131.24 $\pm$ 8.02 & 183.09 $\pm$ 3.39 & 193.70 $\pm$ 0.87\\
\hspace{1em}$\text{Full-1Step } I_{n_i} \otimes {\mathbf{R}}_{\text{Fun}}$ & 25.21 $\pm$ 0.21 & 29.30 $\pm$ 0.38 & 37.51 $\pm$ 0.29 & 24.35 $\pm$ 0.23 & 33.84 $\pm$ 0.63 & 36.19 $\pm$ 0.16\\
\hspace{1em}$\text{Full-1Step } I_{n_i} \otimes I_L$ & 4.14 $\pm$ 0.03 & 6.16 $\pm$ 0.07 & 7.24 $\pm$ 0.05 & 4.11 $\pm$ 0.03 & 6.94 $\pm$ 0.13 & 6.98 $\pm$ 0.02\\
\hspace{1em}$\text{pffr (Wild)}$ & 6.58 $\pm$ 0.05 & 7.16 $\pm$ 0.07 & 6.90 $\pm$ 0.04 & 6.31 $\pm$ 0.05 & 8.21 $\pm$ 0.13 & 6.95 $\pm$ 0.02\\
\hspace{1em}$\text{pffr ($z_{1-\alpha/2}$)}$ & 2.69 $\pm$ 0.02 & 2.94 $\pm$ 0.03 & 2.74 $\pm$ 0.02 & 2.64 $\pm$ 0.02 & 3.34 $\pm$ 0.05 & 2.75 $\pm$ 0.01\\
\addlinespace[0.6em]
\multicolumn{7}{l}{\textbf{$N = 500,\; n_i = 100$}}\\
\hspace{1em}$\text{1-Step } {\mathbf{R}}_{\text{Lon}} \otimes {\mathbf{R}}_{\text{Fun}}$ & 111.85 $\pm$ 0.27 & 149.61 $\pm$ 0.84 & 198.96 $\pm$ 3.74 & 113.00 $\pm$ 1.58 & 148.51 $\pm$ 0.36 & 189.01 $\pm$ 3.48\\
\hspace{1em}$\text{1-Step } {\mathbf{R}}_{\text{Lon}} \otimes I_L$ & 118.14 $\pm$ 0.35 & 151.87 $\pm$ 0.98 & 213.95 $\pm$ 4.25 & 123.12 $\pm$ 1.61 & 150.03 $\pm$ 0.34 & 203.80 $\pm$ 3.94\\
\hspace{1em}$\text{1-Step } I_{n_i} \otimes {\mathbf{R}}_{\text{Fun}}$ & 178.16 $\pm$ 0.37 & 215.29 $\pm$ 1.16 & 268.15 $\pm$ 5.33 & 189.46 $\pm$ 2.57 & 210.95 $\pm$ 0.49 & 256.99 $\pm$ 4.94\\
\hspace{1em}$\text{1-Step } I_{n_i} \otimes I_L$ & 54.11 $\pm$ 0.25 & 81.74 $\pm$ 0.57 & 122.54 $\pm$ 2.24 & 59.04 $\pm$ 0.74 & 84.63 $\pm$ 0.46 & 104.11 $\pm$ 1.97\\
\hspace{1em}$\text{Full-1Step } {\mathbf{R}}_{\text{Lon}} \otimes {\mathbf{R}}_{\text{Fun}}$ & 194.44 $\pm$ 0.38 & 264.40 $\pm$ 1.82 & 449.09 $\pm$ 10.68 & 179.31 $\pm$ 2.83 & 223.57 $\pm$ 0.77 & 295.21 $\pm$ 6.39\\
\hspace{1em}$\text{Full-1Step } {\mathbf{R}}_{\text{Lon}} \otimes I_L$ & 210.18 $\pm$ 0.88 & 328.53 $\pm$ 12.12 & 538.08 $\pm$ 16.18 & 213.07 $\pm$ 3.01 & 263.79 $\pm$ 1.25 & 336.11 $\pm$ 6.46\\
\hspace{1em}$\text{Full-1Step } I_{n_i} \otimes {\mathbf{R}}_{\text{Fun}}$ & 371.03 $\pm$ 0.74 & 421.48 $\pm$ 3.08 & 661.56 $\pm$ 14.27 & 324.73 $\pm$ 4.74 & 364.12 $\pm$ 1.93 & 500.39 $\pm$ 10.45\\
\hspace{1em}$\text{Full-1Step } I_{n_i} \otimes I_L$ & 65.97 $\pm$ 0.28 & 92.03 $\pm$ 0.60 & 125.46 $\pm$ 2.23 & 62.81 $\pm$ 1.08 & 88.82 $\pm$ 0.39 & 118.41 $\pm$ 1.93\\
\hspace{1em}$\text{pffr (Wild)}$ & 49.35 $\pm$ 0.22 & 56.25 $\pm$ 0.32 & 61.14 $\pm$ 1.00 & 51.59 $\pm$ 0.53 & 54.28 $\pm$ 0.18 & 59.47 $\pm$ 0.90\\
\hspace{1em}$\text{pffr ($z_{1-\alpha/2}$)}$ & 37.16 $\pm$ 0.10 & 44.27 $\pm$ 0.23 & 49.47 $\pm$ 0.87 & 39.71 $\pm$ 0.44 & 43.32 $\pm$ 0.14 & 48.18 $\pm$ 0.78\\
\bottomrule
\end{tabular}}
\end{table}

\clearpage

\section{Application} \label{sec:application}
We apply our framework to calcium imaging data to illustrate the benefits of longitudinal FDA in analyzing neural recordings. {In Appendix~\ref{app:background_calcimImaging}, we} describe 
what we have observed are common strategies among experimentalists for hypothesis testing of covariate-neural activity associations, as they have similar goals to our proposed method. These approaches seem to vary 
largely in how 1) the target neural population is defined, 2) the longitudinal structure is accounted for, and 3) the trial-level neural time-series are modeled. To address limitations {in these strategies}, we apply our fGEE to a dataset that pools neurons across animals  (akin to a ``neural pseudo-population'' strategy). This accounts for the 1) longitudinal and 2) functional nature of the response in each cluster (i.e., neuron), and 3) does not discard uncertainty in the animal-specific estimates in providing an overall \textit{neural pseudo population-level} estimate. 
This assumes correlation between neurons within-animal is negligible. 
In some cases, it may be preferable to apply a fGEE to the neurons in each animal separately,
but this necessitates an approach to construct a pooled estimate of animal-level fits that propagates uncertainty. 
\paragraph{Application Background}
We apply our method on data from a recent \textit{Nature} paper studying the role of pyramidal neurons in the primary somatosensory cortex (S1) in behavior and sensory input \citep{nature_sl}. This study recorded neuronal activity in five mice, from
$155-262$ (mean$\pm$ SEM: $184.4 \pm 22.61$) neurons per animal. Recording was done in head-fixed animals, running on a ball that tracked their movement speed. The authors
were interested in identifying S1 neurons active during spontaneous movements. 
They 
tested whether each S1 neuron was associated with running speed, whisker movement, and whisker sensory input. 

\paragraph{Identifying neural activity encoding speed information}

The correlation analysis used in the original paper could only test the neural activity–speed association on average within trials. In contrast, an FDA approach can test how this association evolves across trial timepoints. To demonstrate that, we randomly sampled $N=500$ neurons (clusters) from the five animals, and identified five second intervals when animals spontaneously began to run. {Each running burst was considered a ``trial'' (experimental replicate)} as typically analyzed in neuroscience. Thus the functional outcome for neuron $i$, on trial $j$ was a binary timeseries vector across five seconds of neural activity measured at 30 Hz (starting at the onset of the running bout). 
This results in a functional outcome measured at an evenly spaced grid of $|\mathcal{S}|=150$ points. Speed, $X_{i,j}(s) \in \mathbb{R}$ was defined as a functional covariate and took the same value for all neurons recorded from the same animal. The cluster size was $n_i = 29$ for all neurons $i \in [N]$.  We fit the model $\text{logit}\{\mathbb{E}( {Y}_{i,j,l}(s) \mid X_{i,j}(s)) \} =~ \beta_0(s) +  X_{i,j}(s) {\beta}_1(s)$,
with {an AR1 working correlation in both longitudinal and functional directions (see Figure~\ref{fig:neuro_fig}).
The results were consistent across a range of working correlation structures (see Appendix~\ref{app:neuro_analysis_supp}).} 

The most salient finding was that the speed–neural activity association does not become significant until a few seconds after the animals begin to run and that association becomes non-significant fairly quickly (see Figure~\ref{fig:neuro_fig}). The timing of the association suggests that these neurons are not driving the movement of the animal. 
This shows how the timing sensitivity of fGEE can help identify the type of cognition or behavior a brain region encodes, which is much harder to do with analyses of trial summary measures.

\begin{figure*}[!t]
\centering
	\begin{subfigure}[t]{0.45\textwidth}
\includegraphics[width=0.95 \linewidth]{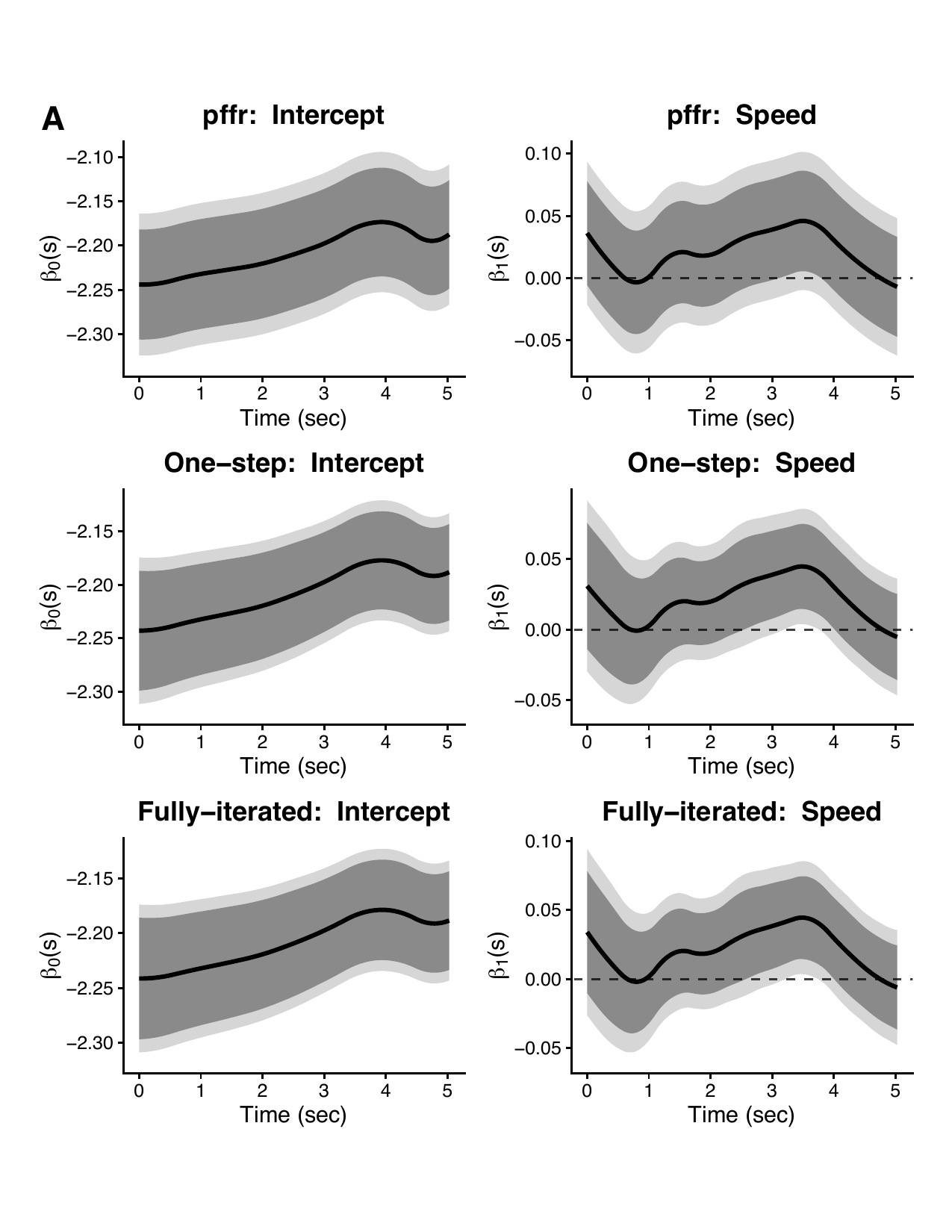}
\caption{\footnotesize \textbf{Speed–neural activity association.}}
\label{fig:neuro_fig}
	\end{subfigure}
    \centering
	\begin{subfigure}[t]{0.45\textwidth}
\includegraphics[width=0.95 \linewidth]{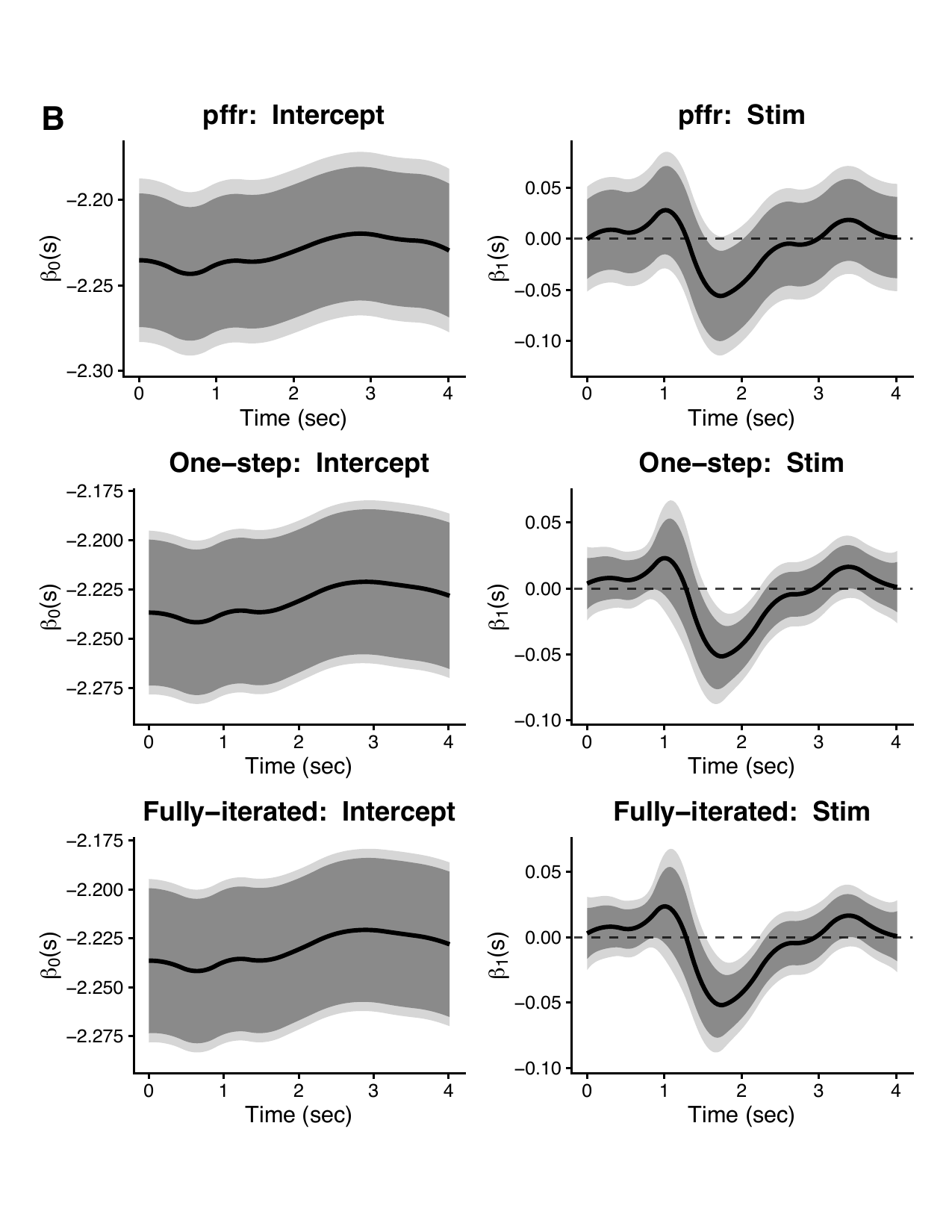}
\caption{\footnotesize 
\textbf{Whisker stimulation effect.}
}
\label{fig:stim_fig}
	\end{subfigure}
\caption{\footnotesize
Functional coefficient one-step estimates. fGEE models (one-step and fully-iterated) were fit with a working covariance $\mathbf{R}_i = \mathbf{R}_{Lon} \otimes \mathbf{R}_{Fun}$ where $\mathbf{R}_{Lon}$ and $\mathbf{R}_{Fun}$ have an AR1 structure each parameterized by separate $\rho_{Lon}$ and $\rho_{Fun}$ AR1 correlation parameters. Dark and light bands indicate pointwise and joint 95\% CIs, respectively. CIs for both $\texttt{pffr}$ and the fGEE models were constructed with a sandwich variance estimator and CI quantiles obtained from a wild cluster bootstrap. The fGEE models (one-step and fully-iterated) show tighter confidence intervals than the $\texttt{pffr}$ fits. In fact, the $\texttt{pffr}$ fits do not show any time-intervals in the $\texttt{Speed}$ or $\texttt{Stim}$ functional coefficient estimate plots that are statistically significant according to the joint CIs. Fit times (seconds) on a laptop without parallelization were (a) $\texttt{pffr}$ 
11.58; one-step 41.84; Full-1Step 255.12; and (b) $\texttt{pffr}$ 
97.47; one-step 349.84; Full-1Step 1131.92. } 
\end{figure*}

\paragraph{Whisker Stimulation}
The authors of \cite{nature_sl} were also interested in how {whisker stimulation affected S1 neural activity.}
The summary analyses they conducted only allowed estimation of the extent to which whisker stimulation changed neural activity on average across within-trial timepoints. To characterize the ``temporal dynamics'' of the neural response to this manipulation, we applied our method to activity from $N=500$ randomly selected neurons (clusters), each with $n_i = 300$ observations of the functional outcome: 
4 sec of neural activity, measured 1 sec before whisker stimulation to 3 sec after ($L=120$). 
We fit the model $\text{logit}\{\mathbb{E}( Y_{i,j,l}(s) \mid X_{i,j}) \} =~ \beta_0(s) +  X_{i,j} \beta_1(s)$,
where $X_{i,j} \in \{0,1\}$ is an indicator that neuron $i$ was recorded from an animal that was stimulated on trial $j$. We adopted {AR1 working correlations in longitudinal and functional directions because longitudinal and functional domains are defined as between- and within-trial timepoints, respectively.}

The one-step estimator, fit to 150,000 functional observations, took {$\sim$6.5 min} to fit on a MacBook Pro with an Apple M1 Max chip with 64 GB of RAM, without parallelization, and maintained a reasonable memory footprint throughout. The coefficient associated with stimulation, $\widehat{\beta}_1(s)$, shows the estimated mean difference in neural activity between stimulated and non-stimulated trials. It appears stimulation leads to a rapid reduction in activity (see Figure~\ref{fig:stim_fig}). {The results were consistent across working correlation structures (see Appendix~\ref{app:neuro_analysis_supp}).}
This analysis {shows} how the one-step makes it possible to identify a clear temporal profile for the effect of interest, and is scalable enough to allow estimation over a large sample of neurons.

{The above analyses also demonstrate how fGEE yields more efficient inference than using a sandwich variance estimator to construct CIs centered at a $\texttt{pffr}$ fit. In contrast to the one-step, the $\texttt{pffr}$ fits do not show any time-intervals in the $\texttt{Speed}$ or $\texttt{Stim}$ analyses that are statistically significant based on the joint CIs. In Figure~\ref{fig:stim_fig} we also show estimates from a fully-iterated fGEE that uses the fast (one-step) cluster CV for smoothing parameter tuning (``Full-1Step''). The fully-iterated and one-step estimates are nearly indistinguishable, highlighting our theoretical result that the one-step is asymptotically equivalent to a fully-iterated fGEE.  }
\vspace{-0.75cm}

\section{Discussion} \label{sec:discussion}
We hope that by developing a method that can scale to dataset sizes common in calcium imaging and electrophysiology research, we will encourage neuroscientists to apply longitudinal FDA in these settings. While FDA methods have been applied extensively in human neuroimaging 
\citep{zhu2019fmem, cui2022, fQIF2025}, little work, to our knowledge, has sought to apply these tools to neural recordings from animal model data. We believe, however, that FDA methods have great potential to help researchers answer a range of scientific questions common in animal model studies. For example, longitudinal FDA may be beneficial for estimating \textit{tuning curves} that show how mean neural responding evolves depending on a tuning variable, like the volume of an auditory cue or the angle a visual stimulus is presented at. 
Since neuroscientists often analyze how tuning curves differ across covariate values (treatment vs. control arm), and data are collected across multiple experimental replicates within-subject (``trials''), longitudinal FDA provides a natural strategy for inference.
%
 

The one-step fGEE can be extended in a range of related settings. {For example, our implementation can be adapted for flexible $\mathbb{V}_i$ forms (e.g., $\mathbb{V}_i(s)$ with unstructured working covariance) and would likely be fast for moderate $n_i$. Some $\mathbb{V}_i$ forms (e.g., Kronecker product) for multivariate functional domains can also be scalable. In Appendix~\ref{app:fast_boot}, we also describe a fast one-step cluster bootstrap as an alternative estimator for $\widehat{\text{Var}}(\widehat{\boldsymbol{\beta}}_{\Lambda_1}^{(1)})$. 
In preliminary simulations the bootstrap and sandwich $\widehat{\text{Var}}(\widehat{\boldsymbol{\beta}}_{\Lambda_1}^{(1)})$ estimates were nearly identical.
This may therefore provide a fast and flexible strategy to construct CIs for functions/transformations $f(\cdot)$ of the coefficients (e.g., derivatives) without having to derive analytic forms of $\widehat{\text{Var}}\{f(\widehat{\boldsymbol{\beta}}_{\Lambda_1}^{(1)})\}$.}
%
{Finally, future work might explore generalized cross-validation for fGEE smoothing parameter selection. 

We release our methods in the $\texttt{R}$ package $\texttt{fastFGEE}$, which supports a wide range of link functions, implements fGEE based on quasi-likelihoods deriving from many distributions, and supports both functional and scalar outcomes and covariates.} 
We hope our theoretical guarantees and efficient implementation encourage analysts to apply FDA methods more widely. 

\section*{Acknowledgments}
{GL and FP were supported by the Intramural Research Program of the National Institute of Mental Health, project ZIC-MH002968. The contributions of the NIH author(s) were made as part of their official duties as NIH federal employees, are in compliance with agency policy requirements, and are considered Works of the United States Government. However, the findings and conclusions presented in this paper are those of the authors and do not necessarily reflect the views of the NIH or the U.S. Department of Health and Human Services. This study utilized the high-performance computational capabilities of the Biowulf Linux cluster at the NIH, Bethesda, MD (http://biowulf.nih.gov). We thank Drs. Soohyun Lee for providing
data for our application, Yuan Zhao for help with data pre-processing, 
and Keith Goldfeld for sharing his package code. We acknowledge the use of AI to extend an initial version of our package, modify existing code to generate 
tables, and generate one Appendix figure (see Appendix~\ref{app:ai})}.

\section*{Supplementary Materials}
{Appendices A–D below provide additional method details, proofs, simulation results, and application analyses. Code and example application data are available at the links in the Data Availability statement.}


\section*{Data Availability}
{For our code and $\texttt{fastFGEE}$ package, see: \url{https://github.com/gloewing/fgee_onestep}. Application datasets are available at: \url{https://osf.io/b93dc/files/osfstorage}.}
 
\clearpage

\newpage
\begin{appendices}
\makeatletter
\renewcommand{\thesection}{\Alph{section}}
\renewcommand{\thesubsection}{\Alph{section}.\arabic{subsection}}
\renewcommand{\thefigure}{S\arabic{figure}}
\renewcommand{\thetable}{S\arabic{table}}
\renewcommand{\thealgorithm}{S\arabic{algorithm}}
\renewcommand{\theHsection}{appendix.\Alph{section}}
\renewcommand{\theHsubsection}{appendix.\Alph{section}.\arabic{subsection}}
\renewcommand{\theHfigure}{S\arabic{figure}}
\renewcommand{\theHtable}{S\arabic{table}}
\renewcommand{\p@subsection}{}
\renewcommand{\p@subsubsection}{}
\makeatother

\singlespacing
\setcounter{figure}{0}
\setcounter{table}{0}
\setcounter{algorithm}{0}
\captionsetup[algorithm]{font={stretch=1}}
\AtBeginEnvironment{algorithmic}{\setstretch{1}\small}
    
\section{Additional Method Details} \label{app:method_details}

{
\subsection{One-Step fGEE Estimation Algorithm} \label{app:oneStep}

We first describe the one-step fGEE estimation procedure here and the fully-iterated fGEE version in the following subsection.

\begin{algorithm}[H]
\caption{One-step update using residual-based correlation and pffr initialization}\label{app:fgee_algo}
\begin{algorithmic}[1] 
\State \textbf{Inputs:}
\begin{itemize}
  \item Functional responses $\{\boldsymbol{Y}_i\}_{i=1}^N$, where $\boldsymbol{Y}_i = [\boldsymbol{Y}_i(s_1)^T,\ldots,\boldsymbol{Y}_i(s_L)^T]^T \in \mathbb{R}^{n_i L}$, $\{s_l\}_{l=1}^L \subset \mathcal{S}$
  \item Covariate vectors $\{\mathbf{X}_{i,j}\}_{i \in [N], j \in n_i}$, where $\mathbf{X}_{i,j} \in \mathbb{R}^{q}$.
  \item Penalty matrix $\mathbb{S}$, and grid of values to tune smoothing parameters $\Lambda_1$ 
  \item Basis representation encoded in matrix $\mathbf{B}$ (e.g., B-splines)
    \item Family/variance function $v(\mu)$ for standardized residuals (e.g., Bernoulli: $v(\mu)=\mu(1-\mu)$; Poisson: $v(\mu)=\mu$; Gaussian: $v(\mu)=\sigma^2$)
  \item Pointwise working covariance structure type (independence, AR1, or exchangeable)
  \item Link function $g()$ and mean model specification for $g\{\mathbb{E}(Y_{i,j}(s) \mid \mathbf{X}_{i,j})\}$
\end{itemize}

\State \textbf{Initialization (function-on-scalar fit):}
\begin{itemize}
  \item Obtain the initial parameter estimate $\widehat{\boldsymbol{\theta}}^{(0)}_{\Lambda_0}$ 
  \begin{itemize}
      \item We use a likelihood-based function-on-scalar regression using $\texttt{refund::pffr()}$
      \item We use fREML for smoothing parameter selection (to select $\Lambda_0$)
  \end{itemize}
\end{itemize}

\State \textbf{Residuals and correlation estimation:}
\begin{itemize}
    \item Calculate fitted values using $\widehat{\boldsymbol{\theta}}^{(0)}_{\Lambda_0}$: $\widehat{\boldsymbol{\mu}}_i\big(\widehat{\boldsymbol{\theta}}^{(0)}_{\Lambda_0}\big)$ for $i=1,\dots,N$.
  \begin{itemize}
      \item $\widehat{\boldsymbol{\mu}}_i\big(\widehat{\boldsymbol{\theta}}^{(0)}_{\Lambda_0}\big)=g^{-1}(\mathbf{B} \widehat{\boldsymbol{\theta}}^{(0)}_{\Lambda_0, 0} + \sum_{r=1}^q X_{i,j,r} \mathbf{B} \widehat{\boldsymbol{\theta}}^{(0)}_{\Lambda_0, r})$, where $g^{-1}(\cdot)$ is applied component-wise and $\widehat{\boldsymbol{\theta}}^{(0)}_{\Lambda_0, r}$ is the subvector of $\widehat{\boldsymbol{\theta}}^{(0)}_{\Lambda_0}$ associated with covariate $r$
  \end{itemize}
  \item Define per-point variance diagonal blocks $\mathbb{A}_i(s_l) = \mathrm{diag}\!\big(v\{\widehat{\boldsymbol{\mu}}_i(s_l)\}\big) \in \mathbb{R}^{n_i \times n_i}$; let $\mathbb{A}_i = \mathrm{bdiag}\big(\mathbb{A}_i(s_1),\ldots,A_i(s_L)\big)$
  \item Compute residuals on the grid: $\widehat{\boldsymbol{e}}_i = \mathbb{A}_i^{-1/2}\{\boldsymbol{Y}_i - \widehat{\boldsymbol{\mu}}_i\big(\widehat{\boldsymbol{\theta}}^{(0)}_{\Lambda_0}\big)\} $ for $i=1,\dots,N$.
  \item For each $s_l \in \{s_1,\dots,s_L\}$, estimate correlation parameters $\widehat{\rho}(s_l)$ using $\{\widehat{\boldsymbol{e}}_i\}_{i=1}^N$.
  \item Construct 
  $\widehat{\mathbb{V}}_i\big(\widehat{\boldsymbol{\theta}}^{(0)}_{\Lambda_0}\big)$ for each $i$ using the estimated $\widehat{\rho}(s)$ (and any variance function) 
  \item Invert $\widehat{\mathbb{V}}_i\big(\widehat{\boldsymbol{\theta}}^{(0)}_{\Lambda_0}\big)$ for each $i$ using specialized algorithms (e.g., see Appendix~\ref{app:corr_inv}).
\end{itemize}

\State \textbf{Derivative (Jacobian) matrices:}
\begin{itemize}
  \item For each $i$, compute
  $\mathbb{D}_i\big(\widehat{\boldsymbol{\theta}}^{(0)}_{\Lambda_0}\big) = \left.\dfrac{\partial \widehat{\boldsymbol{\mu}}_i(\boldsymbol{\theta})}{\partial \boldsymbol{\theta}}\right|_{\boldsymbol{\theta}=\widehat{\boldsymbol{\theta}}^{(0)}_{\Lambda_0}}$, using initial $\widehat{\boldsymbol{\theta}}^{(0)}_{\Lambda_0}$ and $\widehat{\mathbb{V}}^{-1}_i\big(\widehat{\boldsymbol{\theta}}^{(0)}_{\Lambda_0}\big)$
\end{itemize}

\State \textbf{Cluster Cross-Validation to Select $\Lambda_1$:}
\begin{itemize}
    \item Select $\Lambda_1$ values that minimize average cross-validation criteria (see Algorithm~\ref{algo:cv})
\end{itemize}

\State \textbf{One-step estimator:}
\begin{itemize}
  \item Compute the update 
\item[] \noindent\resizebox{\textwidth}{!}{$\widehat{\boldsymbol{\theta}}_{ \Lambda_1}^{(1)} = \widehat{\boldsymbol{\theta}}_{ \Lambda_0}^{(0)} + \left [  \frac{1}{N} \sum_{i=1}^N \{\mathbb{D}_i(\widehat{\boldsymbol{\theta}}^{(0)}_{\Lambda_0})\}^T \{\widehat{\mathbb{V}}_i(\widehat{\boldsymbol{\theta}}_{ \Lambda_0}^{(0)})\}^{-1} \{\mathbb{D}_i(\widehat{\boldsymbol{\theta}}^{(0)}_{\Lambda_0})\}  + \Lambda_1 \mathbb{S}  \right ]^{-1}  \frac{1}{N} \sum_{i=1}^N\left [ \{\mathbb{D}_i(\widehat{\boldsymbol{\theta}}^{(0)}_{\Lambda_0})\}^T \{\widehat{\mathbb{V}}_i(\widehat{\boldsymbol{\theta}}_{ \Lambda_0}^{(0)})\}^{-1} \left \{ \boldsymbol{Y}_i - \widehat{\boldsymbol{\mu}}_i(\widehat{\boldsymbol{\theta}}^{(0)}_{\Lambda_0}) \right \}  - \Lambda_1 \mathbb{S} \widehat{\boldsymbol{\theta}}_{ \Lambda_0}^{(0)} \right ]$}

\end{itemize}

\State \textbf{Calculate $\widehat{\text{Var}}(\widehat{\boldsymbol{\beta}})$ and 95\% Confidence Intervals (CIs):}
\begin{itemize}
    \item Repeat steps 3-4 with $\{\Lambda_1 , \widehat{\boldsymbol{\theta}}_{ \Lambda_1}^{(1)}\}$ instead of $\{\Lambda_0 , \widehat{\boldsymbol{\theta}}_{ \Lambda_0}^{(0)}\}$, get: $\mathbb{D}_i(\widehat{\boldsymbol{\theta}}^{(1)}_{\Lambda_1})$, $\widehat{\boldsymbol{\mu}}_i(\widehat{\boldsymbol{\theta}}^{(1)}_{\Lambda_1})$, $\widehat{\mathbb{V}}^{-1}_i(\widehat{\boldsymbol{\theta}}_{ \Lambda_1}^{(1)})$
    \item Calculate ${\widehat{\text{Var}}} \left( \widehat{\boldsymbol{\theta}}_{\Lambda_1}^{(1)} \right )$ using $\{\mathbb{D}_i(\widehat{\boldsymbol{\theta}}^{(1)}_{\Lambda_1})$, $\widehat{\boldsymbol{\mu}}_i(\widehat{\boldsymbol{\theta}}^{(1)}_{\Lambda_1})$, $\widehat{\mathbb{V}}^{-1}_i(\widehat{\boldsymbol{\theta}}_{ \Lambda_1}^{(1)})\}_{i \in [N]}$ 
    \begin{itemize}
        \item Use sandwich estimator given (Section~\ref{sec:sandwich}) or cluster bootstrap (Appendix~\ref{app:fast_boot})
    \end{itemize}
    \item Calculate $\widehat{\text{Var}}(\widehat{\boldsymbol{\beta}}) =  \text{bdiag}(\widehat{\Sigma}^{(\beta)}_1, \ldots, \widehat{\Sigma}^{(\beta)}_q ) = \mathbb{B} {\widehat{\text{Var}}} \left( \widehat{\boldsymbol{\theta}}_{\Lambda_1}^{(1)} \right) \mathbb{B} ^T$, where $\mathbb{B} = (I_q \otimes \mathbf{B})$ 
    \item Calculate pointwise and joint CIs for $\widehat{\boldsymbol{\beta}}$ (Appendix~\ref{app:wild}) . 
\end{itemize}
\Return Point estimates, pointwise CIs and joint CIs for functional coefficient estimates $\widehat{\boldsymbol{\beta}}$

\end{algorithmic}
\end{algorithm}

\subsection{Fully-Iterated fGEE Estimation} \label{app:full_itr}
We implemented the fully-iterated fGEE procedure described in \cite{gee_functional}. This uses the same steps as described in Algorithm~\ref{app:fgee_algo} in Appendix~\ref{app:oneStep}, except that we repeat steps 3,4,6 iteratively until convergence. Specifically, we stop at the first iteration $k$ in which the maximum element of $|\widehat{\boldsymbol{\theta}}^{(k)} - \widehat{\boldsymbol{\theta}}^{(k-1)} | $ was less than $\epsilon$. We used $\epsilon =10^{-6}$ for the final coefficient estimates, and $\epsilon=10^{-3}$ for fold-specific coefficient estimates during smoothing parameter selection. The fully-iterated fGEE uses the same computationally efficient covariance matrix inversion strategies that we applied for the one-step estimator.

\newpage

\subsection{Efficient inversion of working covariance matrices} \label{app:corr_inv}
When $n_i$ or $L$ are large, even inverting submatrices like $\mathbb{V}_{Lon}$, $\mathbb{V}_{Fun}$, or $\mathbb{V}_i(s)^{-1}$ at one point $s$ is computationally intensive with standard linear algebra routines. Luckily, many common covariances, like exchangeable and AR1 working covariance matrices can be inverted efficiently. To describe this precisely we first rewrite $\mathbb{V}_i(s) = \mathbf{A}_i^{1/2}(s)\mathbf{R}_i(s)\mathbf{A}_i^{1/2}(s)$, where $\mathbf{A}_i(s) = \text{diag}\left(v_{i,1}(s), \ldots, v_{i,n_i}(s) \right )$ and $v_{i,j}(s)$ models $\text{Var}(Y_{i,j}(s) \mid \mathbf{X}_{i,j}) \in \mathbb{R}$.
%

If $\mathbb{V}_i(s)$ has, for example, then AR1 structure  $\text{Cor}(Y_{i,j}(s), Y_{i,j'}(s)\mid \mathbf{X}_{i,j}) = \rho(s)^{|j-j'|}$, $\mathbb{V}_i^{-1}(s)$ can also be efficiently computed because, for $\rho(s) \geq 0$, its decomposition yields a Toeplitz $\mathbf{R}_i(s)$. For an estimated $\widehat{\boldsymbol{e}}_i(s) = [\boldsymbol{Y}_i(s) - \widehat{\boldsymbol{\mu}}_i(s)]\widehat{\mathbf{A}}_i(s)^{-1/2}$, we can quickly calculate $\widehat{\mathbb{V}}_i(s)^{-1}\widehat{\boldsymbol{e}}_i(s)$
by solving the Toeplitz system $\boldsymbol{a} = \widehat{\mathbf{R}}_i(s) \widehat{\boldsymbol{e}}_i(s)$ with, for example, the generalized Schur algorithm \citep{gschur}. This is done without fully constructing the $n_i \times n_i$ matrix $\widehat{\mathbb{V}}_i(s)$ or storing $\widehat{\mathbb{V}}^{-1}_i(s)$ in memory.
When data are observed irregularly, one can use the algorithm proposed in \cite{irreg_ar1}. 

More generally, if $\mathbb{V}_{i,j}$, $\mathbb{V}_{Lon}$ or $\mathbb{V}_{Fun}$ have a number of different common covariance structures (e.g., an exchangeable, AR1, AR(p), moving average MA(q), Gaussian Process covariances when using Matern/RBF kernels) the respective $\mathbf{R}_{i,j}$, $\mathbf{R}_{Lon}$, $\mathbf{R}_{Fun}$ will have Toeplitz structure and can be inverted with the same algorithms. When analyzing a longitudinal functional outcome observed on a multivariate functional domain (e.g., fMRI), the Kronecker identity approach to efficiently invert working covariance matrices described in Section~\ref{sec:onestep} can be applied to efficiently invert covariance matrices of the form $\mathbf{R}_i = \mathbf{R}_1 \otimes \mathbf{R}_2 \otimes \mathbf{R}_3$. 


}

\color{black}
\subsection{Working correlation parameters} \label{app:corr_param}
When adopting the block exchangeable or AR1 correlation structures, each ${\mathbb{V}}_i(s)$ is a function of a nuisance correlation parameter $\rho(s)$. We estimate these at each point $s$ separately and
then optionally smooth over the functional domain to reduce variability.
Defining residuals as $e_{i,j}(s) = \frac{Y_{i,j}(s) - {\mu}_{i,j}(s)}{\sqrt{v(\mu_{i,j}(s))}}$, we use the method of moments estimator for the exchangeable structure \citep{discrete_longitudinal}: $\widehat{\rho}(s) = \frac{1}{N} \sum_{i=1}^N \frac{1}{n_i(n_i-1)} \sum_{j\leq k} \hat{e}_{i,j}(s)\hat{e}_{i,k}(s)$, and truncate the $\widehat{\rho}(s)$ at $1-\epsilon$ or $-1 + \epsilon$ if they fall outside the $(-1,1)$ range.
For an AR1 structure, we estimate each $\rho_i(s)$ with the Yule-Walker equations \citep{yule, walker}
when the longitudinal observations are sampled at regular time intervals. If sampled irregularly, we estimate $\rho_i(s)$ with the MLE estimator proposed in \cite{irreg_ar1}. We then calculate $\widehat{\rho}(s) = \frac{1}{N} \sum_{i=1}^N \widehat{\rho}_i(s)$ and truncate the $\widehat{\rho}(s)$ at $0$ or $1 - \epsilon$ if they fall outside the $[0,1)$ range. {For Kronecker product working covariances, there is only one correlation parameter for $\Sigma_{Lon}$ and one for $\Sigma_{Fun}$. In these cases, the estimator for $\rho_{Lon}$ and $\rho_{Fun}$ is calculated based on a weighted average of the functional domain-specific estimates. For example, for a method of moments estimator for the exchangeable $\rho_{Lon}$, we estimate $\widehat{\rho}_{Lon} =\frac{ \sum_{s \in \mathcal{S}} \sum_{i=1}^N \sum_{j\leq k} \hat{e}_{i,j}(s)\hat{e}_{i,k}(s)}{\sum_{i=1}^N \sum_{s \in \mathcal{S}} \binom{n_i}{2}}$}. 

The correlation parameters (e.g., $\widehat{\rho}(s)$, $\rho_{Lon}$) are calculated twice in our framework: 1) first using $\widehat{\boldsymbol{\theta}}_{ \Lambda_0}^{(0)}$ to calculate $\widehat{\mathbb{V}}_i(\widehat{\boldsymbol{\theta}}_{ \Lambda_0}^{(0)})$ that is plugged into the one-step estimator~\eqref{emp_IF}, and 2) second using $\widehat{\boldsymbol{\theta}}_{ \Lambda_1}^{(1)}$ to calculate $\widehat{\mathbb{V}}_i(\widehat{\boldsymbol{\theta}}_{ \Lambda_1}^{(1)})$ that is plugged into the  $\widehat{\text{Var}} \left( \widehat{\boldsymbol{\beta}}_{\Lambda_1}^{(1)} \right)$ estimator (see main text Section~\ref{sec:sandwich}).


{
\subsection{Functional Principal Component Analysis} \label{sec:fpca}
Here we discuss estimation of $\mathbf{R}_{i,Fun}$ using functional principal component analysis (FPCA).

Procedurally, we use the initial fit $\widehat{\boldsymbol{\theta}}_{ \Lambda_0}^{(0)}$ to calculate residuals, as used for estimation of other correlation parameters above. We then define $\widehat{e}_{i,j}(s) = \frac{Y_{i,j}(s) - \widehat{\mu}_{i,j}(s)}{\sqrt{v(\widehat{\mu}_{i,j}(s))}}$ and concatenate these into the vectors  \[ \widehat{\boldsymbol{e}}_{i,j}= [\widehat{e}_{i,j}(s_1), \ldots , \widehat{e}_{i,j}(s_L) ]^T \in \mathbb{R}^L \] for all $i \in [N], j \in [n_i]$. We pool these across $i \in [N]$ and $j \in [n_i]$, ignoring residual dependence, and fit a functional principal component analysis (FPCA) using $\texttt{refund::fpca.face}$ for dense grids and $\texttt{refund::fpca.sc}$ for sparse grids \citep{fpca.face, xiao2016fast}. 

We use this to estimate $\mathbf{R}_{i,Fun}$. For example, we denote $\mathbf{C}\{E(s),E(s')\}$ as the mean-zero residual process defined along the functional domain (after standardization). We can then use FPCA to approximate the functional covariance kernel \[\mathbf{C}(s,s') = Cov\{E(s), E(s') \} \] via the eigen-expansion \[ \text{Cov}\{E(s), E(s')\} = \sum_{k \geq 1} \lambda_k \phi_k(s) \phi_k(s')\] where $\phi_k(s)$ and $\lambda_k$ are eigenfunction and eigenvalue $k$ at $s$, respectively. After fitting the above FPCA, for the observed functional domain point grid $\mathcal{S}$, we denote the matrix of estimated eigenfunctions as $\widehat{\Phi} \in \mathbb{R}^{L \times K}$ where $K$ is the number of retained eigenfunctions, and column $k$ contains $\widehat{\phi}_k$ evaluated on the $L \times 1$ grid $\{ s_l\}_{l=1}^L$. We write $\widehat{\Lambda} = \text{diag}(\widehat{\lambda}_1, \ldots, \widehat{\lambda}_K)$. Then we estimate our working covariance as \[ \widehat{\mathbf{C}} = \widehat{\Phi} \widehat{\Lambda} \widehat{\Phi}^T + \hat{\sigma}^2 I \] where $\hat{\sigma}^2$ is the residual variance. In practice, we select the number of eigenfunctions to retain as the smallest $K$ such that $\frac{\sum_{k =1}^K \widehat{\lambda}_k}{\sum_{k \geq 1} \widehat{\lambda}_k} \geq \texttt{pve}$ for percent variance explained, $\texttt{pve} = 0.95$. We then take $\widehat{\mathbf{R}}_{i,Fun} = \widehat{\mathbf{C}}_i$, where $\widehat{\mathbf{C}}_i$ is constructed on the grid of functional domain points observed for cluster $i$. Inverting $\widehat{\mathbf{C}}_i$ can be made more computationally efficient by using a Woodbury inversion formula since $\widehat{\mathbf{C}} = \widehat{\Phi} \widehat{\Lambda} \widehat{\Phi}^T + \hat{\sigma}^2 I$ is defined as a low rank (rank $K$) matrix $\widehat{\Phi} \widehat{\Lambda} \widehat{\Phi}^T$, plus a diagonal matrix $\hat{\sigma}^2 I$.

Future extensions could adopt working covariance matrices that define $\widehat{\mathbf{R}}_{i,Fun}$ with cluster-specific estimates of $\widehat{\mathbf{C}}_{i}$ using longitudinal functional PCA methods \citep{yao2005functional, greven2011longitudinal, shou2015structured, cui2023fast} to account for the clustering (e.g., correlation in the observational direction). If one adopts a block diagonal fGEE working covariance matrix that specifies correlation in the functional direction, $\mathbf{R}_i = \text{bdiag}(\widehat{\mathbf{R}}_{i,1}, \ldots, \widehat{\mathbf{R}}_{i,n_i} )$, one could, use longitudinal FPCA estimates to construct cluster- and longitudinal observation-specific covariance estimates with blocks $\widehat{\mathbf{R}}_{i,j} = \widehat{\mathbf{C}}_{i,j}$.

}

\subsection{Cross-Validation Procedure}\label{app:cv}

\subsubsection{Fast Cluster Cross-Validation}
We provide a full description of our {fast} cluster cross-validation (CV) here. We found in simulations that cluster cross-validation produced one-step estimates with better estimation accuracy than one-step estimates based on smoothing parameters selected with 
the bootstrap-based procedure proposed in \cite{gee_functional}.
Moreover, restricted Maximum Likelihood cannot be used to select smoothing parameters for fGEE because there is no likelihood to maximize \citep{gee_functional}.
We use the negative log-likelihood as a CV fit criteria. We propose the following scalable (K-fold) cluster CV for large datasets. 

We define the folds, $\{\mathcal{K}_1, \ldots, \mathcal{K}_K\}$, as a disjoint partition of cluster index sets (i.e. the held-out cluster indices) where, for $K \geq 2$, $\mathcal{K}_k \subset [N]$ for each $k \in [K]$, $\bigcup_k \mathcal{K}_k = [N]$, and $\mathcal{K}_{k_1} \bigcap \mathcal{K}_{k_2} = \emptyset$ for all $k_1 \neq k_2$. To scale CV to large datasets, we exploit four features of the problem structure. First, each fold's one-step estimate is calculated with pre-computable quantities. For example, rewriting the update as
\begin{align*}
    \widehat{\boldsymbol{\theta}}_{ \Lambda_1}^{(1)} = \widehat{\boldsymbol{\theta}}_{ \Lambda_0}^{(0)} +
 \left [  \frac{1}{N} \sum_{i=1}^N  \mathbb{W}_i (\widehat{\boldsymbol{\theta}}^{(0)}_{\Lambda_0}) + \Lambda_1 \mathbb{S}  \right ]^{-1} \frac{1}{N} \sum_{i=1}^N \left \{ \mathbf{b}_i(\widehat{\boldsymbol{\theta}}^{(0)}_{\Lambda_0})  - \Lambda_1 \mathbb{S} \widehat{\boldsymbol{\theta}}^{(0)}_{\Lambda_0} \right \}, 
\end{align*}
illustrates that we can pre-compute each cluster's 
$\mathbb{W}_i(\widehat{\boldsymbol{\theta}}^{(0)}_{\Lambda_0}) = [\mathbb{D}_i(\widehat{\boldsymbol{\theta}}^{(0)}_{\Lambda_0})]^T [\widehat{\mathbb{V}}_i(\widehat{\boldsymbol{\theta}}_{ \Lambda_0}^{(0)})]^{-1} [\mathbb{D}_i(\widehat{\boldsymbol{\theta}}^{(0)}_{\Lambda_0})] \in \mathbb{R}^{p \times p}$, and $\mathbf{b}_i(\widehat{\boldsymbol{\theta}}^{(0)}_{\Lambda_0}) = [\mathbb{D}_i(\widehat{\boldsymbol{\theta}}^{(0)}_{\Lambda_0})] ^T [\widehat{\mathbb{V}}_i(\widehat{\boldsymbol{\theta}}_{ \Lambda_0}^{(0)})]^{-1} \left [ \boldsymbol{Y}_i - \widehat{\boldsymbol{\mu}}_i(\widehat{\boldsymbol{\theta}}^{(0)}_{\Lambda_0}) \right ] \in \mathbb{R}^{p}$. Second, we only need to estimate $\widehat{\boldsymbol{\theta}}^{(0)}_{\Lambda_0}$ once. We can then use that $\widehat{\boldsymbol{\theta}}^{(0)}_{\Lambda_0}$, calculated on the full sample, as the initial estimate for all folds and $\Lambda_1$ values. This is because any consistent initial estimate, $\widehat{\boldsymbol{\theta}}^{(0)}_{\Lambda_0}$, is sufficient to ensure that the one-step estimator of a given fold is consistent for the population $\boldsymbol{\theta}$. This strategy may be unnecessary for datasets where $K$ fold-specific initial estimates can be calculated quickly.
Third, assuming $\frac{1}{N} \sum_{i=1}^N  \mathbb{W}_i (\widehat{\boldsymbol{\theta}}^{(0)}_{\Lambda_0}) + \Lambda_1 \overset{\mathbb{P}}{\to}  \mathbb{E} \left [ \nabla_{\boldsymbol{\theta}} \boldsymbol{U}_{\Lambda_1}(\mathbf{X}_i, \boldsymbol{Y}_i; {\boldsymbol{\theta}}_{\Lambda_0})  \right ]$,
we can (heuristically, by Slutsky's theorem) calculate consistent one-step estimates in fold $k$ as
\begin{align}
\widehat{\boldsymbol{\theta}}^{k}_{{\Lambda_1}} = \widehat{\boldsymbol{\theta}}^{(0)}_{{\Lambda_0}} + \left [  \frac{1}{N} \sum_{i=1}^N  \mathbb{W}_i (\widehat{\boldsymbol{\theta}}^{(0)}_{\Lambda_0}) + \Lambda_1 \mathbb{S} \right ]^{-1} \frac{1}{N} \sum_{i \not\in \mathcal{K}_k} \left \{ \tilde{n}_k \mathbf{b}_i(\widehat{\boldsymbol{\theta}}^{(0)}_{\Lambda_0})  - \Lambda_1\mathbb{S} \widehat{\boldsymbol{\theta}}^{(0)}_{\Lambda_0} \right \},
\end{align}
where $\tilde{n}_k = \frac{\sum_{i=1}^N n_i}{\sum_{i \not\in \mathcal{K}_k}n_i}$. By using the full sample estimate $\left [  \frac{1}{N} \sum_{i=1}^N  \mathbb{W}_i (\widehat{\boldsymbol{\theta}}^{(0)}_{\Lambda_0}) + \Lambda_1 \mathbb{S}  \right ]^{-1}$, we only need to invert this $p \times p$ matrix once for each value of $\Lambda_1$, instead of inverting a fold-specific $p \times p$ matrix for each unique $\{k, \Lambda_1\}$ pair. The strategy of keeping $\widehat{\boldsymbol{\theta}}^{(0)}_{{\Lambda_0}}$ and $\left [  \frac{1}{N} \sum_{i=1}^N  \mathbb{W}_i (\widehat{\boldsymbol{\theta}}^{(0)}_{\Lambda_0}) + \Lambda_1 \mathbb{S}  \right ]^{-1}$ fixed across folds is motivated by an analogous strategy for cluster bootstrapping of unpenalized one-step GEE (see Remark and Theorem 3.3 in \cite{bootstrap_consistency}). Specifically, \cite{bootstrap_consistency} showed that a cluster bootstrap that fixes these two quantities (at the full-sample estimates) across replicates enjoys the same theoretical guarantees asymptotically as an approach that re-estimates these quantities in each replicate-specific sample. In our simulations, our adaptation of this strategy for cluster CV was often dramatically faster than, and performed nearly identically to, a CV strategy that calculates $\widehat{\boldsymbol{\theta}}^{k}_{{\Lambda_1}}$ using the fold-specific estimate 
$\left [  \frac{1}{N-|\mathcal{K}_k|} \sum_{i \not\in \mathcal{K}_k}  \mathbb{W}_i (\widehat{\boldsymbol{\theta}}^{(0)}_{\Lambda_0}) + \left(\frac{\sum_{i \not\in \mathcal{K}_k}n_i}{\sum_{i=1}^N n_i} \right) \Lambda_1 \mathbb{S} \right ]^{-1}$.
%
%
Fourth, we avoid tuning over a large grid of $\Lambda_1$ values by 
using a sequential CV procedure (see Appendix~\ref{app:seq_cv} below for details). 
We found these strategies performed well with $K=10$ in our simulations and data application. 

We describe the fast cluster CV procedure in Algorithm~\ref{algo:cv}:

\begin{algorithm}[htbp] 
\caption{Fast cluster CV for one-step fGEE}  
\label{algo:cv}
\begin{algorithmic}[1]
    \State Inputs: 
    \begin{itemize}
    \item Initial estimate $\widehat{\boldsymbol{\theta}}^{(0)}_{\boldsymbol{\Lambda}_0}$
    \item Penalty matrix $\mathbb{S}$
    \item Cluster folds $\{\mathcal{K}_1,\dots,\mathcal{K}_K\}$ where for $K \geq2$ $\mathcal{K}_k \subset [N], ~\cup_k\mathcal{K}_k=[N], ~\cap_k \mathcal{K}_k = \emptyset$
    \item Candidate grid $\mathcal{G}$ of smoothing parameter matrices $\boldsymbol{\Lambda}$
    \item Precomputed \{$\mathbb{W}_i(\widehat{\boldsymbol{\theta}}^{(0)}_{\Lambda_0}), \mathbf{b}_i(\widehat{\boldsymbol{\theta}}^{(0)}_{\Lambda_0}) \}_{i=1}^N$ 
    \item $\mathrm{FoldLoss}(\cdot)$: cross-validated error criteria (e.g., squared-error, negative log-likelihood)
    \begin{itemize}
        \item $\mathbb{W}_i \equiv \mathbb{W}_i(\widehat{\boldsymbol{\theta}}^{(0)}_{\Lambda_0}) = \{\mathbb{D}_i(\widehat{\boldsymbol{\theta}}^{(0)}_{\Lambda_0})\}^T \{\widehat{\mathbb{V}}_i(\widehat{\boldsymbol{\theta}}_{ \Lambda_0}^{(0)})\}^{-1} \{\mathbb{D}_i(\widehat{\boldsymbol{\theta}}^{(0)}_{\Lambda_0})\} \in \mathbb{R}^{p \times p}$
        \item $\mathbf{b}_i \equiv \mathbf{b}_i(\widehat{\boldsymbol{\theta}}^{(0)}_{\Lambda_0}) = \{\mathbb{D}_i(\widehat{\boldsymbol{\theta}}^{(0)}_{\Lambda_0})\} ^T \{\widehat{\mathbb{V}}_i(\widehat{\boldsymbol{\theta}}_{ \Lambda_0}^{(0)})\}^{-1} \left \{ \boldsymbol{Y}_i - \widehat{\boldsymbol{\mu}}_i(\widehat{\boldsymbol{\theta}}^{(0)}_{\Lambda_0}) \right \} \in \mathbb{R}^{p}$
    \end{itemize}
\end{itemize}

\For{each $\boldsymbol{\Lambda}\in\mathcal{G}$}
\State $\boldsymbol{H}_{\mathrm{glob}}(\boldsymbol{\Lambda}) \gets \Big(\frac{1}{N}\sum_{i=1}^N \mathbb{W}_i\Big) + \boldsymbol{\Lambda}\,\mathbb{S}$.
\State Invert $\boldsymbol{H}_{\mathrm{glob}}(\boldsymbol{\Lambda})$ and store $\boldsymbol{H}_{\mathrm{glob}}(\boldsymbol{\Lambda})^{-1}$
  \For{$k=1$ to $K$}
    \State $\tilde{n}_k \gets \Big(\sum_{i=1}^N n_i\Big)\Big/\Big(\sum_{i\notin \mathcal{K}_k} n_i\Big)$
    \State $\widehat{\boldsymbol{\theta}}^{(1)}_{k}(\boldsymbol{\Lambda}) \gets \widehat{\boldsymbol{\theta}}^{(0)}_{\boldsymbol{\Lambda}_0}
    + \boldsymbol{H}_{\mathrm{glob}}(\boldsymbol{\Lambda})^{-1}\Big[\frac{1}{N}\sum_{i\notin \mathcal{K}_k}\big\{\tilde{n}_k\,\mathbf{b}_i - \boldsymbol{\Lambda}\,\mathbb{S}\,\widehat{\boldsymbol{\theta}}^{(0)}_{\boldsymbol{\Lambda}_0}\big\}\Big]$
    \State $\mathrm{NLL}_k(\boldsymbol{\Lambda}) \gets \mathrm{FoldLoss}\big(\widehat{\boldsymbol{\theta}}^{(1)}_{k}(\boldsymbol{\Lambda}),\,\mathcal{K}_k\big)$
  \EndFor
  \State $\mathrm{CV}(\boldsymbol{\Lambda}) \gets \frac{1}{K}\sum_{k=1}^K \mathrm{NLL}_k(\boldsymbol{\Lambda})$
\EndFor
\State $\boldsymbol{\Lambda}^\star \gets \arg\min_{\boldsymbol{\Lambda}\in\mathcal{G}} \mathrm{CV}(\boldsymbol{\Lambda})$
\State \Return $\boldsymbol{\Lambda}^\star$
\end{algorithmic}
\end{algorithm}

\subsubsection{Full-Sample Cross Validation} \label{app:full_cv}
We define a standard cluster (K-Fold) cross-validation one-step estimator as
\begin{align}
\widehat{\boldsymbol{\theta}}^{k}_{{\Lambda_1}} = \widehat{\boldsymbol{\theta}}^{(0)}_{\Lambda_0} + \frac{1}{N-|\mathcal{K}_k|}\left [  \frac{1}{N-|\mathcal{K}_k|} \sum_{i \not\in \mathcal{K}_k}  \mathbb{W}_i (\widehat{\boldsymbol{\theta}}^{(0)}_{\Lambda_0}) + {n}^*_k \Lambda_1 \mathbb{S} \right ]^{-1} \sum_{i \not\in \mathcal{K}_k} \left \{  \mathbb{B}_i(\widehat{\boldsymbol{\theta}}^{(0)}_{\Lambda_0})  - {n}^*_k \Lambda_1\mathbb{S} \widehat{\boldsymbol{\theta}}^{(0)}_{\Lambda_0} \right \},
\end{align}
where ${n}^*_k  = \frac{\sum_{i \not\in \mathcal{K}_k}n_i}{\sum_{i=1}^N n_i}$. {In preliminary simulations (not shown here), one-step fGEE estimates fit with smoothing parameters selected with the full-sample and fast cluster CVs produced nearly identical functional coefficient RMSE and empirical 95\% CI coverage values. For that reason, all results shown in the main text and Appendix are obtained from fGEEs fit with smoothing parameters selected with the fast cluster CV.}

\subsubsection{Sequential Tuning Procedure} \label{app:seq_cv}
The diagonal smoothing matrix $\Lambda$ contains smoothing parameters $\lambda_1, \ldots, \lambda_q$, each repeated based on the number of knots used for its functional coefficient. To identify the correct range for the $\lambda_1, \ldots, \lambda_q$, we apply an iterative CV strategy that is designed to be fast as $q$ grows. By dividing the tuning into three stages, we avoid tuning over a large $q$ dimensional grid that can be computationally impractical even for $q \geq 3$. In step (1), we tune over a small one-dimensional grid to identify the correct order of magnitude for the smoothing parameters: $\boldsymbol{\Lambda}_{(1)} = \{\alpha_1 \Lambda^{(0)}, \ldots, \alpha_L \Lambda^{(0)}\}$ where, for example, $\{\alpha_1, \ldots, \alpha_{L_1}\} = \{0.001, 0.01, 0.1, 1, 10, 100, 1000\}$. This exploits the fact that the unique diagonal entries $\lambda_1^{(0)},\ldots,\lambda_q^{(0)}$ of the $\Lambda_0$ selected by REML are, in our experience, on a reasonable relative scale. 
In step (2), we tune over a small $q$-dimensional grid constructed around the smoothing parameter values that minimize the cross-validated criteria (e.g. MSE, negative log-likelihood) in step (1). Denoting $\tilde{\lambda}^{(1)}_1, \ldots, \tilde{\lambda}^{(1)}_q$ as these selected values, we tune over a grid with all unique combinations (with the $\texttt{R}$ function $\texttt{expand.grid}()$) of  grid $\boldsymbol{\Lambda}_{(2)} = \left \{ \{\alpha_1 \tilde{\lambda}^{(1)}_1, \ldots, \alpha_{L_2} \tilde{\lambda}^{(1)}_1\}, \ldots, \{\alpha_1 \tilde{\lambda}^{(1)}_q, \ldots, \alpha_L \tilde{\lambda}^{(1)}_q\} \right \}$, where, for example, $\{\alpha_1, \ldots, \alpha_{L_2}\}=\{0.001, 0.01, 0.1, 1, 10, 100, 1000\}$. Step (2) identifies a reasonable order of magnitude for each smoothing parameter with a small grid. Finally, in step (3), we tune each smoothing parameter within a local neighborhood around the selected values from step (2), $\tilde{\lambda}^{(2)}_1,\ldots, \tilde{\lambda}^{(2)}_q$. Specifically, we tune over the unique combinations, $\boldsymbol{\Lambda}_{(3)} = \left \{ \{\alpha_1 \tilde{\lambda}^{(2)}_1, \ldots, \alpha_{L_3} \tilde{\lambda}^{(2)}_1\}, \ldots, \{\alpha_1 \tilde{\lambda}^{(2)}_q, \ldots, \alpha_{L_3} \tilde{\lambda}^{(2)}_q\} \right \}$, where, for example, $\{\alpha_1, \ldots, \alpha_{L_3}\} \subset [0.1, 10]$. We found this was fast and performed well in simulations, across a wide range of $N$ and $n_i$ values. We anticipate the relative speedups of using fast cluster CV, compared to standard K-fold CV, would grow as $p$ increases (e.g. from increasing the number of knots or covariates, $q$). This is because fast K-fold inverts a $p \times p$ matrix only once per $\Lambda_1$ value, whereas standard K-fold inverts a similar $p \times p$ matrix for every unique $\{k, \Lambda_1\}$ pair.

{
\subsection{Fast Wild Cluster Bootstrap} \label{app:wild}

Motivated by classic \citep{cameron2008bootstrap} and score-based wild cluster bootstrapping \citep{kline2010score} approaches, we apply a fast wild cluster bootstrap to calculate quantiles to construct Wald 95\% CIs. This can also be described as a special case of a cluster multiplier bootstrap with Rademacher weights. We describe our approach in Algorithm~\ref{algo:wild} below. Since all bootstrap replicate one-step estimates use the same initial estimate $\widehat{\boldsymbol{\theta}}^{(0)}_{{\Lambda_1}}$ and same inverse $\boldsymbol{H}^{-1}=\left [  \frac{1}{N} \sum_{i=1}^N  \mathbb{W}_i (\widehat{\boldsymbol{\theta}}^{(0)}_{{\Lambda_0}}) + \Lambda_1 \mathbb{S} \right ]^{-1}$, this bootstrapping procedure typically takes less than a second for moderately sized $p$. This approach of keeping the above two quantities fixed is motivated by theory for cluster bootstrapping for non-functional (unpenalized) one-step GEE \citep{bootstrap_consistency} as well as for score-based wild cluster bootstrapping \citep{kline2010score}. 

We also apply a small cluster effective degree of freedom (edf) inflation factor based on the edf definition described in \cite{wood2016smoothing} adapted to the GEE setting. Specifically, denoting $\bar{\mathbb{W}} = \mathbb{W}_i(\widehat{\boldsymbol{\theta}}^{(0)}_{\boldsymbol{\Lambda}_0})$, and penalty matrix $\boldsymbol{P} \coloneqq \boldsymbol{\Lambda}_1\,\mathbb{S}$ we take edf$_r = tr(\{ \bar{\mathbb{W}} + \boldsymbol{P} \}^{-1} \bar{\mathbb{W}})_r$, where the $r$ subscript means we take the diagonal elements corresponding to the basis coefficients, $\widehat{\boldsymbol{\theta}}$, associated with functional coefficient $r$. We found that calculating the df$_r$ for each functional coefficient separately yielded CIs that achieved roughly nominal coverage. This was preferable to calculating an effective degree of freedom for the entire model because such an approach would prevent us from using the correction for small cluster settings where $\sum_r \text{edf}_r > N$ (i.e., our approach allows for corrections as long as $\text{edf}_r < N$ for all $r \in [p]$). 
\begin{algorithm}[htbp] 
\caption{Wild cluster bootstrap critical values for joint/pointwise CIs for one-step fGEE}
\label{algo:wild}
\begin{algorithmic}[1]
\footnotesize 
\State \textbf{Inputs:}
\begin{itemize}
    \item \textbf{Pivot} coefficient estimate $\widehat{\boldsymbol{\theta}}^{(0)}_{\boldsymbol{\Lambda}_0} \in \mathbb{R}^p$ (initial estimate; used for bootstrap centering).
    \item \textbf{Center} (target) coefficient estimate $\widehat{\boldsymbol{\theta}}^{(1)}_{\boldsymbol{\Lambda}_1} \in \mathbb{R}^p$ (one-step or final estimate to center reported bands).
    \item Penalty matrix $\boldsymbol{P} \coloneqq \boldsymbol{\Lambda}_1\,\mathbb{S} \in \mathbb{R}^{p\times p}$.
    \item Cluster-wise quantities evaluated at $\widehat{\boldsymbol{\theta}}^{(0)}_{\boldsymbol{\Lambda}_0}$:
    $\{\mathbb{W}_i(\widehat{\boldsymbol{\theta}}^{(0)}_{\boldsymbol{\Lambda}_0}),\mathbf{b}_i(\widehat{\boldsymbol{\theta}}^{(0)}_{\boldsymbol{\Lambda}_0})\}_{i=1}^N$, where
    \begin{itemize}
        \item $\mathbb{W}_i(\widehat{\boldsymbol{\theta}}^{(0)}_{\boldsymbol{\Lambda}_0})\in\mathbb{R}^{p\times p}$,
        \item $\mathbf{b}_i(\widehat{\boldsymbol{\theta}}^{(0)}_{\boldsymbol{\Lambda}_0})\in\mathbb{R}^{p}$.
    \end{itemize}
    \item Evaluation grid $\{s_\ell\}_{\ell=1}^L\subset\mathcal{S}$ and term-wise linear maps $\{\mathbb{A}_r\}_{r=1}^q$ with $\mathbb{A}_r\in\mathbb{R}^{L\times p}$.
    \item Bootstrap replicates $B$, nominal level $\alpha\in(0,1)$.
    \item Covariance estimate $\widehat{\mathrm{Var}}(\widehat{\boldsymbol{\theta}}^{(1)}_{\boldsymbol{\Lambda}_1})=\widehat{\boldsymbol{\Sigma}}\in\mathbb{R}^{p\times p}$ for studentization.
\end{itemize}

\vspace{0.25em}
\State \textbf{Precomputation:}
\State $\overline{\mathbb{W}} \gets \frac{1}{N}\sum_{i=1}^N \mathbb{W}_i(\widehat{\boldsymbol{\theta}}^{(0)}_{\boldsymbol{\Lambda}_0})$.
\State $\boldsymbol{H} \gets \overline{\mathbb{W}} + \boldsymbol{P}$.
\State Compute and store $\boldsymbol{H}^{-1}$.
\State $\mathbf{p} \gets \boldsymbol{P} \widehat{\boldsymbol{\theta}}^{(0)}_{\boldsymbol{\Lambda}_0}$ \hfill (penalty contribution in one-step score)

\vspace{0.25em}
\State \textbf{Compute pivot curves, center curves, and fixed SE curves (term-wise):}
\For{$r=1$ to $q$}
    \State $\widehat{\boldsymbol{\beta}}^{\mathrm{pivot}}_r \gets \mathbb{A}_r \widehat{\boldsymbol{\theta}}^{(0)}_{\boldsymbol{\Lambda}_0}\in \mathbb{R}^{L}$.
    \State $\widehat{\boldsymbol{\beta}}^{\mathrm{ctr}}_r \gets \mathbb{A}_r \widehat{\boldsymbol{\theta}}^{(1)}_{\boldsymbol{\Lambda}_1}\in \mathbb{R}^{L}$.
    \State $\widehat{\mathbf{s}}_r \gets \sqrt{\mathrm{diag}\!\big(\mathbb{A}_r\,\widehat{\boldsymbol{\Sigma}}\,\mathbb{A}_r^T\big)} \in \mathbb{R}^{L}$.
\EndFor

\vspace{0.25em}
\State \textbf{Wild cluster bootstrap: generate studentized processes}
\For{$b=1$ to $B$}
    \State Draw i.i.d. Rademacher multipliers $\{\xi_i^{(b)}\}_{i=1}^N$ with $\mathbb{P}(\xi_i^{(b)}=\pm 1)=1/2$.
    \State $\mathbf{B}^{(b)} \gets \frac{1}{N}\sum_{i=1}^N \xi_i^{(b)}\,\mathbf{b}_i(\widehat{\boldsymbol{\theta}}^{(0)}_{\boldsymbol{\Lambda}_0}) \in\mathbb{R}^{p}$.
    \State $\widehat{\boldsymbol{\theta}}^{(b)} \gets \widehat{\boldsymbol{\theta}}^{(0)}_{\boldsymbol{\Lambda}_0}
      + \boldsymbol{H}^{-1}\left(\mathbf{B}^{(b)} - \mathbf{p}\right)$ \hfill (one-step bootstrap update; fixed bread)

    \For{$r=1$ to $q$}
        \State $\boldsymbol{\beta}_r^{(b)} \gets \mathbb{A}_r \widehat{\boldsymbol{\theta}}^{(b)} \in \mathbb{R}^{L}$.
        \State $\mathbf{T}_r^{(b)} \gets 
        \left(\boldsymbol{\beta}_r^{(b)}-\widehat{\boldsymbol{\beta}}^{\mathrm{pivot}}_r\right)\oslash \widehat{\mathbf{s}}_r
        \in\mathbb{R}^{L}.$ \hfill (studentized process; centered at pivot)
        \State $M_r^{(b)} \gets \max_{\ell \in [L] }\left|T_{r,\ell}^{(b)}\right|$.
        \State Store pooled pointwise magnitudes $\{|T_{r,\ell}^{(b)}|:\ell \in [L]\}$ for scalar pointwise calibration.
    \EndFor
\EndFor

\vspace{0.25em}
\State \textbf{Critical values (per curve $r$):}
\For{$r=1$ to $q$}
    \State $c^{\mathrm{jt}}_r \gets \mathrm{Quantile}_{1-\alpha}\left(\{M_r^{(b)}\}_{b=1}^B\right)$.
    \State $c^{\mathrm{pt}}_r \gets \mathrm{Quantile}_{1-\alpha}\left(\{|T_{r,\ell}^{(b)}|:\; b \in [B], \ell \in [L]\}\right)$.
\EndFor

\vspace{0.25em}
\State \textbf{(Optional) EDF/\emph{t}-adjustment of critical values:}
\For{$r=1$ to $q$}
    \State $\mathrm{df}_r \gets \max(\mathrm{df}_{\min},\,N-\mathrm{edf}_r)$.
    \State $a_r \gets \dfrac{t_{1-\alpha/2,\mathrm{df}_r}}{z_{1-\alpha/2}}$.
\EndFor

\vspace{0.25em}
\State \textbf{Construct confidence bands (reported bands centered at $\widehat{\boldsymbol{\beta}}^{\mathrm{ctr}}_r$):}
\For{$r=1$ to $q$}
\State $\widehat{\beta}^{\mathrm{ctr}}_r(s_\ell)\ \pm\ a_rc^{\mathrm{jt}}_r\,\widehat{s}_r(s_\ell),
\qquad \ell=1,\ldots,L.$
\State $\widehat{\beta}^{\mathrm{ctr}}_r(s_\ell)\ \pm\ a_rc^{\mathrm{pt}}_r\,\widehat{s}_r(s_\ell),
\qquad \ell=1,\ldots,L.$
\EndFor
\State \Return $\{a_rc^{\mathrm{jt}}_r,~a_rc^{\mathrm{pt}}_r\}_{r=1}^p$ and the corresponding joint/pointwise scalar bands.
\end{algorithmic}
\end{algorithm}
}

\subsection{Fast Cluster Bootstrap Variance Estimator} \label{app:fast_boot}
Motivated by theory for cluster bootstrapping in non-functional one-step GEE \citep{bootstrap_consistency}, we propose a fast cluster bootstrap as an alternative method to estimate $\text{Var}(\widehat{\boldsymbol{\theta}}_{{\Lambda_1}}^{(1)})$, or to construct non-parametric bootstrap-based joint CIs. Namely, for bootstrap replicate $t$
\begin{align}\label{eq:fcbve}
\widehat{\boldsymbol{\theta}}^{t}_{{\Lambda_1}} = \widehat{\boldsymbol{\theta}}^{(0)}_{{\Lambda_1}} + \left [  \frac{1}{N} \sum_{i=1}^N  \mathbb{W}_i (\widehat{\boldsymbol{\theta}}^{(0)}_{{\Lambda_0}}) + \Lambda_1 \mathbb{S} \right ]^{-1} \frac{1}{N} \sum_{i \in \mathcal{R}_t} \left \{ \tilde{n}_t \mathbf{b}_i(\widehat{\boldsymbol{\theta}}^{(0)}_{{\Lambda_0}})  - \Lambda_1\mathbb{S} \widehat{\boldsymbol{\theta}}^{(0)}_{{\Lambda_0}} \right \},
\end{align}
where $\mathcal{R}_t$ is a set of cluster indices of size $N$, sampled with replacement, and $\tilde{n}_t = \frac{\sum_{i=1}^N n_i}{\sum_{i \in \mathcal{R}_t}n_i}$. We estimate $\text{Var}_{\text{boot}} \left( \widehat{\boldsymbol{\theta}}_{\Lambda}^{(1)} \right)$ as the sample covariance matrix of the $T$ bootstrap replicates. Since equation \eqref{eq:fcbve} uses the same initial estimate $\widehat{\boldsymbol{\theta}}^{(0)}_{{\Lambda_1}}$ and keeps the matrix $\left [  \frac{1}{N} \sum_{i=1}^N  \mathbb{W}_i (\widehat{\boldsymbol{\theta}}^{(0)}_{{\Lambda_0}}) + \Lambda_1 \mathbb{S} \right ]^{-1}$ fixed for all $t$, this bootstrapping procedure typically takes less than a second for moderately sized $p$. 
{In preliminary simulations, we found empirical coverage to be comparable when constructing CIs} with sandwich and fast bootstrap variance estimators. 

\subsection{Initial $\texttt{pffr}$ fit} \label{app:pffr}
As an initial fit, the penalized maximum likelihood estimator
\begin{align} \label{eq:pffr_loss}
\widehat{\boldsymbol{\theta}}_{\Lambda_0}^{(0)} = \underset{{\boldsymbol{\theta}} }{\mbox{argmin }}~ -2\sum_{i=1}^N \sum_{j=1}^{n_i} \sum_{s \in \mathcal{S}} {l}({Y}_{i,j}(s), \mathbb{X}_{i,j}; \boldsymbol{\theta}) + \boldsymbol{\theta}^T\Lambda_0 \mathbb{S}\boldsymbol{\theta},
\end{align}
where ${l}({Y}_{i,j}(s), \mathbb{X}_{i,j}; \boldsymbol{\theta})$ is the log-likelihood evaluated on a single observation of the outcome ${Y}_{i,j}(s) \in \mathbb{R}$ from cluster $i$, at longitudinal observation $j$, at functional domain point $s$. We denote the model parameter $\boldsymbol{\theta}$, $\mathbb{S}$ as the penalty matrix, and $\Lambda_0$ as an associated diagonal matrix of smoothing parameters. As discussed in the main text, $\mathbb{X}_{i,j} \in \mathbb{R}^p$ is constructed as a product of pre-defined basis functions (e.g. B-splines) and the original covariates, $\boldsymbol{x}_{i,j} \in \mathbb{R}^q$. Model~\ref{eq:pffr_loss} is equivalent to adopting a correlation structure that assumes all observations $Y_{i,j}(s)$ are mutually independent across $i$, $j$, and $s$. 

We fit model~\ref{eq:pffr_loss} with the $\texttt{refund}$ package in $\texttt{R}$ \citep{refund} with the $\texttt{pffr}$ function \citep{pffr}. This calls the $\texttt{mgcv}$ package in $\texttt{R}$ \citep{mgcv} to fit the model with the $\texttt{gam}$ or $\texttt{bam}$ functions. For example, for a model with two covariates, $\texttt{X1}$ and $\texttt{X2}$, we use the following code:
\begin{verbatim}
initial_fit = refund::pffr(Y ~ X1 + X2, 
                          family = fam,
                          algorithm = "bam",
                          method = "fREML",
                          discrete = TRUE,
                          bs.yindex = list(bs = spline.basis, 
                                           k = knots,
                                           m = m.pffr),
                          data = data_df)
\end{verbatim}

where $\texttt{fam}$ is the exponential dispersion family adopted to construct a pseudo-likelihood based estimating equation in the fGEE, $\texttt{spline.basis}$ is the pre-specified basis spline from the $\texttt{mgcv}$ family (e.g. $\texttt{bs}$, $\texttt{ps}$, $\texttt{tp}$), $\texttt{knots}$ is the number of knots, and $\texttt{m.pffr}$ is the penalty type. For example, as noted in the $\texttt{pffr}$ function, $\texttt{bs.yindex = list(bs="ps", k=5, m=c(2, 1))}$ indicates 5 cubic B-splines bases with a first order difference penalty.

{We use the above fit as both the initial estimate for the one-step fGEE and as benchmark method ``pffr'' in the simulations presented in Section~\ref{sec:simulations} and throughout the simulations presented in the Appendix. We do not use functional random effects in any $\texttt{pffr}$ fits.} 

\subsection{Computational Details}\label{app:comp}
Our implementation uses a number of $\texttt{R}$ packages for estimation of nuisance parameters and to improve computational speed. We use the $\texttt{data.table}$ package extensively to increase computational efficiency \citep{data_table}. Our code structure {was initially} loosely based on the structure from the $\texttt{gee1step}$ package that implements the one-step (non-functional) GEE \citep{gee_onestep} available on the $\texttt{Github}$ \url{https://github.com/kgoldfeld/gee1step} of Professor Keith Goldfeld. We use the $\texttt{Rfast}$ package \citep{rfast} to estimate the $\rho(s)$ for AR1 correlation and to speed up other standard computations. We use the $\texttt{MASS}$ package to draw multivariate normals \citep{MASS_package}. We use \cite{irreg_ar1} to estimate and invert covariance matrices with an AR1 structure when the time intervals are irregular. We use the $\texttt{SuperGauss}$ package \citep{SuperGauss} to invert Toeplitz correlation matrices.  We use the $\texttt{sanic}$ package to quickly invert other positive definite matrices \citep{sanic_package}. As mentioned in the main text and previous Appendix section, we use $\texttt{mgcv}$ \citep{mgcv} and $\texttt{refund}$ \citep{refund} packages to estimate initial fits and to (optionally) smooth the correlation parameters across the functional domain.

\subsection{Description of the use of AI} \label{app:ai}
{
We used Large Language Models (LLMs) to extend and optimize our existing package code (described below), modify our existing code to generate tables summarizing simulation results, and to generate $\texttt{ggplot2}$ code to generate Appendix Figure~\ref{fig:coef_est_grid}, which summarizes simulation results. Specifically, we used LLMs to write a computationally efficient version of the copula method to generate longitudinal functional outcomes given covariates that we used to simulate data. We did not use LLMs for other portions of our simulations. We extensively verified and tested code produced by LLMs with large simulation experiments. We used LLMs only in code added in revisions of an original manuscript, not in any code of the original manuscript. We did not use LLMs for writing or editing the manuscript, any of the mathematical proofs, or core methodological ideas. We did, however, revise some of the Theorem 3.1 wording after LLMs pointed out that the limiting variance of the functional coefficient estimates can be rank deficient depending on how the grid of functional domain points is selected.

The key areas we used LLMs for in extending our original package code are: 1) implementing an efficient $\texttt{data.table}$-based version of an initial Kronecker product working covariance; 2) the FPCA-based working covariance; 3) extension of the fGEE to link functions beyond log, logit and identity; 4) extension to estimating equations based on pseudo-distributions other than those deriving from Poisson, binomial, and Gaussian; 5) calculation of the small-$N$ effective degree of freedom inflation factor; 6) implementation of a computationally efficient version of the wild cluster bootstrap; and 7) implementation of speedups for the fast cluster CV (e.g., using $\texttt{Rcpp}$-based evaluations of the negative log-likelihood). We have acknowledged LLM use in the package documentation. When LLMs were used in generating code outside the package, we acknowledge their use in the R files (see Github repo: \url{https://github.com/gloewing/fgee_onestep}). All LLMs were accessed through the platforms provided to NIH Intramural Research Program by the Department of Health and Human Services (HHS), namely Gemini 2.5 Pro, Claude Sonnet 4.5 (while available to HHS employees), and ChatGPT 5.2 Pro.
}

\section{Theory} \label{app:theory}
\subsection{Main Text Theorem Conditions} \label{app:theory_conditions}
 \begin{enumerate}[(i)]
    \item The inverse link function $g^{-1}$ is three times continuously differentiable.
    \item The covariates and outcomes have bounded support, i.e. $\exists M > 0$ such that $P[\lVert \boldsymbol{Y}_i(s) \rVert \leq M] = 1$ for all $s \in \mathcal{S}$, and $P[|X_{i,j,r}| < M] = 1$, for all $j \in [n_i]$ and $r \in [q]$.
     \item $\exists ~ s,t >0~:~\lambda_{\mathrm{min}}(\boldsymbol{M}_N(\boldsymbol{\theta}_N)) \geq s$, $\mathbb{E}( | U_{N,j}(\mathbf{X}_i, \boldsymbol{Y}_i; \boldsymbol{\theta}) U_{N,k}(\mathbf{X}_i, \boldsymbol{Y}_i; \boldsymbol{\theta}) U_{N,l}(\mathbf{X}_i, \boldsymbol{Y}_i; \boldsymbol{\theta})|) \leq t, \forall n \in \mathbb{N}, \forall~j,k,l$.
     \item $\boldsymbol{H}_N(\boldsymbol{\theta}_N)$ is invertible, $\mathbb{P} \left [  \mathbb{P}_N\left(\nabla_{\boldsymbol{\theta}} \boldsymbol{U}_{N}(\mathbf{X}_i, \boldsymbol{Y}_i;\boldsymbol{\theta}) \bigg |_{\boldsymbol{\theta} = \widehat{\boldsymbol{\theta}}^{(0)}_N}\right)~\text{non-singular} \right ] = 1$, \\ and $\left(\mathbb{P}_N\left[\nabla_{\boldsymbol{\theta}} \boldsymbol{U}_{N}(\mathbf{X}_i, \boldsymbol{Y}_i;\boldsymbol{\theta}) \bigg |_{\boldsymbol{\theta} = \widehat{\boldsymbol{\theta}}^{(0)}_N}\right] \right )^{-1} = O_{\mathbb{P}}(1)$.
     \item $\boldsymbol{M}_N(\boldsymbol{\theta}_N) = O(1)$, $\boldsymbol{H}_N(\boldsymbol{\theta}_N)   = O(1)$ and $\{\boldsymbol{H}_N(\boldsymbol{\theta}_N) \}^{-1}  = O(1)$.
     \item $\sqrt{N} \{\boldsymbol{M}_N(\boldsymbol{\theta}_N) \}^{-1/2} \boldsymbol{H}_N(\boldsymbol{\theta}_N)(\widehat{\boldsymbol{\theta}}^{(0)}_{N} - \boldsymbol{\theta}_N) = O_{\mathbb{P}}(1)$.
 \end{enumerate}

 Condition (i) is a mild smoothness condition that holds for all standard link functions (e.g. logit, log). Condition (ii) is also standard---we expect it holds across essentially all biomedical settings. Note that it could be replaced by weaker moments conditions on the estimating equation, and its derivatives. Condition (iii) is a sufficient condition for the estimating equation to be asymptotically normal and implies that $\boldsymbol{M}_N(\boldsymbol{\theta}_N)$ is invertible for all $N \in \mathbb{N}$. Condition (iv) also states that the $\boldsymbol{H}_N(\boldsymbol{\theta}_N)$, and its sample analogue, are invertible for all $N \in \mathbb{N}$. Condition (v) should hold when the limiting (unpenalized) estimating equation results in full rank limiting $\boldsymbol{M}_N$ and $\boldsymbol{H}_N$. This should hold when the design matrices, $\mathbb{X}_i$, are full rank. Finally condition (vi) is a statement about the rate of convergence of the initial estimator. In practice, when the $\widehat{\boldsymbol{\theta}}^{(0)}_{N}$ is estimated using a penalized unweighted estimating equation, this implies some conditions on the rates of convergence of the smoothing parameter values, $\Lambda_{0,N}$ and $\Lambda_N$. We provide an expanded discussion of this below.
\subsection{General Result} 
Throughout, we fix an arbitrary matrix norm $\lVert \, \cdot \,\rVert$ (e.g., operator, Frobenius), and for a sequence of random matrices $(A_N)_{N = 1}^{\infty}$, we write $A_N = O_{\mathbb{P}}(1)$ if $\lVert A_N\rVert = O_{\mathbb{P}}(1)$, and $A_N = o_{\mathbb{P}}(1)$ if $\lVert A_N\rVert = o_{\mathbb{P}}(1)$. For symmetric matrix $A$, we write $\lambda_{\mathrm{min}}(A)$ and $\lambda_{\mathrm{max}}(A)$ for the smallest and largest eigenvalues of $A$, respectively. We begin by stating and proving a lemma that we use to prove our general theorem.
\begin{lemma}\label{lemma_matrix_slut}
Let $A_N \in \mathbb{R}^{p \times p}$ be a sequence of fixed and invertible matrices such that $A_N = O(1)$ and $A_N^{-1} = O(1)$, let $X_N \in \mathbb{R}^p$ be a sequence of random vectors such that 
for constant scalars $r_N \to \infty$,
$r_N A_N X_N = O_{\mathbb{P}}(1)$. If $B_N \in \mathbb{R}^{p \times p}$ is a sequence of random matrices that satisfy $B_N \overset{\mathbb{P}}{\to} \boldsymbol{0}_{p \times p}$, then $r_N A_N B_NX_N  \overset{\mathbb{P}}{\to} \boldsymbol{0}_p$.
\end{lemma}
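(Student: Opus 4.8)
The plan is to reduce the whole statement to the hypothesis $r_N A_N X_N = O_{\mathbb{P}}(1)$ by a single algebraic trick, and then to finish with an elementary Slutsky-type product argument. First I would exploit the invertibility of $A_N$ to insert $A_N^{-1} A_N = I_p$ and write
\[
r_N A_N B_N X_N = \left( A_N B_N A_N^{-1} \right)\left( r_N A_N X_N \right),
\]
which is valid because the deterministic scalars $r_N$ commute with every factor. This rewriting is the crux: it isolates the known stochastically bounded quantity $r_N A_N X_N$ and pushes all of the vanishing behavior into the conjugated matrix $A_N B_N A_N^{-1}$.

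Next I would control the conjugation factor. By submultiplicativity of the fixed matrix norm,
\[
\lVert A_N B_N A_N^{-1} \rVert \le \lVert A_N \rVert \, \lVert B_N \rVert \, \lVert A_N^{-1} \rVert .
\]
Since $A_N$ is fixed, the hypotheses $\lVert A_N \rVert = O(1)$ and $\lVert A_N^{-1} \rVert = O(1)$ are deterministic bounds, while $\lVert B_N \rVert \overset{\mathbb{P}}{\to} 0$. Hence the right-hand side is $o_{\mathbb{P}}(1)$, so $A_N B_N A_N^{-1} \overset{\mathbb{P}}{\to} \boldsymbol{0}_{p \times p}$.

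Finally I would combine the factors via the standard fact that $o_{\mathbb{P}}(1) \cdot O_{\mathbb{P}}(1) = o_{\mathbb{P}}(1)$. Concretely, for any matrix sequence $C_N \overset{\mathbb{P}}{\to} 0$ and any vector sequence $Z_N = O_{\mathbb{P}}(1)$, submultiplicativity gives $\lVert C_N Z_N \rVert \le \lVert C_N \rVert \, \lVert Z_N \rVert$, which is a product of an $o_{\mathbb{P}}(1)$ scalar and an $O_{\mathbb{P}}(1)$ scalar and therefore $o_{\mathbb{P}}(1)$; thus $C_N Z_N \overset{\mathbb{P}}{\to} \boldsymbol{0}_p$. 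Applying this with $C_N = A_N B_N A_N^{-1}$ and $Z_N = r_N A_N X_N$ delivers the claim. Notably, this last step requires no independence between $B_N$ and $X_N$, since every bound passes through norms.

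The argument is essentially routine, and the only real obstacle is spotting the conjugation identity. A naive attempt to bound $\lVert r_N A_N B_N X_N \rVert \le \lVert A_N \rVert \, \lVert B_N \rVert \, \lVert r_N X_N \rVert$ fails, because $r_N X_N$ need not be stochastically bounded on its own---only the \emph{weighted} quantity $r_N A_N X_N$ is controlled by the hypothesis. Factoring as $(A_N B_N A_N^{-1})(r_N A_N X_N)$ is precisely what keeps the growing scalar $r_N$ attached to the bounded object, and everything else follows from norm submultiplicativity together with the $o_{\mathbb{P}}(1)\cdot O_{\mathbb{P}}(1) = o_{\mathbb{P}}(1)$ rule.
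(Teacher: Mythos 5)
Your proof is correct and is essentially the paper's own argument: both hinge on inserting $I_p = A_N^{-1}A_N$ and the rule $o_{\mathbb{P}}(1)\cdot O_{\mathbb{P}}(1) = o_{\mathbb{P}}(1)$; you merely group the factors as $\left(A_N B_N A_N^{-1}\right)\left(r_N A_N X_N\right)$, while the paper groups the same product as $\left(A_N B_N\right)\left(r_N X_N\right)$. One correction to your closing remark: the ``naive'' bound you dismiss does not in fact fail, because $r_N X_N = A_N^{-1}\left(r_N A_N X_N\right) = O(1)\,O_{\mathbb{P}}(1) = O_{\mathbb{P}}(1)$ is stochastically bounded on its own---this observation is precisely how the paper's proof proceeds, bounding $A_N B_N = o_{\mathbb{P}}(1)$ and $r_N X_N = O_{\mathbb{P}}(1)$ separately and then multiplying.
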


\begin{proof}
Noting that $A_N B_N = O(1) o_{\mathbb{P}}(1) = o_{\mathbb{P}}(1)$, and
    \[
     r_N X_N = A_N^{-1} (r_N A_N X_N) = O(1) O_{\mathbb{P}}(1) = O_{\mathbb{P}}(1) \]
     we can immediately conclude that
     \[r_N A_N B_N X_N = A_N B_N (r_N X_N) = o_{\mathbb{P}}(1) O_{\mathbb{P}}(1) = o_{\mathbb{P}}(1),\]
     as claimed.
\end{proof}

We now consider a general adaptive $M$-estimation setting. Suppose we observe an iid sequence of random vectors $(Z_i)_{i = 1}^{\infty}$, with generic observation denoted $Z \sim \mathbb{P}$, and for each fixed sample size $N$ we work with the differentiable (in $\boldsymbol{\theta}$) estimating equation $\boldsymbol{U}_N(Z; \boldsymbol{\theta}) \in \mathbb{R}^p$, for parameters $\boldsymbol{\theta} \in \mathbb{R}^p$. We write $U_{N,\ell}(Z; \boldsymbol{\theta})$ for the $\ell$-th component of $\boldsymbol{U}_N(Z;\boldsymbol{\theta})$. The ``fully iterated'' estimator $\widehat{\boldsymbol{\theta}}_N^*$ would be given by solving
\[\mathbb{P}_N \left\{\boldsymbol{U}_N(Z; \boldsymbol{\theta})\right\}\equiv  \frac{1}{N}\sum_{i=1}^N \boldsymbol{U}_N(Z_i; \boldsymbol{\theta}) = \boldsymbol{0}_p,\]
targeting the population parameter $\boldsymbol{\theta}_N$ that solves $\mathbb{E}\{\boldsymbol{U}_N(Z; \boldsymbol{\theta})\} = \boldsymbol{0}_p$.
In practice, we will apply the general theory to the longitudinal functional setup by taking $Z_i \equiv (\mathbb{X}_i, \boldsymbol{Y}_i)$ and $\boldsymbol{U}_N(Z_i; \boldsymbol{\theta}) = \mathbb{D}_i^T \widetilde{\mathbb{V}}_{i,n}^{-1}(\boldsymbol{Y}_i - g^{-1}(\mathbb{X}_i\boldsymbol{\theta})) - \frac{1}{N}\Lambda_N\mathbb{S}\boldsymbol{\theta}$. In this special case, so long as $\Lambda_N \to \boldsymbol{0}_{p \times p}$ and $\widetilde{\mathbb{V}}_{n}^{-1} \to \mathbb{V}^{-1}$, we have $\boldsymbol{U}_N \to \boldsymbol{U}_{\infty}$ where $\boldsymbol{U}_{\infty}(Z; \boldsymbol{\theta}) = \mathbb{D}^T \mathbb{V}^{-1}(\boldsymbol{Y} - g^{-1}(\mathbb{X}\boldsymbol{\theta}))$, with corresponding parameter $\boldsymbol{\theta}_{\infty}$ solving $\mathbb{E}\{\boldsymbol{U}_{\infty}(Z; \boldsymbol{\theta})\} = \boldsymbol{0}_p$---note that we do not explicitly require such convergence in our general setup.

We require notation for a number of related important quantities. First, define the variance quantities $\boldsymbol{H}_N(\boldsymbol{\theta}) = \mathbb{E}\{\nabla_{\boldsymbol{\theta}}\boldsymbol{U}_N(Z; \boldsymbol{\theta}) \}$, $\boldsymbol{M}_N(\boldsymbol{\theta}) =\mathbb{E}\{\boldsymbol{U}_N(Z; \boldsymbol{\theta}) \boldsymbol{U}_N(Z; \boldsymbol{\theta})^T \}$.
Next, for an initial estimator $\widehat{\boldsymbol{\theta}}_N^{(0)}$, we define
\[\widehat{\boldsymbol{\theta}}^{(1)}_N \coloneqq \widehat{\boldsymbol{\theta}}^{(0)}_N - \left(\mathbb{P}_N \left [  \nabla_{\boldsymbol{\theta}} \boldsymbol{U}_{N}(Z; \boldsymbol{\theta}) \bigg |_{\boldsymbol{\theta} = \widehat{\boldsymbol{\theta}}^{(0)}_N} \right] \right)^{-1} \mathbb{P}_N \left [ \boldsymbol{U}_N(Z; \widehat{\boldsymbol{\theta}}^{(0)}_N) \right ].\] 
Under weak conditions, writing $\boldsymbol{W}_N \coloneqq \sqrt{N} \left\{\boldsymbol{M}_N(\boldsymbol{\theta}_N)\right\}^{-1/2} \boldsymbol{H}_N(\boldsymbol{\theta}_N)$, we typically have the following asymptotic normality result for the fully iterated estimator: \[\boldsymbol{W}_N\left(\widehat{\boldsymbol{\theta}}_N^* - \boldsymbol{\theta}_N\right) \overset{d}{\to} \mathcal{N}(\boldsymbol{0}_p, I_p).\] In the following result, we lay out conditions under which the one-step estimator $\widehat{\boldsymbol{\theta}}_N^{(1)}$ achieves the same convergence properties, i.e., is asymptotically equivalent to $\widehat{\boldsymbol{\theta}}_N^*$. 

\begin{theorem}\label{thm:asymp_app}
     Suppose the following conditions hold:
\begin{enumerate}[(i)]
    \item $\boldsymbol{U}_N(z; \boldsymbol{\theta})$ is twice differentiable in $\boldsymbol{\theta}$ for all $z$, and the second derivative is uniformly bounded:
     $\exists~C_1 > 0$ such that $\mathbb{P}\bigg[\sup_{\boldsymbol{\theta}}\bigg |\frac{\partial^2 U_{N, \ell}(Z;\boldsymbol{\theta})}{\partial \theta_j \partial \theta_k} \bigg | \leq C_1\bigg] = 1$, for all $N, j, k, \ell$.
     \item The second and third moments of $\boldsymbol{U}_N(Z;\boldsymbol{\theta}_N)$ are uniformly bounded below and above, respectively: $\exists ~ s,t >0$ such that $\lambda_{\mathrm{min}}(\boldsymbol{M}_N(\boldsymbol{\theta}_N)) \geq s$, and 
     \[\mathbb{E}( | U_{N,j}(Z; \boldsymbol{\theta}_N) U_{N,k}(Z; \boldsymbol{\theta}_N) U_{N,\ell}(Z; \boldsymbol{\theta}_N)|) \leq t,\] for all $N \in \mathbb{N}$ and all $j,k,\ell$.
     \item $\boldsymbol{H}_N(\boldsymbol{\theta}_N)$ is invertible and $\mathbb{P} \left [  \mathbb{P}_N\left(\nabla_{\boldsymbol{\theta}} \boldsymbol{U}_{N}(Z; \boldsymbol{\theta}) \bigg |_{\boldsymbol{\theta} = \widehat{\boldsymbol{\theta}}^{(0)}_N}\right)~\text{is non-singular} \right ] = 1$, for all $N \in \mathbb{N}$. Moreover, $\{\boldsymbol{H}_N(\boldsymbol{\theta}_N) \}^{-1}  = O(1)$ and $\left ( \mathbb{P}_N\left[\nabla_{\boldsymbol{\theta}} \boldsymbol{U}_{N}(Z;\boldsymbol{\theta}) \bigg |_{\boldsymbol{\theta} = \widehat{\boldsymbol{\theta}}^{(0)}_N} \right]\right )^{-1} = O_{\mathbb{P}}(1)$.
     \item $\boldsymbol{M}_N(\boldsymbol{\theta}_N) = O(1)$ and $\boldsymbol{H}_N(\boldsymbol{\theta}_N) = O(1)$.
     \item $\exists~C_2 > 0: \mathbb{E}( \lVert \left. \nabla_{\boldsymbol{\theta}} \boldsymbol{U}_{N}(Z;\boldsymbol{\theta})\right|_{\boldsymbol{\theta} = \boldsymbol{\theta}_N} - \boldsymbol{H}_N(\boldsymbol{\theta}_N) \rVert^2) \leq C_2$, for all $N \in \mathbb{N}$.
     \item $\sqrt{N} \{\boldsymbol{M}_N(\boldsymbol{\theta}_N) \}^{-1/2} \boldsymbol{H}_N(\boldsymbol{\theta}_N)\left(\widehat{\boldsymbol{\theta}}^{(0)}_{N} - \boldsymbol{\theta}_N\right) = O_{\mathbb{P}}(1)$.\label{app:condition_bias}
    \end{enumerate}
    Then the one-step estimator satisfies
    $\boldsymbol{W}_N \left(\widehat{\boldsymbol{\theta}}^{(1)}_{N} - \boldsymbol{\theta}_N\right) \overset{d}{\to} \mathcal{N}(\boldsymbol{0}_p, {I}_p)$.
\end{theorem}

\begin{proof}[Proof of Theorem~\ref{thm:asymp_app}]
    Writing $\boldsymbol{U}_N^{(N)}(\boldsymbol{\theta}) = \mathbb{P}_N\left[\boldsymbol{U}_N(Z; \boldsymbol{\theta})\right]$, and employing a Taylor expansion of $\boldsymbol{U}_N^{(N)}$ at the initial estimator around $\boldsymbol{\theta}_N$, we have
    \begin{align*}
    \boldsymbol{U}_N^{(N)}(\widehat{\boldsymbol{\theta}}_N^{(0)}) = \boldsymbol{U}_N^{(N)}(\boldsymbol{\theta}_N) + \nabla_{\boldsymbol{\theta}} \boldsymbol{U}_{N}^{(N)}(\boldsymbol{\theta}) \bigg |_{\boldsymbol{\theta} = {\boldsymbol{\theta}}_N} \left(\widehat{\boldsymbol{\theta}}_N^{(0)} - \boldsymbol{\theta}_N\right) + \frac{1}{2}
\begin{bmatrix}
    (\widehat{\boldsymbol{\theta}}_N^{(0)} - \boldsymbol{\theta}_N)^T \boldsymbol{Q}_{N,1}(\widetilde{\boldsymbol{\theta}}_{N,1}) (\widehat{\boldsymbol{\theta}}_N^{(0)} - \boldsymbol{\theta}_N) \\
    \vdots \\
   (\widehat{\boldsymbol{\theta}}_N^{(0)} - \boldsymbol{\theta}_N)^T \boldsymbol{Q}_{N,p}(\widetilde{\boldsymbol{\theta}}_{N,p}) (\widehat{\boldsymbol{\theta}}_N^{(0)} - \boldsymbol{\theta}_N) ,
\end{bmatrix}
\end{align*}
for some $\widetilde{\boldsymbol{\theta}}$'s on the line segment between $\widehat{\boldsymbol{\theta}}_N^{(0)}$ and ${\boldsymbol{\theta}}_N$, and where $\boldsymbol{Q}_{N,j}(\boldsymbol{\theta}) = \nabla_{\boldsymbol{\theta}}^2 \,U_{N,j}^{(N)}({\boldsymbol{\theta}}) \in \mathbb{R}^{p \times p}$ for each $j \in [p]$. By definition of the one-step estimator,
\begin{align*}\widehat{\boldsymbol{\theta}}^{(1)}_N - \boldsymbol{\theta}_N 
&= (\widehat{\boldsymbol{\theta}}^{(0)}_{N} - \boldsymbol{\theta}_N) - \left (\mathbb{P}_N \bigg [  \nabla_{\boldsymbol{\theta}} \boldsymbol{U}_{N}(Z; \boldsymbol{\theta}) \bigg |_{\boldsymbol{\theta} = \widehat{\boldsymbol{\theta}}^{(0)}_N} \bigg ] \right )^{-1} \mathbb{P}_N \left [ \boldsymbol{U}_N(Z; \widehat{\boldsymbol{\theta}}^{(0)}_N) \right ] \\
&= (\widehat{\boldsymbol{\theta}}^{(0)}_{N} - \boldsymbol{\theta}_N) - \left(\nabla_{\boldsymbol{\theta}} \boldsymbol{U}_{N}^{(N)}(\boldsymbol{\theta}) \bigg |_{\boldsymbol{\theta} = \widehat{\boldsymbol{\theta}}^{(0)}_N}\right)^{-1} \boldsymbol{U}_N^{(N)}(\widehat{\boldsymbol{\theta}}^{(0)}_N),
\end{align*}
so the Taylor expansion implies
\begin{align*}
& \widehat{\boldsymbol{\theta}}^{(1)}_N - \boldsymbol{\theta}_N \\
    &= - \left ( \nabla_{\boldsymbol{\theta}} \boldsymbol{U}_{N}^{(N)}(\boldsymbol{\theta}) \bigg |_{\boldsymbol{\theta} = \widehat{\boldsymbol{\theta}}^{(0)}_N} \right )^{-1} \boldsymbol{U}_{N}^{(N)}(\boldsymbol{\theta}_N) + \Bigg \{ I_p - \left ( \nabla_{\boldsymbol{\theta}} \boldsymbol{U}_{N}^{(N)}(\boldsymbol{\theta}) \bigg |_{\boldsymbol{\theta} = \widehat{\boldsymbol{\theta}}^{(0)}_N} \right )^{-1}  \nabla_{\boldsymbol{\theta}} \boldsymbol{U}_{N}^{(N)}(\boldsymbol{\theta}) \Bigg |_{\boldsymbol{\theta} = {\boldsymbol{\theta}}_N} \\
    &\quad \quad -\frac{1}{2}\left ( \nabla_{\boldsymbol{\theta}} \boldsymbol{U}_{N}^{(N)}(\boldsymbol{\theta}) \bigg |_{\boldsymbol{\theta} = \widehat{\boldsymbol{\theta}}^{(0)}_N} \right )^{-1} \begin{bmatrix}
    (\widehat{\boldsymbol{\theta}}_N^{(0)} - \boldsymbol{\theta}_N)^T \boldsymbol{Q}_{N,1}(\widetilde{\boldsymbol{\theta}}_{N,1})  \\
    \vdots \\
    (\widehat{\boldsymbol{\theta}}_N^{(0)} - \boldsymbol{\theta}_N)^T \boldsymbol{Q}_{N,p}(\widetilde{\boldsymbol{\theta}}_{N,p})  
\end{bmatrix}   \Bigg \} (\widehat{\boldsymbol{\theta}}^{(0)}_{N} - \boldsymbol{\theta}_N).
\end{align*}
Multiplying through by $\boldsymbol{W}_N$, we obtain
\begin{align*}
& \boldsymbol{W}_N (\widehat{\boldsymbol{\theta}}^{(1)}_{N} - \boldsymbol{\theta}_N) = -  \sqrt{N} \{ \boldsymbol{M}_N(\boldsymbol{\theta}_N)\}^{-1/2} \overbrace{\boldsymbol{H}_N(\boldsymbol{\theta}_N) \left ( \nabla_{\boldsymbol{\theta}} \boldsymbol{U}_{N}^{(N)}(\boldsymbol{\theta}) \bigg |_{\boldsymbol{\theta} = \widehat{\boldsymbol{\theta}}^{(0)}_N} \right )^{-1} }^{\overset{\mathbb{P}}{\to} I_p \text{ by (a)}} \boldsymbol{U}_{N}^{(N)}(\boldsymbol{\theta}_N) \\
& \quad \quad +\sqrt{N} \{ \boldsymbol{M}_N(\boldsymbol{\theta}_N)\}^{-1/2} \boldsymbol{H}_N(\boldsymbol{\theta}_N) \Bigg \{ \underbrace{I_p - \left ( \nabla_{\boldsymbol{\theta}} \boldsymbol{U}_{N}^{(N)}(\boldsymbol{\theta}) \bigg |_{\boldsymbol{\theta} = \widehat{\boldsymbol{\theta}}^{(0)}_N} \right )^{-1}  \nabla_{\boldsymbol{\theta}} \boldsymbol{U}_{N}^{(N)}(\boldsymbol{\theta}) \Bigg |_{\boldsymbol{\theta} = {\boldsymbol{\theta}}_N}}_{\overset{\mathbb{P}}{\to} \boldsymbol{0}_{p \times p} \text{ by (b)}} \\
    &\quad \quad \quad \quad -\frac{1}{2} \underbrace{ \left ( \nabla_{\boldsymbol{\theta}} \boldsymbol{U}_{N}^{(N)}(\boldsymbol{\theta}) \bigg |_{\boldsymbol{\theta} = \widehat{\boldsymbol{\theta}}^{(0)}_N} \right )^{-1} 
    \begin{bmatrix}
    (\widehat{\boldsymbol{\theta}}_N^{(0)} - \boldsymbol{\theta}_N)^T \boldsymbol{Q}_{N,1}(\widetilde{\boldsymbol{\theta}}_{N,1}) \\
    \vdots \\
   (\widehat{\boldsymbol{\theta}}_N^{(0)} - \boldsymbol{\theta}_N)^T \boldsymbol{Q}_{N,p}(\widetilde{\boldsymbol{\theta}}_{N,p})
\end{bmatrix} }_{=o_{\mathbb{P}}(1)\text{ by (c)}}  \Bigg \} (\widehat{\boldsymbol{\theta}}^{(0)}_{N} - \boldsymbol{\theta}_N),
\end{align*}
where we invoked facts (a), (b), and (c) verified below. The first summand converges to a normal distribution by Lemma~\ref{lemma_matrix_slut} and the central limit theorem, whose application is justified under condition (ii). The second summand converges to zero in probability, as is seen by combining condition (vi) and another application of~Lemma~\ref{lemma_matrix_slut}---note that $\left\{\boldsymbol{M}_N(\boldsymbol{\theta}_N)\right\}^{1/2}$, $\boldsymbol{H}_N(\boldsymbol{\theta}_N)$, $\left\{\boldsymbol{M}_N(\boldsymbol{\theta}_N)\right\}^{-1/2}$, and $\left\{\boldsymbol{H}_N(\boldsymbol{\theta}_N)\right\}^{-1}$ are all $O(1)$ under conditions (ii), (iii) and (iv).


It remains to verify the following facts:
\begin{enumerate}[(a)]
    \item $\boldsymbol{H}_N(\boldsymbol{\theta}_N) \left ( \nabla_{\boldsymbol{\theta}} \boldsymbol{U}_{N}^{(N)}(\boldsymbol{\theta}) \bigg |_{\boldsymbol{\theta} = \widehat{\boldsymbol{\theta}}^{(0)}_N} \right )^{-1}  \overset{\mathbb{P}}{\to} I_p$ 
    \item $\left ( \nabla_{\boldsymbol{\theta}} \boldsymbol{U}_{N}^{(N)}(\boldsymbol{\theta}) \bigg |_{\boldsymbol{\theta} = \widehat{\boldsymbol{\theta}}^{(0)}_N} \right )^{-1}  \nabla_{\boldsymbol{\theta}} \boldsymbol{U}_{N}^{(N)}(\boldsymbol{\theta}) \bigg |_{\boldsymbol{\theta} = {\boldsymbol{\theta}}_N} \overset{\mathbb{P}}{\to} I_p$
    \item $\left ( \nabla_{\boldsymbol{\theta}} \boldsymbol{U}_{N}^{(N)}(\boldsymbol{\theta}) \bigg |_{\boldsymbol{\theta} = \widehat{\boldsymbol{\theta}}^{(0)}_N} \right )^{-1}  \begin{bmatrix}
    (\widehat{\boldsymbol{\theta}}_N^{(0)} - \boldsymbol{\theta}_N)^T \boldsymbol{Q}_{N,1}(\widetilde{\boldsymbol{\theta}}_{N,1}) \\
    \vdots \\
   (\widehat{\boldsymbol{\theta}}_N^{(0)} - \boldsymbol{\theta}_N)^T \boldsymbol{Q}_{N,p}(\widetilde{\boldsymbol{\theta}}_{N,p})  
\end{bmatrix}  \overset{\mathbb{P}}{\to} \boldsymbol{0}_{p \times p}$
\end{enumerate}
Observe first that $\widehat{\boldsymbol{\theta}}_N^{(0)}  - \boldsymbol{\theta}_N= o_{\mathbb{P}}(1)$ under our assumptions: this follows from condition (vi), and the fact that the matrices $\left\{\boldsymbol{M}_N(\boldsymbol{\theta}_N)\right\}^{1/2}$ and $\left\{\boldsymbol{H}_N(\boldsymbol{\theta}_N)\right\}^{-1}$ are bounded under conditions (iii) and (iv). For fact (a), see that
\begin{align*}
& \left. \nabla_{\boldsymbol{\theta}}\boldsymbol{U}_N^{(N)}(\boldsymbol{\theta}) \right|_{\boldsymbol{\theta} = \widehat{\boldsymbol{\theta}}^{(0)}_N} - \boldsymbol{H}_N(\boldsymbol{\theta}_N) \\
&= \mathbb{P}_N\left[\left.\nabla_{\boldsymbol{\theta}}\boldsymbol{U}_N(Z; \boldsymbol{\theta}) \right|_{\boldsymbol{\theta} = \widehat{\boldsymbol{\theta}}^{(0)}_N}\right] - \mathbb{E}\left(\left.\nabla_{\boldsymbol{\theta}}\boldsymbol{U}_N(Z; \boldsymbol{\theta}) \right|_{\boldsymbol{\theta} = \boldsymbol{\theta}_N}\right) \\
&= \mathbb{P}_N\left[\left.\nabla_{\boldsymbol{\theta}}\boldsymbol{U}_N(Z; \boldsymbol{\theta}) \right|_{\boldsymbol{\theta} = \widehat{\boldsymbol{\theta}}^{(0)}_N} - \left.\nabla_{\boldsymbol{\theta}}\boldsymbol{U}_N(Z; \boldsymbol{\theta}) \right|_{\boldsymbol{\theta} = \boldsymbol{\theta}_N}\right] \\
& \quad \quad + \left\{\mathbb{P}_N\left[\left.\nabla_{\boldsymbol{\theta}}\boldsymbol{U}_N(Z; \boldsymbol{\theta}) \right|_{\boldsymbol{\theta} = \boldsymbol{\theta}_N}\right] - \mathbb{E}\left(\left.\nabla_{\boldsymbol{\theta}}\boldsymbol{U}_N(Z; \boldsymbol{\theta}) \right|_{\boldsymbol{\theta} = \boldsymbol{\theta}_N}\right)\right\}.
\end{align*}
The first summand is bounded above by $C_1 \lVert \widehat{\boldsymbol{\theta}}_N^{(0)} - \boldsymbol{\theta}_N\rVert = o_{\mathbb{P}}(1)$ by condition (i), and the second summand converges to zero by a weak law of large numbers, justified by condition (v). Thus, (a) holds by the continuous mapping theorem---note that all matrices involved are invertible and stochastically bounded under condition (iii). Fact (b) is shown using the same argument as for the first summand analyzed above for fact (a). Finally, fact (c) holds by condition (i) (i.e., the second derivative matrices are uniformly bounded), the fact that $\widehat{\boldsymbol{\theta}}_N^{(0)}  - \boldsymbol{\theta}_N= o_{\mathbb{P}}(1)$, and condition (iii) (i.e., the left multiplying matrix is $O_{\mathbb{P}}(1)$).
\end{proof}

\begin{proof}[Proof of Theorem \ref{thm:asymp} of the Main Text]
    It is straightforward to verify that defining $\boldsymbol{H}_N(\boldsymbol{\theta})$ as in Section~\ref{sec:onestep} (i.e., ignoring the component of the gradient of the penalized estimating equation corresponding to the derivative of $\mathbb{D}_i$) results in an asymptotically equivalent estimator under the conditions of this theorem. Thus, Theorem B.2 yields that 
    \[\boldsymbol{W}_N(\widehat{\boldsymbol{\theta}}_{\Lambda_N}^{(1)} - \boldsymbol{\theta}_N) \overset{d}{\to} \mathcal{N}(\mathbf{0}_p, I_p).\]
    Therefore, noting that \[ \boldsymbol{W}_N^T \boldsymbol{W}_N = N \cdot \boldsymbol{H}_N^T(\boldsymbol{\theta}_N)\{\boldsymbol{M}_N(\boldsymbol{\theta}_N)\}^{-1} \boldsymbol{H}_N(\boldsymbol{\theta}_N),\] it holds that $\widehat{\boldsymbol{\theta}}_{\Lambda_N}^{(1)}$ is approximately normally distributed with mean $\boldsymbol{\theta}_N$ and variance given by $\boldsymbol{W}_N^{-1} \{\boldsymbol{W}_N^{-1}\}^T = \{\boldsymbol{W}_N^T \boldsymbol{W}_N\}^{-1} = \frac{1}{N}\{\boldsymbol{H}_N(\boldsymbol{\theta}_N)\}^{-1} \boldsymbol{M}_N(\boldsymbol{\theta}_N)  \{\boldsymbol{H}_N^T(\boldsymbol{\theta}_N)\}^{-1}$. Consequently, $\widehat{\boldsymbol{\beta}}^{(1)}_{\Lambda_N} = \mathbb{B} \widehat{\boldsymbol{\theta}}_{\Lambda_N}^{(1)}$ is approximately normally distributed with mean $\boldsymbol{\beta}_N = \mathbb{B}\boldsymbol{\theta}_N$ and variance given by
    $\boldsymbol{V}_N = \mathbb{B} \{\boldsymbol{W}_N^T \boldsymbol{W}_N\}^{-1}\mathbb{B}^T$.
\end{proof}

\subsection{Condition~\eqref{app:condition_bias}} 
To provide intuition for condition~\eqref{app:condition_bias} in the statement of Theorem~\ref{thm:asymp_app}, we derive an interpretable set of conditions that imply condition~\eqref{app:condition_bias} in a special case of our general framework: a non-functional, univariate (i.e., $p=q=1$), non-clustered (i.e., $n_i = 1~\forall~i \in [N]$) ridge regression that is weighted by the inverse of the working covariance matrix. This setting provides insight for the more general case: (1) the fully-iterated unweighted ridge is analogous to the initial penalized GEE estimator that is also unweighted (i.e., adopts an independence working covariance structure) and fully-iterated; (2) the fully-iterated weighted ridge is analogous to the fully-iterated and weighted (i.e., adopts some non-independence working covariance structure) penalized GEE. In this special case, we show that if the scaled smoothing parameters for the initial estimate satisfy $\frac{1}{N}\lambda_{0,N} = O(N^{-1/2})$, and for the weighted estimator satisfy $\frac{1}{N}\lambda_{N} = O(N^{-1/2})$, then condition~\eqref{app:condition_bias} holds. That is, if each of these scaled smoothing parameters individually (i.e., no conditions are required jointly on these rates) go to zero fast enough, then $\sqrt{N} \{\boldsymbol{M}_N(\boldsymbol{\theta}_N) \}^{-1/2} \boldsymbol{H}_N(\boldsymbol{\theta}_N)(\widehat{\boldsymbol{\theta}}^{(0)}_{N} - \boldsymbol{\theta}_N) = O_{\mathbb{P}}(1)$. Importantly, the smoothing parameter rates we require are weaker than (i.e., implied by) the rates needed for the theoretical properties described in \cite{gee_functional}. For example, even for the small knot setting, \cite{gee_functional} require that $\lambda = O(N^\gamma)$ for $\gamma \leq (\tilde{p} + 2 - \tilde{q})/(2\tilde{p}+3)$. This implies, for instance, that for a $\tilde{p}^{th}$ order truncated polynomial using cubic B-splines, $\gamma \leq (\tilde{p} + 2 - (\tilde{p} + 1))/(2\tilde{p}+3) = 1/9$. Put onto the scale of a single cluster, \cite{gee_functional} requires the faster rate of $\frac{1}{N}\lambda = O(N^{-8/9})$ than the  $O(N^{-1/2})$ rate required by our theory in this special case. By this reasoning, condition~\eqref{app:condition_bias} is a weak assumption in this special case. Our conjecture is that this extends to more complicated settings although we omit such analysis in this work.

\paragraph{Univariate Weighted Ridge} In the special case we explore, we define the following population parameters and estimators. At the outset, we define quantities with matrix notation to be consistent with the notation used in the remainder of the paper. Later on, we restrict our analysis to the $p=1$ case for simplicity:
\begin{itemize}
    \item Fully-Iterated (Population) Parameter (Closed-Form): for $\lambda_N$ such that $\frac{1}{N}\lambda_N \to 0$,
    \begin{align*}
        \beta_N &= \mathbb{E}(\widetilde{\text{Var}}^{-1}_N(Y \mid X)[XX^T + \frac{1}{N}\lambda_N I_p])^{-1} \mathbb{E}(\widetilde{\text{Var}}^{-1}_N(Y \mid X)[XY]) \\
        & \overset{N \to \infty}{\to} \beta^* = \mathbb{E}({\text{Var}}^{-1}(Y \mid X)[XX^T])^{-1} \mathbb{E}({\text{Var}}^{-1}(Y \mid X)[XY]) \\
        & \equiv [\mathbb{E}(XX^T)]^{-1} \mathbb{E}(XY)
    \end{align*}
    \item Fully-Iterated Estimator (Closed-Form): \[ \widehat{\beta}^*_N = \mathbb{P}_N(\widetilde{\text{Var}}_N^{-1}(Y \mid X)[XX^T + \frac{1}{N}\lambda_N I_p])^{-1} \mathbb{P}_N(\widetilde{\text{Var}}_N^{-1}(Y \mid X)[XY]) \]
    \item Unweighted Penalized (Initial Estimator) Population Parameter: if $\frac{1}{N}\lambda_{0,N} \to 0$,\\ \[ \beta^{(0)}_N = \mathbb{E}(XX^T + \frac{1}{N}\lambda_{0,N} I_p)\mathbb{E}(XY) \overset{N \to \infty}{\to} \beta^* \]
    \item Unweighted Penalized (Initial) Estimator: $\widehat{\beta}^{(0)}_N = \mathbb{P}_N(XX^T + \frac{1}{N}\lambda_{0,N} I_p)\mathbb{P}_N(XY)$
    
    \item Coefficient Estimator Variance:
    \begin{align*}
        S_N(\beta) &= H^{-1}_N(\beta) M_N(\beta) H^{-1}_N (\beta) \\
        &= \frac{1}{N} \{\mathbb{E}(\widetilde{\text{Var}}^{-1}_N(Y \mid X)[XX^T + \frac{1}{N}\lambda_N I_p])\}^{-1} \Bigg[ \mathbb{E}\left ( \frac{\text{Var}(Y \mid X)}{\widetilde{\text{Var}}_N^2(Y \mid X)} XX^T \right ) + \\
        &~~~~~~~\text{Var}\{\widetilde{\text{Var}}_N^{-1}(Y \mid X) X X^T(\beta - \beta^*) \}  \Bigg ] \{\mathbb{E}[\widetilde{\text{Var}}_N^{-1}(Y \mid X) X X^T + \frac{1}{N}\lambda_N I_p] \}^{-1}\\
    & \asymp \frac{1}{N} {( \mathbb{E}[\text{Var}}^{-1}(Y \mid X)XX^T])^{-1}~~\text{as } N \to \infty ~(\text{if } \beta \to \beta^*)
    \end{align*}
    \item Variance of Initial Coefficient Estimator
    \begin{align*}
        & S^{(0)}_N(\beta) \\
        & = \frac{1}{N} (\mathbb{E}  [XX^T + \frac{1}{N}\lambda_{0,N} I_p])^{-1} \left\{\mathbb{E}[ \text{Var}(Y \mid X) X X^T ] + \text{Var}[X X^T (\beta - \beta^*)] \right \} (\mathbb{E}  [XX^T + \frac{1}{N}\lambda_{0,N} I_p])^{-1} \\
        & \asymp \frac{1}{N} (\mathbb{E}  [XX^T])^{-1} \left\{\mathbb{E}[ \text{Var}(Y \mid X) X X^T ] \right \} (\mathbb{E}  [XX^T])^{-1} 
        ~~\text{as } N \to \infty ~(\text{if } \beta \to \beta^*)
    \end{align*}
\end{itemize}

We now begin our exploration of this special case. First, observe that the above definitions imply, so long as $\frac{1}{N}\lambda_N \to 0$ and $\frac{1}{N}\lambda_{0,N} \to 0$,
\begin{align*}
    S^{-1}_N(\beta_N)  S^{(0)}_N(\beta_N^{(0)}) & \overset{N \to \infty}{\to} \mathbb{E}[\text{Var}^{-1}(Y \mid X)XX^T] (\mathbb{E}  [XX^T])^{-1} \mathbb{E}[ \text{Var}(Y \mid X) X X^T ]  (\mathbb{E}  [XX^T])^{-1},
\end{align*}
so that $S^{-1}_N(\beta_N)  S^{(0)}_N(\beta_N^{(0)}) = O(1)$.

Thus, when $p=1$, the Taylor expansion for the quantity in condition~\eqref{app:condition_bias} has the form:
\begin{align*}
S^{-1/2}_N(\beta_N) \left(\widehat{\beta}_{0,N} - \beta_N \right) &~\overset{p=1}{=}~ \frac{\widehat{\beta}^{(0)}_N  - {\beta}^{(0)}_N }{\sqrt{S_N(\beta_N})} + \frac{{\beta}^{(0)}_N  - {\beta}_N }{\sqrt{S_N(\beta_N})} ~~~~~~~~~~~~~~~~~~~~~~~~~~~~~~~~~~~~~ \\
&~~= O(1) * \underbrace{ \frac{\widehat{\beta}^{(0)}_N  - {\beta}^{(0)}_N }{\sqrt{S_N^{(0)}(\beta_N^{(0)}})} }_{\substack{ \overset{d}{\to} N(0,1) \\ \text{under weak} \\ 
\text{CLT conditions}}} ~+~ A
\end{align*}
where 
\begin{align*}
A=& \frac{\sqrt{N} \left \{ \mathbb{E} \left(\frac{X^2}{\tilde{\sigma}^2(X)} \right) + \frac{1}{N}\lambda_N \right \} }{\sqrt{ \mathbb{E}( \frac{\sigma^2(X)}{\tilde{\sigma}^4(X)}X^2)  + (\beta_N - \beta^*) \text{Var}(\frac{X^2}{\tilde{\sigma}^2(X))})}} * \frac{\frac{1}{N}\lambda_N \mathbb{E}(XY)\mathbb{E}(\frac{1}{\tilde{\sigma}^2(X)}) - \frac{1}{N}\lambda_{0,N} \mathbb{E}(XY / \tilde{\sigma}^2(X))}{ \{ \mathbb{E}(X^2) + \frac{1}{N}\lambda_{0,N} \} \{ \mathbb{E}(\frac{X^2}{\tilde{\sigma}^2(X)}) + \frac{1}{N}\lambda_N \mathbb{E} (\frac{1}{\tilde{\sigma}^2(X)})} \\
& = O(1) ~~ \text{~~if $\frac{1}{N}\lambda_N = O(n^{-1/2})$ and $\frac{1}{N}\lambda_{0,N}  = O(n^{-1/2})$}
\end{align*}

The above expansion shows that
if, individually, $\frac{1}{N}\lambda_N = O(N^{-1/2})$ and $\frac{1}{N}\lambda_{0,N}  = O(N^{-1/2})$, then the initial (unweighted) estimator converges at the same rate to $\beta_N$ as the fully-iterated (weighted) estimator. That is, in this special case, condition~\eqref{app:condition_bias} follows from the assumption that the scaled smoothing parameters of both estimators go to zero asymptotically at a rate of $1/\sqrt{n}$ or faster.


\section{Simulation Experiments}
\subsection{Additional Simulation Results from Main Paper} \label{sec:sims_app}
Here we provide additional results from the main text simulations presented in Section~\ref{sec:simulations}. The tables below include results from gamma-distributed data and fGEE fits that adopt a Kronecker product based working correlation where an FPCA was used to estimate correlation in the functional direction. 

The plot below is the average functional coefficient estimate for each covariate (columns) and method (rows): i) the initial pffr fit, ii) the one-step fGEE, and iii) a fully-iterated Full-1step fGEE. The fGEEs are fit with the correct working correlation (exchangeable in the longitudinal direction and AR1 in the functional direction). The data are generated as in the main text simulations presented in Section~\ref{sec:simulations} with Poisson outcomes, $N = 50$, and $n_i=5$. The methods are fit as described in Section~\ref{sec:simulations}. The plots below show pointwise confidence intervals of the average coefficient estimate value (see figure caption for details). 

\begin{figure*}[!h]
\centering
\includegraphics[width=0.7 \linewidth]{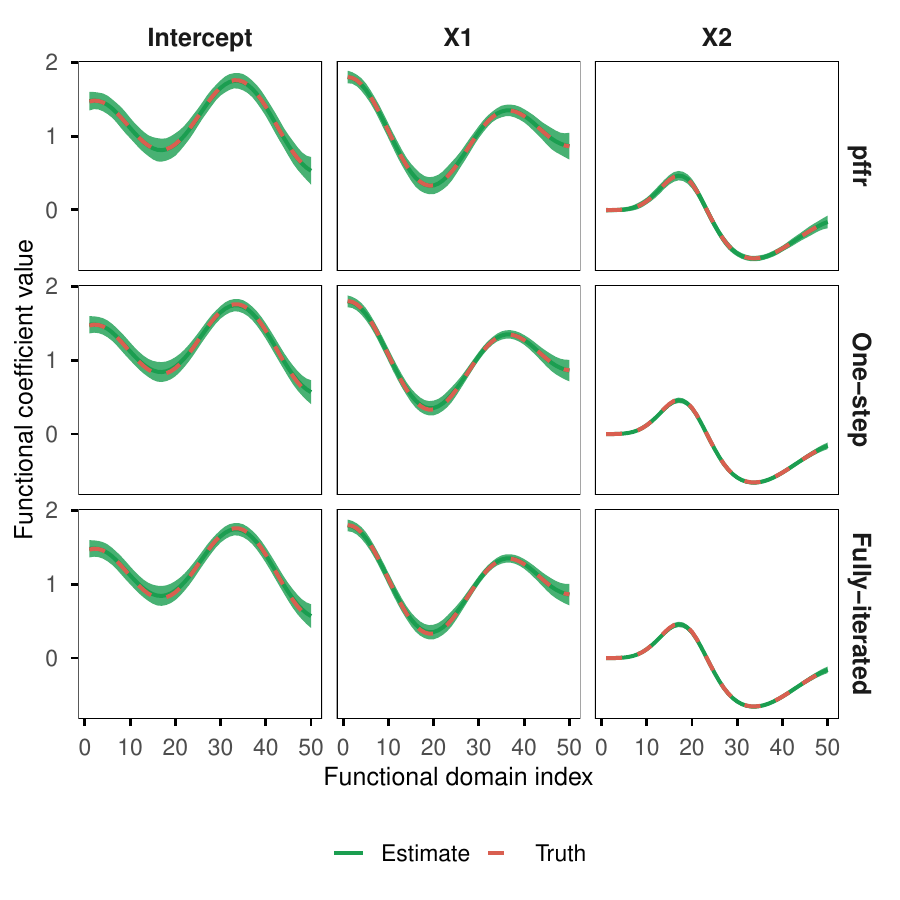}
\caption{\footnotesize \textbf{Average functional coefficient estimates from simulations.} Functional coefficient indices are shown in the columns. Functional regression methods are shown in the rows: i) the initial pffr fit, ii) the one-step fGEE, and iii) a fully-iterated Full-1step fGEE. The fGEEs are fit with the correct working correlation: exchangeable in the longitudinal direction and AR1 in the functional direction. The outcomes are simulated as Poisson and $N = 50$, $n_i=5$. The plots show pointwise confidence intervals of the average coefficient estimate value (calculated across $T=300$ simulation replicates). Specifically, for a given functional coefficient, we denote $\bar{\beta}(s) = \frac{1}{T} \sum_{t=1}^T \widehat{\beta}^{(t)}(s)$, and calculate 
$\mathrm{SE}(s) = \sqrt{\frac{1}{T(T-1)} \sum_{t=1}^T \left(\widehat{\beta}^{(t)}(s) - \bar{\beta}(s)\right)^2 }$. We then plot the pointwise CIs as $\bar{\beta}(s) \pm 1.96 \, \mathrm{SE}(s)$.}
\label{fig:coef_est_grid}
\end{figure*}

\newpage

\begin{table}[!h] \label{tab:rmse_rel_supp}
\centering
\caption{ \footnotesize Functional Coefficient Estimation Performance (RMSE) of each method relative to the $\texttt{pffr}$ fit ($\text{RMSE}/\text{RMSE}_{\text{pffr}}$).
    Cells contain the average of 300 replicates $\pm$ SE (SE$=0.00$ indicates a value $<0.01$). Outcomes were simulated with an $\mathbf{R}^* = \mathbf{R}^*_{\text{Lon}} \otimes \mathbf{R}^*_{\text{Fun}}$, where $\mathbf{R}^*_{\text{Fun}}$ had an AR1 structure and the table columns indicate results where $\mathbf{R}^*_{\text{Lon}}$ had exchangeable or AR1 correlation. The ``1-Step'' indicates a one-step was used for tuning and final coefficient estimation, and ``Full-1step'' indicates one-step tuning and a fully-iterated fGEE for final coefficient estimation, with the indicated working correlation. Values with $*$ indicate extreme outliers (from poor estimates) were removed from the average of that cell to avoid skewing results.}
\centering
\resizebox{\ifdim\width>\linewidth\linewidth\else\width\fi}{!}{
\fontsize{9}{11}\selectfont
\begin{tabular}[t]{>{\raggedright\arraybackslash}p{4.8cm}>{\raggedright\arraybackslash}p{2.1cm}>{\raggedright\arraybackslash}p{2.1cm}>{\raggedright\arraybackslash}p{2.1cm}>{\raggedright\arraybackslash}p{2.1cm}>{\raggedright\arraybackslash}p{2.1cm}>{\raggedright\arraybackslash}p{2.1cm}>{\raggedright\arraybackslash}p{2.1cm}>{\raggedright\arraybackslash}p{2.1cm}}
\toprule
\multicolumn{1}{c}{ } & \multicolumn{4}{c}{Exchangeable} & \multicolumn{4}{c}{AR(1)} \\
\cmidrule(l{3pt}r{3pt}){2-5} \cmidrule(l{3pt}r{3pt}){6-9}
Method & Gaussian & Poisson & Binomial & Gamma & Gaussian & Poisson & Binomial & Gamma\\
\midrule
\addlinespace[0em]
\multicolumn{9}{l}{\textbf{$N = 25,\; n_i = 5$}}\\
\hspace{1em}$\text{1-Step } {\mathbf{R}}_{\text{FPCA}} \otimes {\mathbf{R}}_{\text{Fun}}$ & 0.84 $\pm$ 0.01 & 0.87 $\pm$ 0.01 & 0.89 $\pm$ 0.02 & 0.95 $\pm$ 0.01 & 0.86 $\pm$ 0.01 & 0.89 $\pm$ 0.01 & 0.92 $\pm$ 0.01 & 0.93 $\pm$ 0.01\\
\hspace{1em}$\text{1-Step } {\mathbf{R}}_{\text{Lon}} \otimes {\mathbf{R}}_{\text{Fun}}$ & 0.82 $\pm$ 0.01 & 0.83 $\pm$ 0.01 & 0.88 $\pm$ 0.01 & 0.90 $\pm$ 0.00 & 0.85 $\pm$ 0.01 & 0.85 $\pm$ 0.01 & 0.92 $\pm$ 0.01 & 0.89 $\pm$ 0.01\\
\hspace{1em}$\text{1-Step } I_{n_i} \otimes {\mathbf{R}}_{\text{Fun}}$ & 0.86 $\pm$ 0.01 & 0.91 $\pm$ 0.01 & 0.93 $\pm$ 0.01 & 0.95 $\pm$ 0.00 & 0.90 $\pm$ 0.01 & 0.92 $\pm$ 0.01 & 0.95 $\pm$ 0.01 & 0.96 $\pm$ 0.00\\
\hspace{1em}$\text{1-Step } {\mathbf{R}}_{\text{Lon}} \otimes I_L$ & 0.82 $\pm$ 0.01 & 0.85 $\pm$ 0.01 & 0.88 $\pm$ 0.01 & 0.87 $\pm$ 0.00 & 0.85 $\pm$ 0.01 & 0.83 $\pm$ 0.01 & 0.92 $\pm$ 0.01 & 0.87 $\pm$ 0.00\\
\hspace{1em}$\text{1-Step } I_{n_i} \otimes I_L$ & 0.87 $\pm$ 0.01 & 0.88 $\pm$ 0.00 & 0.93 $\pm$ 0.01 & 0.92 $\pm$ 0.00 & 0.90 $\pm$ 0.01 & 0.90 $\pm$ 0.00 & 0.95 $\pm$ 0.01 & 0.93 $\pm$ 0.00\\
\hspace{1em}$\text{Full-1Step } {\mathbf{R}}_{\text{FPCA}} \otimes {\mathbf{R}}_{\text{Fun}}$ & 0.87 $\pm$ 0.01 & 0.90 $\pm$ 0.01 & 2.15 $\pm$ 0.29 & 1.02 $\pm$ 0.01 & 0.89 $\pm$ 0.01 & 0.91 $\pm$ 0.01 & 0.98 $\pm$ 0.03 & 1.88 $\pm$ 0.49$^*$\\
\hspace{1em}$\text{Full-1Step } {\mathbf{R}}_{\text{Lon}} \otimes {\mathbf{R}}_{\text{Fun}}$ & 0.82 $\pm$ 0.01 & 0.83 $\pm$ 0.01 & 1.00 $\pm$ 0.09$^*$ & 0.94 $\pm$ 0.04 & 0.86 $\pm$ 0.01 & 0.85 $\pm$ 0.01 & 0.94 $\pm$ 0.01 & 0.91 $\pm$ 0.01$^*$\\
\hspace{1em}$\text{Full-1Step } I_{n_i} \otimes {\mathbf{R}}_{\text{Fun}}$ & 0.87 $\pm$ 0.01 & 0.91 $\pm$ 0.01 & 0.96 $\pm$ 0.02 & 0.96 $\pm$ 0.01 & 0.90 $\pm$ 0.01 & 0.92 $\pm$ 0.01 & 0.97 $\pm$ 0.01 & 0.97 $\pm$ 0.01\\
\hspace{1em}$\text{Full-1Step } {\mathbf{R}}_{\text{Lon}} \otimes I_L$ & 0.81 $\pm$ 0.01 & 1.02 $\pm$ 0.09$^*$ & 1.29 $\pm$ 0.21$^*$ & 0.87 $\pm$ 0.00 & 0.85 $\pm$ 0.01 & 0.83 $\pm$ 0.01 & 0.93 $\pm$ 0.01 & 0.88 $\pm$ 0.01\\
\hspace{1em}$\text{Full-1Step } I_{n_i} \otimes I_L$ & 0.87 $\pm$ 0.01 & 0.89 $\pm$ 0.01 & 0.95 $\pm$ 0.01 & 0.92 $\pm$ 0.00 & 0.90 $\pm$ 0.01 & 0.90 $\pm$ 0.00 & 0.96 $\pm$ 0.01 & 0.93 $\pm$ 0.00\\
\addlinespace[0.6em]
\multicolumn{9}{l}{\textbf{$N = 25,\; n_i = 100$}}\\
\hspace{1em}$\text{1-Step } {\mathbf{R}}_{\text{FPCA}} \otimes {\mathbf{R}}_{\text{Fun}}$ & 0.81 $\pm$ 0.01 & 0.68 $\pm$ 0.01 & 0.88 $\pm$ 0.01 & 0.95 $\pm$ 0.01 & 0.88 $\pm$ 0.01 & 0.89 $\pm$ 0.01 & 0.93 $\pm$ 0.01 & 0.94 $\pm$ 0.00\\
\hspace{1em}$\text{1-Step } {\mathbf{R}}_{\text{Lon}} \otimes {\mathbf{R}}_{\text{Fun}}$ & 0.78 $\pm$ 0.01 & 0.65 $\pm$ 0.01 & 0.82 $\pm$ 0.01 & 0.90 $\pm$ 0.01 & 0.87 $\pm$ 0.01 & 0.87 $\pm$ 0.01 & 0.89 $\pm$ 0.01 & 0.92 $\pm$ 0.00\\
\hspace{1em}$\text{1-Step } I_{n_i} \otimes {\mathbf{R}}_{\text{Fun}}$ & 0.80 $\pm$ 0.01 & 0.87 $\pm$ 0.01 & 0.79 $\pm$ 0.01 & 0.91 $\pm$ 0.01 & 0.93 $\pm$ 0.01 & 0.95 $\pm$ 0.00 & 0.93 $\pm$ 0.01 & 0.98 $\pm$ 0.00\\
\hspace{1em}$\text{1-Step } {\mathbf{R}}_{\text{Lon}} \otimes I_L$ & 0.77 $\pm$ 0.01 & 0.65 $\pm$ 0.01 & 0.81 $\pm$ 0.01 & 0.87 $\pm$ 0.01 & 0.85 $\pm$ 0.01 & 0.86 $\pm$ 0.01 & 0.88 $\pm$ 0.01 & 0.89 $\pm$ 0.00\\
\hspace{1em}$\text{1-Step } I_{n_i} \otimes I_L$ & 0.79 $\pm$ 0.01 & 0.85 $\pm$ 0.01 & 0.79 $\pm$ 0.01 & 0.88 $\pm$ 0.01 & 0.90 $\pm$ 0.01 & 0.94 $\pm$ 0.00 & 0.92 $\pm$ 0.01 & 0.95 $\pm$ 0.00\\
\hspace{1em}$\text{Full-1Step } {\mathbf{R}}_{\text{FPCA}} \otimes {\mathbf{R}}_{\text{Fun}}$ & 0.83 $\pm$ 0.01 & 0.70 $\pm$ 0.01 & 6.70 $\pm$ 0.45$^*$ & 1.04 $\pm$ 0.02 & 0.88 $\pm$ 0.01 & 0.89 $\pm$ 0.01 & 0.93 $\pm$ 0.01 & 0.94 $\pm$ 0.00\\
\hspace{1em}$\text{Full-1Step } {\mathbf{R}}_{\text{Lon}} \otimes {\mathbf{R}}_{\text{Fun}}$ & 0.78 $\pm$ 0.01 & 0.66 $\pm$ 0.01 & 4.75 $\pm$ 0.69$^*$ & 0.91 $\pm$ 0.01 & 0.87 $\pm$ 0.01 & 0.87 $\pm$ 0.01 & 0.90 $\pm$ 0.01 & 0.92 $\pm$ 0.00\\
\hspace{1em}$\text{Full-1Step } I_{n_i} \otimes {\mathbf{R}}_{\text{Fun}}$ & 0.81 $\pm$ 0.01 & 0.87 $\pm$ 0.01 & 0.83 $\pm$ 0.01 & 0.92 $\pm$ 0.01 & 0.93 $\pm$ 0.01 & 0.95 $\pm$ 0.00 & 0.94 $\pm$ 0.01 & 0.98 $\pm$ 0.00\\
\hspace{1em}$\text{Full-1Step } {\mathbf{R}}_{\text{Lon}} \otimes I_L$ & 0.77 $\pm$ 0.01 & 0.86 $\pm$ 0.07$^*$ & 7.42 $\pm$ 0.95$^*$ & 0.88 $\pm$ 0.01 & 0.85 $\pm$ 0.01 & 0.86 $\pm$ 0.01 & 0.88 $\pm$ 0.01 & 0.90 $\pm$ 0.00\\
\hspace{1em}$\text{Full-1Step } I_{n_i} \otimes I_L$ & 0.79 $\pm$ 0.01 & 0.85 $\pm$ 0.01 & 0.90 $\pm$ 0.09 & 0.88 $\pm$ 0.01 & 0.90 $\pm$ 0.01 & 0.94 $\pm$ 0.00 & 0.92 $\pm$ 0.01 & 0.95 $\pm$ 0.00\\
\addlinespace[0.6em]
\multicolumn{9}{l}{\textbf{$N = 50,\; n_i = 5$}}\\
\hspace{1em}$\text{1-Step } {\mathbf{R}}_{\text{FPCA}} \otimes {\mathbf{R}}_{\text{Fun}}$ & 0.89 $\pm$ 0.01 & 0.87 $\pm$ 0.01 & 0.90 $\pm$ 0.01 & 0.95 $\pm$ 0.01 & 0.90 $\pm$ 0.01 & 0.90 $\pm$ 0.01 & 0.92 $\pm$ 0.01 & 0.94 $\pm$ 0.01\\
\hspace{1em}$\text{1-Step } {\mathbf{R}}_{\text{Lon}} \otimes {\mathbf{R}}_{\text{Fun}}$ & 0.88 $\pm$ 0.01 & 0.84 $\pm$ 0.01 & 0.88 $\pm$ 0.01 & 0.91 $\pm$ 0.00 & 0.89 $\pm$ 0.01 & 0.87 $\pm$ 0.01 & 0.89 $\pm$ 0.01 & 0.91 $\pm$ 0.00\\
\hspace{1em}$\text{1-Step } I_{n_i} \otimes {\mathbf{R}}_{\text{Fun}}$ & 0.93 $\pm$ 0.01 & 0.93 $\pm$ 0.00 & 0.92 $\pm$ 0.01 & 0.96 $\pm$ 0.00 & 0.94 $\pm$ 0.01 & 0.93 $\pm$ 0.00 & 0.93 $\pm$ 0.01 & 0.97 $\pm$ 0.00\\
\hspace{1em}$\text{1-Step } {\mathbf{R}}_{\text{Lon}} \otimes I_L$ & 0.88 $\pm$ 0.01 & 0.84 $\pm$ 0.01 & 0.87 $\pm$ 0.01 & 0.88 $\pm$ 0.00 & 0.88 $\pm$ 0.01 & 0.85 $\pm$ 0.01 & 0.89 $\pm$ 0.01 & 0.88 $\pm$ 0.01\\
\hspace{1em}$\text{1-Step } I_{n_i} \otimes I_L$ & 0.93 $\pm$ 0.01 & 0.91 $\pm$ 0.01 & 0.92 $\pm$ 0.01 & 0.93 $\pm$ 0.00 & 0.94 $\pm$ 0.01 & 0.91 $\pm$ 0.00 & 0.92 $\pm$ 0.01 & 0.93 $\pm$ 0.00\\
\hspace{1em}$\text{Full-1Step } {\mathbf{R}}_{\text{FPCA}} \otimes {\mathbf{R}}_{\text{Fun}}$ & 0.90 $\pm$ 0.01 & 0.88 $\pm$ 0.01 & 0.93 $\pm$ 0.01 & 0.97 $\pm$ 0.01 & 0.91 $\pm$ 0.01 & 0.90 $\pm$ 0.01 & 0.94 $\pm$ 0.01 & 0.96 $\pm$ 0.01\\
\hspace{1em}$\text{Full-1Step } {\mathbf{R}}_{\text{Lon}} \otimes {\mathbf{R}}_{\text{Fun}}$ & 0.88 $\pm$ 0.01 & 0.84 $\pm$ 0.01 & 0.90 $\pm$ 0.01 & 0.91 $\pm$ 0.00 & 0.89 $\pm$ 0.01 & 0.87 $\pm$ 0.01 & 0.91 $\pm$ 0.01 & 0.91 $\pm$ 0.01\\
\hspace{1em}$\text{Full-1Step } I_{n_i} \otimes {\mathbf{R}}_{\text{Fun}}$ & 0.94 $\pm$ 0.01 & 0.93 $\pm$ 0.01 & 0.94 $\pm$ 0.01 & 0.96 $\pm$ 0.00 & 0.95 $\pm$ 0.01 & 0.93 $\pm$ 0.00 & 0.95 $\pm$ 0.01 & 0.97 $\pm$ 0.00\\
\hspace{1em}$\text{Full-1Step } {\mathbf{R}}_{\text{Lon}} \otimes I_L$ & 0.88 $\pm$ 0.01 & 1.01 $\pm$ 0.07 & 0.96 $\pm$ 0.08$^*$ & 0.88 $\pm$ 0.00 & 0.88 $\pm$ 0.01 & 0.85 $\pm$ 0.01 & 0.90 $\pm$ 0.01 & 0.88 $\pm$ 0.01\\
\hspace{1em}$\text{Full-1Step } I_{n_i} \otimes I_L$ & 0.93 $\pm$ 0.01 & 0.91 $\pm$ 0.01 & 0.93 $\pm$ 0.01 & 0.93 $\pm$ 0.00 & 0.94 $\pm$ 0.01 & 0.91 $\pm$ 0.00 & 0.93 $\pm$ 0.01 & 0.94 $\pm$ 0.00\\
\addlinespace[0.6em]
\multicolumn{9}{l}{\textbf{$N = 50,\; n_i = 100$}}\\
\hspace{1em}$\text{1-Step } {\mathbf{R}}_{\text{FPCA}} \otimes {\mathbf{R}}_{\text{Fun}}$ & 0.87 $\pm$ 0.01 & 0.69 $\pm$ 0.01 & 0.92 $\pm$ 0.01 & 0.96 $\pm$ 0.01 & 0.90 $\pm$ 0.01 & 0.88 $\pm$ 0.01 & 0.92 $\pm$ 0.01 & 0.95 $\pm$ 0.00\\
\hspace{1em}$\text{1-Step } {\mathbf{R}}_{\text{Lon}} \otimes {\mathbf{R}}_{\text{Fun}}$ & 0.85 $\pm$ 0.01 & 0.65 $\pm$ 0.01 & 0.83 $\pm$ 0.01 & 0.92 $\pm$ 0.00 & 0.90 $\pm$ 0.01 & 0.87 $\pm$ 0.00 & 0.89 $\pm$ 0.01 & 0.94 $\pm$ 0.00\\
\hspace{1em}$\text{1-Step } I_{n_i} \otimes {\mathbf{R}}_{\text{Fun}}$ & 0.85 $\pm$ 0.01 & 0.89 $\pm$ 0.01 & 0.82 $\pm$ 0.01 & 0.93 $\pm$ 0.00 & 0.95 $\pm$ 0.01 & 0.97 $\pm$ 0.00 & 0.93 $\pm$ 0.01 & 0.99 $\pm$ 0.00\\
\hspace{1em}$\text{1-Step } {\mathbf{R}}_{\text{Lon}} \otimes I_L$ & 0.85 $\pm$ 0.01 & 0.65 $\pm$ 0.01 & 0.81 $\pm$ 0.01 & 0.88 $\pm$ 0.00 & 0.86 $\pm$ 0.01 & 0.87 $\pm$ 0.00 & 0.88 $\pm$ 0.01 & 0.91 $\pm$ 0.00\\
\hspace{1em}$\text{1-Step } I_{n_i} \otimes I_L$ & 0.84 $\pm$ 0.01 & 0.87 $\pm$ 0.01 & 0.81 $\pm$ 0.01 & 0.89 $\pm$ 0.00 & 0.92 $\pm$ 0.00 & 0.96 $\pm$ 0.00 & 0.92 $\pm$ 0.01 & 0.97 $\pm$ 0.00\\
\hspace{1em}$\text{Full-1Step } {\mathbf{R}}_{\text{FPCA}} \otimes {\mathbf{R}}_{\text{Fun}}$ & 0.88 $\pm$ 0.01 & 0.70 $\pm$ 0.01 & 5.33 $\pm$ 0.49 & 0.98 $\pm$ 0.01 & 0.90 $\pm$ 0.01 & 0.88 $\pm$ 0.01 & 0.93 $\pm$ 0.01 & 0.95 $\pm$ 0.00\\
\hspace{1em}$\text{Full-1Step } {\mathbf{R}}_{\text{Lon}} \otimes {\mathbf{R}}_{\text{Fun}}$ & 0.86 $\pm$ 0.01 & 0.66 $\pm$ 0.01 & 2.63 $\pm$ 0.52$^*$ & 0.93 $\pm$ 0.01 & 0.90 $\pm$ 0.01 & 0.87 $\pm$ 0.00 & 0.89 $\pm$ 0.01 & 0.94 $\pm$ 0.00\\
\hspace{1em}$\text{Full-1Step } I_{n_i} \otimes {\mathbf{R}}_{\text{Fun}}$ & 0.86 $\pm$ 0.01 & 0.90 $\pm$ 0.01 & 0.85 $\pm$ 0.01 & 0.93 $\pm$ 0.00 & 0.95 $\pm$ 0.01 & 0.97 $\pm$ 0.00 & 0.93 $\pm$ 0.01 & 0.99 $\pm$ 0.00\\
\hspace{1em}$\text{Full-1Step } {\mathbf{R}}_{\text{Lon}} \otimes I_L$ & 0.85 $\pm$ 0.01 & 0.76 $\pm$ 0.03 & 4.33 $\pm$ 0.85$^*$ & 0.89 $\pm$ 0.00 & 0.86 $\pm$ 0.01 & 0.87 $\pm$ 0.00 & 0.88 $\pm$ 0.01 & 0.91 $\pm$ 0.00\\
\hspace{1em}$\text{Full-1Step } I_{n_i} \otimes I_L$ & 0.84 $\pm$ 0.01 & 0.87 $\pm$ 0.01 & 0.82 $\pm$ 0.01 & 0.89 $\pm$ 0.00 & 0.92 $\pm$ 0.00 & 0.96 $\pm$ 0.00 & 0.92 $\pm$ 0.01 & 0.97 $\pm$ 0.00\\
\addlinespace[0.6em]
\multicolumn{9}{l}{\textbf{$N = 500,\; n_i = 5$}}\\
\hspace{1em}$\text{1-Step } {\mathbf{R}}_{\text{FPCA}} \otimes {\mathbf{R}}_{\text{Fun}}$ & 0.90 $\pm$ 0.01 & 0.89 $\pm$ 0.01 & 0.94 $\pm$ 0.01 & 0.94 $\pm$ 0.00 & 0.90 $\pm$ 0.01 & 0.91 $\pm$ 0.01 & 0.95 $\pm$ 0.01 & 0.93 $\pm$ 0.00\\
\hspace{1em}$\text{1-Step } {\mathbf{R}}_{\text{Lon}} \otimes {\mathbf{R}}_{\text{Fun}}$ & 0.89 $\pm$ 0.01 & 0.87 $\pm$ 0.01 & 0.91 $\pm$ 0.01 & 0.93 $\pm$ 0.00 & 0.89 $\pm$ 0.01 & 0.89 $\pm$ 0.01 & 0.92 $\pm$ 0.01 & 0.92 $\pm$ 0.00\\
\hspace{1em}$\text{1-Step } I_{n_i} \otimes {\mathbf{R}}_{\text{Fun}}$ & 0.95 $\pm$ 0.00 & 0.96 $\pm$ 0.00 & 0.95 $\pm$ 0.01 & 0.98 $\pm$ 0.00 & 0.96 $\pm$ 0.00 & 0.97 $\pm$ 0.00 & 0.96 $\pm$ 0.01 & 0.99 $\pm$ 0.00\\
\hspace{1em}$\text{1-Step } {\mathbf{R}}_{\text{Lon}} \otimes I_L$ & 0.85 $\pm$ 0.01 & 0.88 $\pm$ 0.01 & 0.89 $\pm$ 0.01 & 0.91 $\pm$ 0.00 & 0.86 $\pm$ 0.01 & 0.89 $\pm$ 0.01 & 0.90 $\pm$ 0.01 & 0.90 $\pm$ 0.00\\
\hspace{1em}$\text{1-Step } I_{n_i} \otimes I_L$ & 0.91 $\pm$ 0.00 & 0.95 $\pm$ 0.00 & 0.93 $\pm$ 0.01 & 0.96 $\pm$ 0.00 & 0.92 $\pm$ 0.00 & 0.96 $\pm$ 0.00 & 0.94 $\pm$ 0.01 & 0.96 $\pm$ 0.00\\
\hspace{1em}$\text{Full-1Step } {\mathbf{R}}_{\text{FPCA}} \otimes {\mathbf{R}}_{\text{Fun}}$ & 0.90 $\pm$ 0.01 & 0.89 $\pm$ 0.01 & 0.95 $\pm$ 0.01 & 0.94 $\pm$ 0.00 & 0.90 $\pm$ 0.01 & 0.91 $\pm$ 0.01 & 0.95 $\pm$ 0.01 & 0.93 $\pm$ 0.00\\
\hspace{1em}$\text{Full-1Step } {\mathbf{R}}_{\text{Lon}} \otimes {\mathbf{R}}_{\text{Fun}}$ & 0.89 $\pm$ 0.01 & 0.87 $\pm$ 0.01 & 0.92 $\pm$ 0.01 & 0.93 $\pm$ 0.00 & 0.89 $\pm$ 0.01 & 0.89 $\pm$ 0.01 & 0.92 $\pm$ 0.01 & 0.92 $\pm$ 0.00\\
\hspace{1em}$\text{Full-1Step } I_{n_i} \otimes {\mathbf{R}}_{\text{Fun}}$ & 0.95 $\pm$ 0.00 & 0.96 $\pm$ 0.00 & 0.95 $\pm$ 0.01 & 0.98 $\pm$ 0.00 & 0.96 $\pm$ 0.00 & 0.97 $\pm$ 0.00 & 0.96 $\pm$ 0.01 & 0.99 $\pm$ 0.00\\
\hspace{1em}$\text{Full-1Step } {\mathbf{R}}_{\text{Lon}} \otimes I_L$ & 0.85 $\pm$ 0.01 & 0.88 $\pm$ 0.01 & 0.89 $\pm$ 0.01 & 0.91 $\pm$ 0.00 & 0.86 $\pm$ 0.01 & 0.89 $\pm$ 0.01 & 0.90 $\pm$ 0.01 & 0.90 $\pm$ 0.00\\
\hspace{1em}$\text{Full-1Step } I_{n_i} \otimes I_L$ & 0.91 $\pm$ 0.00 & 0.95 $\pm$ 0.00 & 0.94 $\pm$ 0.01 & 0.96 $\pm$ 0.00 & 0.92 $\pm$ 0.00 & 0.96 $\pm$ 0.00 & 0.94 $\pm$ 0.01 & 0.96 $\pm$ 0.00\\
\addlinespace[0.6em]
\multicolumn{9}{l}{\textbf{$N = 500,\; n_i = 100$}}\\
\hspace{1em}$\text{1-Step } {\mathbf{R}}_{\text{FPCA}} \otimes {\mathbf{R}}_{\text{Fun}}$ & 0.91 $\pm$ 0.00 & 0.71 $\pm$ 0.01 & 1.05 $\pm$ 0.01 & 0.98 $\pm$ 0.00 & 0.94 $\pm$ 0.00 & 0.89 $\pm$ 0.01 & 0.96 $\pm$ 0.00 & 0.95 $\pm$ 0.00\\
\hspace{1em}$\text{1-Step } {\mathbf{R}}_{\text{Lon}} \otimes {\mathbf{R}}_{\text{Fun}}$ & 0.90 $\pm$ 0.00 & 0.67 $\pm$ 0.01 & 0.96 $\pm$ 0.01 & 0.97 $\pm$ 0.00 & 0.93 $\pm$ 0.00 & 0.88 $\pm$ 0.01 & 0.94 $\pm$ 0.01 & 0.94 $\pm$ 0.00\\
\hspace{1em}$\text{1-Step } I_{n_i} \otimes {\mathbf{R}}_{\text{Fun}}$ & 0.92 $\pm$ 0.00 & 0.96 $\pm$ 0.00 & 0.90 $\pm$ 0.01 & 0.98 $\pm$ 0.00 & 0.99 $\pm$ 0.00 & 0.98 $\pm$ 0.00 & 0.98 $\pm$ 0.00 & 1.00 $\pm$ 0.00\\
\hspace{1em}$\text{1-Step } {\mathbf{R}}_{\text{Lon}} \otimes I_L$ & 0.87 $\pm$ 0.00 & 0.67 $\pm$ 0.01 & 0.94 $\pm$ 0.01 & 0.94 $\pm$ 0.00 & 0.90 $\pm$ 0.00 & 0.89 $\pm$ 0.00 & 0.93 $\pm$ 0.01 & 0.95 $\pm$ 0.00\\
\hspace{1em}$\text{1-Step } I_{n_i} \otimes I_L$ & 0.88 $\pm$ 0.00 & 0.95 $\pm$ 0.00 & 0.89 $\pm$ 0.01 & 0.95 $\pm$ 0.00 & 0.96 $\pm$ 0.00 & 0.99 $\pm$ 0.00 & 0.97 $\pm$ 0.00 & 1.00 $\pm$ 0.00\\
\hspace{1em}$\text{Full-1Step } {\mathbf{R}}_{\text{FPCA}} \otimes {\mathbf{R}}_{\text{Fun}}$ & 0.91 $\pm$ 0.01 & 0.71 $\pm$ 0.01 & 1.07 $\pm$ 0.01 & 0.98 $\pm$ 0.00 & 0.94 $\pm$ 0.00 & 0.89 $\pm$ 0.01 & 0.96 $\pm$ 0.00 & 0.95 $\pm$ 0.00\\
\hspace{1em}$\text{Full-1Step } {\mathbf{R}}_{\text{Lon}} \otimes {\mathbf{R}}_{\text{Fun}}$ & 0.90 $\pm$ 0.00 & 0.67 $\pm$ 0.01 & 0.98 $\pm$ 0.01 & 0.97 $\pm$ 0.00 & 0.93 $\pm$ 0.00 & 0.88 $\pm$ 0.01 & 0.94 $\pm$ 0.01 & 0.94 $\pm$ 0.00\\
\hspace{1em}$\text{Full-1Step } I_{n_i} \otimes {\mathbf{R}}_{\text{Fun}}$ & 0.92 $\pm$ 0.00 & 0.96 $\pm$ 0.00 & 0.91 $\pm$ 0.01 & 0.98 $\pm$ 0.00 & 0.99 $\pm$ 0.00 & 0.98 $\pm$ 0.00 & 0.98 $\pm$ 0.00 & 1.00 $\pm$ 0.00\\
\hspace{1em}$\text{Full-1Step } {\mathbf{R}}_{\text{Lon}} \otimes I_L$ & 0.87 $\pm$ 0.00 & 0.67 $\pm$ 0.01 & 0.95 $\pm$ 0.01 & 0.94 $\pm$ 0.00 & 0.90 $\pm$ 0.00 & 0.89 $\pm$ 0.00 & 0.93 $\pm$ 0.01 & 0.95 $\pm$ 0.00\\
\hspace{1em}$\text{Full-1Step } I_{n_i} \otimes I_L$ & 0.88 $\pm$ 0.00 & 0.95 $\pm$ 0.00 & 0.89 $\pm$ 0.01 & 0.95 $\pm$ 0.00 & 0.96 $\pm$ 0.00 & 0.99 $\pm$ 0.00 & 0.97 $\pm$ 0.00 & 1.00 $\pm$ 0.00\\
\bottomrule
\end{tabular}}
\end{table}

\begin{table}[!h]
\centering
\caption{\label{tab:jci_supp} \footnotesize
Joint 95\% CI coverage from 300 replicates $\pm$ SE (SE$=0.00$ indicates a value $<0.01$).
Table columns indicate if $\mathbf{R}^*_{\text{Lon}}$ had exchangeable or AR1 correlation. 
pffr (Wild) indicates CIs constructed for pffr coefficient estimates with quantiles obtained from a wild cluster bootstrap. 
}
\centering
\resizebox{\ifdim\width>\linewidth\linewidth\else\width\fi}{!}{
\fontsize{9}{11}\selectfont
\begin{tabular}[t]{>{\raggedright\arraybackslash}p{4.8cm}>{\raggedright\arraybackslash}p{2.1cm}>{\raggedright\arraybackslash}p{2.1cm}>{\raggedright\arraybackslash}p{2.1cm}>{\raggedright\arraybackslash}p{2.1cm}>{\raggedright\arraybackslash}p{2.1cm}>{\raggedright\arraybackslash}p{2.1cm}>{\raggedright\arraybackslash}p{2.1cm}>{\raggedright\arraybackslash}p{2.1cm}}
\toprule
\multicolumn{1}{c}{ } & \multicolumn{4}{c}{Exchangeable} & \multicolumn{4}{c}{AR(1)} \\
\cmidrule(l{3pt}r{3pt}){2-5} \cmidrule(l{3pt}r{3pt}){6-9}
Method & Gaussian & Poisson & Binomial & Gamma & Gaussian & Poisson & Binomial & Gamma\\
\midrule
\addlinespace[0em]
\multicolumn{9}{l}{\textbf{$N = 25,\; n_i = 5$}}\\
\hspace{1em}$\text{1-Step } {\mathbf{R}}_{\text{FPCA}} \otimes {\mathbf{R}}_{\text{Fun}}$ & 0.85 $\pm$ 0.02 & 0.80 $\pm$ 0.02 & 0.93 $\pm$ 0.01 & 0.74 $\pm$ 0.03 & 0.84 $\pm$ 0.02 & 0.80 $\pm$ 0.02 & 0.92 $\pm$ 0.02 & 0.78 $\pm$ 0.02\\
\hspace{1em}$\text{1-Step } {\mathbf{R}}_{\text{Lon}} \otimes {\mathbf{R}}_{\text{Fun}}$ & 0.90 $\pm$ 0.02 & 0.90 $\pm$ 0.02 & 0.95 $\pm$ 0.01 & 0.83 $\pm$ 0.02 & 0.90 $\pm$ 0.02 & 0.88 $\pm$ 0.02 & 0.95 $\pm$ 0.01 & 0.86 $\pm$ 0.02\\
\hspace{1em}$\text{1-Step } I_{n_i} \otimes {\mathbf{R}}_{\text{Fun}}$ & 0.89 $\pm$ 0.02 & 0.84 $\pm$ 0.02 & 0.94 $\pm$ 0.01 & 0.80 $\pm$ 0.02 & 0.87 $\pm$ 0.02 & 0.85 $\pm$ 0.02 & 0.93 $\pm$ 0.01 & 0.81 $\pm$ 0.02\\
\hspace{1em}$\text{1-Step } {\mathbf{R}}_{\text{Lon}} \otimes I_L$ & 0.92 $\pm$ 0.02 & 0.92 $\pm$ 0.02 & 0.95 $\pm$ 0.01 & 0.86 $\pm$ 0.02 & 0.91 $\pm$ 0.02 & 0.90 $\pm$ 0.02 & 0.94 $\pm$ 0.01 & 0.88 $\pm$ 0.02\\
\hspace{1em}$\text{1-Step } I_{n_i} \otimes I_L$ & 0.91 $\pm$ 0.02 & 0.88 $\pm$ 0.02 & 0.95 $\pm$ 0.01 & 0.85 $\pm$ 0.02 & 0.90 $\pm$ 0.02 & 0.86 $\pm$ 0.02 & 0.94 $\pm$ 0.01 & 0.85 $\pm$ 0.02\\
\hspace{1em}$\text{Full-1Step } {\mathbf{R}}_{\text{FPCA}} \otimes {\mathbf{R}}_{\text{Fun}}$ & 0.83 $\pm$ 0.02 & 0.80 $\pm$ 0.02 & 0.86 $\pm$ 0.02 & 0.71 $\pm$ 0.03 & 0.82 $\pm$ 0.02 & 0.79 $\pm$ 0.02 & 0.91 $\pm$ 0.02 & 0.71 $\pm$ 0.03\\
\hspace{1em}$\text{Full-1Step } {\mathbf{R}}_{\text{Lon}} \otimes {\mathbf{R}}_{\text{Fun}}$ & 0.90 $\pm$ 0.02 & 0.90 $\pm$ 0.02 & 0.94 $\pm$ 0.01 & 0.84 $\pm$ 0.02 & 0.89 $\pm$ 0.02 & 0.88 $\pm$ 0.02 & 0.93 $\pm$ 0.01 & 0.85 $\pm$ 0.02\\
\hspace{1em}$\text{Full-1Step } I_{n_i} \otimes {\mathbf{R}}_{\text{Fun}}$ & 0.89 $\pm$ 0.02 & 0.84 $\pm$ 0.02 & 0.93 $\pm$ 0.01 & 0.80 $\pm$ 0.02 & 0.86 $\pm$ 0.02 & 0.84 $\pm$ 0.02 & 0.92 $\pm$ 0.02 & 0.80 $\pm$ 0.02\\
\hspace{1em}$\text{Full-1Step } {\mathbf{R}}_{\text{Lon}} \otimes I_L$ & 0.93 $\pm$ 0.02 & 0.91 $\pm$ 0.02 & 0.90 $\pm$ 0.02 & 0.87 $\pm$ 0.02 & 0.91 $\pm$ 0.02 & 0.90 $\pm$ 0.02 & 0.93 $\pm$ 0.01 & 0.88 $\pm$ 0.02\\
\hspace{1em}$\text{Full-1Step } I_{n_i} \otimes I_L$ & 0.91 $\pm$ 0.02 & 0.87 $\pm$ 0.02 & 0.94 $\pm$ 0.01 & 0.85 $\pm$ 0.02 & 0.90 $\pm$ 0.02 & 0.86 $\pm$ 0.02 & 0.93 $\pm$ 0.02 & 0.85 $\pm$ 0.02\\
\hspace{1em}$\text{pffr (Wild)}$ & 0.92 $\pm$ 0.02 & 0.85 $\pm$ 0.02 & 0.95 $\pm$ 0.01 & 0.84 $\pm$ 0.02 & 0.93 $\pm$ 0.01 & 0.86 $\pm$ 0.02 & 0.96 $\pm$ 0.01 & 0.86 $\pm$ 0.02\\
\addlinespace[0.6em]
\multicolumn{9}{l}{\textbf{$N = 25,\; n_i = 100$}}\\
\hspace{1em}$\text{1-Step } {\mathbf{R}}_{\text{FPCA}} \otimes {\mathbf{R}}_{\text{Fun}}$ & 0.92 $\pm$ 0.02 & 0.95 $\pm$ 0.01 & 0.98 $\pm$ 0.01 & 0.81 $\pm$ 0.02 & 0.89 $\pm$ 0.02 & 0.87 $\pm$ 0.02 & 0.92 $\pm$ 0.02 & 0.85 $\pm$ 0.02\\
\hspace{1em}$\text{1-Step } {\mathbf{R}}_{\text{Lon}} \otimes {\mathbf{R}}_{\text{Fun}}$ & 0.96 $\pm$ 0.01 & 0.98 $\pm$ 0.01 & 0.99 $\pm$ 0.01 & 0.89 $\pm$ 0.02 & 0.89 $\pm$ 0.02 & 0.88 $\pm$ 0.02 & 0.93 $\pm$ 0.02 & 0.87 $\pm$ 0.02\\
\hspace{1em}$\text{1-Step } I_{n_i} \otimes {\mathbf{R}}_{\text{Fun}}$ & 0.97 $\pm$ 0.01 & 0.91 $\pm$ 0.02 & 0.99 $\pm$ 0.01 & 0.89 $\pm$ 0.02 & 0.88 $\pm$ 0.02 & 0.82 $\pm$ 0.02 & 0.93 $\pm$ 0.01 & 0.85 $\pm$ 0.02\\
\hspace{1em}$\text{1-Step } {\mathbf{R}}_{\text{Lon}} \otimes I_L$ & 0.97 $\pm$ 0.01 & 0.98 $\pm$ 0.01 & 0.99 $\pm$ 0.01 & 0.91 $\pm$ 0.02 & 0.92 $\pm$ 0.02 & 0.88 $\pm$ 0.02 & 0.93 $\pm$ 0.01 & 0.89 $\pm$ 0.02\\
\hspace{1em}$\text{1-Step } I_{n_i} \otimes I_L$ & 0.97 $\pm$ 0.01 & 0.91 $\pm$ 0.02 & 0.98 $\pm$ 0.01 & 0.91 $\pm$ 0.02 & 0.92 $\pm$ 0.02 & 0.83 $\pm$ 0.02 & 0.93 $\pm$ 0.01 & 0.87 $\pm$ 0.02\\
\hspace{1em}$\text{Full-1Step } {\mathbf{R}}_{\text{FPCA}} \otimes {\mathbf{R}}_{\text{Fun}}$ & 0.91 $\pm$ 0.02 & 0.95 $\pm$ 0.01 & 0.53 $\pm$ 0.03 & 0.78 $\pm$ 0.02 & 0.89 $\pm$ 0.02 & 0.87 $\pm$ 0.02 & 0.91 $\pm$ 0.02 & 0.86 $\pm$ 0.02\\
\hspace{1em}$\text{Full-1Step } {\mathbf{R}}_{\text{Lon}} \otimes {\mathbf{R}}_{\text{Fun}}$ & 0.96 $\pm$ 0.01 & 0.97 $\pm$ 0.01 & 0.81 $\pm$ 0.02 & 0.88 $\pm$ 0.02 & 0.89 $\pm$ 0.02 & 0.88 $\pm$ 0.02 & 0.93 $\pm$ 0.01 & 0.86 $\pm$ 0.02\\
\hspace{1em}$\text{Full-1Step } I_{n_i} \otimes {\mathbf{R}}_{\text{Fun}}$ & 0.96 $\pm$ 0.01 & 0.91 $\pm$ 0.02 & 0.99 $\pm$ 0.01 & 0.87 $\pm$ 0.02 & 0.88 $\pm$ 0.02 & 0.82 $\pm$ 0.02 & 0.93 $\pm$ 0.01 & 0.85 $\pm$ 0.02\\
\hspace{1em}$\text{Full-1Step } {\mathbf{R}}_{\text{Lon}} \otimes I_L$ & 0.98 $\pm$ 0.01 & 0.98 $\pm$ 0.01 & 0.61 $\pm$ 0.03 & 0.90 $\pm$ 0.02 & 0.92 $\pm$ 0.02 & 0.88 $\pm$ 0.02 & 0.93 $\pm$ 0.01 & 0.88 $\pm$ 0.02\\
\hspace{1em}$\text{Full-1Step } I_{n_i} \otimes I_L$ & 0.97 $\pm$ 0.01 & 0.91 $\pm$ 0.02 & 0.98 $\pm$ 0.01 & 0.90 $\pm$ 0.02 & 0.92 $\pm$ 0.02 & 0.83 $\pm$ 0.02 & 0.93 $\pm$ 0.01 & 0.87 $\pm$ 0.02\\
\hspace{1em}$\text{pffr (Wild)}$ & 0.79 $\pm$ 0.02 & 0.78 $\pm$ 0.02 & 0.75 $\pm$ 0.03 & 0.76 $\pm$ 0.02 & 0.99 $\pm$ 0.01 & 0.78 $\pm$ 0.02 & 0.93 $\pm$ 0.01 & 0.79 $\pm$ 0.02\\
\addlinespace[0.6em]
\multicolumn{9}{l}{\textbf{$N = 50,\; n_i = 5$}}\\
\hspace{1em}$\text{1-Step } {\mathbf{R}}_{\text{FPCA}} \otimes {\mathbf{R}}_{\text{Fun}}$ & 0.88 $\pm$ 0.02 & 0.85 $\pm$ 0.02 & 0.93 $\pm$ 0.01 & 0.82 $\pm$ 0.02 & 0.86 $\pm$ 0.02 & 0.85 $\pm$ 0.02 & 0.93 $\pm$ 0.01 & 0.83 $\pm$ 0.02\\
\hspace{1em}$\text{1-Step } {\mathbf{R}}_{\text{Lon}} \otimes {\mathbf{R}}_{\text{Fun}}$ & 0.90 $\pm$ 0.02 & 0.90 $\pm$ 0.02 & 0.95 $\pm$ 0.01 & 0.87 $\pm$ 0.02 & 0.89 $\pm$ 0.02 & 0.87 $\pm$ 0.02 & 0.94 $\pm$ 0.01 & 0.87 $\pm$ 0.02\\
\hspace{1em}$\text{1-Step } I_{n_i} \otimes {\mathbf{R}}_{\text{Fun}}$ & 0.89 $\pm$ 0.02 & 0.85 $\pm$ 0.02 & 0.95 $\pm$ 0.01 & 0.85 $\pm$ 0.02 & 0.89 $\pm$ 0.02 & 0.85 $\pm$ 0.02 & 0.94 $\pm$ 0.01 & 0.84 $\pm$ 0.02\\
\hspace{1em}$\text{1-Step } {\mathbf{R}}_{\text{Lon}} \otimes I_L$ & 0.92 $\pm$ 0.02 & 0.90 $\pm$ 0.02 & 0.95 $\pm$ 0.01 & 0.89 $\pm$ 0.02 & 0.92 $\pm$ 0.02 & 0.90 $\pm$ 0.02 & 0.95 $\pm$ 0.01 & 0.90 $\pm$ 0.02\\
\hspace{1em}$\text{1-Step } I_{n_i} \otimes I_L$ & 0.92 $\pm$ 0.02 & 0.88 $\pm$ 0.02 & 0.95 $\pm$ 0.01 & 0.89 $\pm$ 0.02 & 0.90 $\pm$ 0.02 & 0.86 $\pm$ 0.02 & 0.94 $\pm$ 0.01 & 0.89 $\pm$ 0.02\\
\hspace{1em}$\text{Full-1Step } {\mathbf{R}}_{\text{FPCA}} \otimes {\mathbf{R}}_{\text{Fun}}$ & 0.87 $\pm$ 0.02 & 0.85 $\pm$ 0.02 & 0.92 $\pm$ 0.02 & 0.82 $\pm$ 0.02 & 0.87 $\pm$ 0.02 & 0.83 $\pm$ 0.02 & 0.92 $\pm$ 0.02 & 0.80 $\pm$ 0.02\\
\hspace{1em}$\text{Full-1Step } {\mathbf{R}}_{\text{Lon}} \otimes {\mathbf{R}}_{\text{Fun}}$ & 0.90 $\pm$ 0.02 & 0.90 $\pm$ 0.02 & 0.94 $\pm$ 0.01 & 0.88 $\pm$ 0.02 & 0.89 $\pm$ 0.02 & 0.87 $\pm$ 0.02 & 0.94 $\pm$ 0.01 & 0.86 $\pm$ 0.02\\
\hspace{1em}$\text{Full-1Step } I_{n_i} \otimes {\mathbf{R}}_{\text{Fun}}$ & 0.89 $\pm$ 0.02 & 0.85 $\pm$ 0.02 & 0.94 $\pm$ 0.01 & 0.85 $\pm$ 0.02 & 0.88 $\pm$ 0.02 & 0.85 $\pm$ 0.02 & 0.93 $\pm$ 0.01 & 0.84 $\pm$ 0.02\\
\hspace{1em}$\text{Full-1Step } {\mathbf{R}}_{\text{Lon}} \otimes I_L$ & 0.92 $\pm$ 0.02 & 0.90 $\pm$ 0.02 & 0.94 $\pm$ 0.01 & 0.90 $\pm$ 0.02 & 0.92 $\pm$ 0.02 & 0.90 $\pm$ 0.02 & 0.94 $\pm$ 0.01 & 0.89 $\pm$ 0.02\\
\hspace{1em}$\text{Full-1Step } I_{n_i} \otimes I_L$ & 0.92 $\pm$ 0.02 & 0.87 $\pm$ 0.02 & 0.94 $\pm$ 0.01 & 0.89 $\pm$ 0.02 & 0.90 $\pm$ 0.02 & 0.86 $\pm$ 0.02 & 0.94 $\pm$ 0.01 & 0.88 $\pm$ 0.02\\
\hspace{1em}$\text{pffr (Wild)}$ & 0.97 $\pm$ 0.01 & 0.91 $\pm$ 0.02 & 0.99 $\pm$ 0.01 & 0.94 $\pm$ 0.01 & 0.97 $\pm$ 0.01 & 0.91 $\pm$ 0.02 & 0.99 $\pm$ 0.01 & 0.94 $\pm$ 0.01\\
\addlinespace[0.6em]
\multicolumn{9}{l}{\textbf{$N = 50,\; n_i = 100$}}\\
\hspace{1em}$\text{1-Step } {\mathbf{R}}_{\text{FPCA}} \otimes {\mathbf{R}}_{\text{Fun}}$ & 0.94 $\pm$ 0.01 & 0.96 $\pm$ 0.01 & 0.97 $\pm$ 0.01 & 0.89 $\pm$ 0.02 & 0.92 $\pm$ 0.02 & 0.90 $\pm$ 0.02 & 0.94 $\pm$ 0.01 & 0.89 $\pm$ 0.02\\
\hspace{1em}$\text{1-Step } {\mathbf{R}}_{\text{Lon}} \otimes {\mathbf{R}}_{\text{Fun}}$ & 0.96 $\pm$ 0.01 & 0.97 $\pm$ 0.01 & 0.98 $\pm$ 0.01 & 0.92 $\pm$ 0.02 & 0.93 $\pm$ 0.02 & 0.90 $\pm$ 0.02 & 0.95 $\pm$ 0.01 & 0.90 $\pm$ 0.02\\
\hspace{1em}$\text{1-Step } I_{n_i} \otimes {\mathbf{R}}_{\text{Fun}}$ & 0.98 $\pm$ 0.01 & 0.91 $\pm$ 0.02 & 0.99 $\pm$ 0.01 & 0.91 $\pm$ 0.02 & 0.92 $\pm$ 0.02 & 0.84 $\pm$ 0.02 & 0.94 $\pm$ 0.01 & 0.89 $\pm$ 0.02\\
\hspace{1em}$\text{1-Step } {\mathbf{R}}_{\text{Lon}} \otimes I_L$ & 0.97 $\pm$ 0.01 & 0.98 $\pm$ 0.01 & 0.98 $\pm$ 0.01 & 0.94 $\pm$ 0.01 & 0.95 $\pm$ 0.01 & 0.91 $\pm$ 0.02 & 0.95 $\pm$ 0.01 & 0.92 $\pm$ 0.02\\
\hspace{1em}$\text{1-Step } I_{n_i} \otimes I_L$ & 0.98 $\pm$ 0.01 & 0.94 $\pm$ 0.01 & 0.99 $\pm$ 0.01 & 0.94 $\pm$ 0.01 & 0.95 $\pm$ 0.01 & 0.85 $\pm$ 0.02 & 0.95 $\pm$ 0.01 & 0.90 $\pm$ 0.02\\
\hspace{1em}$\text{Full-1Step } {\mathbf{R}}_{\text{FPCA}} \otimes {\mathbf{R}}_{\text{Fun}}$ & 0.94 $\pm$ 0.01 & 0.96 $\pm$ 0.01 & 0.71 $\pm$ 0.03 & 0.89 $\pm$ 0.02 & 0.92 $\pm$ 0.02 & 0.90 $\pm$ 0.02 & 0.94 $\pm$ 0.01 & 0.89 $\pm$ 0.02\\
\hspace{1em}$\text{Full-1Step } {\mathbf{R}}_{\text{Lon}} \otimes {\mathbf{R}}_{\text{Fun}}$ & 0.96 $\pm$ 0.01 & 0.98 $\pm$ 0.01 & 0.89 $\pm$ 0.02 & 0.91 $\pm$ 0.02 & 0.93 $\pm$ 0.02 & 0.90 $\pm$ 0.02 & 0.95 $\pm$ 0.01 & 0.90 $\pm$ 0.02\\
\hspace{1em}$\text{Full-1Step } I_{n_i} \otimes {\mathbf{R}}_{\text{Fun}}$ & 0.98 $\pm$ 0.01 & 0.91 $\pm$ 0.02 & 0.98 $\pm$ 0.01 & 0.92 $\pm$ 0.02 & 0.92 $\pm$ 0.02 & 0.84 $\pm$ 0.02 & 0.94 $\pm$ 0.01 & 0.89 $\pm$ 0.02\\
\hspace{1em}$\text{Full-1Step } {\mathbf{R}}_{\text{Lon}} \otimes I_L$ & 0.97 $\pm$ 0.01 & 0.98 $\pm$ 0.01 & 0.76 $\pm$ 0.03 & 0.93 $\pm$ 0.01 & 0.95 $\pm$ 0.01 & 0.91 $\pm$ 0.02 & 0.95 $\pm$ 0.01 & 0.92 $\pm$ 0.02\\
\hspace{1em}$\text{Full-1Step } I_{n_i} \otimes I_L$ & 0.98 $\pm$ 0.01 & 0.94 $\pm$ 0.01 & 0.98 $\pm$ 0.01 & 0.94 $\pm$ 0.01 & 0.95 $\pm$ 0.01 & 0.85 $\pm$ 0.02 & 0.95 $\pm$ 0.01 & 0.90 $\pm$ 0.02\\
\hspace{1em}$\text{pffr (Wild)}$ & 0.89 $\pm$ 0.02 & 0.82 $\pm$ 0.02 & 0.85 $\pm$ 0.02 & 0.85 $\pm$ 0.02 & 1.00 $\pm$ 0.00 & 0.85 $\pm$ 0.02 & 0.99 $\pm$ 0.01 & 0.87 $\pm$ 0.02\\
\addlinespace[0.6em]
\multicolumn{9}{l}{\textbf{$N = 500,\; n_i = 5$}}\\
\hspace{1em}$\text{1-Step } {\mathbf{R}}_{\text{FPCA}} \otimes {\mathbf{R}}_{\text{Fun}}$ & 0.95 $\pm$ 0.01 & 0.94 $\pm$ 0.01 & 0.95 $\pm$ 0.01 & 0.94 $\pm$ 0.01 & 0.95 $\pm$ 0.01 & 0.93 $\pm$ 0.02 & 0.95 $\pm$ 0.01 & 0.95 $\pm$ 0.01\\
\hspace{1em}$\text{1-Step } {\mathbf{R}}_{\text{Lon}} \otimes {\mathbf{R}}_{\text{Fun}}$ & 0.94 $\pm$ 0.01 & 0.93 $\pm$ 0.01 & 0.96 $\pm$ 0.01 & 0.93 $\pm$ 0.01 & 0.95 $\pm$ 0.01 & 0.93 $\pm$ 0.02 & 0.96 $\pm$ 0.01 & 0.94 $\pm$ 0.01\\
\hspace{1em}$\text{1-Step } I_{n_i} \otimes {\mathbf{R}}_{\text{Fun}}$ & 0.95 $\pm$ 0.01 & 0.93 $\pm$ 0.02 & 0.96 $\pm$ 0.01 & 0.94 $\pm$ 0.01 & 0.95 $\pm$ 0.01 & 0.93 $\pm$ 0.02 & 0.96 $\pm$ 0.01 & 0.94 $\pm$ 0.01\\
\hspace{1em}$\text{1-Step } {\mathbf{R}}_{\text{Lon}} \otimes I_L$ & 0.97 $\pm$ 0.01 & 0.94 $\pm$ 0.01 & 0.96 $\pm$ 0.01 & 0.95 $\pm$ 0.01 & 0.96 $\pm$ 0.01 & 0.93 $\pm$ 0.01 & 0.97 $\pm$ 0.01 & 0.95 $\pm$ 0.01\\
\hspace{1em}$\text{1-Step } I_{n_i} \otimes I_L$ & 0.97 $\pm$ 0.01 & 0.94 $\pm$ 0.01 & 0.96 $\pm$ 0.01 & 0.95 $\pm$ 0.01 & 0.96 $\pm$ 0.01 & 0.93 $\pm$ 0.01 & 0.97 $\pm$ 0.01 & 0.95 $\pm$ 0.01\\
\hspace{1em}$\text{Full-1Step } {\mathbf{R}}_{\text{FPCA}} \otimes {\mathbf{R}}_{\text{Fun}}$ & 0.95 $\pm$ 0.01 & 0.94 $\pm$ 0.01 & 0.95 $\pm$ 0.01 & 0.94 $\pm$ 0.01 & 0.95 $\pm$ 0.01 & 0.93 $\pm$ 0.01 & 0.95 $\pm$ 0.01 & 0.95 $\pm$ 0.01\\
\hspace{1em}$\text{Full-1Step } {\mathbf{R}}_{\text{Lon}} \otimes {\mathbf{R}}_{\text{Fun}}$ & 0.94 $\pm$ 0.01 & 0.93 $\pm$ 0.01 & 0.96 $\pm$ 0.01 & 0.94 $\pm$ 0.01 & 0.95 $\pm$ 0.01 & 0.92 $\pm$ 0.02 & 0.96 $\pm$ 0.01 & 0.94 $\pm$ 0.01\\
\hspace{1em}$\text{Full-1Step } I_{n_i} \otimes {\mathbf{R}}_{\text{Fun}}$ & 0.95 $\pm$ 0.01 & 0.93 $\pm$ 0.01 & 0.96 $\pm$ 0.01 & 0.94 $\pm$ 0.01 & 0.95 $\pm$ 0.01 & 0.93 $\pm$ 0.02 & 0.96 $\pm$ 0.01 & 0.94 $\pm$ 0.01\\
\hspace{1em}$\text{Full-1Step } {\mathbf{R}}_{\text{Lon}} \otimes I_L$ & 0.97 $\pm$ 0.01 & 0.94 $\pm$ 0.01 & 0.96 $\pm$ 0.01 & 0.95 $\pm$ 0.01 & 0.96 $\pm$ 0.01 & 0.93 $\pm$ 0.01 & 0.96 $\pm$ 0.01 & 0.95 $\pm$ 0.01\\
\hspace{1em}$\text{Full-1Step } I_{n_i} \otimes I_L$ & 0.97 $\pm$ 0.01 & 0.94 $\pm$ 0.01 & 0.96 $\pm$ 0.01 & 0.95 $\pm$ 0.01 & 0.96 $\pm$ 0.01 & 0.93 $\pm$ 0.02 & 0.97 $\pm$ 0.01 & 0.95 $\pm$ 0.01\\
\hspace{1em}$\text{pffr (Wild)}$ & 1.00 $\pm$ 0.00 & 1.00 $\pm$ 0.00 & 1.00 $\pm$ 0.00 & 1.00 $\pm$ 0.00 & 1.00 $\pm$ 0.00 & 1.00 $\pm$ 0.00 & 1.00 $\pm$ 0.00 & 1.00 $\pm$ 0.00\\
\addlinespace[0.6em]
\multicolumn{9}{l}{\textbf{$N = 500,\; n_i = 100$}}\\
\hspace{1em}$\text{1-Step } {\mathbf{R}}_{\text{FPCA}} \otimes {\mathbf{R}}_{\text{Fun}}$ & 0.93 $\pm$ 0.01 & 0.90 $\pm$ 0.02 & 0.97 $\pm$ 0.01 & 0.79 $\pm$ 0.02 & 0.95 $\pm$ 0.01 & 0.89 $\pm$ 0.02 & 0.95 $\pm$ 0.01 & 0.92 $\pm$ 0.02\\
\hspace{1em}$\text{1-Step } {\mathbf{R}}_{\text{Lon}} \otimes {\mathbf{R}}_{\text{Fun}}$ & 0.94 $\pm$ 0.01 & 0.92 $\pm$ 0.02 & 0.97 $\pm$ 0.01 & 0.81 $\pm$ 0.02 & 0.94 $\pm$ 0.01 & 0.89 $\pm$ 0.02 & 0.95 $\pm$ 0.01 & 0.93 $\pm$ 0.01\\
\hspace{1em}$\text{1-Step } I_{n_i} \otimes {\mathbf{R}}_{\text{Fun}}$ & 0.95 $\pm$ 0.01 & 0.92 $\pm$ 0.02 & 0.98 $\pm$ 0.01 & 0.93 $\pm$ 0.02 & 0.95 $\pm$ 0.01 & 0.91 $\pm$ 0.02 & 0.95 $\pm$ 0.01 & 0.94 $\pm$ 0.01\\
\hspace{1em}$\text{1-Step } {\mathbf{R}}_{\text{Lon}} \otimes I_L$ & 0.96 $\pm$ 0.01 & 0.92 $\pm$ 0.02 & 0.97 $\pm$ 0.01 & 0.84 $\pm$ 0.02 & 0.95 $\pm$ 0.01 & 0.91 $\pm$ 0.02 & 0.96 $\pm$ 0.01 & 0.92 $\pm$ 0.02\\
\hspace{1em}$\text{1-Step } I_{n_i} \otimes I_L$ & 0.97 $\pm$ 0.01 & 0.93 $\pm$ 0.01 & 0.98 $\pm$ 0.01 & 0.93 $\pm$ 0.01 & 0.96 $\pm$ 0.01 & 0.91 $\pm$ 0.02 & 0.95 $\pm$ 0.01 & 0.94 $\pm$ 0.01\\
\hspace{1em}$\text{Full-1Step } {\mathbf{R}}_{\text{FPCA}} \otimes {\mathbf{R}}_{\text{Fun}}$ & 0.93 $\pm$ 0.01 & 0.90 $\pm$ 0.02 & 0.96 $\pm$ 0.01 & 0.79 $\pm$ 0.02 & 0.95 $\pm$ 0.01 & 0.89 $\pm$ 0.02 & 0.95 $\pm$ 0.01 & 0.92 $\pm$ 0.02\\
\hspace{1em}$\text{Full-1Step } {\mathbf{R}}_{\text{Lon}} \otimes {\mathbf{R}}_{\text{Fun}}$ & 0.94 $\pm$ 0.01 & 0.92 $\pm$ 0.02 & 0.97 $\pm$ 0.01 & 0.81 $\pm$ 0.02 & 0.94 $\pm$ 0.01 & 0.89 $\pm$ 0.02 & 0.95 $\pm$ 0.01 & 0.93 $\pm$ 0.02\\
\hspace{1em}$\text{Full-1Step } I_{n_i} \otimes {\mathbf{R}}_{\text{Fun}}$ & 0.95 $\pm$ 0.01 & 0.92 $\pm$ 0.02 & 0.98 $\pm$ 0.01 & 0.92 $\pm$ 0.02 & 0.95 $\pm$ 0.01 & 0.91 $\pm$ 0.02 & 0.95 $\pm$ 0.01 & 0.94 $\pm$ 0.01\\
\hspace{1em}$\text{Full-1Step } {\mathbf{R}}_{\text{Lon}} \otimes I_L$ & 0.96 $\pm$ 0.01 & 0.92 $\pm$ 0.02 & 0.98 $\pm$ 0.01 & 0.84 $\pm$ 0.02 & 0.95 $\pm$ 0.01 & 0.90 $\pm$ 0.02 & 0.96 $\pm$ 0.01 & 0.92 $\pm$ 0.02\\
\hspace{1em}$\text{Full-1Step } I_{n_i} \otimes I_L$ & 0.97 $\pm$ 0.01 & 0.93 $\pm$ 0.01 & 0.98 $\pm$ 0.01 & 0.93 $\pm$ 0.01 & 0.96 $\pm$ 0.01 & 0.91 $\pm$ 0.02 & 0.95 $\pm$ 0.01 & 0.94 $\pm$ 0.01\\
\hspace{1em}$\text{pffr (Wild)}$ & 1.00 $\pm$ 0.00 & 0.93 $\pm$ 0.01 & 0.98 $\pm$ 0.01 & 0.93 $\pm$ 0.01 & 1.00 $\pm$ 0.00 & 0.96 $\pm$ 0.01 & 1.00 $\pm$ 0.00 & 0.99 $\pm$ 0.01\\
\bottomrule
\end{tabular}}
\end{table}

\begin{table}[!h]
\centering
\caption{\label{tab:pci_supp} \footnotesize
Pointwise 95\% CI coverage from 300 replicates $\pm$ SE (SE$=0.00$ indicates a value $<0.01$). 
Table columns indicate if $\mathbf{R}^*_{\text{Lon}}$ had exchangeable or AR1 correlation. 
pffr (Wild) indicates CIs constructed for pffr coefficient estimates with quantiles obtained from a wild cluster bootstrap. 
}
\centering
\resizebox{\ifdim\width>\linewidth\linewidth\else\width\fi}{!}{
\fontsize{9}{11}\selectfont
\begin{tabular}[t]{>{\raggedright\arraybackslash}p{4.8cm}>{\raggedright\arraybackslash}p{2.1cm}>{\raggedright\arraybackslash}p{2.1cm}>{\raggedright\arraybackslash}p{2.1cm}>{\raggedright\arraybackslash}p{2.1cm}>{\raggedright\arraybackslash}p{2.1cm}>{\raggedright\arraybackslash}p{2.1cm}>{\raggedright\arraybackslash}p{2.1cm}>{\raggedright\arraybackslash}p{2.1cm}}
\toprule
\multicolumn{1}{c}{ } & \multicolumn{4}{c}{Exchangeable} & \multicolumn{4}{c}{AR(1)} \\
\cmidrule(l{3pt}r{3pt}){2-5} \cmidrule(l{3pt}r{3pt}){6-9}
Method & Gaussian & Poisson & Binomial & Gamma & Gaussian & Poisson & Binomial & Gamma\\
\midrule
\addlinespace[0em]
\multicolumn{9}{l}{\textbf{$N = 25,\; n_i = 5$}}\\
\hspace{1em}$\text{1-Step } {\mathbf{R}}_{\text{FPCA}} \otimes {\mathbf{R}}_{\text{Fun}}$ & 0.92 $\pm$ 0.02 & 0.93 $\pm$ 0.01 & 0.96 $\pm$ 0.01 & 0.90 $\pm$ 0.02 & 0.92 $\pm$ 0.02 & 0.93 $\pm$ 0.02 & 0.96 $\pm$ 0.01 & 0.92 $\pm$ 0.02\\
\hspace{1em}$\text{1-Step } {\mathbf{R}}_{\text{Lon}} \otimes {\mathbf{R}}_{\text{Fun}}$ & 0.95 $\pm$ 0.01 & 0.95 $\pm$ 0.01 & 0.97 $\pm$ 0.01 & 0.93 $\pm$ 0.01 & 0.94 $\pm$ 0.01 & 0.95 $\pm$ 0.01 & 0.96 $\pm$ 0.01 & 0.93 $\pm$ 0.01\\
\hspace{1em}$\text{1-Step } I_{n_i} \otimes {\mathbf{R}}_{\text{Fun}}$ & 0.94 $\pm$ 0.01 & 0.93 $\pm$ 0.01 & 0.96 $\pm$ 0.01 & 0.93 $\pm$ 0.02 & 0.93 $\pm$ 0.01 & 0.93 $\pm$ 0.01 & 0.96 $\pm$ 0.01 & 0.93 $\pm$ 0.02\\
\hspace{1em}$\text{1-Step } {\mathbf{R}}_{\text{Lon}} \otimes I_L$ & 0.95 $\pm$ 0.01 & 0.95 $\pm$ 0.01 & 0.97 $\pm$ 0.01 & 0.94 $\pm$ 0.01 & 0.95 $\pm$ 0.01 & 0.95 $\pm$ 0.01 & 0.96 $\pm$ 0.01 & 0.94 $\pm$ 0.01\\
\hspace{1em}$\text{1-Step } I_{n_i} \otimes I_L$ & 0.95 $\pm$ 0.01 & 0.94 $\pm$ 0.01 & 0.96 $\pm$ 0.01 & 0.93 $\pm$ 0.01 & 0.94 $\pm$ 0.01 & 0.94 $\pm$ 0.01 & 0.96 $\pm$ 0.01 & 0.93 $\pm$ 0.01\\
\hspace{1em}$\text{Full-1Step } {\mathbf{R}}_{\text{FPCA}} \otimes {\mathbf{R}}_{\text{Fun}}$ & 0.92 $\pm$ 0.02 & 0.92 $\pm$ 0.02 & 0.90 $\pm$ 0.02 & 0.89 $\pm$ 0.02 & 0.92 $\pm$ 0.02 & 0.92 $\pm$ 0.02 & 0.95 $\pm$ 0.01 & 0.87 $\pm$ 0.02\\
\hspace{1em}$\text{Full-1Step } {\mathbf{R}}_{\text{Lon}} \otimes {\mathbf{R}}_{\text{Fun}}$ & 0.95 $\pm$ 0.01 & 0.95 $\pm$ 0.01 & 0.96 $\pm$ 0.01 & 0.93 $\pm$ 0.01 & 0.94 $\pm$ 0.01 & 0.95 $\pm$ 0.01 & 0.96 $\pm$ 0.01 & 0.93 $\pm$ 0.01\\
\hspace{1em}$\text{Full-1Step } I_{n_i} \otimes {\mathbf{R}}_{\text{Fun}}$ & 0.94 $\pm$ 0.01 & 0.93 $\pm$ 0.01 & 0.96 $\pm$ 0.01 & 0.92 $\pm$ 0.02 & 0.93 $\pm$ 0.01 & 0.93 $\pm$ 0.01 & 0.96 $\pm$ 0.01 & 0.92 $\pm$ 0.02\\
\hspace{1em}$\text{Full-1Step } {\mathbf{R}}_{\text{Lon}} \otimes I_L$ & 0.95 $\pm$ 0.01 & 0.95 $\pm$ 0.01 & 0.93 $\pm$ 0.02 & 0.94 $\pm$ 0.01 & 0.95 $\pm$ 0.01 & 0.95 $\pm$ 0.01 & 0.96 $\pm$ 0.01 & 0.94 $\pm$ 0.01\\
\hspace{1em}$\text{Full-1Step } I_{n_i} \otimes I_L$ & 0.95 $\pm$ 0.01 & 0.94 $\pm$ 0.01 & 0.96 $\pm$ 0.01 & 0.93 $\pm$ 0.01 & 0.94 $\pm$ 0.01 & 0.94 $\pm$ 0.01 & 0.96 $\pm$ 0.01 & 0.93 $\pm$ 0.01\\
\hspace{1em}$\text{pffr (Wild)}$ & 0.96 $\pm$ 0.01 & 0.93 $\pm$ 0.01 & 0.97 $\pm$ 0.01 & 0.93 $\pm$ 0.01 & 0.97 $\pm$ 0.01 & 0.94 $\pm$ 0.01 & 0.97 $\pm$ 0.01 & 0.94 $\pm$ 0.01\\
\addlinespace[0.6em]
\multicolumn{9}{l}{\textbf{$N = 25,\; n_i = 100$}}\\
\hspace{1em}$\text{1-Step } {\mathbf{R}}_{\text{FPCA}} \otimes {\mathbf{R}}_{\text{Fun}}$ & 0.95 $\pm$ 0.01 & 0.97 $\pm$ 0.01 & 0.98 $\pm$ 0.01 & 0.93 $\pm$ 0.02 & 0.95 $\pm$ 0.01 & 0.95 $\pm$ 0.01 & 0.95 $\pm$ 0.01 & 0.95 $\pm$ 0.01\\
\hspace{1em}$\text{1-Step } {\mathbf{R}}_{\text{Lon}} \otimes {\mathbf{R}}_{\text{Fun}}$ & 0.97 $\pm$ 0.01 & 0.98 $\pm$ 0.01 & 0.98 $\pm$ 0.01 & 0.95 $\pm$ 0.01 & 0.95 $\pm$ 0.01 & 0.95 $\pm$ 0.01 & 0.96 $\pm$ 0.01 & 0.95 $\pm$ 0.01\\
\hspace{1em}$\text{1-Step } I_{n_i} \otimes {\mathbf{R}}_{\text{Fun}}$ & 0.97 $\pm$ 0.01 & 0.95 $\pm$ 0.01 & 0.99 $\pm$ 0.01 & 0.94 $\pm$ 0.01 & 0.94 $\pm$ 0.01 & 0.94 $\pm$ 0.01 & 0.96 $\pm$ 0.01 & 0.94 $\pm$ 0.01\\
\hspace{1em}$\text{1-Step } {\mathbf{R}}_{\text{Lon}} \otimes I_L$ & 0.97 $\pm$ 0.01 & 0.98 $\pm$ 0.01 & 0.99 $\pm$ 0.01 & 0.95 $\pm$ 0.01 & 0.95 $\pm$ 0.01 & 0.95 $\pm$ 0.01 & 0.96 $\pm$ 0.01 & 0.95 $\pm$ 0.01\\
\hspace{1em}$\text{1-Step } I_{n_i} \otimes I_L$ & 0.97 $\pm$ 0.01 & 0.95 $\pm$ 0.01 & 0.98 $\pm$ 0.01 & 0.95 $\pm$ 0.01 & 0.95 $\pm$ 0.01 & 0.94 $\pm$ 0.01 & 0.96 $\pm$ 0.01 & 0.94 $\pm$ 0.01\\
\hspace{1em}$\text{Full-1Step } {\mathbf{R}}_{\text{FPCA}} \otimes {\mathbf{R}}_{\text{Fun}}$ & 0.95 $\pm$ 0.01 & 0.96 $\pm$ 0.01 & 0.60 $\pm$ 0.03 & 0.91 $\pm$ 0.02 & 0.95 $\pm$ 0.01 & 0.95 $\pm$ 0.01 & 0.95 $\pm$ 0.01 & 0.95 $\pm$ 0.01\\
\hspace{1em}$\text{Full-1Step } {\mathbf{R}}_{\text{Lon}} \otimes {\mathbf{R}}_{\text{Fun}}$ & 0.97 $\pm$ 0.01 & 0.98 $\pm$ 0.01 & 0.84 $\pm$ 0.02 & 0.95 $\pm$ 0.01 & 0.95 $\pm$ 0.01 & 0.95 $\pm$ 0.01 & 0.96 $\pm$ 0.01 & 0.95 $\pm$ 0.01\\
\hspace{1em}$\text{Full-1Step } I_{n_i} \otimes {\mathbf{R}}_{\text{Fun}}$ & 0.97 $\pm$ 0.01 & 0.95 $\pm$ 0.01 & 0.98 $\pm$ 0.01 & 0.94 $\pm$ 0.01 & 0.94 $\pm$ 0.01 & 0.94 $\pm$ 0.01 & 0.96 $\pm$ 0.01 & 0.94 $\pm$ 0.01\\
\hspace{1em}$\text{Full-1Step } {\mathbf{R}}_{\text{Lon}} \otimes I_L$ & 0.98 $\pm$ 0.01 & 0.98 $\pm$ 0.01 & 0.67 $\pm$ 0.03 & 0.95 $\pm$ 0.01 & 0.95 $\pm$ 0.01 & 0.95 $\pm$ 0.01 & 0.96 $\pm$ 0.01 & 0.95 $\pm$ 0.01\\
\hspace{1em}$\text{Full-1Step } I_{n_i} \otimes I_L$ & 0.97 $\pm$ 0.01 & 0.95 $\pm$ 0.01 & 0.98 $\pm$ 0.01 & 0.94 $\pm$ 0.01 & 0.95 $\pm$ 0.01 & 0.94 $\pm$ 0.01 & 0.96 $\pm$ 0.01 & 0.94 $\pm$ 0.01\\
\hspace{1em}$\text{pffr (Wild)}$ & 0.94 $\pm$ 0.01 & 0.93 $\pm$ 0.01 & 0.92 $\pm$ 0.02 & 0.93 $\pm$ 0.02 & 0.99 $\pm$ 0.00 & 0.93 $\pm$ 0.01 & 0.96 $\pm$ 0.01 & 0.93 $\pm$ 0.01\\
\addlinespace[0.6em]
\multicolumn{9}{l}{\textbf{$N = 50,\; n_i = 5$}}\\
\hspace{1em}$\text{1-Step } {\mathbf{R}}_{\text{FPCA}} \otimes {\mathbf{R}}_{\text{Fun}}$ & 0.93 $\pm$ 0.01 & 0.93 $\pm$ 0.01 & 0.95 $\pm$ 0.01 & 0.91 $\pm$ 0.02 & 0.93 $\pm$ 0.02 & 0.92 $\pm$ 0.02 & 0.95 $\pm$ 0.01 & 0.92 $\pm$ 0.02\\
\hspace{1em}$\text{1-Step } {\mathbf{R}}_{\text{Lon}} \otimes {\mathbf{R}}_{\text{Fun}}$ & 0.94 $\pm$ 0.01 & 0.94 $\pm$ 0.01 & 0.96 $\pm$ 0.01 & 0.93 $\pm$ 0.01 & 0.93 $\pm$ 0.01 & 0.94 $\pm$ 0.01 & 0.95 $\pm$ 0.01 & 0.94 $\pm$ 0.01\\
\hspace{1em}$\text{1-Step } I_{n_i} \otimes {\mathbf{R}}_{\text{Fun}}$ & 0.93 $\pm$ 0.01 & 0.93 $\pm$ 0.01 & 0.95 $\pm$ 0.01 & 0.93 $\pm$ 0.01 & 0.93 $\pm$ 0.01 & 0.93 $\pm$ 0.01 & 0.95 $\pm$ 0.01 & 0.93 $\pm$ 0.01\\
\hspace{1em}$\text{1-Step } {\mathbf{R}}_{\text{Lon}} \otimes I_L$ & 0.95 $\pm$ 0.01 & 0.95 $\pm$ 0.01 & 0.96 $\pm$ 0.01 & 0.94 $\pm$ 0.01 & 0.94 $\pm$ 0.01 & 0.94 $\pm$ 0.01 & 0.95 $\pm$ 0.01 & 0.94 $\pm$ 0.01\\
\hspace{1em}$\text{1-Step } I_{n_i} \otimes I_L$ & 0.94 $\pm$ 0.01 & 0.93 $\pm$ 0.01 & 0.95 $\pm$ 0.01 & 0.93 $\pm$ 0.01 & 0.94 $\pm$ 0.01 & 0.93 $\pm$ 0.01 & 0.95 $\pm$ 0.01 & 0.94 $\pm$ 0.01\\
\hspace{1em}$\text{Full-1Step } {\mathbf{R}}_{\text{FPCA}} \otimes {\mathbf{R}}_{\text{Fun}}$ & 0.93 $\pm$ 0.01 & 0.93 $\pm$ 0.02 & 0.95 $\pm$ 0.01 & 0.91 $\pm$ 0.02 & 0.92 $\pm$ 0.02 & 0.92 $\pm$ 0.02 & 0.94 $\pm$ 0.01 & 0.92 $\pm$ 0.02\\
\hspace{1em}$\text{Full-1Step } {\mathbf{R}}_{\text{Lon}} \otimes {\mathbf{R}}_{\text{Fun}}$ & 0.94 $\pm$ 0.01 & 0.94 $\pm$ 0.01 & 0.95 $\pm$ 0.01 & 0.93 $\pm$ 0.01 & 0.93 $\pm$ 0.01 & 0.94 $\pm$ 0.01 & 0.95 $\pm$ 0.01 & 0.93 $\pm$ 0.01\\
\hspace{1em}$\text{Full-1Step } I_{n_i} \otimes {\mathbf{R}}_{\text{Fun}}$ & 0.93 $\pm$ 0.01 & 0.93 $\pm$ 0.01 & 0.95 $\pm$ 0.01 & 0.93 $\pm$ 0.01 & 0.93 $\pm$ 0.01 & 0.93 $\pm$ 0.01 & 0.95 $\pm$ 0.01 & 0.93 $\pm$ 0.01\\
\hspace{1em}$\text{Full-1Step } {\mathbf{R}}_{\text{Lon}} \otimes I_L$ & 0.95 $\pm$ 0.01 & 0.94 $\pm$ 0.01 & 0.95 $\pm$ 0.01 & 0.94 $\pm$ 0.01 & 0.94 $\pm$ 0.01 & 0.94 $\pm$ 0.01 & 0.95 $\pm$ 0.01 & 0.94 $\pm$ 0.01\\
\hspace{1em}$\text{Full-1Step } I_{n_i} \otimes I_L$ & 0.94 $\pm$ 0.01 & 0.93 $\pm$ 0.01 & 0.95 $\pm$ 0.01 & 0.93 $\pm$ 0.01 & 0.94 $\pm$ 0.01 & 0.93 $\pm$ 0.01 & 0.95 $\pm$ 0.01 & 0.94 $\pm$ 0.01\\
\hspace{1em}$\text{pffr (Wild)}$ & 0.98 $\pm$ 0.01 & 0.95 $\pm$ 0.01 & 0.99 $\pm$ 0.01 & 0.96 $\pm$ 0.01 & 0.99 $\pm$ 0.01 & 0.96 $\pm$ 0.01 & 0.99 $\pm$ 0.01 & 0.97 $\pm$ 0.01\\
\addlinespace[0.6em]
\multicolumn{9}{l}{\textbf{$N = 50,\; n_i = 100$}}\\
\hspace{1em}$\text{1-Step } {\mathbf{R}}_{\text{FPCA}} \otimes {\mathbf{R}}_{\text{Fun}}$ & 0.96 $\pm$ 0.01 & 0.97 $\pm$ 0.01 & 0.97 $\pm$ 0.01 & 0.94 $\pm$ 0.01 & 0.95 $\pm$ 0.01 & 0.94 $\pm$ 0.01 & 0.95 $\pm$ 0.01 & 0.94 $\pm$ 0.01\\
\hspace{1em}$\text{1-Step } {\mathbf{R}}_{\text{Lon}} \otimes {\mathbf{R}}_{\text{Fun}}$ & 0.96 $\pm$ 0.01 & 0.98 $\pm$ 0.01 & 0.98 $\pm$ 0.01 & 0.95 $\pm$ 0.01 & 0.95 $\pm$ 0.01 & 0.95 $\pm$ 0.01 & 0.96 $\pm$ 0.01 & 0.95 $\pm$ 0.01\\
\hspace{1em}$\text{1-Step } I_{n_i} \otimes {\mathbf{R}}_{\text{Fun}}$ & 0.97 $\pm$ 0.01 & 0.95 $\pm$ 0.01 & 0.98 $\pm$ 0.01 & 0.95 $\pm$ 0.01 & 0.95 $\pm$ 0.01 & 0.94 $\pm$ 0.01 & 0.96 $\pm$ 0.01 & 0.94 $\pm$ 0.01\\
\hspace{1em}$\text{1-Step } {\mathbf{R}}_{\text{Lon}} \otimes I_L$ & 0.97 $\pm$ 0.01 & 0.98 $\pm$ 0.01 & 0.98 $\pm$ 0.01 & 0.96 $\pm$ 0.01 & 0.96 $\pm$ 0.01 & 0.95 $\pm$ 0.01 & 0.96 $\pm$ 0.01 & 0.95 $\pm$ 0.01\\
\hspace{1em}$\text{1-Step } I_{n_i} \otimes I_L$ & 0.97 $\pm$ 0.01 & 0.95 $\pm$ 0.01 & 0.98 $\pm$ 0.01 & 0.96 $\pm$ 0.01 & 0.95 $\pm$ 0.01 & 0.94 $\pm$ 0.01 & 0.96 $\pm$ 0.01 & 0.95 $\pm$ 0.01\\
\hspace{1em}$\text{Full-1Step } {\mathbf{R}}_{\text{FPCA}} \otimes {\mathbf{R}}_{\text{Fun}}$ & 0.96 $\pm$ 0.01 & 0.97 $\pm$ 0.01 & 0.75 $\pm$ 0.03 & 0.94 $\pm$ 0.01 & 0.95 $\pm$ 0.01 & 0.94 $\pm$ 0.01 & 0.95 $\pm$ 0.01 & 0.94 $\pm$ 0.01\\
\hspace{1em}$\text{Full-1Step } {\mathbf{R}}_{\text{Lon}} \otimes {\mathbf{R}}_{\text{Fun}}$ & 0.97 $\pm$ 0.01 & 0.97 $\pm$ 0.01 & 0.90 $\pm$ 0.02 & 0.95 $\pm$ 0.01 & 0.95 $\pm$ 0.01 & 0.95 $\pm$ 0.01 & 0.96 $\pm$ 0.01 & 0.95 $\pm$ 0.01\\
\hspace{1em}$\text{Full-1Step } I_{n_i} \otimes {\mathbf{R}}_{\text{Fun}}$ & 0.96 $\pm$ 0.01 & 0.95 $\pm$ 0.01 & 0.98 $\pm$ 0.01 & 0.95 $\pm$ 0.01 & 0.95 $\pm$ 0.01 & 0.94 $\pm$ 0.01 & 0.96 $\pm$ 0.01 & 0.94 $\pm$ 0.01\\
\hspace{1em}$\text{Full-1Step } {\mathbf{R}}_{\text{Lon}} \otimes I_L$ & 0.97 $\pm$ 0.01 & 0.98 $\pm$ 0.01 & 0.79 $\pm$ 0.02 & 0.96 $\pm$ 0.01 & 0.96 $\pm$ 0.01 & 0.95 $\pm$ 0.01 & 0.96 $\pm$ 0.01 & 0.95 $\pm$ 0.01\\
\hspace{1em}$\text{Full-1Step } I_{n_i} \otimes I_L$ & 0.97 $\pm$ 0.01 & 0.95 $\pm$ 0.01 & 0.98 $\pm$ 0.01 & 0.96 $\pm$ 0.01 & 0.95 $\pm$ 0.01 & 0.94 $\pm$ 0.01 & 0.96 $\pm$ 0.01 & 0.95 $\pm$ 0.01\\
\hspace{1em}$\text{pffr (Wild)}$ & 0.95 $\pm$ 0.01 & 0.93 $\pm$ 0.01 & 0.94 $\pm$ 0.01 & 0.93 $\pm$ 0.01 & 1.00 $\pm$ 0.00 & 0.93 $\pm$ 0.01 & 0.98 $\pm$ 0.01 & 0.94 $\pm$ 0.01\\
\addlinespace[0.6em]
\multicolumn{9}{l}{\textbf{$N = 500,\; n_i = 5$}}\\
\hspace{1em}$\text{1-Step } {\mathbf{R}}_{\text{FPCA}} \otimes {\mathbf{R}}_{\text{Fun}}$ & 0.95 $\pm$ 0.01 & 0.95 $\pm$ 0.01 & 0.95 $\pm$ 0.01 & 0.95 $\pm$ 0.01 & 0.95 $\pm$ 0.01 & 0.95 $\pm$ 0.01 & 0.95 $\pm$ 0.01 & 0.95 $\pm$ 0.01\\
\hspace{1em}$\text{1-Step } {\mathbf{R}}_{\text{Lon}} \otimes {\mathbf{R}}_{\text{Fun}}$ & 0.95 $\pm$ 0.01 & 0.95 $\pm$ 0.01 & 0.96 $\pm$ 0.01 & 0.95 $\pm$ 0.01 & 0.95 $\pm$ 0.01 & 0.95 $\pm$ 0.01 & 0.95 $\pm$ 0.01 & 0.95 $\pm$ 0.01\\
\hspace{1em}$\text{1-Step } I_{n_i} \otimes {\mathbf{R}}_{\text{Fun}}$ & 0.95 $\pm$ 0.01 & 0.95 $\pm$ 0.01 & 0.96 $\pm$ 0.01 & 0.95 $\pm$ 0.01 & 0.95 $\pm$ 0.01 & 0.95 $\pm$ 0.01 & 0.95 $\pm$ 0.01 & 0.95 $\pm$ 0.01\\
\hspace{1em}$\text{1-Step } {\mathbf{R}}_{\text{Lon}} \otimes I_L$ & 0.96 $\pm$ 0.01 & 0.95 $\pm$ 0.01 & 0.96 $\pm$ 0.01 & 0.95 $\pm$ 0.01 & 0.96 $\pm$ 0.01 & 0.95 $\pm$ 0.01 & 0.96 $\pm$ 0.01 & 0.95 $\pm$ 0.01\\
\hspace{1em}$\text{1-Step } I_{n_i} \otimes I_L$ & 0.95 $\pm$ 0.01 & 0.95 $\pm$ 0.01 & 0.96 $\pm$ 0.01 & 0.95 $\pm$ 0.01 & 0.96 $\pm$ 0.01 & 0.95 $\pm$ 0.01 & 0.96 $\pm$ 0.01 & 0.95 $\pm$ 0.01\\
\hspace{1em}$\text{Full-1Step } {\mathbf{R}}_{\text{FPCA}} \otimes {\mathbf{R}}_{\text{Fun}}$ & 0.95 $\pm$ 0.01 & 0.95 $\pm$ 0.01 & 0.95 $\pm$ 0.01 & 0.95 $\pm$ 0.01 & 0.95 $\pm$ 0.01 & 0.95 $\pm$ 0.01 & 0.95 $\pm$ 0.01 & 0.95 $\pm$ 0.01\\
\hspace{1em}$\text{Full-1Step } {\mathbf{R}}_{\text{Lon}} \otimes {\mathbf{R}}_{\text{Fun}}$ & 0.95 $\pm$ 0.01 & 0.95 $\pm$ 0.01 & 0.95 $\pm$ 0.01 & 0.95 $\pm$ 0.01 & 0.95 $\pm$ 0.01 & 0.95 $\pm$ 0.01 & 0.95 $\pm$ 0.01 & 0.95 $\pm$ 0.01\\
\hspace{1em}$\text{Full-1Step } I_{n_i} \otimes {\mathbf{R}}_{\text{Fun}}$ & 0.95 $\pm$ 0.01 & 0.95 $\pm$ 0.01 & 0.95 $\pm$ 0.01 & 0.95 $\pm$ 0.01 & 0.95 $\pm$ 0.01 & 0.95 $\pm$ 0.01 & 0.95 $\pm$ 0.01 & 0.95 $\pm$ 0.01\\
\hspace{1em}$\text{Full-1Step } {\mathbf{R}}_{\text{Lon}} \otimes I_L$ & 0.96 $\pm$ 0.01 & 0.95 $\pm$ 0.01 & 0.96 $\pm$ 0.01 & 0.95 $\pm$ 0.01 & 0.96 $\pm$ 0.01 & 0.95 $\pm$ 0.01 & 0.96 $\pm$ 0.01 & 0.95 $\pm$ 0.01\\
\hspace{1em}$\text{Full-1Step } I_{n_i} \otimes I_L$ & 0.95 $\pm$ 0.01 & 0.95 $\pm$ 0.01 & 0.96 $\pm$ 0.01 & 0.95 $\pm$ 0.01 & 0.96 $\pm$ 0.01 & 0.95 $\pm$ 0.01 & 0.96 $\pm$ 0.01 & 0.95 $\pm$ 0.01\\
\hspace{1em}$\text{pffr (Wild)}$ & 1.00 $\pm$ 0.00 & 1.00 $\pm$ 0.00 & 1.00 $\pm$ 0.00 & 1.00 $\pm$ 0.00 & 1.00 $\pm$ 0.00 & 1.00 $\pm$ 0.00 & 1.00 $\pm$ 0.00 & 1.00 $\pm$ 0.00\\
\addlinespace[0.6em]
\multicolumn{9}{l}{\textbf{$N = 500,\; n_i = 100$}}\\
\hspace{1em}$\text{1-Step } {\mathbf{R}}_{\text{FPCA}} \otimes {\mathbf{R}}_{\text{Fun}}$ & 0.95 $\pm$ 0.01 & 0.95 $\pm$ 0.01 & 0.96 $\pm$ 0.01 & 0.93 $\pm$ 0.01 & 0.95 $\pm$ 0.01 & 0.94 $\pm$ 0.01 & 0.95 $\pm$ 0.01 & 0.95 $\pm$ 0.01\\
\hspace{1em}$\text{1-Step } {\mathbf{R}}_{\text{Lon}} \otimes {\mathbf{R}}_{\text{Fun}}$ & 0.95 $\pm$ 0.01 & 0.95 $\pm$ 0.01 & 0.96 $\pm$ 0.01 & 0.94 $\pm$ 0.01 & 0.95 $\pm$ 0.01 & 0.94 $\pm$ 0.01 & 0.95 $\pm$ 0.01 & 0.95 $\pm$ 0.01\\
\hspace{1em}$\text{1-Step } I_{n_i} \otimes {\mathbf{R}}_{\text{Fun}}$ & 0.95 $\pm$ 0.01 & 0.94 $\pm$ 0.01 & 0.96 $\pm$ 0.01 & 0.95 $\pm$ 0.01 & 0.95 $\pm$ 0.01 & 0.94 $\pm$ 0.01 & 0.95 $\pm$ 0.01 & 0.95 $\pm$ 0.01\\
\hspace{1em}$\text{1-Step } {\mathbf{R}}_{\text{Lon}} \otimes I_L$ & 0.96 $\pm$ 0.01 & 0.95 $\pm$ 0.01 & 0.96 $\pm$ 0.01 & 0.94 $\pm$ 0.01 & 0.95 $\pm$ 0.01 & 0.94 $\pm$ 0.01 & 0.95 $\pm$ 0.01 & 0.95 $\pm$ 0.01\\
\hspace{1em}$\text{1-Step } I_{n_i} \otimes I_L$ & 0.96 $\pm$ 0.01 & 0.94 $\pm$ 0.01 & 0.96 $\pm$ 0.01 & 0.95 $\pm$ 0.01 & 0.95 $\pm$ 0.01 & 0.94 $\pm$ 0.01 & 0.96 $\pm$ 0.01 & 0.95 $\pm$ 0.01\\
\hspace{1em}$\text{Full-1Step } {\mathbf{R}}_{\text{FPCA}} \otimes {\mathbf{R}}_{\text{Fun}}$ & 0.95 $\pm$ 0.01 & 0.95 $\pm$ 0.01 & 0.96 $\pm$ 0.01 & 0.93 $\pm$ 0.01 & 0.95 $\pm$ 0.01 & 0.94 $\pm$ 0.01 & 0.95 $\pm$ 0.01 & 0.95 $\pm$ 0.01\\
\hspace{1em}$\text{Full-1Step } {\mathbf{R}}_{\text{Lon}} \otimes {\mathbf{R}}_{\text{Fun}}$ & 0.95 $\pm$ 0.01 & 0.95 $\pm$ 0.01 & 0.96 $\pm$ 0.01 & 0.94 $\pm$ 0.01 & 0.95 $\pm$ 0.01 & 0.94 $\pm$ 0.01 & 0.95 $\pm$ 0.01 & 0.95 $\pm$ 0.01\\
\hspace{1em}$\text{Full-1Step } I_{n_i} \otimes {\mathbf{R}}_{\text{Fun}}$ & 0.95 $\pm$ 0.01 & 0.94 $\pm$ 0.01 & 0.96 $\pm$ 0.01 & 0.95 $\pm$ 0.01 & 0.95 $\pm$ 0.01 & 0.94 $\pm$ 0.01 & 0.95 $\pm$ 0.01 & 0.95 $\pm$ 0.01\\
\hspace{1em}$\text{Full-1Step } {\mathbf{R}}_{\text{Lon}} \otimes I_L$ & 0.96 $\pm$ 0.01 & 0.95 $\pm$ 0.01 & 0.96 $\pm$ 0.01 & 0.94 $\pm$ 0.01 & 0.95 $\pm$ 0.01 & 0.94 $\pm$ 0.01 & 0.95 $\pm$ 0.01 & 0.95 $\pm$ 0.01\\
\hspace{1em}$\text{Full-1Step } I_{n_i} \otimes I_L$ & 0.96 $\pm$ 0.01 & 0.94 $\pm$ 0.01 & 0.96 $\pm$ 0.01 & 0.95 $\pm$ 0.01 & 0.95 $\pm$ 0.01 & 0.94 $\pm$ 0.01 & 0.95 $\pm$ 0.01 & 0.95 $\pm$ 0.01\\
\hspace{1em}$\text{pffr (Wild)}$ & 0.99 $\pm$ 0.00 & 0.94 $\pm$ 0.01 & 0.97 $\pm$ 0.01 & 0.96 $\pm$ 0.01 & 1.00 $\pm$ 0.00 & 0.97 $\pm$ 0.01 & 1.00 $\pm$ 0.00 & 0.98 $\pm$ 0.01\\
\bottomrule
\end{tabular}}
\end{table}

\begin{table}[!h]
\centering
\caption{\label{tab:pt_width_supp} \footnotesize Relative pointwise CI width (mean $\pm$ SE) vs.\ pffr (Wild). We denote $UB^{(r)}(s)$ and $LB^{(r)}(s)$ and $UB_\text{pffr}^{(r)}(s)/LB_\text{pffr}^{(r)}(s)$ as the upper/lower bounds of the CIs (at $s$) of the indicated method and pffr, respectively. Below we report the average ratio $\frac{1}{(q+1)|\mathcal{S}|}\sum_{r=0}^q\sum_{s \in \mathcal{S}}\frac{UB^{(r)}(s) -LB(s)^{(r)}}{UB^{(r)}_\text{pffr}(s) -LB^{(r)}_\text{pffr}(s)}$ across 300 simulation replicates. Values $<1$ indicate narrower 95\% CIs. Values with $*$ indicate extreme outliers (from poor estimates) were removed from the average of that cell to avoid skewing results.}
\centering
\resizebox{\ifdim\width>\linewidth\linewidth\else\width\fi}{!}{
\fontsize{9}{11}\selectfont
\begin{tabular}[t]{>{\raggedright\arraybackslash}p{4.8cm}>{\raggedright\arraybackslash}p{2.1cm}>{\raggedright\arraybackslash}p{2.1cm}>{\raggedright\arraybackslash}p{2.1cm}>{\raggedright\arraybackslash}p{2.1cm}>{\raggedright\arraybackslash}p{2.1cm}>{\raggedright\arraybackslash}p{2.1cm}>{\raggedright\arraybackslash}p{2.1cm}>{\raggedright\arraybackslash}p{2.1cm}}
\toprule
\multicolumn{1}{c}{ } & \multicolumn{4}{c}{Exchangeable} & \multicolumn{4}{c}{AR(1)} \\
\cmidrule(l{3pt}r{3pt}){2-5} \cmidrule(l{3pt}r{3pt}){6-9}
Method & Gaussian & Poisson & Binomial & Gamma & Gaussian & Poisson & Binomial & Gamma\\
\midrule
\addlinespace[0em]
\multicolumn{9}{l}{\textbf{$N = 25,\; n_i = 5$}}\\
\hspace{1em}$\text{1-Step } {\mathbf{R}}_{\text{FPCA}} \otimes {\mathbf{R}}_{\text{Fun}}$ & 0.70 $\pm$ 0.01 & 0.79 $\pm$ 0.00 & 0.83 $\pm$ 0.01 & 0.79 $\pm$ 0.00 & 0.70 $\pm$ 0.01 & 0.80 $\pm$ 0.00 & 0.80 $\pm$ 0.01 & 0.81 $\pm$ 0.00\\
\hspace{1em}$\text{1-Step } {\mathbf{R}}_{\text{Lon}} \otimes {\mathbf{R}}_{\text{Fun}}$ & 0.75 $\pm$ 0.01 & 0.83 $\pm$ 0.00 & 0.85 $\pm$ 0.01 & 0.84 $\pm$ 0.00 & 0.75 $\pm$ 0.01 & 0.83 $\pm$ 0.00 & 0.82 $\pm$ 0.01 & 0.84 $\pm$ 0.00\\
\hspace{1em}$\text{1-Step } I_{n_i} \otimes {\mathbf{R}}_{\text{Fun}}$ & 0.83 $\pm$ 0.01 & 0.92 $\pm$ 0.00 & 0.94 $\pm$ 0.01 & 0.93 $\pm$ 0.00 & 0.80 $\pm$ 0.01 & 0.91 $\pm$ 0.00 & 0.87 $\pm$ 0.01 & 0.91 $\pm$ 0.00\\
\hspace{1em}$\text{1-Step } {\mathbf{R}}_{\text{Lon}} \otimes I_L$ & 0.78 $\pm$ 0.01 & 0.87 $\pm$ 0.01 & 0.84 $\pm$ 0.01 & 0.83 $\pm$ 0.00 & 0.77 $\pm$ 0.01 & 0.83 $\pm$ 0.00 & 0.80 $\pm$ 0.01 & 0.82 $\pm$ 0.00\\
\hspace{1em}$\text{1-Step } I_{n_i} \otimes I_L$ & 0.86 $\pm$ 0.01 & 0.92 $\pm$ 0.00 & 0.92 $\pm$ 0.01 & 0.92 $\pm$ 0.00 & 0.83 $\pm$ 0.01 & 0.90 $\pm$ 0.00 & 0.86 $\pm$ 0.01 & 0.90 $\pm$ 0.00\\
\hspace{1em}$\text{Full-1Step } {\mathbf{R}}_{\text{FPCA}} \otimes {\mathbf{R}}_{\text{Fun}}$ & 0.70 $\pm$ 0.01 & 0.80 $\pm$ 0.00 & 0.83 $\pm$ 0.01 & 0.79 $\pm$ 0.00 & 0.70 $\pm$ 0.01 & 0.81 $\pm$ 0.00 & 0.80 $\pm$ 0.01 & 0.82 $\pm$ 0.01\\
\hspace{1em}$\text{Full-1Step } {\mathbf{R}}_{\text{Lon}} \otimes {\mathbf{R}}_{\text{Fun}}$ & 0.75 $\pm$ 0.01 & 0.83 $\pm$ 0.00 & 0.90 $\pm$ 0.04 & 0.84 $\pm$ 0.00 & 0.75 $\pm$ 0.01 & 0.83 $\pm$ 0.00 & 0.82 $\pm$ 0.01 & 0.84 $\pm$ 0.00\\
\hspace{1em}$\text{Full-1Step } I_{n_i} \otimes {\mathbf{R}}_{\text{Fun}}$ & 0.83 $\pm$ 0.01 & 0.92 $\pm$ 0.00 & 0.94 $\pm$ 0.01 & 0.93 $\pm$ 0.00 & 0.80 $\pm$ 0.01 & 0.91 $\pm$ 0.00 & 0.87 $\pm$ 0.01 & 0.91 $\pm$ 0.00\\
\hspace{1em}$\text{Full-1Step } {\mathbf{R}}_{\text{Lon}} \otimes I_L$ & 0.78 $\pm$ 0.01 & 0.90 $\pm$ 0.02 & 1.00 $\pm$ 0.05 & 0.83 $\pm$ 0.00 & 0.77 $\pm$ 0.01 & 0.83 $\pm$ 0.00 & 0.81 $\pm$ 0.01 & 0.82 $\pm$ 0.00\\
\hspace{1em}$\text{Full-1Step } I_{n_i} \otimes I_L$ & 0.86 $\pm$ 0.01 & 0.92 $\pm$ 0.00 & 0.93 $\pm$ 0.01 & 0.92 $\pm$ 0.00 & 0.83 $\pm$ 0.01 & 0.90 $\pm$ 0.00 & 0.86 $\pm$ 0.01 & 0.89 $\pm$ 0.00\\
\hspace{1em}$\text{pffr (Wild)}$ & 1.00 $\pm$ 0.00 & 1.00 $\pm$ 0.00 & 1.00 $\pm$ 0.00 & 1.00 $\pm$ 0.00 & 1.00 $\pm$ 0.00 & 1.00 $\pm$ 0.00 & 1.00 $\pm$ 0.00 & 1.00 $\pm$ \vphantom{5} 0.00\\
\addlinespace[0.6em]
\multicolumn{9}{l}{\textbf{$N = 25,\; n_i = 100$}}\\
\hspace{1em}$\text{1-Step } {\mathbf{R}}_{\text{FPCA}} \otimes {\mathbf{R}}_{\text{Fun}}$ & 0.73 $\pm$ 0.01 & 0.76 $\pm$ 0.01 & 1.27 $\pm$ 0.02 & 0.78 $\pm$ 0.00 & 0.51 $\pm$ 0.00 & 0.89 $\pm$ 0.00 & 0.84 $\pm$ 0.00 & 0.90 $\pm$ 0.00\\
\hspace{1em}$\text{1-Step } {\mathbf{R}}_{\text{Lon}} \otimes {\mathbf{R}}_{\text{Fun}}$ & 0.79 $\pm$ 0.01 & 0.78 $\pm$ 0.01 & 1.18 $\pm$ 0.01 & 0.82 $\pm$ 0.00 & 0.51 $\pm$ 0.00 & 0.87 $\pm$ 0.00 & 0.82 $\pm$ 0.00 & 0.89 $\pm$ 0.00\\
\hspace{1em}$\text{1-Step } I_{n_i} \otimes {\mathbf{R}}_{\text{Fun}}$ & 1.00 $\pm$ 0.01 & 1.06 $\pm$ 0.00 & 1.39 $\pm$ 0.01 & 1.09 $\pm$ 0.00 & 0.57 $\pm$ 0.00 & 1.01 $\pm$ 0.00 & 0.90 $\pm$ 0.00 & 1.03 $\pm$ 0.00\\
\hspace{1em}$\text{1-Step } {\mathbf{R}}_{\text{Lon}} \otimes I_L$ & 0.83 $\pm$ 0.01 & 0.82 $\pm$ 0.01 & 1.18 $\pm$ 0.01 & 0.81 $\pm$ 0.00 & 0.50 $\pm$ 0.00 & 0.87 $\pm$ 0.00 & 0.81 $\pm$ 0.00 & 0.87 $\pm$ 0.00\\
\hspace{1em}$\text{1-Step } I_{n_i} \otimes I_L$ & 1.02 $\pm$ 0.01 & 1.05 $\pm$ 0.00 & 1.35 $\pm$ 0.01 & 1.08 $\pm$ 0.00 & 0.56 $\pm$ 0.00 & 1.00 $\pm$ 0.00 & 0.89 $\pm$ 0.00 & 1.01 $\pm$ 0.00\\
\hspace{1em}$\text{Full-1Step } {\mathbf{R}}_{\text{FPCA}} \otimes {\mathbf{R}}_{\text{Fun}}$ & 0.73 $\pm$ 0.01 & 0.76 $\pm$ 0.01 & 1.58 $\pm$ 0.06 & 0.78 $\pm$ 0.00 & 0.51 $\pm$ 0.00 & 0.89 $\pm$ 0.00 & 0.84 $\pm$ 0.00 & 0.90 $\pm$ 0.00\\
\hspace{1em}$\text{Full-1Step } {\mathbf{R}}_{\text{Lon}} \otimes {\mathbf{R}}_{\text{Fun}}$ & 0.79 $\pm$ 0.01 & 0.78 $\pm$ 0.01 & 2.23 $\pm$ 0.21 & 0.82 $\pm$ 0.00 & 0.51 $\pm$ 0.00 & 0.87 $\pm$ 0.00 & 0.82 $\pm$ 0.00 & 0.89 $\pm$ 0.00\\
\hspace{1em}$\text{Full-1Step } I_{n_i} \otimes {\mathbf{R}}_{\text{Fun}}$ & 1.00 $\pm$ 0.01 & 1.06 $\pm$ 0.00 & 1.39 $\pm$ 0.01 & 1.09 $\pm$ 0.00 & 0.57 $\pm$ 0.00 & 1.01 $\pm$ 0.00 & 0.90 $\pm$ 0.00 & 1.03 $\pm$ 0.00\\
\hspace{1em}$\text{Full-1Step } {\mathbf{R}}_{\text{Lon}} \otimes I_L$ & 0.83 $\pm$ 0.01 & 0.96 $\pm$ 0.11 & 5.29 $\pm$ 0.58$^*$ & 0.81 $\pm$ 0.00 & 0.50 $\pm$ 0.00 & 0.87 $\pm$ 0.00 & 0.81 $\pm$ 0.00 & 0.87 $\pm$ 0.00\\
\hspace{1em}$\text{Full-1Step } I_{n_i} \otimes I_L$ & 1.02 $\pm$ 0.01 & 1.06 $\pm$ 0.00 & 1.37 $\pm$ 0.01 & 1.08 $\pm$ 0.00 & 0.56 $\pm$ 0.00 & 1.00 $\pm$ 0.00 & 0.89 $\pm$ 0.00 & 1.01 $\pm$ 0.00\\
\hspace{1em}$\text{pffr (Wild)}$ & 1.00 $\pm$ 0.00 & 1.00 $\pm$ 0.00 & 1.00 $\pm$ 0.00 & 1.00 $\pm$ 0.00 & 1.00 $\pm$ 0.00 & 1.00 $\pm$ 0.00 & 1.00 $\pm$ 0.00 & 1.00 $\pm$ \vphantom{4} 0.00\\
\addlinespace[0.6em]
\multicolumn{9}{l}{\textbf{$N = 50,\; n_i = 5$}}\\
\hspace{1em}$\text{1-Step } {\mathbf{R}}_{\text{FPCA}} \otimes {\mathbf{R}}_{\text{Fun}}$ & 0.60 $\pm$ 0.00 & 0.71 $\pm$ 0.00 & 0.65 $\pm$ 0.00 & 0.71 $\pm$ 0.00 & 0.58 $\pm$ 0.00 & 0.71 $\pm$ 0.00 & 0.62 $\pm$ 0.00 & 0.70 $\pm$ 0.00\\
\hspace{1em}$\text{1-Step } {\mathbf{R}}_{\text{Lon}} \otimes {\mathbf{R}}_{\text{Fun}}$ & 0.62 $\pm$ 0.00 & 0.72 $\pm$ 0.00 & 0.65 $\pm$ 0.00 & 0.73 $\pm$ 0.00 & 0.60 $\pm$ 0.00 & 0.72 $\pm$ 0.00 & 0.62 $\pm$ 0.00 & 0.71 $\pm$ 0.00\\
\hspace{1em}$\text{1-Step } I_{n_i} \otimes {\mathbf{R}}_{\text{Fun}}$ & 0.70 $\pm$ 0.00 & 0.83 $\pm$ 0.00 & 0.72 $\pm$ 0.00 & 0.81 $\pm$ 0.00 & 0.66 $\pm$ 0.00 & 0.80 $\pm$ 0.00 & 0.66 $\pm$ 0.00 & 0.78 $\pm$ 0.00\\
\hspace{1em}$\text{1-Step } {\mathbf{R}}_{\text{Lon}} \otimes I_L$ & 0.63 $\pm$ 0.01 & 0.74 $\pm$ 0.00 & 0.64 $\pm$ 0.00 & 0.72 $\pm$ 0.00 & 0.61 $\pm$ 0.01 & 0.72 $\pm$ 0.00 & 0.61 $\pm$ 0.00 & 0.70 $\pm$ 0.00\\
\hspace{1em}$\text{1-Step } I_{n_i} \otimes I_L$ & 0.71 $\pm$ 0.01 & 0.83 $\pm$ 0.00 & 0.71 $\pm$ 0.00 & 0.81 $\pm$ 0.00 & 0.68 $\pm$ 0.01 & 0.80 $\pm$ 0.00 & 0.65 $\pm$ 0.00 & 0.77 $\pm$ 0.00\\
\hspace{1em}$\text{Full-1Step } {\mathbf{R}}_{\text{FPCA}} \otimes {\mathbf{R}}_{\text{Fun}}$ & 0.60 $\pm$ 0.00 & 0.71 $\pm$ 0.00 & 0.65 $\pm$ 0.00 & 0.71 $\pm$ 0.00 & 0.58 $\pm$ 0.00 & 0.71 $\pm$ 0.00 & 0.62 $\pm$ 0.00 & 0.70 $\pm$ 0.00\\
\hspace{1em}$\text{Full-1Step } {\mathbf{R}}_{\text{Lon}} \otimes {\mathbf{R}}_{\text{Fun}}$ & 0.62 $\pm$ 0.00 & 0.72 $\pm$ 0.00 & 0.65 $\pm$ 0.00 & 0.73 $\pm$ 0.00 & 0.60 $\pm$ 0.00 & 0.72 $\pm$ 0.00 & 0.62 $\pm$ 0.00 & 0.71 $\pm$ 0.00\\
\hspace{1em}$\text{Full-1Step } I_{n_i} \otimes {\mathbf{R}}_{\text{Fun}}$ & 0.70 $\pm$ 0.00 & 0.83 $\pm$ 0.00 & 0.72 $\pm$ 0.00 & 0.81 $\pm$ 0.00 & 0.66 $\pm$ 0.00 & 0.80 $\pm$ 0.00 & 0.66 $\pm$ 0.00 & 0.78 $\pm$ 0.00\\
\hspace{1em}$\text{Full-1Step } {\mathbf{R}}_{\text{Lon}} \otimes I_L$ & 0.63 $\pm$ 0.01 & 0.76 $\pm$ 0.01 & 0.74 $\pm$ 0.07 & 0.72 $\pm$ 0.00 & 0.61 $\pm$ 0.01 & 0.72 $\pm$ 0.00 & 0.61 $\pm$ 0.00 & 0.70 $\pm$ 0.00\\
\hspace{1em}$\text{Full-1Step } I_{n_i} \otimes I_L$ & 0.71 $\pm$ 0.01 & 0.83 $\pm$ 0.00 & 0.71 $\pm$ 0.00 & 0.81 $\pm$ 0.00 & 0.68 $\pm$ 0.01 & 0.80 $\pm$ 0.00 & 0.66 $\pm$ 0.00 & 0.77 $\pm$ 0.00\\
\hspace{1em}$\text{pffr (Wild)}$ & 1.00 $\pm$ 0.00 & 1.00 $\pm$ 0.00 & 1.00 $\pm$ 0.00 & 1.00 $\pm$ 0.00 & 1.00 $\pm$ 0.00 & 1.00 $\pm$ 0.00 & 1.00 $\pm$ 0.00 & 1.00 $\pm$ \vphantom{3} 0.00\\
\addlinespace[0.6em]
\multicolumn{9}{l}{\textbf{$N = 50,\; n_i = 100$}}\\
\hspace{1em}$\text{1-Step } {\mathbf{R}}_{\text{FPCA}} \otimes {\mathbf{R}}_{\text{Fun}}$ & 0.71 $\pm$ 0.00 & 0.70 $\pm$ 0.01 & 1.16 $\pm$ 0.01 & 0.78 $\pm$ 0.00 & 0.37 $\pm$ 0.00 & 0.83 $\pm$ 0.00 & 0.70 $\pm$ 0.00 & 0.86 $\pm$ 0.00\\
\hspace{1em}$\text{1-Step } {\mathbf{R}}_{\text{Lon}} \otimes {\mathbf{R}}_{\text{Fun}}$ & 0.75 $\pm$ 0.00 & 0.70 $\pm$ 0.00 & 1.07 $\pm$ 0.01 & 0.80 $\pm$ 0.00 & 0.37 $\pm$ 0.00 & 0.83 $\pm$ 0.00 & 0.69 $\pm$ 0.00 & 0.86 $\pm$ 0.00\\
\hspace{1em}$\text{1-Step } I_{n_i} \otimes {\mathbf{R}}_{\text{Fun}}$ & 0.85 $\pm$ 0.00 & 1.02 $\pm$ 0.00 & 1.18 $\pm$ 0.01 & 1.05 $\pm$ 0.00 & 0.42 $\pm$ 0.00 & 0.99 $\pm$ 0.00 & 0.76 $\pm$ 0.00 & 1.00 $\pm$ 0.00\\
\hspace{1em}$\text{1-Step } {\mathbf{R}}_{\text{Lon}} \otimes I_L$ & 0.77 $\pm$ 0.00 & 0.73 $\pm$ 0.01 & 1.05 $\pm$ 0.01 & 0.79 $\pm$ 0.00 & 0.37 $\pm$ 0.00 & 0.83 $\pm$ 0.00 & 0.68 $\pm$ 0.00 & 0.85 $\pm$ 0.00\\
\hspace{1em}$\text{1-Step } I_{n_i} \otimes I_L$ & 0.86 $\pm$ 0.00 & 1.03 $\pm$ 0.00 & 1.16 $\pm$ 0.01 & 1.06 $\pm$ 0.00 & 0.42 $\pm$ 0.00 & 0.99 $\pm$ 0.00 & 0.75 $\pm$ 0.00 & 0.99 $\pm$ 0.00\\
\hspace{1em}$\text{Full-1Step } {\mathbf{R}}_{\text{FPCA}} \otimes {\mathbf{R}}_{\text{Fun}}$ & 0.71 $\pm$ 0.00 & 0.70 $\pm$ 0.01 & 1.34 $\pm$ 0.03 & 0.78 $\pm$ 0.00 & 0.37 $\pm$ 0.00 & 0.83 $\pm$ 0.00 & 0.70 $\pm$ 0.00 & 0.86 $\pm$ 0.00\\
\hspace{1em}$\text{Full-1Step } {\mathbf{R}}_{\text{Lon}} \otimes {\mathbf{R}}_{\text{Fun}}$ & 0.75 $\pm$ 0.00 & 0.70 $\pm$ 0.00 & 1.42 $\pm$ 0.10 & 0.80 $\pm$ 0.00 & 0.37 $\pm$ 0.00 & 0.83 $\pm$ 0.00 & 0.69 $\pm$ 0.00 & 0.86 $\pm$ 0.00\\
\hspace{1em}$\text{Full-1Step } I_{n_i} \otimes {\mathbf{R}}_{\text{Fun}}$ & 0.85 $\pm$ 0.00 & 1.02 $\pm$ 0.00 & 1.19 $\pm$ 0.01 & 1.05 $\pm$ 0.00 & 0.42 $\pm$ 0.00 & 0.99 $\pm$ 0.00 & 0.76 $\pm$ 0.00 & 1.00 $\pm$ 0.00\\
\hspace{1em}$\text{Full-1Step } {\mathbf{R}}_{\text{Lon}} \otimes I_L$ & 0.77 $\pm$ 0.00 & 0.74 $\pm$ 0.01 & 4.05 $\pm$ 0.55 & 0.79 $\pm$ 0.00 & 0.37 $\pm$ 0.00 & 0.83 $\pm$ 0.00 & 0.68 $\pm$ 0.00 & 0.85 $\pm$ 0.00\\
\hspace{1em}$\text{Full-1Step } I_{n_i} \otimes I_L$ & 0.86 $\pm$ 0.00 & 1.03 $\pm$ 0.00 & 1.16 $\pm$ 0.01 & 1.06 $\pm$ 0.00 & 0.42 $\pm$ 0.00 & 0.99 $\pm$ 0.00 & 0.75 $\pm$ 0.00 & 0.99 $\pm$ 0.00\\
\hspace{1em}$\text{pffr (Wild)}$ & 1.00 $\pm$ 0.00 & 1.00 $\pm$ 0.00 & 1.00 $\pm$ 0.00 & 1.00 $\pm$ 0.00 & 1.00 $\pm$ 0.00 & 1.00 $\pm$ 0.00 & 1.00 $\pm$ 0.00 & 1.00 $\pm$ \vphantom{2} 0.00\\
\addlinespace[0.6em]
\multicolumn{9}{l}{\textbf{$N = 500,\; n_i = 5$}}\\
\hspace{1em}$\text{1-Step } {\mathbf{R}}_{\text{FPCA}} \otimes {\mathbf{R}}_{\text{Fun}}$ & 0.25 $\pm$ 0.00 & 0.36 $\pm$ 0.00 & 0.23 $\pm$ 0.00 & 0.33 $\pm$ 0.00 & 0.24 $\pm$ 0.00 & 0.36 $\pm$ 0.00 & 0.22 $\pm$ 0.00 & 0.31 $\pm$ 0.00\\
\hspace{1em}$\text{1-Step } {\mathbf{R}}_{\text{Lon}} \otimes {\mathbf{R}}_{\text{Fun}}$ & 0.25 $\pm$ 0.00 & 0.36 $\pm$ 0.00 & 0.22 $\pm$ 0.00 & 0.33 $\pm$ 0.00 & 0.24 $\pm$ 0.00 & 0.35 $\pm$ 0.00 & 0.21 $\pm$ 0.00 & 0.31 $\pm$ 0.00\\
\hspace{1em}$\text{1-Step } I_{n_i} \otimes {\mathbf{R}}_{\text{Fun}}$ & 0.29 $\pm$ 0.00 & 0.42 $\pm$ 0.00 & 0.25 $\pm$ 0.00 & 0.36 $\pm$ 0.00 & 0.27 $\pm$ 0.00 & 0.40 $\pm$ 0.00 & 0.23 $\pm$ 0.00 & 0.34 $\pm$ 0.00\\
\hspace{1em}$\text{1-Step } {\mathbf{R}}_{\text{Lon}} \otimes I_L$ & 0.25 $\pm$ 0.00 & 0.36 $\pm$ 0.00 & 0.22 $\pm$ 0.00 & 0.32 $\pm$ 0.00 & 0.24 $\pm$ 0.00 & 0.36 $\pm$ 0.00 & 0.21 $\pm$ 0.00 & 0.30 $\pm$ 0.00\\
\hspace{1em}$\text{1-Step } I_{n_i} \otimes I_L$ & 0.28 $\pm$ 0.00 & 0.42 $\pm$ 0.00 & 0.25 $\pm$ 0.00 & 0.36 $\pm$ 0.00 & 0.26 $\pm$ 0.00 & 0.40 $\pm$ 0.00 & 0.23 $\pm$ 0.00 & 0.33 $\pm$ 0.00\\
\hspace{1em}$\text{Full-1Step } {\mathbf{R}}_{\text{FPCA}} \otimes {\mathbf{R}}_{\text{Fun}}$ & 0.25 $\pm$ 0.00 & 0.36 $\pm$ 0.00 & 0.23 $\pm$ 0.00 & 0.33 $\pm$ 0.00 & 0.24 $\pm$ 0.00 & 0.36 $\pm$ 0.00 & 0.22 $\pm$ 0.00 & 0.31 $\pm$ 0.00\\
\hspace{1em}$\text{Full-1Step } {\mathbf{R}}_{\text{Lon}} \otimes {\mathbf{R}}_{\text{Fun}}$ & 0.25 $\pm$ 0.00 & 0.36 $\pm$ 0.00 & 0.22 $\pm$ 0.00 & 0.33 $\pm$ 0.00 & 0.24 $\pm$ 0.00 & 0.35 $\pm$ 0.00 & 0.21 $\pm$ 0.00 & 0.31 $\pm$ 0.00\\
\hspace{1em}$\text{Full-1Step } I_{n_i} \otimes {\mathbf{R}}_{\text{Fun}}$ & 0.29 $\pm$ 0.00 & 0.42 $\pm$ 0.00 & 0.25 $\pm$ 0.00 & 0.36 $\pm$ 0.00 & 0.27 $\pm$ 0.00 & 0.40 $\pm$ 0.00 & 0.23 $\pm$ 0.00 & 0.34 $\pm$ 0.00\\
\hspace{1em}$\text{Full-1Step } {\mathbf{R}}_{\text{Lon}} \otimes I_L$ & 0.25 $\pm$ 0.00 & 0.36 $\pm$ 0.00 & 0.22 $\pm$ 0.00 & 0.32 $\pm$ 0.00 & 0.24 $\pm$ 0.00 & 0.36 $\pm$ 0.00 & 0.21 $\pm$ 0.00 & 0.30 $\pm$ 0.00\\
\hspace{1em}$\text{Full-1Step } I_{n_i} \otimes I_L$ & 0.28 $\pm$ 0.00 & 0.42 $\pm$ 0.00 & 0.25 $\pm$ 0.00 & 0.36 $\pm$ 0.00 & 0.26 $\pm$ 0.00 & 0.40 $\pm$ 0.00 & 0.23 $\pm$ 0.00 & 0.33 $\pm$ 0.00\\
\hspace{1em}$\text{pffr (Wild)}$ & 1.00 $\pm$ 0.00 & 1.00 $\pm$ 0.00 & 1.00 $\pm$ 0.00 & 1.00 $\pm$ 0.00 & 1.00 $\pm$ 0.00 & 1.00 $\pm$ 0.00 & 1.00 $\pm$ 0.00 & 1.00 $\pm$ \vphantom{1} 0.00\\
\addlinespace[0.6em]
\multicolumn{9}{l}{\textbf{$N = 500,\; n_i = 100$}}\\
\hspace{1em}$\text{1-Step } {\mathbf{R}}_{\text{FPCA}} \otimes {\mathbf{R}}_{\text{Fun}}$ & 0.43 $\pm$ 0.00 & 0.58 $\pm$ 0.00 & 0.82 $\pm$ 0.00 & 0.75 $\pm$ 0.00 & 0.12 $\pm$ 0.00 & 0.69 $\pm$ 0.00 & 0.27 $\pm$ 0.00 & 0.68 $\pm$ 0.00\\
\hspace{1em}$\text{1-Step } {\mathbf{R}}_{\text{Lon}} \otimes {\mathbf{R}}_{\text{Fun}}$ & 0.43 $\pm$ 0.00 & 0.57 $\pm$ 0.00 & 0.77 $\pm$ 0.00 & 0.75 $\pm$ 0.00 & 0.12 $\pm$ 0.00 & 0.69 $\pm$ 0.00 & 0.27 $\pm$ 0.00 & 0.68 $\pm$ 0.00\\
\hspace{1em}$\text{1-Step } I_{n_i} \otimes {\mathbf{R}}_{\text{Fun}}$ & 0.46 $\pm$ 0.00 & 0.96 $\pm$ 0.00 & 0.76 $\pm$ 0.00 & 0.92 $\pm$ 0.00 & 0.14 $\pm$ 0.00 & 0.84 $\pm$ 0.00 & 0.30 $\pm$ 0.00 & 0.78 $\pm$ 0.00\\
\hspace{1em}$\text{1-Step } {\mathbf{R}}_{\text{Lon}} \otimes I_L$ & 0.43 $\pm$ 0.00 & 0.58 $\pm$ 0.00 & 0.75 $\pm$ 0.00 & 0.74 $\pm$ 0.00 & 0.12 $\pm$ 0.00 & 0.69 $\pm$ 0.00 & 0.27 $\pm$ 0.00 & 0.68 $\pm$ 0.00\\
\hspace{1em}$\text{1-Step } I_{n_i} \otimes I_L$ & 0.46 $\pm$ 0.00 & 0.97 $\pm$ 0.00 & 0.75 $\pm$ 0.00 & 0.92 $\pm$ 0.00 & 0.13 $\pm$ 0.00 & 0.85 $\pm$ 0.00 & 0.30 $\pm$ 0.00 & 0.78 $\pm$ 0.00\\
\hspace{1em}$\text{Full-1Step } {\mathbf{R}}_{\text{FPCA}} \otimes {\mathbf{R}}_{\text{Fun}}$ & 0.43 $\pm$ 0.00 & 0.58 $\pm$ 0.00 & 0.82 $\pm$ 0.00 & 0.75 $\pm$ 0.00 & 0.12 $\pm$ 0.00 & 0.69 $\pm$ 0.00 & 0.27 $\pm$ 0.00 & 0.68 $\pm$ 0.00\\
\hspace{1em}$\text{Full-1Step } {\mathbf{R}}_{\text{Lon}} \otimes {\mathbf{R}}_{\text{Fun}}$ & 0.43 $\pm$ 0.00 & 0.57 $\pm$ 0.00 & 0.77 $\pm$ 0.00 & 0.75 $\pm$ 0.00 & 0.12 $\pm$ 0.00 & 0.69 $\pm$ 0.00 & 0.27 $\pm$ 0.00 & 0.68 $\pm$ 0.00\\
\hspace{1em}$\text{Full-1Step } I_{n_i} \otimes {\mathbf{R}}_{\text{Fun}}$ & 0.46 $\pm$ 0.00 & 0.96 $\pm$ 0.00 & 0.76 $\pm$ 0.00 & 0.92 $\pm$ 0.00 & 0.14 $\pm$ 0.00 & 0.84 $\pm$ 0.00 & 0.30 $\pm$ 0.00 & 0.78 $\pm$ 0.00\\
\hspace{1em}$\text{Full-1Step } {\mathbf{R}}_{\text{Lon}} \otimes I_L$ & 0.43 $\pm$ 0.00 & 0.58 $\pm$ 0.00 & 0.75 $\pm$ 0.00 & 0.74 $\pm$ 0.00 & 0.12 $\pm$ 0.00 & 0.69 $\pm$ 0.00 & 0.27 $\pm$ 0.00 & 0.68 $\pm$ 0.00\\
\hspace{1em}$\text{Full-1Step } I_{n_i} \otimes I_L$ & 0.46 $\pm$ 0.00 & 0.97 $\pm$ 0.00 & 0.75 $\pm$ 0.00 & 0.92 $\pm$ 0.00 & 0.13 $\pm$ 0.00 & 0.85 $\pm$ 0.00 & 0.30 $\pm$ 0.00 & 0.78 $\pm$ 0.00\\
\hspace{1em}$\text{pffr (Wild)}$ & 1.00 $\pm$ 0.00 & 1.00 $\pm$ 0.00 & 1.00 $\pm$ 0.00 & 1.00 $\pm$ 0.00 & 1.00 $\pm$ 0.00 & 1.00 $\pm$ 0.00 & 1.00 $\pm$ 0.00 & 1.00 $\pm$ 0.00\\
\bottomrule
\end{tabular}}
\end{table}

\begin{table}[!h]
\centering
\caption{\label{tab:time_supp} \footnotesize Computation time in seconds (mean $\pm$ SE) averaged across 300 simulation replicates (SE$=0.00$ indicates a value $<0.01$).}
\centering
\resizebox{\ifdim\width>\linewidth\linewidth\else\width\fi}{!}{
\fontsize{9}{11}\selectfont
\begin{tabular}[t]{>{\raggedright\arraybackslash}p{4.8cm}>{\raggedright\arraybackslash}p{2.1cm}>{\raggedright\arraybackslash}p{2.1cm}>{\raggedright\arraybackslash}p{2.1cm}>{\raggedright\arraybackslash}p{2.1cm}>{\raggedright\arraybackslash}p{2.1cm}>{\raggedright\arraybackslash}p{2.1cm}>{\raggedright\arraybackslash}p{2.1cm}>{\raggedright\arraybackslash}p{2.1cm}}
\toprule
\multicolumn{1}{c}{ } & \multicolumn{4}{c}{Exchangeable} & \multicolumn{4}{c}{AR(1)} \\
\cmidrule(l{3pt}r{3pt}){2-5} \cmidrule(l{3pt}r{3pt}){6-9}
Method & Gaussian & Poisson & Binomial & Gamma & Gaussian & Poisson & Binomial & Gamma\\
\midrule
\addlinespace[0em]
\multicolumn{9}{l}{\textbf{$N = 25,\; n_i = 5$}}\\
\hspace{1em}$\text{1-Step } {\mathbf{R}}_{\text{FPCA}} \otimes {\mathbf{R}}_{\text{Fun}}$ & 2.57 $\pm$ 0.06 & 2.86 $\pm$ 0.06 & 2.82 $\pm$ 0.05 & 2.84 $\pm$ 0.07 & 2.44 $\pm$ 0.01 & 2.67 $\pm$ 0.02 & 2.79 $\pm$ 0.03 & 3.09 $\pm$ 0.05\\
\hspace{1em}$\text{1-Step } {\mathbf{R}}_{\text{Lon}} \otimes {\mathbf{R}}_{\text{Fun}}$ & 2.22 $\pm$ 0.05 & 2.47 $\pm$ 0.05 & 2.49 $\pm$ 0.05 & 2.36 $\pm$ 0.05 & 2.12 $\pm$ 0.01 & 2.25 $\pm$ 0.01 & 2.40 $\pm$ 0.02 & 2.52 $\pm$ 0.04\\
\hspace{1em}$\text{1-Step } I_{n_i} \otimes {\mathbf{R}}_{\text{Fun}}$ & 1.16 $\pm$ 0.02 & 1.37 $\pm$ 0.03 & 1.38 $\pm$ 0.03 & 1.35 $\pm$ 0.03 & 1.08 $\pm$ 0.00 & 1.24 $\pm$ 0.01 & 1.33 $\pm$ 0.01 & 1.45 $\pm$ 0.02\\
\hspace{1em}$\text{1-Step } {\mathbf{R}}_{\text{Lon}} \otimes I_L$ & 4.57 $\pm$ 0.12 & 4.71 $\pm$ 0.08 & 4.75 $\pm$ 0.08 & 4.66 $\pm$ 0.12 & 4.32 $\pm$ 0.02 & 4.67 $\pm$ 0.04 & 5.01 $\pm$ 0.06 & 5.28 $\pm$ 0.10\\
\hspace{1em}$\text{1-Step } I_{n_i} \otimes I_L$ & 0.66 $\pm$ 0.01 & 0.83 $\pm$ 0.02 & 0.84 $\pm$ 0.02 & 0.84 $\pm$ 0.02 & 0.64 $\pm$ 0.00 & 0.74 $\pm$ 0.00 & 0.80 $\pm$ 0.01 & 0.86 $\pm$ 0.01\\
\hspace{1em}$\text{Full-1Step } {\mathbf{R}}_{\text{FPCA}} \otimes {\mathbf{R}}_{\text{Fun}}$ & 10.00 $\pm$ 0.35 & 10.83 $\pm$ 0.59 & 15.58 $\pm$ 0.95 & 17.40 $\pm$ 0.72 & 7.90 $\pm$ 0.10 & 9.32 $\pm$ 0.58 & 11.51 $\pm$ 0.66 & 21.49 $\pm$ 0.86\\
\hspace{1em}$\text{Full-1Step } {\mathbf{R}}_{\text{Lon}} \otimes {\mathbf{R}}_{\text{Fun}}$ & 7.74 $\pm$ 0.17 & 7.94 $\pm$ 0.18 & 12.56 $\pm$ 0.53 & 15.99 $\pm$ 0.59 & 6.96 $\pm$ 0.05 & 8.07 $\pm$ 0.28 & 11.95 $\pm$ 0.50 & 17.02 $\pm$ 0.45\\
\hspace{1em}$\text{Full-1Step } I_{n_i} \otimes {\mathbf{R}}_{\text{Fun}}$ & 3.37 $\pm$ 0.07 & 3.22 $\pm$ 0.07 & 4.60 $\pm$ 0.20 & 6.95 $\pm$ 0.25 & 3.08 $\pm$ 0.02 & 4.37 $\pm$ 0.37 & 6.17 $\pm$ 0.52 & 6.73 $\pm$ 0.22\\
\hspace{1em}$\text{Full-1Step } {\mathbf{R}}_{\text{Lon}} \otimes I_L$ & 12.98 $\pm$ 0.28 & 35.67 $\pm$ 2.85 & 32.97 $\pm$ 2.17 & 27.24 $\pm$ 1.05 & 11.99 $\pm$ 0.07 & 17.72 $\pm$ 0.72 & 24.82 $\pm$ 0.83 & 32.02 $\pm$ 1.09\\
\hspace{1em}$\text{Full-1Step } I_{n_i} \otimes I_L$ & 0.87 $\pm$ 0.02 & 1.08 $\pm$ 0.02 & 1.33 $\pm$ 0.04 & 1.43 $\pm$ 0.04 & 0.84 $\pm$ 0.00 & 4.08 $\pm$ 0.70 & 5.03 $\pm$ 0.90 & 1.46 $\pm$ 0.03\\
\hspace{1em}$\text{pffr (Wild)}$ & 2.59 $\pm$ 0.05 & 2.84 $\pm$ 0.06 & 2.78 $\pm$ 0.05 & 2.82 $\pm$ 0.06 & 2.42 $\pm$ 0.01 & 2.61 $\pm$ 0.02 & 2.63 $\pm$ 0.02 & 3.04 $\pm$ 0.05\\
\addlinespace[0.6em]
\multicolumn{9}{l}{\textbf{$N = 25,\; n_i = 100$}}\\
\hspace{1em}$\text{1-Step } {\mathbf{R}}_{\text{FPCA}} \otimes {\mathbf{R}}_{\text{Fun}}$ & 9.95 $\pm$ 0.02 & 11.63 $\pm$ 0.02 & 11.68 $\pm$ 0.11 & 11.70 $\pm$ 0.03 & 13.16 $\pm$ 0.25 & 11.64 $\pm$ 0.04 & 11.77 $\pm$ 0.07 & 11.46 $\pm$ 0.04\\
\hspace{1em}$\text{1-Step } {\mathbf{R}}_{\text{Lon}} \otimes {\mathbf{R}}_{\text{Fun}}$ & 6.94 $\pm$ 0.02 & 8.50 $\pm$ 0.02 & 8.65 $\pm$ 0.08 & 8.56 $\pm$ 0.03 & 8.96 $\pm$ 0.17 & 8.29 $\pm$ 0.04 & 8.51 $\pm$ 0.06 & 8.24 $\pm$ 0.03\\
\hspace{1em}$\text{1-Step } I_{n_i} \otimes {\mathbf{R}}_{\text{Fun}}$ & 10.71 $\pm$ 0.02 & 12.13 $\pm$ 0.02 & 12.50 $\pm$ 0.11 & 12.02 $\pm$ 0.03 & 14.09 $\pm$ 0.26 & 11.86 $\pm$ 0.04 & 12.27 $\pm$ 0.08 & 11.57 $\pm$ 0.04\\
\hspace{1em}$\text{1-Step } {\mathbf{R}}_{\text{Lon}} \otimes I_L$ & 7.43 $\pm$ 0.02 & 9.15 $\pm$ 0.02 & 9.38 $\pm$ 0.08 & 9.18 $\pm$ 0.03 & 9.91 $\pm$ 0.19 & 9.03 $\pm$ 0.04 & 9.27 $\pm$ 0.06 & 8.84 $\pm$ 0.03\\
\hspace{1em}$\text{1-Step } I_{n_i} \otimes I_L$ & 4.21 $\pm$ 0.02 & 5.40 $\pm$ 0.02 & 5.80 $\pm$ 0.05 & 5.40 $\pm$ 0.03 & 4.59 $\pm$ 0.08 & 5.15 $\pm$ 0.03 & 5.68 $\pm$ 0.04 & 5.08 $\pm$ 0.03\\
\hspace{1em}$\text{Full-1Step } {\mathbf{R}}_{\text{FPCA}} \otimes {\mathbf{R}}_{\text{Fun}}$ & 56.83 $\pm$ 2.01 & 45.68 $\pm$ 1.38 & 207.23 $\pm$ 7.14 & 96.14 $\pm$ 3.85 & 30.01 $\pm$ 1.10 & 26.63 $\pm$ 1.58 & 30.59 $\pm$ 1.90 & 32.52 $\pm$ 0.19\\
\hspace{1em}$\text{Full-1Step } {\mathbf{R}}_{\text{Lon}} \otimes {\mathbf{R}}_{\text{Fun}}$ & 20.78 $\pm$ 0.15 & 22.20 $\pm$ 0.20 & 71.49 $\pm$ 3.56 & 36.57 $\pm$ 0.96 & 17.17 $\pm$ 0.35 & 15.42 $\pm$ 0.08 & 18.26 $\pm$ 0.15 & 19.11 $\pm$ 0.10\\
\hspace{1em}$\text{Full-1Step } I_{n_i} \otimes {\mathbf{R}}_{\text{Fun}}$ & 37.38 $\pm$ 0.28 & 32.62 $\pm$ 0.21 & 46.32 $\pm$ 1.04 & 67.82 $\pm$ 2.22 & 29.05 $\pm$ 0.56 & 26.49 $\pm$ 1.37 & 30.41 $\pm$ 0.25 & 32.05 $\pm$ 0.93\\
\hspace{1em}$\text{Full-1Step } {\mathbf{R}}_{\text{Lon}} \otimes I_L$ & 18.22 $\pm$ 0.12 & 45.69 $\pm$ 3.03 & 117.68 $\pm$ 5.27 & 33.48 $\pm$ 0.94 & 19.08 $\pm$ 0.72 & 17.58 $\pm$ 0.62 & 20.11 $\pm$ 0.15 & 20.39 $\pm$ 0.10\\
\hspace{1em}$\text{Full-1Step } I_{n_i} \otimes I_L$ & 4.63 $\pm$ 0.03 & 6.02 $\pm$ 0.03 & 7.94 $\pm$ 0.14 & 8.11 $\pm$ 0.11 & 4.84 $\pm$ 0.09 & 5.53 $\pm$ 0.03 & 6.79 $\pm$ 0.05 & 5.95 $\pm$ 0.03\\
\hspace{1em}$\text{pffr (Wild)}$ & 6.69 $\pm$ 0.01 & 7.11 $\pm$ 0.02 & 6.90 $\pm$ 0.06 & 7.56 $\pm$ 0.03 & 8.24 $\pm$ 0.14 & 6.95 $\pm$ 0.03 & 6.74 $\pm$ 0.04 & 7.37 $\pm$ 0.04\\
\addlinespace[0.6em]
\multicolumn{9}{l}{\textbf{$N = 50,\; n_i = 5$}}\\
\hspace{1em}$\text{1-Step } {\mathbf{R}}_{\text{FPCA}} \otimes {\mathbf{R}}_{\text{Fun}}$ & 3.83 $\pm$ 0.07 & 3.75 $\pm$ 0.03 & 3.98 $\pm$ 0.01 & 3.88 $\pm$ 0.09 & 3.61 $\pm$ 0.01 & 3.92 $\pm$ 0.03 & 4.13 $\pm$ 0.02 & 3.99 $\pm$ 0.03\\
\hspace{1em}$\text{1-Step } {\mathbf{R}}_{\text{Lon}} \otimes {\mathbf{R}}_{\text{Fun}}$ & 3.57 $\pm$ 0.06 & 3.73 $\pm$ 0.03 & 3.50 $\pm$ 0.01 & 3.76 $\pm$ 0.07 & 3.40 $\pm$ 0.01 & 3.83 $\pm$ 0.02 & 3.83 $\pm$ 0.02 & 3.93 $\pm$ 0.02\\
\hspace{1em}$\text{1-Step } I_{n_i} \otimes {\mathbf{R}}_{\text{Fun}}$ & 1.80 $\pm$ 0.03 & 1.97 $\pm$ 0.02 & 2.03 $\pm$ 0.01 & 2.20 $\pm$ 0.05 & 1.71 $\pm$ 0.01 & 1.93 $\pm$ 0.02 & 1.89 $\pm$ 0.01 & 2.14 $\pm$ 0.02\\
\hspace{1em}$\text{1-Step } {\mathbf{R}}_{\text{Lon}} \otimes I_L$ & 7.71 $\pm$ 0.12 & 7.54 $\pm$ 0.10 & 7.55 $\pm$ 0.02 & 7.88 $\pm$ 0.27 & 7.51 $\pm$ 0.03 & 7.98 $\pm$ 0.06 & 7.87 $\pm$ 0.04 & 8.09 $\pm$ 0.06\\
\hspace{1em}$\text{1-Step } I_{n_i} \otimes I_L$ & 0.84 $\pm$ 0.01 & 0.94 $\pm$ 0.01 & 1.00 $\pm$ 0.00 & 1.00 $\pm$ 0.01 & 0.80 $\pm$ 0.00 & 0.96 $\pm$ 0.00 & 1.20 $\pm$ 0.01 & 1.01 $\pm$ 0.00\\
\hspace{1em}$\text{Full-1Step } {\mathbf{R}}_{\text{FPCA}} \otimes {\mathbf{R}}_{\text{Fun}}$ & 11.42 $\pm$ 0.21 & 11.95 $\pm$ 0.70 & 14.44 $\pm$ 0.53 & 19.91 $\pm$ 0.83 & 10.92 $\pm$ 0.53 & 11.43 $\pm$ 0.67 & 14.28 $\pm$ 0.71 & 21.82 $\pm$ 0.87\\
\hspace{1em}$\text{Full-1Step } {\mathbf{R}}_{\text{Lon}} \otimes {\mathbf{R}}_{\text{Fun}}$ & 11.94 $\pm$ 0.22 & 11.80 $\pm$ 0.16 & 16.00 $\pm$ 0.28 & 21.21 $\pm$ 0.68 & 10.61 $\pm$ 0.06 & 11.65 $\pm$ 0.09 & 16.18 $\pm$ 0.30 & 21.78 $\pm$ 0.34\\
\hspace{1em}$\text{Full-1Step } I_{n_i} \otimes {\mathbf{R}}_{\text{Fun}}$ & 5.23 $\pm$ 0.09 & 4.68 $\pm$ 0.06 & 7.16 $\pm$ 0.09 & 9.72 $\pm$ 0.31 & 4.53 $\pm$ 0.03 & 4.63 $\pm$ 0.03 & 6.99 $\pm$ 0.10 & 9.06 $\pm$ 0.16\\
\hspace{1em}$\text{Full-1Step } {\mathbf{R}}_{\text{Lon}} \otimes I_L$ & 20.36 $\pm$ 0.31 & 43.17 $\pm$ 3.27 & 37.27 $\pm$ 1.85 & 35.95 $\pm$ 0.90 & 19.67 $\pm$ 0.13 & 24.21 $\pm$ 0.86 & 31.17 $\pm$ 0.28 & 37.60 $\pm$ 0.39\\
\hspace{1em}$\text{Full-1Step } I_{n_i} \otimes I_L$ & 1.09 $\pm$ 0.02 & 1.20 $\pm$ 0.01 & 1.43 $\pm$ 0.01 & 1.57 $\pm$ 0.03 & 1.02 $\pm$ 0.00 & 1.17 $\pm$ 0.00 & 1.42 $\pm$ 0.01 & 1.51 $\pm$ 0.01\\
\hspace{1em}$\text{pffr (Wild)}$ & 2.98 $\pm$ 0.05 & 2.91 $\pm$ 0.03 & 2.85 $\pm$ 0.01 & 2.92 $\pm$ 0.06 & 2.79 $\pm$ 0.01 & 2.99 $\pm$ 0.03 & 2.84 $\pm$ 0.01 & 3.03 $\pm$ 0.03\\
\addlinespace[0.6em]
\multicolumn{9}{l}{\textbf{$N = 50,\; n_i = 100$}}\\
\hspace{1em}$\text{1-Step } {\mathbf{R}}_{\text{FPCA}} \otimes {\mathbf{R}}_{\text{Fun}}$ & 19.34 $\pm$ 0.07 & 22.17 $\pm$ 0.04 & 22.91 $\pm$ 0.32 & 22.30 $\pm$ 0.06 & 20.63 $\pm$ 0.28 & 26.92 $\pm$ 0.68 & 27.90 $\pm$ 0.53 & 21.29 $\pm$ 0.09\\
\hspace{1em}$\text{1-Step } {\mathbf{R}}_{\text{Lon}} \otimes {\mathbf{R}}_{\text{Fun}}$ & 13.07 $\pm$ 0.07 & 16.17 $\pm$ 0.03 & 17.13 $\pm$ 0.24 & 16.31 $\pm$ 0.06 & 13.65 $\pm$ 0.20 & 19.73 $\pm$ 0.55 & 20.65 $\pm$ 0.40 & 15.25 $\pm$ 0.07\\
\hspace{1em}$\text{1-Step } I_{n_i} \otimes {\mathbf{R}}_{\text{Fun}}$ & 19.82 $\pm$ 0.06 & 23.36 $\pm$ 0.04 & 24.25 $\pm$ 0.34 & 23.14 $\pm$ 0.06 & 20.96 $\pm$ 0.30 & 28.06 $\pm$ 0.65 & 29.24 $\pm$ 0.54 & 21.59 $\pm$ 0.09\\
\hspace{1em}$\text{1-Step } {\mathbf{R}}_{\text{Lon}} \otimes I_L$ & 13.93 $\pm$ 0.07 & 17.90 $\pm$ 0.04 & 18.27 $\pm$ 0.25 & 17.76 $\pm$ 0.06 & 14.44 $\pm$ 0.22 & 21.12 $\pm$ 0.50 & 21.52 $\pm$ 0.40 & 16.67 $\pm$ 0.07\\
\hspace{1em}$\text{1-Step } I_{n_i} \otimes I_L$ & 7.23 $\pm$ 0.06 & 10.67 $\pm$ 0.04 & 10.58 $\pm$ 0.14 & 10.20 $\pm$ 0.04 & 6.65 $\pm$ 0.09 & 11.41 $\pm$ 0.26 & 12.51 $\pm$ 0.21 & 9.35 $\pm$ 0.05\\
\hspace{1em}$\text{Full-1Step } {\mathbf{R}}_{\text{FPCA}} \otimes {\mathbf{R}}_{\text{Fun}}$ & 68.17 $\pm$ 1.87 & 67.70 $\pm$ 0.69 & 312.93 $\pm$ 14.10 & 113.21 $\pm$ 3.60 & 40.40 $\pm$ 2.06 & 57.31 $\pm$ 2.18 & 67.09 $\pm$ 2.14 & 55.01 $\pm$ 0.34\\
\hspace{1em}$\text{Full-1Step } {\mathbf{R}}_{\text{Lon}} \otimes {\mathbf{R}}_{\text{Fun}}$ & 31.44 $\pm$ 0.17 & 38.99 $\pm$ 0.29 & 94.21 $\pm$ 4.64 & 54.94 $\pm$ 0.73 & 24.47 $\pm$ 0.40 & 35.64 $\pm$ 0.86 & 41.81 $\pm$ 0.93 & 33.23 $\pm$ 0.20\\
\hspace{1em}$\text{Full-1Step } I_{n_i} \otimes {\mathbf{R}}_{\text{Fun}}$ & 56.71 $\pm$ 0.33 & 59.48 $\pm$ 0.32 & 81.54 $\pm$ 1.35 & 99.65 $\pm$ 1.45 & 44.86 $\pm$ 2.11 & 55.77 $\pm$ 1.25 & 70.31 $\pm$ 1.53 & 54.55 $\pm$ 0.34\\
\hspace{1em}$\text{Full-1Step } {\mathbf{R}}_{\text{Lon}} \otimes I_L$ & 30.35 $\pm$ 0.16 & 68.70 $\pm$ 4.12 & 167.87 $\pm$ 8.92 & 54.21 $\pm$ 1.20 & 26.07 $\pm$ 0.38 & 39.29 $\pm$ 0.87 & 51.02 $\pm$ 3.07 & 36.34 $\pm$ 0.22\\
\hspace{1em}$\text{Full-1Step } I_{n_i} \otimes I_L$ & 7.95 $\pm$ 0.05 & 10.88 $\pm$ 0.06 & 14.01 $\pm$ 0.18 & 12.72 $\pm$ 0.06 & 7.55 $\pm$ 0.09 & 12.54 $\pm$ 0.32 & 14.86 $\pm$ 0.33 & 10.51 $\pm$ 0.05\\
\hspace{1em}$\text{pffr (Wild)}$ & 9.67 $\pm$ 0.03 & 11.32 $\pm$ 0.02 & 10.10 $\pm$ 0.12 & 11.29 $\pm$ 0.04 & 9.97 $\pm$ 0.13 & 12.77 $\pm$ 0.36 & 11.94 $\pm$ 0.20 & 10.54 $\pm$ 0.07\\
\addlinespace[0.6em]
\multicolumn{9}{l}{\textbf{$N = 500,\; n_i = 5$}}\\
\hspace{1em}$\text{1-Step } {\mathbf{R}}_{\text{FPCA}} \otimes {\mathbf{R}}_{\text{Fun}}$ & 23.05 $\pm$ 0.20 & 25.20 $\pm$ 0.33 & 25.11 $\pm$ 0.20 & 27.96 $\pm$ 0.52 & 23.19 $\pm$ 0.23 & 29.90 $\pm$ 0.61 & 25.51 $\pm$ 0.05 & 29.89 $\pm$ 0.62\\
\hspace{1em}$\text{1-Step } {\mathbf{R}}_{\text{Lon}} \otimes {\mathbf{R}}_{\text{Fun}}$ & 30.28 $\pm$ 0.27 & 32.41 $\pm$ 0.43 & 31.90 $\pm$ 0.26 & 36.13 $\pm$ 0.69 & 29.85 $\pm$ 0.31 & 38.94 $\pm$ 0.79 & 32.75 $\pm$ 0.07 & 38.53 $\pm$ 0.80\\
\hspace{1em}$\text{1-Step } I_{n_i} \otimes {\mathbf{R}}_{\text{Fun}}$ & 12.59 $\pm$ 0.10 & 14.28 $\pm$ 0.17 & 14.55 $\pm$ 0.11 & 15.99 $\pm$ 0.27 & 12.47 $\pm$ 0.11 & 16.52 $\pm$ 0.30 & 14.49 $\pm$ 0.03 & 16.63 $\pm$ 0.31\\
\hspace{1em}$\text{1-Step } {\mathbf{R}}_{\text{Lon}} \otimes I_L$ & 56.86 $\pm$ 0.46 & 60.19 $\pm$ 0.71 & 59.45 $\pm$ 0.44 & 66.96 $\pm$ 1.09 & 57.88 $\pm$ 0.50 & 71.17 $\pm$ 1.27 & 61.01 $\pm$ 0.14 & 70.02 $\pm$ 1.30\\
\hspace{1em}$\text{1-Step } I_{n_i} \otimes I_L$ & 3.42 $\pm$ 0.03 & 5.11 $\pm$ 0.06 & 5.38 $\pm$ 0.04 & 5.72 $\pm$ 0.10 & 3.39 $\pm$ 0.03 & 5.79 $\pm$ 0.11 & 5.36 $\pm$ 0.01 & 5.88 $\pm$ 0.11\\
\hspace{1em}$\text{Full-1Step } {\mathbf{R}}_{\text{FPCA}} \otimes {\mathbf{R}}_{\text{Fun}}$ & 54.62 $\pm$ 3.02 & 66.19 $\pm$ 0.92 & 77.71 $\pm$ 0.81 & 96.22 $\pm$ 3.20 & 54.72 $\pm$ 3.02 & 77.05 $\pm$ 1.60 & 76.86 $\pm$ 0.35 & 115.07 $\pm$ 9.36\\
\hspace{1em}$\text{Full-1Step } {\mathbf{R}}_{\text{Lon}} \otimes {\mathbf{R}}_{\text{Fun}}$ & 70.57 $\pm$ 0.65 & 89.43 $\pm$ 1.20 & 104.10 $\pm$ 1.00 & 122.85 $\pm$ 2.35 & 70.65 $\pm$ 0.73 & 102.98 $\pm$ 2.20 & 104.37 $\pm$ 0.49 & 131.59 $\pm$ 2.72\\
\hspace{1em}$\text{Full-1Step } I_{n_i} \otimes {\mathbf{R}}_{\text{Fun}}$ & 25.21 $\pm$ 0.21 & 29.30 $\pm$ 0.38 & 37.51 $\pm$ 0.29 & 42.66 $\pm$ 0.76 & 24.35 $\pm$ 0.23 & 33.84 $\pm$ 0.63 & 36.19 $\pm$ 0.16 & 43.34 $\pm$ 0.83\\
\hspace{1em}$\text{Full-1Step } {\mathbf{R}}_{\text{Lon}} \otimes I_L$ & 122.42 $\pm$ 1.26 & 162.84 $\pm$ 2.16 & 200.21 $\pm$ 1.59 & 218.75 $\pm$ 3.65 & 131.24 $\pm$ 8.02 & 183.09 $\pm$ 3.39 & 193.70 $\pm$ 0.87 & 240.47 $\pm$ 9.26\\
\hspace{1em}$\text{Full-1Step } I_{n_i} \otimes I_L$ & 4.14 $\pm$ 0.03 & 6.16 $\pm$ 0.07 & 7.24 $\pm$ 0.05 & 7.82 $\pm$ 0.13 & 4.11 $\pm$ 0.03 & 6.94 $\pm$ 0.13 & 6.98 $\pm$ 0.02 & 7.90 $\pm$ 0.14\\
\hspace{1em}$\text{pffr (Wild)}$ & 6.58 $\pm$ 0.05 & 7.16 $\pm$ 0.07 & 6.90 $\pm$ 0.04 & 8.15 $\pm$ 0.12 & 6.31 $\pm$ 0.05 & 8.21 $\pm$ 0.13 & 6.95 $\pm$ 0.02 & 8.28 $\pm$ 0.13\\
\addlinespace[0.6em]
\multicolumn{9}{l}{\textbf{$N = 500,\; n_i = 100$}}\\
\hspace{1em}$\text{1-Step } {\mathbf{R}}_{\text{FPCA}} \otimes {\mathbf{R}}_{\text{Fun}}$ & 161.15 $\pm$ 0.42 & 195.50 $\pm$ 1.02 & 244.10 $\pm$ 4.67 & 184.74 $\pm$ 0.46 & 173.06 $\pm$ 2.12 & 193.85 $\pm$ 0.43 & 232.68 $\pm$ 4.23 & 189.82 $\pm$ 0.88\\
\hspace{1em}$\text{1-Step } {\mathbf{R}}_{\text{Lon}} \otimes {\mathbf{R}}_{\text{Fun}}$ & 111.85 $\pm$ 0.27 & 149.61 $\pm$ 0.84 & 198.96 $\pm$ 3.74 & 148.93 $\pm$ 0.34 & 113.00 $\pm$ 1.58 & 148.51 $\pm$ 0.36 & 189.01 $\pm$ 3.48 & 148.69 $\pm$ 0.64\\
\hspace{1em}$\text{1-Step } I_{n_i} \otimes {\mathbf{R}}_{\text{Fun}}$ & 178.16 $\pm$ 0.37 & 215.29 $\pm$ 1.16 & 268.15 $\pm$ 5.33 & 212.97 $\pm$ 0.54 & 189.46 $\pm$ 2.57 & 210.95 $\pm$ 0.49 & 256.99 $\pm$ 4.94 & 208.39 $\pm$ 0.83\\
\hspace{1em}$\text{1-Step } {\mathbf{R}}_{\text{Lon}} \otimes I_L$ & 118.14 $\pm$ 0.35 & 151.87 $\pm$ 0.98 & 213.95 $\pm$ 4.25 & 157.47 $\pm$ 0.40 & 123.12 $\pm$ 1.61 & 150.03 $\pm$ 0.34 & 203.80 $\pm$ 3.94 & 162.38 $\pm$ 0.77\\
\hspace{1em}$\text{1-Step } I_{n_i} \otimes I_L$ & 54.11 $\pm$ 0.25 & 81.74 $\pm$ 0.57 & 122.54 $\pm$ 2.24 & 82.57 $\pm$ 0.40 & 59.04 $\pm$ 0.74 & 84.63 $\pm$ 0.46 & 104.11 $\pm$ 1.97 & 93.08 $\pm$ 0.56\\
\hspace{1em}$\text{Full-1Step } {\mathbf{R}}_{\text{FPCA}} \otimes {\mathbf{R}}_{\text{Fun}}$ & 311.65 $\pm$ 1.21 & 419.43 $\pm$ 2.76 & 813.43 $\pm$ 18.60 & 511.77 $\pm$ 17.09 & 282.71 $\pm$ 3.95 & 342.37 $\pm$ 0.99 & 441.93 $\pm$ 8.97 & 390.50 $\pm$ 2.31\\
\hspace{1em}$\text{Full-1Step } {\mathbf{R}}_{\text{Lon}} \otimes {\mathbf{R}}_{\text{Fun}}$ & 194.44 $\pm$ 0.38 & 264.40 $\pm$ 1.82 & 449.09 $\pm$ 10.68 & 303.22 $\pm$ 1.33 & 179.31 $\pm$ 2.83 & 223.57 $\pm$ 0.77 & 295.21 $\pm$ 6.39 & 253.62 $\pm$ 1.38\\
\hspace{1em}$\text{Full-1Step } I_{n_i} \otimes {\mathbf{R}}_{\text{Fun}}$ & 371.03 $\pm$ 0.74 & 421.48 $\pm$ 3.08 & 661.56 $\pm$ 14.27 & 549.75 $\pm$ 2.84 & 324.73 $\pm$ 4.74 & 364.12 $\pm$ 1.93 & 500.39 $\pm$ 10.45 & 416.14 $\pm$ 1.90\\
\hspace{1em}$\text{Full-1Step } {\mathbf{R}}_{\text{Lon}} \otimes I_L$ & 210.18 $\pm$ 0.88 & 328.53 $\pm$ 12.12 & 538.08 $\pm$ 16.18 & 359.53 $\pm$ 11.97 & 213.07 $\pm$ 3.01 & 263.79 $\pm$ 1.25 & 336.11 $\pm$ 6.46 & 288.10 $\pm$ 1.40\\
\hspace{1em}$\text{Full-1Step } I_{n_i} \otimes I_L$ & 65.97 $\pm$ 0.28 & 92.03 $\pm$ 0.60 & 125.46 $\pm$ 2.23 & 97.32 $\pm$ 0.30 & 62.81 $\pm$ 1.08 & 88.82 $\pm$ 0.39 & 118.41 $\pm$ 1.93 & 91.37 $\pm$ 0.48\\
\hspace{1em}$\text{pffr (Wild)}$ & 49.35 $\pm$ 0.22 & 56.25 $\pm$ 0.32 & 61.14 $\pm$ 1.00 & 53.87 $\pm$ 0.13 & 51.59 $\pm$ 0.53 & 54.28 $\pm$ 0.18 & 59.47 $\pm$ 0.90 & 56.62 $\pm$ 0.25\\
\bottomrule
\end{tabular}}
\end{table}

\clearpage

\subsection{Fully-Iterated FGEE Comparisons}\label{app:full_sims}
\begin{table}[!h]
\centering
\caption{\footnotesize Functional Coefficient Estimation Performance (RMSE) of each method relative to the $\texttt{pffr}$ fit ($\text{RMSE}/\text{RMSE}_{\text{pffr}}$).
    Cells contain the average of 300 replicates $\pm$ SE (SE$=0.00$ indicates a value $<0.01$). Outcomes were simulated with an $\mathbf{R}^* = \mathbf{R}^*_{\text{Lon}} \otimes \mathbf{R}^*_{\text{Fun}}$, where $\mathbf{R}^*_{\text{Fun}}$ had an AR1 structure and the table columns indicate results where $\mathbf{R}^*_{\text{Lon}}$ had exchangeable or AR1 correlation. The ``1-Step'' indicates a one-step was used for tuning and final coefficient estimation, and ``Full-1step'' indicates one-step tuning and a fully-iterated fGEE for final coefficient estimation, with the indicated working correlation. $*$ indicates that values $\geq100$ were removed from that cell to avoid skewing the mean.}
\centering
\resizebox{\ifdim\width>\linewidth\linewidth\else\width\fi}{!}{
\fontsize{9}{11}\selectfont
\begin{tabular}[t]{>{\raggedright\arraybackslash}p{4.8cm}>{\raggedright\arraybackslash}p{2.1cm}>{\raggedright\arraybackslash}p{2.1cm}>{\raggedright\arraybackslash}p{2.1cm}>{\raggedright\arraybackslash}p{2.1cm}>{\raggedright\arraybackslash}p{2.1cm}>{\raggedright\arraybackslash}p{2.1cm}}
\toprule
\multicolumn{1}{c}{ } & \multicolumn{3}{c}{Exchangeable} & \multicolumn{3}{c}{AR(1)} \\
\cmidrule(l{3pt}r{3pt}){2-4} \cmidrule(l{3pt}r{3pt}){5-7}
Method & Gaussian & Poisson & Binomial & Gaussian & Poisson & Binomial\\
\midrule
\addlinespace[0em]
\multicolumn{7}{l}{\textbf{$N = 25,\; n_i = 5$}}\\
\hspace{1em}$\text{1-Step } {\mathbf{R}}_{\text{Lon}} \otimes {\mathbf{R}}_{\text{Fun}}$ & 0.82 $\pm$ 0.01 & 0.83 $\pm$ 0.01 & 0.88 $\pm$ 0.01 & 0.85 $\pm$ 0.01 & 0.85 $\pm$ 0.01 & 0.92 $\pm$ 0.01\\
\hspace{1em}$\text{Full-1Step } {\mathbf{R}}_{\text{Lon}} \otimes {\mathbf{R}}_{\text{Fun}}$ & 0.82 $\pm$ 0.01 & 0.83 $\pm$ 0.01 & 1.00 $\pm$ 0.09$^*$ & 0.86 $\pm$ 0.01 & 0.85 $\pm$ 0.01 & 0.94 $\pm$ 0.01\\
\hspace{1em}$\text{Full-Full } \mathbf{R}_{\text{Lon}} \otimes \mathbf{R}_{\text{Fun}}$ &0.84 $\pm$ 0.01 & 0.89 $\pm$ 0.01$^*$ & 8.71 $\pm$ 1.04$^*$ & 0.87 $\pm$ 0.01 & 0.88 $\pm$ 0.01 & 2.44 $\pm$ 0.44$^*$\\
\addlinespace[0.6em]
\multicolumn{7}{l}{\textbf{$N = 25,\; n_i = 100$}}\\
\hspace{1em}$\text{1-Step } {\mathbf{R}}_{\text{Lon}} \otimes {\mathbf{R}}_{\text{Fun}}$ & 0.79 $\pm$ 0.01 & 0.63 $\pm$ 0.01 & 0.81 $\pm$ 0.01 & 0.88 $\pm$ 0.01 & 0.86 $\pm$ 0.01 & 0.89 $\pm$ 0.01\\
\hspace{1em}$\text{Full-1Step } {\mathbf{R}}_{\text{Lon}} \otimes {\mathbf{R}}_{\text{Fun}}$ & 0.79 $\pm$ 0.01 & 0.64 $\pm$ 0.01 & 5.46 $\pm$ 0.77$^*$ & 0.88 $\pm$ 0.01 & 0.86 $\pm$ 0.01 & 0.90 $\pm$ 0.01\\
\hspace{1em}$\text{Full-Full } \mathbf{R}_{\text{Lon}} \otimes \mathbf{R}_{\text{Fun}}$ &0.81 $\pm$ 0.01 & 0.67 $\pm$ 0.01 & 22.80 $\pm$ 1.48$^*$ & 0.88 $\pm$ 0.01 & 0.87 $\pm$ 0.01 & 0.91 $\pm$ 0.01\\
\addlinespace[0.6em]
\multicolumn{7}{l}{\textbf{$N = 50,\; n_i = 5$}}\\
\hspace{1em}$\text{1-Step } {\mathbf{R}}_{\text{Lon}} \otimes {\mathbf{R}}_{\text{Fun}}$ & 0.88 $\pm$ 0.01 & 0.84 $\pm$ 0.01 & 0.88 $\pm$ 0.01 & 0.89 $\pm$ 0.01 & 0.87 $\pm$ 0.01 & 0.89 $\pm$ 0.01\\
\hspace{1em}$\text{Full-1Step } {\mathbf{R}}_{\text{Lon}} \otimes {\mathbf{R}}_{\text{Fun}}$ & 0.88 $\pm$ 0.01 & 0.84 $\pm$ 0.01 & 0.90 $\pm$ 0.01 & 0.89 $\pm$ 0.01 & 0.87 $\pm$ 0.01 & 0.91 $\pm$ 0.01\\
\hspace{1em}$\text{Full-Full } \mathbf{R}_{\text{Lon}} \otimes \mathbf{R}_{\text{Fun}}$ &0.89 $\pm$ 0.01 & 0.86 $\pm$ 0.01 & 1.51 $\pm$ 0.32$^*$ & 0.91 $\pm$ 0.01 & 0.88 $\pm$ 0.01 & 1.08 $\pm$ 0.09\\
\addlinespace[0.6em]
\multicolumn{7}{l}{\textbf{$N = 50,\; n_i = 100$}}\\
\hspace{1em}$\text{1-Step } {\mathbf{R}}_{\text{Lon}} \otimes {\mathbf{R}}_{\text{Fun}}$ & 0.85 $\pm$ 0.01 & 0.66 $\pm$ 0.01 & 0.84 $\pm$ 0.01 & 0.90 $\pm$ 0.01 & 0.87 $\pm$ 0.01 & 0.90 $\pm$ 0.01\\
\hspace{1em}$\text{Full-1Step } {\mathbf{R}}_{\text{Lon}} \otimes {\mathbf{R}}_{\text{Fun}}$ & 0.85 $\pm$ 0.01 & 0.66 $\pm$ 0.01 & 2.58 $\pm$ 0.49$^*$ & 0.90 $\pm$ 0.01 & 0.87 $\pm$ 0.01 & 0.91 $\pm$ 0.01\\
\hspace{1em}$\text{Full-Full } \mathbf{R}_{\text{Lon}} \otimes \mathbf{R}_{\text{Fun}}$ &0.86 $\pm$ 0.01 & 0.68 $\pm$ 0.01 & 13.59 $\pm$ 1.24$^*$ & 0.91 $\pm$ 0.01 & 0.88 $\pm$ 0.01 & 0.91 $\pm$ 0.01\\
\bottomrule
\end{tabular}}
\end{table}

\begin{table}[!h]
\centering
\caption{\footnotesize
Joint 95\% CI coverage from 300 replicates $\pm$ SE. 
Table columns indicate if $\mathbf{R}^*_{\text{Lon}}$ had exchangeable or AR1 correlation. 
pffr ($z_{1-\alpha/2}$) are standard Wald CIs constructed with Gaussian quantiles.}
\centering
\resizebox{\ifdim\width>\linewidth\linewidth\else\width\fi}{!}{
\fontsize{9}{11}\selectfont
\begin{tabular}[t]{>{\raggedright\arraybackslash}p{4.8cm}>{\raggedright\arraybackslash}p{2.1cm}>{\raggedright\arraybackslash}p{2.1cm}>{\raggedright\arraybackslash}p{2.1cm}>{\raggedright\arraybackslash}p{2.1cm}>{\raggedright\arraybackslash}p{2.1cm}>{\raggedright\arraybackslash}p{2.1cm}}
\toprule
\multicolumn{1}{c}{ } & \multicolumn{3}{c}{Exchangeable} & \multicolumn{3}{c}{AR(1)} \\
\cmidrule(l{3pt}r{3pt}){2-4} \cmidrule(l{3pt}r{3pt}){5-7}
Method & Gaussian & Poisson & Binomial & Gaussian & Poisson & Binomial\\
\midrule
\addlinespace[0em]
\multicolumn{7}{l}{\textbf{$N = 25,\; n_i = 5$}}\\
\hspace{1em}$\text{1-Step } {\mathbf{R}}_{\text{Lon}} \otimes {\mathbf{R}}_{\text{Fun}}$ & 0.90 $\pm$ 0.02 & 0.90 $\pm$ 0.02 & 0.95 $\pm$ 0.01 & 0.90 $\pm$ 0.02 & 0.88 $\pm$ 0.02 & 0.95 $\pm$ 0.01\\
\hspace{1em}$\text{Full-1Step } {\mathbf{R}}_{\text{Lon}} \otimes {\mathbf{R}}_{\text{Fun}}$ & 0.90 $\pm$ 0.02 & 0.90 $\pm$ 0.02 & 0.94 $\pm$ 0.01 & 0.89 $\pm$ 0.02 & 0.88 $\pm$ 0.02 & 0.93 $\pm$ 0.01\\
\hspace{1em}$\text{Full-Full } \mathbf{R}_{\text{Lon}} \otimes \mathbf{R}_{\text{Fun}}$ &0.91 $\pm$ 0.02 & 0.90 $\pm$ 0.02 & 0.62 $\pm$ 0.03 & 0.91 $\pm$ 0.02 & 0.89 $\pm$ 0.02 & 0.82 $\pm$ 0.02\\
\hspace{1em}$\text{pffr ($z_{1-\alpha/2}$)}$ & 0.07 $\pm$ 0.01 & 0.56 $\pm$ 0.03 & 0.33 $\pm$ 0.03 & 0.06 $\pm$ 0.01 & 0.56 $\pm$ 0.03 & 0.32 $\pm$ 0.03\\
\addlinespace[0.6em]
\multicolumn{7}{l}{\textbf{$N = 25,\; n_i = 100$}}\\
\hspace{1em}$\text{1-Step } {\mathbf{R}}_{\text{Lon}} \otimes {\mathbf{R}}_{\text{Fun}}$ & 0.97 $\pm$ 0.01 & 0.98 $\pm$ 0.01 & 0.99 $\pm$ 0.01 & 0.87 $\pm$ 0.02 & 0.89 $\pm$ 0.02 & 0.91 $\pm$ 0.02\\
\hspace{1em}$\text{Full-1Step } {\mathbf{R}}_{\text{Lon}} \otimes {\mathbf{R}}_{\text{Fun}}$ & 0.97 $\pm$ 0.01 & 0.98 $\pm$ 0.01 & 0.78 $\pm$ 0.02 & 0.87 $\pm$ 0.02 & 0.88 $\pm$ 0.02 & 0.91 $\pm$ 0.02\\
\hspace{1em}$\text{Full-Full } \mathbf{R}_{\text{Lon}} \otimes \mathbf{R}_{\text{Fun}}$ &0.97 $\pm$ 0.01 & 0.97 $\pm$ 0.01 & 0.21 $\pm$ 0.02 & 0.89 $\pm$ 0.02 & 0.90 $\pm$ 0.02 & 0.92 $\pm$ 0.02\\
\hspace{1em}$\text{pffr ($z_{1-\alpha/2}$)}$ & 0.50 $\pm$ 0.03 & 0.81 $\pm$ 0.02 & 0.65 $\pm$ 0.03 & 0.25 $\pm$ 0.03 & 0.77 $\pm$ 0.02 & 0.61 $\pm$ 0.03\\
\addlinespace[0.6em]
\multicolumn{7}{l}{\textbf{$N = 50,\; n_i = 5$}}\\
\hspace{1em}$\text{1-Step } {\mathbf{R}}_{\text{Lon}} \otimes {\mathbf{R}}_{\text{Fun}}$ & 0.90 $\pm$ 0.02 & 0.90 $\pm$ 0.02 & 0.95 $\pm$ 0.01 & 0.89 $\pm$ 0.02 & 0.87 $\pm$ 0.02 & 0.94 $\pm$ 0.01\\
\hspace{1em}$\text{Full-1Step } {\mathbf{R}}_{\text{Lon}} \otimes {\mathbf{R}}_{\text{Fun}}$ & 0.90 $\pm$ 0.02 & 0.90 $\pm$ 0.02 & 0.94 $\pm$ 0.01 & 0.89 $\pm$ 0.02 & 0.87 $\pm$ 0.02 & 0.94 $\pm$ 0.01\\
\hspace{1em}$\text{Full-Full } \mathbf{R}_{\text{Lon}} \otimes \mathbf{R}_{\text{Fun}}$ &0.90 $\pm$ 0.02 & 0.90 $\pm$ 0.02 & 0.88 $\pm$ 0.02 & 0.89 $\pm$ 0.02 & 0.89 $\pm$ 0.02 & 0.92 $\pm$ 0.02\\
\hspace{1em}$\text{pffr ($z_{1-\alpha/2}$)}$ & 0.04 $\pm$ 0.01 & 0.59 $\pm$ 0.03 & 0.23 $\pm$ 0.02 & 0.04 $\pm$ 0.01 & 0.57 $\pm$ 0.03 & 0.26 $\pm$ 0.03\\
\addlinespace[0.6em]
\multicolumn{7}{l}{\textbf{$N = 50,\; n_i = 100$}}\\
\hspace{1em}$\text{1-Step } {\mathbf{R}}_{\text{Lon}} \otimes {\mathbf{R}}_{\text{Fun}}$ & 0.97 $\pm$ 0.01 & 0.96 $\pm$ 0.01 & 0.98 $\pm$ 0.01 & 0.92 $\pm$ 0.02 & 0.87 $\pm$ 0.02 & 0.94 $\pm$ 0.01\\
\hspace{1em}$\text{Full-1Step } {\mathbf{R}}_{\text{Lon}} \otimes {\mathbf{R}}_{\text{Fun}}$ & 0.97 $\pm$ 0.01 & 0.96 $\pm$ 0.01 & 0.91 $\pm$ 0.02 & 0.92 $\pm$ 0.02 & 0.87 $\pm$ 0.02 & 0.94 $\pm$ 0.01\\
\hspace{1em}$\text{Full-Full } \mathbf{R}_{\text{Lon}} \otimes \mathbf{R}_{\text{Fun}}$ &0.97 $\pm$ 0.01 & 0.96 $\pm$ 0.01 & 0.48 $\pm$ 0.03 & 0.92 $\pm$ 0.02 & 0.88 $\pm$ 0.02 & 0.94 $\pm$ 0.01\\
\hspace{1em}$\text{pffr ($z_{1-\alpha/2}$)}$ & 0.49 $\pm$ 0.03 & 0.78 $\pm$ 0.02 & 0.68 $\pm$ 0.03 & 0.19 $\pm$ 0.02 & 0.80 $\pm$ 0.02 & 0.63 $\pm$ 0.03\\
\bottomrule
\end{tabular}}
\end{table}

\begin{table}[!h]
\centering
\caption{\footnotesize
Pointwise 95\% CI coverage from 300 replicates $\pm$ SE. 
Table columns indicate if $\mathbf{R}^*_{\text{Lon}}$ had exchangeable or AR1 correlation. 
pffr ($z_{1-\alpha/2}$) are standard Wald CIs constructed with Gaussian quantiles.}
\centering
\resizebox{\ifdim\width>\linewidth\linewidth\else\width\fi}{!}{
\fontsize{9}{11}\selectfont
\begin{tabular}[t]{>{\raggedright\arraybackslash}p{4.8cm}>{\raggedright\arraybackslash}p{2.1cm}>{\raggedright\arraybackslash}p{2.1cm}>{\raggedright\arraybackslash}p{2.1cm}>{\raggedright\arraybackslash}p{2.1cm}>{\raggedright\arraybackslash}p{2.1cm}>{\raggedright\arraybackslash}p{2.1cm}}
\toprule
\multicolumn{1}{c}{ } & \multicolumn{3}{c}{Exchangeable} & \multicolumn{3}{c}{AR(1)} \\
\cmidrule(l{3pt}r{3pt}){2-4} \cmidrule(l{3pt}r{3pt}){5-7}
Method & Gaussian & Poisson & Binomial & Gaussian & Poisson & Binomial\\
\midrule
\addlinespace[0em]
\multicolumn{7}{l}{\textbf{$N = 25,\; n_i = 5$}}\\
\hspace{1em}$\text{1-Step } {\mathbf{R}}_{\text{Lon}} \otimes {\mathbf{R}}_{\text{Fun}}$ & 0.95 $\pm$ 0.01 & 0.95 $\pm$ 0.01 & 0.97 $\pm$ 0.01 & 0.94 $\pm$ 0.01 & 0.95 $\pm$ 0.01 & 0.96 $\pm$ 0.01\\
\hspace{1em}$\text{Full-1Step } {\mathbf{R}}_{\text{Lon}} \otimes {\mathbf{R}}_{\text{Fun}}$ & 0.95 $\pm$ 0.01 & 0.95 $\pm$ 0.01 & 0.96 $\pm$ 0.01 & 0.94 $\pm$ 0.01 & 0.95 $\pm$ 0.01 & 0.96 $\pm$ 0.01\\
\hspace{1em}$\text{Full-Full } \mathbf{R}_{\text{Lon}} \otimes \mathbf{R}_{\text{Fun}}$ &0.95 $\pm$ 0.01 & 0.95 $\pm$ 0.01 & 0.72 $\pm$ 0.03 & 0.95 $\pm$ 0.01 & 0.95 $\pm$ 0.01 & 0.91 $\pm$ 0.02\\
\hspace{1em}$\text{pffr ($z_{1-\alpha/2}$)}$ & 0.67 $\pm$ 0.03 & 0.87 $\pm$ 0.02 & 0.80 $\pm$ 0.02 & 0.66 $\pm$ 0.03 & 0.87 $\pm$ 0.02 & 0.80 $\pm$ 0.02\\
\addlinespace[0.6em]
\multicolumn{7}{l}{\textbf{$N = 25,\; n_i = 100$}}\\
\hspace{1em}$\text{1-Step } {\mathbf{R}}_{\text{Lon}} \otimes {\mathbf{R}}_{\text{Fun}}$ & 0.97 $\pm$ 0.01 & 0.98 $\pm$ 0.01 & 0.98 $\pm$ 0.01 & 0.95 $\pm$ 0.01 & 0.95 $\pm$ 0.01 & 0.95 $\pm$ 0.01\\
\hspace{1em}$\text{Full-1Step } {\mathbf{R}}_{\text{Lon}} \otimes {\mathbf{R}}_{\text{Fun}}$ & 0.97 $\pm$ 0.01 & 0.98 $\pm$ 0.01 & 0.82 $\pm$ 0.02 & 0.95 $\pm$ 0.01 & 0.95 $\pm$ 0.01 & 0.95 $\pm$ 0.01\\
\hspace{1em}$\text{Full-Full } \mathbf{R}_{\text{Lon}} \otimes \mathbf{R}_{\text{Fun}}$ &0.98 $\pm$ 0.01 & 0.98 $\pm$ 0.01 & 0.39 $\pm$ 0.03 & 0.95 $\pm$ 0.01 & 0.95 $\pm$ 0.01 & 0.95 $\pm$ 0.01\\
\hspace{1em}$\text{pffr ($z_{1-\alpha/2}$)}$ & 0.86 $\pm$ 0.02 & 0.94 $\pm$ 0.01 & 0.90 $\pm$ 0.02 & 0.79 $\pm$ 0.02 & 0.92 $\pm$ 0.02 & 0.88 $\pm$ 0.02\\
\addlinespace[0.6em]
\multicolumn{7}{l}{\textbf{$N = 50,\; n_i = 5$}}\\
\hspace{1em}$\text{1-Step } {\mathbf{R}}_{\text{Lon}} \otimes {\mathbf{R}}_{\text{Fun}}$ & 0.94 $\pm$ 0.01 & 0.94 $\pm$ 0.01 & 0.96 $\pm$ 0.01 & 0.93 $\pm$ 0.01 & 0.94 $\pm$ 0.01 & 0.95 $\pm$ 0.01\\
\hspace{1em}$\text{Full-1Step } {\mathbf{R}}_{\text{Lon}} \otimes {\mathbf{R}}_{\text{Fun}}$ & 0.94 $\pm$ 0.01 & 0.94 $\pm$ 0.01 & 0.95 $\pm$ 0.01 & 0.93 $\pm$ 0.01 & 0.94 $\pm$ 0.01 & 0.95 $\pm$ 0.01\\
\hspace{1em}$\text{Full-Full } \mathbf{R}_{\text{Lon}} \otimes \mathbf{R}_{\text{Fun}}$ &0.94 $\pm$ 0.01 & 0.95 $\pm$ 0.01 & 0.92 $\pm$ 0.02 & 0.94 $\pm$ 0.01 & 0.94 $\pm$ 0.01 & 0.94 $\pm$ 0.01\\
\hspace{1em}$\text{pffr ($z_{1-\alpha/2}$)}$ & 0.65 $\pm$ 0.03 & 0.87 $\pm$ 0.02 & 0.77 $\pm$ 0.02 & 0.65 $\pm$ 0.03 & 0.86 $\pm$ 0.02 & 0.77 $\pm$ 0.02\\
\addlinespace[0.6em]
\multicolumn{7}{l}{\textbf{$N = 50,\; n_i = 100$}}\\
\hspace{1em}$\text{1-Step } {\mathbf{R}}_{\text{Lon}} \otimes {\mathbf{R}}_{\text{Fun}}$ & 0.96 $\pm$ 0.01 & 0.97 $\pm$ 0.01 & 0.97 $\pm$ 0.01 & 0.94 $\pm$ 0.01 & 0.94 $\pm$ 0.01 & 0.95 $\pm$ 0.01\\
\hspace{1em}$\text{Full-1Step } {\mathbf{R}}_{\text{Lon}} \otimes {\mathbf{R}}_{\text{Fun}}$ & 0.96 $\pm$ 0.01 & 0.97 $\pm$ 0.01 & 0.91 $\pm$ 0.02 & 0.94 $\pm$ 0.01 & 0.94 $\pm$ 0.01 & 0.95 $\pm$ 0.01\\
\hspace{1em}$\text{Full-Full } \mathbf{R}_{\text{Lon}} \otimes \mathbf{R}_{\text{Fun}}$ &0.97 $\pm$ 0.01 & 0.97 $\pm$ 0.01 & 0.61 $\pm$ 0.03 & 0.94 $\pm$ 0.01 & 0.94 $\pm$ 0.01 & 0.95 $\pm$ 0.01\\
\hspace{1em}$\text{pffr ($z_{1-\alpha/2}$)}$ & 0.84 $\pm$ 0.02 & 0.92 $\pm$ 0.02 & 0.89 $\pm$ 0.02 & 0.77 $\pm$ 0.02 & 0.92 $\pm$ 0.02 & 0.88 $\pm$ 0.02\\
\bottomrule
\end{tabular}}
\end{table}

\begin{table}[!h]
\centering
\caption{\footnotesize Relative pointwise CI width (mean $\pm$ SE) vs.\ pffr (Wild). SE$=0.00$ indicates a value $<0.01$. We denote $UB^{(r)}(s)$ and $LB^{(r)}(s)$ and $UB_\text{pffr}^{(r)}(s)/LB_\text{pffr}^{(r)}(s)$ as the upper/lower bounds of the CIs (at $s$) of the indicated method and pffr, respectively. Below we report the average ratio $\frac{\frac{1}{(q+1)|\mathcal{S}|}\sum_{r=0}^q\sum_{s \in \mathcal{S}} [UB^{(r)}(s) -LB(s)^{(r)}]}{\frac{1}{(q+1)|\mathcal{S}|}\sum_{r=0}^q\sum_{s \in \mathcal{S}}[UB^{(r)}_\text{pffr}(s) -LB^{(r)}_\text{pffr}(s)]}$ across 300 simulation replicates. Values $<1$ indicate narrower 95\% CIs. Values with $*$ indicate extreme outliers (from poor estimates) were removed from the average of that cell to avoid skewing results.}
\centering
\resizebox{\ifdim\width>\linewidth\linewidth\else\width\fi}{!}{
\fontsize{9}{11}\selectfont
\begin{tabular}[t]{>{\raggedright\arraybackslash}p{4.8cm}>{\raggedright\arraybackslash}p{2.1cm}>{\raggedright\arraybackslash}p{2.1cm}>{\raggedright\arraybackslash}p{2.1cm}>{\raggedright\arraybackslash}p{2.1cm}>{\raggedright\arraybackslash}p{2.1cm}>{\raggedright\arraybackslash}p{2.1cm}}
\toprule
\multicolumn{1}{c}{ } & \multicolumn{3}{c}{Exchangeable} & \multicolumn{3}{c}{AR(1)} \\
\cmidrule(l{3pt}r{3pt}){2-4} \cmidrule(l{3pt}r{3pt}){5-7}
Method & Gaussian & Poisson & Binomial & Gaussian & Poisson & Binomial\\
\midrule
\addlinespace[0em]
\multicolumn{7}{l}{\textbf{$N = 25,\; n_i = 5$}}\\
\hspace{1em}$\text{1-Step } {\mathbf{R}}_{\text{Lon}} \otimes {\mathbf{R}}_{\text{Fun}}$ & 0.77 $\pm$ 0.01 & 0.85 $\pm$ 0.00 & 0.88 $\pm$ 0.01 & 0.75 $\pm$ 0.01 & 0.85 $\pm$ 0.00 & 0.83 $\pm$ 0.01\\
\hspace{1em}$\text{Full-1Step } {\mathbf{R}}_{\text{Lon}} \otimes {\mathbf{R}}_{\text{Fun}}$ & 0.77 $\pm$ 0.01 & 0.86 $\pm$ 0.00 & 0.91 $\pm$ 0.03 & 0.75 $\pm$ 0.01 & 0.85 $\pm$ 0.00 & 0.83 $\pm$ 0.01\\
\hspace{1em}$\text{Full-Full } \mathbf{R}_{\text{Lon}} \otimes \mathbf{R}_{\text{Fun}}$ &0.81 $\pm$ 0.01 & 0.90 $\pm$ 0.01 & 3.44 $\pm$ 0.43$^*$ & 0.79 $\pm$ 0.01 & 0.88 $\pm$ 0.01 & 1.86 $\pm$ 0.34\\
\addlinespace[0.6em]
\multicolumn{7}{l}{\textbf{$N = 25,\; n_i = 100$}}\\
\hspace{1em}$\text{1-Step } {\mathbf{R}}_{\text{Lon}} \otimes {\mathbf{R}}_{\text{Fun}}$ & 0.93 $\pm$ 0.01 & 0.77 $\pm$ 0.01 & 1.26 $\pm$ 0.01 & 0.52 $\pm$ 0.00 & 0.91 $\pm$ 0.00 & 0.85 $\pm$ 0.00\\
\hspace{1em}$\text{Full-1Step } {\mathbf{R}}_{\text{Lon}} \otimes {\mathbf{R}}_{\text{Fun}}$ & 0.93 $\pm$ 0.01 & 0.77 $\pm$ 0.01 & 2.39 $\pm$ 0.21 & 0.52 $\pm$ 0.00 & 0.91 $\pm$ 0.00 & 0.85 $\pm$ 0.00\\
\hspace{1em}$\text{Full-Full } \mathbf{R}_{\text{Lon}} \otimes \mathbf{R}_{\text{Fun}}$ &0.99 $\pm$ 0.01 & 0.79 $\pm$ 0.01 & 13.80 $\pm$ 1.11$^*$ & 0.53 $\pm$ 0.00 & 0.92 $\pm$ 0.00 & 0.87 $\pm$ 0.00\\
\addlinespace[0.6em]
\multicolumn{7}{l}{\textbf{$N = 50,\; n_i = 5$}}\\
\hspace{1em}$\text{1-Step } {\mathbf{R}}_{\text{Lon}} \otimes {\mathbf{R}}_{\text{Fun}}$ & 0.63 $\pm$ 0.00 & 0.75 $\pm$ 0.00 & 0.67 $\pm$ 0.00 & 0.60 $\pm$ 0.00 & 0.74 $\pm$ 0.00 & 0.63 $\pm$ 0.00\\
\hspace{1em}$\text{Full-1Step } {\mathbf{R}}_{\text{Lon}} \otimes {\mathbf{R}}_{\text{Fun}}$ & 0.63 $\pm$ 0.00 & 0.75 $\pm$ 0.00 & 0.67 $\pm$ 0.00 & 0.60 $\pm$ 0.00 & 0.74 $\pm$ 0.00 & 0.63 $\pm$ 0.00\\
\hspace{1em}$\text{Full-Full } \mathbf{R}_{\text{Lon}} \otimes \mathbf{R}_{\text{Fun}}$ &0.65 $\pm$ 0.01 & 0.77 $\pm$ 0.00 & 1.32 $\pm$ 0.24 & 0.62 $\pm$ 0.00 & 0.76 $\pm$ 0.00 & 0.66 $\pm$ 0.01\\
\addlinespace[0.6em]
\multicolumn{7}{l}{\textbf{$N = 50,\; n_i = 100$}}\\
\hspace{1em}$\text{1-Step } {\mathbf{R}}_{\text{Lon}} \otimes {\mathbf{R}}_{\text{Fun}}$ & 0.85 $\pm$ 0.00 & 0.73 $\pm$ 0.01 & 1.16 $\pm$ 0.01 & 0.38 $\pm$ 0.00 & 0.88 $\pm$ 0.00 & 0.72 $\pm$ 0.00\\
\hspace{1em}$\text{Full-1Step } {\mathbf{R}}_{\text{Lon}} \otimes {\mathbf{R}}_{\text{Fun}}$ & 0.85 $\pm$ 0.00 & 0.73 $\pm$ 0.01 & 1.53 $\pm$ 0.11 & 0.38 $\pm$ 0.00 & 0.88 $\pm$ 0.00 & 0.72 $\pm$ 0.00\\
\hspace{1em}$\text{Full-Full } \mathbf{R}_{\text{Lon}} \otimes \mathbf{R}_{\text{Fun}}$ &0.88 $\pm$ 0.01 & 0.74 $\pm$ 0.01 & 5.77 $\pm$ 0.60$^*$ & 0.38 $\pm$ 0.00 & 0.89 $\pm$ 0.00 & 0.72 $\pm$ 0.00\\
\bottomrule
\end{tabular}}
\end{table}

\begin{table}[!h]
\centering
\caption{ \footnotesize Computation time in seconds (mean $\pm$ SE) averaged across 300 simulation replicates (SE$=0.00$ indicates a value $<0.01$).}
\centering
\resizebox{\ifdim\width>\linewidth\linewidth\else\width\fi}{!}{
\fontsize{9}{11}\selectfont
\begin{tabular}[t]{>{\raggedright\arraybackslash}p{4.8cm}>{\raggedright\arraybackslash}p{2.1cm}>{\raggedright\arraybackslash}p{2.1cm}>{\raggedright\arraybackslash}p{2.1cm}>{\raggedright\arraybackslash}p{2.1cm}>{\raggedright\arraybackslash}p{2.1cm}>{\raggedright\arraybackslash}p{2.1cm}}
\toprule
\multicolumn{1}{c}{ } & \multicolumn{3}{c}{Exchangeable} & \multicolumn{3}{c}{AR(1)} \\
\cmidrule(l{3pt}r{3pt}){2-4} \cmidrule(l{3pt}r{3pt}){5-7}
Method & Gaussian & Poisson & Binomial & Gaussian & Poisson & Binomial\\
\midrule
\addlinespace[0em]
\multicolumn{7}{l}{\textbf{$N = 25,\; n_i = 5$}}\\
\hspace{1em}$\text{1-Step } {\mathbf{R}}_{\text{Lon}} \otimes {\mathbf{R}}_{\text{Fun}}$ & 3.34 $\pm$ 0.05 & 3.16 $\pm$ 0.05 & 3.07 $\pm$ 0.06 & 3.78 $\pm$ 0.04 & 4.33 $\pm$ 0.02 & 4.27 $\pm$ 0.06\\
\hspace{1em}$\text{Full-1Step } {\mathbf{R}}_{\text{Lon}} \otimes {\mathbf{R}}_{\text{Fun}}$ & 11.65 $\pm$ 0.19 & 9.56 $\pm$ 0.18 & 15.54 $\pm$ 0.74 & 12.32 $\pm$ 0.14 & 13.80 $\pm$ 0.13 & 18.79 $\pm$ 0.53\\
\hspace{1em}$\text{Full-Full } \mathbf{R}_{\text{Lon}} \otimes \mathbf{R}_{\text{Fun}}$ &11505.60 $\pm$ 190.20 & 14720.25 $\pm$ 244.80 & 14424.23 $\pm$ 330.02 & 12036.27 $\pm$ 143.77 & 19750.26 $\pm$ 133.91 & 19226.47 $\pm$ 326.92\\
\hspace{1em}$\text{pffr ($z_{1-\alpha/2}$)}$ & 0.49 $\pm$ 0.01 & 0.46 $\pm$ 0.01 & 0.54 $\pm$ 0.01 & 0.54 $\pm$ 0.01 & 0.64 $\pm$ 0.00 & 0.74 $\pm$ 0.01\\
\addlinespace[0.6em]
\multicolumn{7}{l}{\textbf{$N = 25,\; n_i = 100$}}\\
\hspace{1em}$\text{1-Step } {\mathbf{R}}_{\text{Lon}} \otimes {\mathbf{R}}_{\text{Fun}}$ & 9.42 $\pm$ 0.13 & 11.43 $\pm$ 0.18 & 12.71 $\pm$ 0.17 & 10.38 $\pm$ 0.12 & 9.66 $\pm$ 0.18 & 10.80 $\pm$ 0.20\\
\hspace{1em}$\text{Full-1Step } {\mathbf{R}}_{\text{Lon}} \otimes {\mathbf{R}}_{\text{Fun}}$ & 28.01 $\pm$ 0.43 & 29.92 $\pm$ 0.57 & 101.13 $\pm$ 5.06 & 19.15 $\pm$ 0.23 & 18.09 $\pm$ 0.34 & 22.17 $\pm$ 0.40\\
\hspace{1em}$\text{Full-Full } \mathbf{R}_{\text{Lon}} \otimes \mathbf{R}_{\text{Fun}}$ &23367.43 $\pm$ 381.19 & 30301.22 $\pm$ 552.99 & 40854.39 $\pm$ 645.95 & 21826.08 $\pm$ 267.42 & 20059.42 $\pm$ 341.07 & 21899.23 $\pm$ 374.88\\
\hspace{1em}$\text{pffr ($z_{1-\alpha/2}$)}$ & 3.37 $\pm$ 0.04 & 3.54 $\pm$ 0.05 & 3.66 $\pm$ 0.05 & 3.54 $\pm$ 0.04 & 3.07 $\pm$ 0.05 & 3.11 $\pm$ 0.05\\
\addlinespace[0.6em]
\multicolumn{7}{l}{\textbf{$N = 50,\; n_i = 5$}}\\
\hspace{1em}$\text{1-Step } {\mathbf{R}}_{\text{Lon}} \otimes {\mathbf{R}}_{\text{Fun}}$ & 4.86 $\pm$ 0.09 & 4.57 $\pm$ 0.10 & 5.35 $\pm$ 0.09 & 7.11 $\pm$ 0.04 & 6.39 $\pm$ 0.07 & 7.66 $\pm$ 0.21\\
\hspace{1em}$\text{Full-1Step } {\mathbf{R}}_{\text{Lon}} \otimes {\mathbf{R}}_{\text{Fun}}$ & 15.58 $\pm$ 0.31 & 14.14 $\pm$ 0.33 & 23.86 $\pm$ 0.66 & 21.48 $\pm$ 0.17 & 20.29 $\pm$ 0.26 & 33.20 $\pm$ 1.00\\
\hspace{1em}$\text{Full-Full } \mathbf{R}_{\text{Lon}} \otimes \mathbf{R}_{\text{Fun}}$ &17428.70 $\pm$ 311.78 & 21569.52 $\pm$ 451.47 & 25811.09 $\pm$ 500.80 & 23409.21 $\pm$ 162.22 & 28956.80 $\pm$ 323.95 & 37379.66 $\pm$ 999.06\\
\hspace{1em}$\text{pffr ($z_{1-\alpha/2}$)}$ & 0.56 $\pm$ 0.01 & 0.52 $\pm$ 0.01 & 0.61 $\pm$ 0.01 & 0.77 $\pm$ 0.01 & 0.73 $\pm$ 0.01 & 0.84 $\pm$ 0.02\\
\addlinespace[0.6em]
\multicolumn{7}{l}{\textbf{$N = 50,\; n_i = 100$}}\\
\hspace{1em}$\text{1-Step } {\mathbf{R}}_{\text{Lon}} \otimes {\mathbf{R}}_{\text{Fun}}$ & 20.67 $\pm$ 0.12 & 19.37 $\pm$ 0.29 & 25.36 $\pm$ 0.30 & 17.50 $\pm$ 0.26 & 19.53 $\pm$ 0.33 & 27.31 $\pm$ 0.46\\
\hspace{1em}$\text{Full-1Step } {\mathbf{R}}_{\text{Lon}} \otimes {\mathbf{R}}_{\text{Fun}}$ & 52.15 $\pm$ 0.39 & 43.86 $\pm$ 0.81 & 136.11 $\pm$ 6.98 & 31.77 $\pm$ 0.49 & 33.51 $\pm$ 0.58 & 51.30 $\pm$ 0.98\\
\hspace{1em}$\text{Full-Full } \mathbf{R}_{\text{Lon}} \otimes \mathbf{R}_{\text{Fun}}$ &49281.50 $\pm$ 390.53 & 46607.16 $\pm$ 789.80 & 67701.35 $\pm$ 1248.51 & 35692.11 $\pm$ 526.50 & 37210.09 $\pm$ 660.31 & 51646.17 $\pm$ 971.96\\
\hspace{1em}$\text{pffr ($z_{1-\alpha/2}$)}$ & 6.90 $\pm$ 0.03 & 5.74 $\pm$ 0.09 & 6.41 $\pm$ 0.08 & 5.89 $\pm$ 0.09 & 5.71 $\pm$ 0.10 & 6.62 $\pm$ 0.10\\
\bottomrule
\end{tabular}}
\end{table}

\clearpage

\subsection{Additional Simulations} \label{app:Li}
We tested one-step performance in a setting where the outcome was simulated to be correlated in both longitudinal and functional directions (i.e. $\text{Cov}(Y_{i,j}(s_1), Y_{i,j'}(s_2) \mid \mathbf{X}_i) \neq 0$ for $s_1,s_2 \in \mathcal{S}$ and $j, j' \in [n_i]$) with an underlying correlation structure other than the Kronecker product correlation used to generate outcomes in the main text. Namely, we simulated the outcome to have an exchangeable correlation structure in the longitudinal direction, allowing for comparison with the marginal decomposition (``Marginal'') approach proposed in \cite{li_2022}, which models both within- and between-functional observation correlation. For fair comparison, 
we simulated data with their marginal decomposition scheme and code, using the model 
\[Y_{i,j}(s) = \beta_0(s) + X_{1,i}\beta_1(s) + X_{2,i,j} \beta_2(s) + W_{i,j}(s) + \epsilon_{i,j}(s)\]
where $\beta_0(s) = 3 + \text{sin}(\pi s) + \sqrt{2} \text{cos}(3\pi s)$, $\beta_1(s) = 3 + \text{cos}(2\pi s) + \sqrt{2} \text{cos}(3\pi s)$, and $\beta_2(s) = \frac{1}{60} \left[ \phi(\frac{s-0.2}{0.1^2}) + \phi(\frac{s-0.1}{0.07^2})\right ] -\frac{1}{200}\phi(\frac{s-0.35}{0.1^2}) - \frac{1}{250}\phi(\frac{s-0.65}{0.06^2})$.
Based on simulations in \cite{li_2022}, we drew $X_{1,i} \sim N(0,1)$, and $X_{2,i,j} = j + e_{i,j}$, where $e_{i,j} \sim N(\alpha e_{i, j-1}, 1)$, with $e_{i,0}=0$, $\alpha=0.7$. 
We set parameters as in \cite{li_2022}: $W_{i,j}(s) = \sum_{k=1}^2 ( \xi_{i,k} + \zeta_{i,j,k}) \psi_k(s)$ where the orthonormal functions $\psi_1(s) = 1 ~\forall s \in \mathcal{S}$ and $\psi_2(s) = \sqrt{2}\text{sin}(2 \pi s)$, $\xi_{i,1} \overset{\text{iid}}{\sim} N(0,3)$, $\xi_{i,2} \overset{\text{iid}}{\sim} N(0,2)$, $\zeta_{i,j,1} \overset{\text{iid}}{\sim} N(0,1.5)$, $\xi_{i,j,2} \overset{\text{iid}}{\sim} N(0, 1)$, and $\epsilon_{i,j}(s) \overset{\text{iid}}{\sim} N(0, 10)$. 

We describe the Marginal approach from \cite{li_2022} in Appendix~\ref{app:li_method}. We compared its performance with that of the one-step and fully-iterated fGEE under varying degrees of working correlation misspecification (see Appendix~\ref{app:Li_results} for results). 


\subsubsection{Marginal Approach of Li et al., (2022)} \label{app:li_method}
We briefly describe the method proposed in \cite{li_2022}. Specifically, they model the mean as \[ \mu_{i,j}(s) \equiv \mathbb{E}(Y_{i,j}(s) \mid \mathbf{X}_{i,j}) = \beta_0(s) + \sum_{r=1}^q X_{i,j,r} \beta_r(s),\]
and define residuals ${\epsilon}_{i,j}(s)=Y_{i,j}(s) - {\mu}_{i,j}(s)$. The residuals are further decomposed as $\epsilon_{i,j}(s)=w_{i,j}(s)+e_{i,j}(s)$, where $e_{i,j}(s)$ denotes white noise with variance $\sigma^2$. They expand $w_{i,j}(s)$ as $w_{i,j}(s)\approx \sum_{k=1}^{K}\xi_{i,j,k}\phi_k(s)$, where $\phi_k(s)$ forms an orthonormal basis and $\xi_{i,j,k}$ are the associated random coefficients. In \cite{li_2022}, the longitudinal correlation is modeled by imposing different correlation structure on the random scores $\xi_{i,j,k}$, including independence, exchangeable, and unspecified smooth. 

For the model estimation, \cite{li_2022} begin by fitting an initial estimator under a working-independence correlation. In their simulations, they use $\texttt{refund::pffr}$ to fit this model with a second order difference penalty. We used the same initial estimator for our one-step in these simulations to ensure comparability. Next, they estimate marginal eigenfunctions $\phi_k$ (pooled over observations within- and across-clusters) based on the residual curves with the fast FACE algorithm (\texttt{refund::fpca.face}) \citep{fpca.face}, and predict the random scores $\xi_{i,j,k}$. Given the predicted scores, the final estimated marginal covariance can be constructed accordingly. 

Specifically, for an exchangeable longitudinal working correlation, \cite{li_2022} use a one-way nested ANOVA on the scores to estimate the necessary variance components. Plugging these estimated components into a finite-rank mixed-model representation, they refit the model using mixed-effects software $\texttt{mgcv::gam}$/$\texttt{bam}$ to obtain a refined estimate of the functional coefficients. The pointwise CIs for the estimated functional coefficients are constructed through $\texttt{mgcv}$'s approximation of the posterior of the spline coefficients, which is based on a Bayesian interpretation of spline smoothing penalties as priors for the spline regression coefficients.

\subsubsection{Simulation Results} \label{app:Li_results}
Table~\ref{tab:Li_rmse_rel} shows that the one-step yields comparable functional coefficient estimate RMSE as \cite{li_2022}. Similar to the simulations in the main text, the one-step that models correlation only in the longitudinal direction ($\text{1-Step }\widehat{\mathbf{R}}_{\text{Lon}} \otimes I_L$), tends to exhibit the best RMSE. This mirrors the same finding as the simulation results in the main text: modeling $\text{Cov}(Y_{i,j}(s), Y_{i,j'}(s) \mid \mathbf{X}_i)$ at each point $s$ across values of $j, j' \in [n_i]$, is enough to capture efficiency gains, even though the data were simulated such that $\text{Cov}(Y_{i,j}(s), Y_{i,j'}(s') \mid \mathbf{X}_i) \neq 0$, for $s \neq s'$. These simulations were based on the code from \cite{li_2022} which generates outcomes to be correlated in both longitudinal and functional directions with a covariance decomposition that differs from the main text simulations (i.e., these simulations do not use the Kronecker product-based correlation structure used in the main text). 

Table~\ref{tab:Li_pci} shows that the one-step generally exhibits superior empirical pointwise 95\% CI coverage compared to \cite{li_2022}. Their Marginal approach exhibits slightly better pointwise CI coverage compared to the $\text{1-Step }\widehat{\mathbf{R}}_{\text{Lon}} \otimes I_L$ in the smallest sample sizes tested of $N=25$, $n_i=5$. For larger cluster numbers $N$, the pointwise CI coverage of the one-step and Marginal approach were comparable when $n_i=5$. However, the Marginal grows highly anti-conservative for large $n_i$, a feature acknowledged in \cite{li_2022}. For $n_i=100$, the pointwise CI coverage of \cite{li_2022} was often around 0.80. In contrast, the one-step tended to exhibit pointwise coverage close to the nominal 0.95. Table~\ref{tab:Li_pt_width} shows that the pointwise CI coverage width of the one-step was also substantially narrower than that of \cite{li_2022}, showing that the one-step improves statistical efficiency. We are not aware of code to calculate joint CIs with the Marginal approach. 

These simulations also illustrate the scalability of the one-step. Appendix Table~\ref{tab:Li_time} shows that the one-step scales well with both $N$ and $n_i$. In contrast, the Marginal approach is too memory-intensive to fit larger datasets and fit-times scale super-linearly as a function of $N$ and $n_i$. For example, when $N=500$ and $n_i = 100$, their method ran out of memory on a high performance computing cluster (35Gb). The one-step method was dramatically faster than \cite{li_2022} for larger cluster sizes and often comparable in timing for smaller cluster sizes $n_i$. For larger $N$, however, the one-step was often over 10 times faster than \cite{li_2022} (depending on the working correlation structure) even for small cluster sizes ($n_i=5$). 

Together these simulations provide an example of how the scalable correlation structure adopted here still yields gains in statistical and computational efficiency, even when compared to other methods that model correlation in both longitudinal and functional directions. Finally, the one-step and fully-iterated fGEE exhibit comparable results, further supporting our asymptotic theory.

\begin{table}[!h]
\centering
\caption{\label{tab:Li_rmse_rel} \footnotesize Functional Coefficient Estimation Performance (RMSE) of each method relative to the $\texttt{pffr}$ fit ($\text{RMSE}/\text{RMSE}_{\text{pffr}}$).
    Cells contain the average of 300 replicates $\pm$ SE (SE$=0.00$ indicates a value $<0.01$). The ``1-Step'' indicates a one-step was used for tuning and final coefficient estimation, and ``Full-1step'' indicates one-step tuning and a fully-iterated fGEE for final coefficient estimation, with the indicated working correlation. We indicate out-of-memory (35Gb) with symbol --- . }
\centering
\resizebox{\ifdim\width>\linewidth\linewidth\else\width\fi}{!}{
\fontsize{5}{7}\selectfont
\begin{tabular}[t]{>{\raggedright\arraybackslash}p{4.8cm}>{\raggedright\arraybackslash}p{2.1cm}}
\toprule
\multicolumn{1}{c}{ } & \multicolumn{1}{c}{Exchangeable} \\
\cmidrule(l{3pt}r{3pt}){2-2}
Method & Gaussian\\
\midrule
\addlinespace[0em]
\multicolumn{2}{l}{\textbf{$N = 25,\; n_i = 5$}}\\
\hspace{1em}\text{Li et. al } & 0.97 $\pm$ 0.01\\
\hspace{1em}$\text{1-Step } {\mathbf{R}}_{\text{Lon}} \otimes {\mathbf{R}}_{\text{Fun}}$ & 1.00 $\pm$ 0.01\\
\hspace{1em}$\text{1-Step } {\mathbf{R}}_{\text{Lon}} \otimes I_L$ & 0.96 $\pm$ 0.00\\
\hspace{1em}$\text{1-Step } I_{n_i} \otimes {\mathbf{R}}_{\text{Fun}}$ & 1.03 $\pm$ \vphantom{1} 0.01\\
\hspace{1em}$\text{1-Step } I_{n_i} \otimes I_L$ & 1.00 $\pm$ \vphantom{4} 0.00\\
\hspace{1em}$\text{Full-1Step } {\mathbf{R}}_{\text{Lon}} \otimes {\mathbf{R}}_{\text{Fun}}$ & 1.01 $\pm$ 0.01\\
\hspace{1em}$\text{Full-1Step } {\mathbf{R}}_{\text{Lon}} \otimes I_L$ & 0.96 $\pm$ 0.00\\
\hspace{1em}$\text{Full-1Step } I_{n_i} \otimes {\mathbf{R}}_{\text{Fun}}$ & 1.04 $\pm$ 0.01\\
\hspace{1em}$\text{Full-1Step } I_{n_i} \otimes I_L$ & 1.00 $\pm$ \vphantom{4} 0.00\\
\addlinespace[0.6em]
\multicolumn{2}{l}{\textbf{$N = 25,\; n_i = 100$}}\\
\hspace{1em}\text{Li et. al } & 1.05 $\pm$ 0.01\\
\hspace{1em}$\text{1-Step } {\mathbf{R}}_{\text{Lon}} \otimes {\mathbf{R}}_{\text{Fun}}$ & 1.02 $\pm$ 0.01\\
\hspace{1em}$\text{1-Step } {\mathbf{R}}_{\text{Lon}} \otimes I_L$ & 0.99 $\pm$ \vphantom{1} 0.00\\
\hspace{1em}$\text{1-Step } I_{n_i} \otimes {\mathbf{R}}_{\text{Fun}}$ & 1.03 $\pm$ 0.01\\
\hspace{1em}$\text{1-Step } I_{n_i} \otimes I_L$ & 1.00 $\pm$ \vphantom{3} 0.00\\
\hspace{1em}$\text{Full-1Step } {\mathbf{R}}_{\text{Lon}} \otimes {\mathbf{R}}_{\text{Fun}}$ & 1.02 $\pm$ 0.01\\
\hspace{1em}$\text{Full-1Step } {\mathbf{R}}_{\text{Lon}} \otimes I_L$ & 0.99 $\pm$ \vphantom{1} 0.00\\
\hspace{1em}$\text{Full-1Step } I_{n_i} \otimes {\mathbf{R}}_{\text{Fun}}$ & 1.03 $\pm$ 0.01\\
\hspace{1em}$\text{Full-1Step } I_{n_i} \otimes I_L$ & 1.00 $\pm$ \vphantom{3} 0.00\\
\addlinespace[0.6em]
\multicolumn{2}{l}{\textbf{$N = 50,\; n_i = 5$}}\\
\hspace{1em}\text{Li et. al } & 0.96 $\pm$ 0.01\\
\hspace{1em}$\text{1-Step } {\mathbf{R}}_{\text{Lon}} \otimes {\mathbf{R}}_{\text{Fun}}$ & 0.99 $\pm$ 0.01\\
\hspace{1em}$\text{1-Step } {\mathbf{R}}_{\text{Lon}} \otimes I_L$ & 0.95 $\pm$ 0.01\\
\hspace{1em}$\text{1-Step } I_{n_i} \otimes {\mathbf{R}}_{\text{Fun}}$ & 1.02 $\pm$ 0.01\\
\hspace{1em}$\text{1-Step } I_{n_i} \otimes I_L$ & 1.00 $\pm$ \vphantom{2} 0.00\\
\hspace{1em}$\text{Full-1Step } {\mathbf{R}}_{\text{Lon}} \otimes {\mathbf{R}}_{\text{Fun}}$ & 0.99 $\pm$ 0.01\\
\hspace{1em}$\text{Full-1Step } {\mathbf{R}}_{\text{Lon}} \otimes I_L$ & 0.95 $\pm$ 0.01\\
\hspace{1em}$\text{Full-1Step } I_{n_i} \otimes {\mathbf{R}}_{\text{Fun}}$ & 1.02 $\pm$ 0.01\\
\hspace{1em}$\text{Full-1Step } I_{n_i} \otimes I_L$ & 1.00 $\pm$ \vphantom{2} 0.00\\
\addlinespace[0.6em]
\multicolumn{2}{l}{\textbf{$N = 50,\; n_i = 100$}}\\
\hspace{1em}\text{Li et. al } & 1.01 $\pm$ 0.01\\
\hspace{1em}$\text{1-Step } {\mathbf{R}}_{\text{Lon}} \otimes {\mathbf{R}}_{\text{Fun}}$ & 1.00 $\pm$ \vphantom{1} 0.00\\
\hspace{1em}$\text{1-Step } {\mathbf{R}}_{\text{Lon}} \otimes I_L$ & 0.99 $\pm$ 0.00\\
\hspace{1em}$\text{1-Step } I_{n_i} \otimes {\mathbf{R}}_{\text{Fun}}$ & 1.01 $\pm$ 0.00\\
\hspace{1em}$\text{1-Step } I_{n_i} \otimes I_L$ & 1.00 $\pm$ \vphantom{1} 0.00\\
\hspace{1em}$\text{Full-1Step } {\mathbf{R}}_{\text{Lon}} \otimes {\mathbf{R}}_{\text{Fun}}$ & 1.00 $\pm$ \vphantom{1} 0.00\\
\hspace{1em}$\text{Full-1Step } {\mathbf{R}}_{\text{Lon}} \otimes I_L$ & 0.99 $\pm$ 0.00\\
\hspace{1em}$\text{Full-1Step } I_{n_i} \otimes {\mathbf{R}}_{\text{Fun}}$ & 1.01 $\pm$ 0.00\\
\hspace{1em}$\text{Full-1Step } I_{n_i} \otimes I_L$ & 1.00 $\pm$ \vphantom{1} 0.00\\
\addlinespace[0.6em]
\multicolumn{2}{l}{\textbf{$N = 500,\; n_i = 5$}}\\
\hspace{1em}\text{Li et. al } & 0.94 $\pm$ 0.01\\
\hspace{1em}$\text{1-Step } {\mathbf{R}}_{\text{Lon}} \otimes {\mathbf{R}}_{\text{Fun}}$ & 0.96 $\pm$ 0.00\\
\hspace{1em}$\text{1-Step } {\mathbf{R}}_{\text{Lon}} \otimes I_L$ & 0.95 $\pm$ 0.00\\
\hspace{1em}$\text{1-Step } I_{n_i} \otimes {\mathbf{R}}_{\text{Fun}}$ & 1.00 $\pm$ \vphantom{1} 0.00\\
\hspace{1em}$\text{1-Step } I_{n_i} \otimes I_L$ & 0.99 $\pm$ 0.00\\
\hspace{1em}$\text{Full-1Step } {\mathbf{R}}_{\text{Lon}} \otimes {\mathbf{R}}_{\text{Fun}}$ & 0.96 $\pm$ 0.01\\
\hspace{1em}$\text{Full-1Step } {\mathbf{R}}_{\text{Lon}} \otimes I_L$ & 0.95 $\pm$ 0.00\\
\hspace{1em}$\text{Full-1Step } I_{n_i} \otimes {\mathbf{R}}_{\text{Fun}}$ & 1.00 $\pm$ \vphantom{1} 0.00\\
\hspace{1em}$\text{Full-1Step } I_{n_i} \otimes I_L$ & 0.99 $\pm$ 0.00\\
\addlinespace[0.6em]
\multicolumn{2}{l}{\textbf{$N = 500,\; n_i = 100$}}\\
\hspace{1em}\text{Li et. al } & ~~~~~~---~~~~~~\\
\hspace{1em}$\text{1-Step } {\mathbf{R}}_{\text{Lon}} \otimes {\mathbf{R}}_{\text{Fun}}$ & 1.00 $\pm$ 0.00\\
\hspace{1em}$\text{1-Step } {\mathbf{R}}_{\text{Lon}} \otimes I_L$ & 1.00 $\pm$ 0.01\\
\hspace{1em}$\text{1-Step } I_{n_i} \otimes {\mathbf{R}}_{\text{Fun}}$ & 1.00 $\pm$ 0.00\\
\hspace{1em}$\text{1-Step } I_{n_i} \otimes I_L$ & 1.00 $\pm$ 0.00\\
\hspace{1em}$\text{Full-1Step } {\mathbf{R}}_{\text{Lon}} \otimes {\mathbf{R}}_{\text{Fun}}$ & 1.00 $\pm$ 0.00\\
\hspace{1em}$\text{Full-1Step } {\mathbf{R}}_{\text{Lon}} \otimes I_L$ & 1.00 $\pm$ 0.01\\
\hspace{1em}$\text{Full-1Step } I_{n_i} \otimes {\mathbf{R}}_{\text{Fun}}$ & 1.00 $\pm$ 0.00\\
\hspace{1em}$\text{Full-1Step } I_{n_i} \otimes I_L$ & 1.00 $\pm$ 0.00\\
\bottomrule
\end{tabular}}
\end{table}

\begin{table}[!h]
\centering
\caption{\label{tab:Li_pci} \footnotesize
Pointwise 95\% CI coverage from 300 replicates $\pm$ SE (SE$=0.00$ indicates a value $<0.01$).
Table columns indicate if $\mathbf{R}^*_{\text{Lon}}$ had exchangeable or AR1 correlation. pffr (Wild) indicates CIs constructed for pffr coefficient estimates with quantiles obtained from a wild cluster bootstrap. pffr ($z_{1-\alpha/2}$) are standard Wald CIs constructed with Gaussian quantiles. We indicate out-of-memory (35Gb) with symbol --- .}
\centering
\resizebox{\ifdim\width>\linewidth\linewidth\else\width\fi}{!}{
\fontsize{5}{7}\selectfont
\begin{tabular}[t]{>{\raggedright\arraybackslash}p{4.8cm}>{\raggedright\arraybackslash}p{2.1cm}}
\toprule
\multicolumn{1}{c}{ } & \multicolumn{1}{c}{Exchangeable} \\
\cmidrule(l{3pt}r{3pt}){2-2}
Method & Gaussian\\
\midrule
\addlinespace[0em]
\multicolumn{2}{l}{\textbf{$N = 25,\; n_i = 5$}}\\
\hspace{1em}\text{Li et. al } & 0.95 $\pm$ \vphantom{1} 0.01\\
\hspace{1em}$\text{1-Step } {\mathbf{R}}_{\text{Lon}} \otimes {\mathbf{R}}_{\text{Fun}}$ & 0.88 $\pm$ 0.02\\
\hspace{1em}$\text{1-Step } {\mathbf{R}}_{\text{Lon}} \otimes I_L$ & 0.92 $\pm$ 0.02\\
\hspace{1em}$\text{1-Step } I_{n_i} \otimes {\mathbf{R}}_{\text{Fun}}$ & 0.87 $\pm$ 0.02\\
\hspace{1em}$\text{1-Step } I_{n_i} \otimes I_L$ & 0.91 $\pm$ 0.02\\
\hspace{1em}$\text{Full-1Step } {\mathbf{R}}_{\text{Lon}} \otimes {\mathbf{R}}_{\text{Fun}}$ & 0.88 $\pm$ 0.02\\
\hspace{1em}$\text{Full-1Step } {\mathbf{R}}_{\text{Lon}} \otimes I_L$ & 0.92 $\pm$ 0.02\\
\hspace{1em}$\text{Full-1Step } I_{n_i} \otimes {\mathbf{R}}_{\text{Fun}}$ & 0.87 $\pm$ 0.02\\
\hspace{1em}$\text{Full-1Step } I_{n_i} \otimes I_L$ & 0.91 $\pm$ 0.02\\
\hspace{1em}$\text{pffr (Wild)}$ & 0.97 $\pm$ \vphantom{1} 0.01\\
\hspace{1em}$\text{pffr ($z_{1-\alpha/2}$)}$ & 0.88 $\pm$ 0.02\\
\addlinespace[0.6em]
\multicolumn{2}{l}{\textbf{$N = 25,\; n_i = 100$}}\\
\hspace{1em}\text{Li et. al } & 0.82 $\pm$ 0.02\\
\hspace{1em}$\text{1-Step } {\mathbf{R}}_{\text{Lon}} \otimes {\mathbf{R}}_{\text{Fun}}$ & 0.94 $\pm$ 0.01\\
\hspace{1em}$\text{1-Step } {\mathbf{R}}_{\text{Lon}} \otimes I_L$ & 0.94 $\pm$ \vphantom{1} 0.01\\
\hspace{1em}$\text{1-Step } I_{n_i} \otimes {\mathbf{R}}_{\text{Fun}}$ & 0.93 $\pm$ 0.02\\
\hspace{1em}$\text{1-Step } I_{n_i} \otimes I_L$ & 0.93 $\pm$ 0.01\\
\hspace{1em}$\text{Full-1Step } {\mathbf{R}}_{\text{Lon}} \otimes {\mathbf{R}}_{\text{Fun}}$ & 0.94 $\pm$ 0.01\\
\hspace{1em}$\text{Full-1Step } {\mathbf{R}}_{\text{Lon}} \otimes I_L$ & 0.94 $\pm$ \vphantom{1} 0.01\\
\hspace{1em}$\text{Full-1Step } I_{n_i} \otimes {\mathbf{R}}_{\text{Fun}}$ & 0.93 $\pm$ 0.02\\
\hspace{1em}$\text{Full-1Step } I_{n_i} \otimes I_L$ & 0.93 $\pm$ 0.01\\
\hspace{1em}$\text{pffr (Wild)}$ & 0.97 $\pm$ 0.01\\
\hspace{1em}$\text{pffr ($z_{1-\alpha/2}$)}$ & 0.92 $\pm$ \vphantom{2} 0.02\\
\addlinespace[0.6em]
\multicolumn{2}{l}{\textbf{$N = 50,\; n_i = 5$}}\\
\hspace{1em}\text{Li et. al } & 0.95 $\pm$ 0.01\\
\hspace{1em}$\text{1-Step } {\mathbf{R}}_{\text{Lon}} \otimes {\mathbf{R}}_{\text{Fun}}$ & 0.93 $\pm$ 0.02\\
\hspace{1em}$\text{1-Step } {\mathbf{R}}_{\text{Lon}} \otimes I_L$ & 0.94 $\pm$ 0.01\\
\hspace{1em}$\text{1-Step } I_{n_i} \otimes {\mathbf{R}}_{\text{Fun}}$ & 0.93 $\pm$ \vphantom{1} 0.01\\
\hspace{1em}$\text{1-Step } I_{n_i} \otimes I_L$ & 0.94 $\pm$ \vphantom{1} 0.01\\
\hspace{1em}$\text{Full-1Step } {\mathbf{R}}_{\text{Lon}} \otimes {\mathbf{R}}_{\text{Fun}}$ & 0.93 $\pm$ 0.02\\
\hspace{1em}$\text{Full-1Step } {\mathbf{R}}_{\text{Lon}} \otimes I_L$ & 0.94 $\pm$ 0.01\\
\hspace{1em}$\text{Full-1Step } I_{n_i} \otimes {\mathbf{R}}_{\text{Fun}}$ & 0.93 $\pm$ \vphantom{1} 0.01\\
\hspace{1em}$\text{Full-1Step } I_{n_i} \otimes I_L$ & 0.94 $\pm$ \vphantom{1} 0.01\\
\hspace{1em}$\text{pffr (Wild)}$ & 0.98 $\pm$ \vphantom{1} 0.01\\
\hspace{1em}$\text{pffr ($z_{1-\alpha/2}$)}$ & 0.92 $\pm$ \vphantom{1} 0.02\\
\addlinespace[0.6em]
\multicolumn{2}{l}{\textbf{$N = 50,\; n_i = 100$}}\\
\hspace{1em}\text{Li et. al } & 0.80 $\pm$ 0.02\\
\hspace{1em}$\text{1-Step } {\mathbf{R}}_{\text{Lon}} \otimes {\mathbf{R}}_{\text{Fun}}$ & 0.93 $\pm$ 0.01\\
\hspace{1em}$\text{1-Step } {\mathbf{R}}_{\text{Lon}} \otimes I_L$ & 0.93 $\pm$ 0.01\\
\hspace{1em}$\text{1-Step } I_{n_i} \otimes {\mathbf{R}}_{\text{Fun}}$ & 0.93 $\pm$ 0.01\\
\hspace{1em}$\text{1-Step } I_{n_i} \otimes I_L$ & 0.94 $\pm$ 0.01\\
\hspace{1em}$\text{Full-1Step } {\mathbf{R}}_{\text{Lon}} \otimes {\mathbf{R}}_{\text{Fun}}$ & 0.93 $\pm$ 0.01\\
\hspace{1em}$\text{Full-1Step } {\mathbf{R}}_{\text{Lon}} \otimes I_L$ & 0.93 $\pm$ 0.01\\
\hspace{1em}$\text{Full-1Step } I_{n_i} \otimes {\mathbf{R}}_{\text{Fun}}$ & 0.93 $\pm$ 0.01\\
\hspace{1em}$\text{Full-1Step } I_{n_i} \otimes I_L$ & 0.94 $\pm$ 0.01\\
\hspace{1em}$\text{pffr (Wild)}$ & 0.98 $\pm$ 0.01\\
\hspace{1em}$\text{pffr ($z_{1-\alpha/2}$)}$ & 0.92 $\pm$ 0.02\\
\addlinespace[0.6em]
\multicolumn{2}{l}{\textbf{$N = 500,\; n_i = 5$}}\\
\hspace{1em}\text{Li et. al } & 0.96 $\pm$ 0.01\\
\hspace{1em}$\text{1-Step } {\mathbf{R}}_{\text{Lon}} \otimes {\mathbf{R}}_{\text{Fun}}$ & 0.95 $\pm$ 0.01\\
\hspace{1em}$\text{1-Step } {\mathbf{R}}_{\text{Lon}} \otimes I_L$ & 0.95 $\pm$ 0.01\\
\hspace{1em}$\text{1-Step } I_{n_i} \otimes {\mathbf{R}}_{\text{Fun}}$ & 0.94 $\pm$ 0.01\\
\hspace{1em}$\text{1-Step } I_{n_i} \otimes I_L$ & 0.95 $\pm$ 0.01\\
\hspace{1em}$\text{Full-1Step } {\mathbf{R}}_{\text{Lon}} \otimes {\mathbf{R}}_{\text{Fun}}$ & 0.95 $\pm$ 0.01\\
\hspace{1em}$\text{Full-1Step } {\mathbf{R}}_{\text{Lon}} \otimes I_L$ & 0.95 $\pm$ 0.01\\
\hspace{1em}$\text{Full-1Step } I_{n_i} \otimes {\mathbf{R}}_{\text{Fun}}$ & 0.94 $\pm$ 0.01\\
\hspace{1em}$\text{Full-1Step } I_{n_i} \otimes I_L$ & 0.95 $\pm$ 0.01\\
\hspace{1em}$\text{pffr (Wild)}$ & 0.99 $\pm$ 0.01\\
\hspace{1em}$\text{pffr ($z_{1-\alpha/2}$)}$ & 0.91 $\pm$ 0.02\\
\addlinespace[0.6em]
\multicolumn{2}{l}{\textbf{$N = 500,\; n_i = 100$}}\\
\hspace{1em}\text{Li et. al } & ~~~~~~---~~~~~~\\
\hspace{1em}$\text{1-Step } {\mathbf{R}}_{\text{Lon}} \otimes {\mathbf{R}}_{\text{Fun}}$ & 0.95 $\pm$ 0.04\\
\hspace{1em}$\text{1-Step } {\mathbf{R}}_{\text{Lon}} \otimes I_L$ & 0.95 $\pm$ 0.04\\
\hspace{1em}$\text{1-Step } I_{n_i} \otimes {\mathbf{R}}_{\text{Fun}}$ & 0.96 $\pm$ 0.04\\
\hspace{1em}$\text{1-Step } I_{n_i} \otimes I_L$ & 0.95 $\pm$ 0.04\\
\hspace{1em}$\text{Full-1Step } {\mathbf{R}}_{\text{Lon}} \otimes {\mathbf{R}}_{\text{Fun}}$ & 0.95 $\pm$ 0.04\\
\hspace{1em}$\text{Full-1Step } {\mathbf{R}}_{\text{Lon}} \otimes I_L$ & 0.95 $\pm$ 0.04\\
\hspace{1em}$\text{Full-1Step } I_{n_i} \otimes {\mathbf{R}}_{\text{Fun}}$ & 0.96 $\pm$ 0.04\\
\hspace{1em}$\text{Full-1Step } I_{n_i} \otimes I_L$ & 0.95 $\pm$ 0.04\\
\hspace{1em}$\text{pffr (Wild)}$ & 1.00 $\pm$ 0.00\\
\hspace{1em}$\text{pffr ($z_{1-\alpha/2}$)}$ & 0.94 $\pm$ 0.04\\
\bottomrule
\end{tabular}}
\end{table}

\begin{table}[!h]
\centering
\caption{\label{tab:Li_pt_width} \footnotesize Relative pointwise CI width (mean $\pm$ SE) vs.\ pffr (Wild). SE$=0.00$ indicates a value $<0.01$. We denote $UB^{(r)}(s)$ and $LB^{(r)}(s)$ and $UB_\text{pffr}^{(r)}(s)/LB_\text{pffr}^{(r)}(s)$ as the upper/lower bounds of the CIs (at $s$) of the indicated method and pffr, respectively. Below we report the average ratio $\frac{1}{(q+1)|\mathcal{S}|}\sum_{r=0}^q\sum_{s \in \mathcal{S}}\frac{UB^{(r)}(s) -LB(s)^{(r)}}{UB^{(r)}_\text{pffr}(s) -LB^{(r)}_\text{pffr}(s)}$ across 300 simulation replicates. Values $<1$ indicate narrower 95\% CIs. Values with $*$ indicate extreme outliers (from poor estimates) were removed from the average of that cell to avoid skewing results. We indicate out-of-memory (35Gb) with symbol --- .}
\centering
\resizebox{\ifdim\width>\linewidth\linewidth\else\width\fi}{!}{
\fontsize{5}{7}\selectfont
\begin{tabular}[t]{>{\raggedright\arraybackslash}p{4.8cm}>{\raggedright\arraybackslash}p{2.1cm}}
\toprule
\multicolumn{1}{c}{ } & \multicolumn{1}{c}{Exchangeable} \\
\cmidrule(l{3pt}r{3pt}){2-2}
Method & Gaussian\\
\midrule
\addlinespace[0em]
\multicolumn{2}{l}{\textbf{$N = 25,\; n_i = 5$}}\\
\hspace{1em}\text{Li et. al } & 0.69 $\pm$ 0.01\\
\hspace{1em}$\text{1-Step } {\mathbf{R}}_{\text{Lon}} \otimes {\mathbf{R}}_{\text{Fun}}$ & 0.57 $\pm$ 0.00\\
\hspace{1em}$\text{1-Step } {\mathbf{R}}_{\text{Lon}} \otimes I_L$ & 0.60 $\pm$ 0.00\\
\hspace{1em}$\text{1-Step } I_{n_i} \otimes {\mathbf{R}}_{\text{Fun}}$ & 0.62 $\pm$ 0.00\\
\hspace{1em}$\text{1-Step } I_{n_i} \otimes I_L$ & 0.69 $\pm$ 0.00\\
\hspace{1em}$\text{Full-1Step } {\mathbf{R}}_{\text{Lon}} \otimes {\mathbf{R}}_{\text{Fun}}$ & 0.57 $\pm$ 0.00\\
\hspace{1em}$\text{Full-1Step } {\mathbf{R}}_{\text{Lon}} \otimes I_L$ & 0.60 $\pm$ 0.00\\
\hspace{1em}$\text{Full-1Step } I_{n_i} \otimes {\mathbf{R}}_{\text{Fun}}$ & 0.62 $\pm$ 0.00\\
\hspace{1em}$\text{Full-1Step } I_{n_i} \otimes I_L$ & 0.69 $\pm$ 0.00\\
\addlinespace[0.6em]
\multicolumn{2}{l}{\textbf{$N = 25,\; n_i = 100$}}\\
\hspace{1em}\text{Li et. al } & 0.75 $\pm$ 0.01\\
\hspace{1em}$\text{1-Step } {\mathbf{R}}_{\text{Lon}} \otimes {\mathbf{R}}_{\text{Fun}}$ & 0.68 $\pm$ 0.00\\
\hspace{1em}$\text{1-Step } {\mathbf{R}}_{\text{Lon}} \otimes I_L$ & 0.67 $\pm$ 0.00\\
\hspace{1em}$\text{1-Step } I_{n_i} \otimes {\mathbf{R}}_{\text{Fun}}$ & 0.82 $\pm$ 0.00\\
\hspace{1em}$\text{1-Step } I_{n_i} \otimes I_L$ & 0.84 $\pm$ 0.00\\
\hspace{1em}$\text{Full-1Step } {\mathbf{R}}_{\text{Lon}} \otimes {\mathbf{R}}_{\text{Fun}}$ & 0.68 $\pm$ 0.00\\
\hspace{1em}$\text{Full-1Step } {\mathbf{R}}_{\text{Lon}} \otimes I_L$ & 0.67 $\pm$ 0.00\\
\hspace{1em}$\text{Full-1Step } I_{n_i} \otimes {\mathbf{R}}_{\text{Fun}}$ & 0.82 $\pm$ 0.00\\
\hspace{1em}$\text{Full-1Step } I_{n_i} \otimes I_L$ & 0.84 $\pm$ 0.00\\
\addlinespace[0.6em]
\multicolumn{2}{l}{\textbf{$N = 50,\; n_i = 5$}}\\
\hspace{1em}\text{Li et. al } & 0.57 $\pm$ 0.00\\
\hspace{1em}$\text{1-Step } {\mathbf{R}}_{\text{Lon}} \otimes {\mathbf{R}}_{\text{Fun}}$ & 0.50 $\pm$ 0.00\\
\hspace{1em}$\text{1-Step } {\mathbf{R}}_{\text{Lon}} \otimes I_L$ & 0.51 $\pm$ 0.00\\
\hspace{1em}$\text{1-Step } I_{n_i} \otimes {\mathbf{R}}_{\text{Fun}}$ & 0.57 $\pm$ 0.00\\
\hspace{1em}$\text{1-Step } I_{n_i} \otimes I_L$ & 0.61 $\pm$ 0.00\\
\hspace{1em}$\text{Full-1Step } {\mathbf{R}}_{\text{Lon}} \otimes {\mathbf{R}}_{\text{Fun}}$ & 0.50 $\pm$ 0.00\\
\hspace{1em}$\text{Full-1Step } {\mathbf{R}}_{\text{Lon}} \otimes I_L$ & 0.51 $\pm$ 0.00\\
\hspace{1em}$\text{Full-1Step } I_{n_i} \otimes {\mathbf{R}}_{\text{Fun}}$ & 0.57 $\pm$ 0.00\\
\hspace{1em}$\text{Full-1Step } I_{n_i} \otimes I_L$ & 0.61 $\pm$ 0.00\\
\addlinespace[0.6em]
\multicolumn{2}{l}{\textbf{$N = 50,\; n_i = 100$}}\\
\hspace{1em}\text{Li et. al } & 0.63 $\pm$ 0.01\\
\hspace{1em}$\text{1-Step } {\mathbf{R}}_{\text{Lon}} \otimes {\mathbf{R}}_{\text{Fun}}$ & 0.58 $\pm$ 0.00\\
\hspace{1em}$\text{1-Step } {\mathbf{R}}_{\text{Lon}} \otimes I_L$ & 0.57 $\pm$ 0.00\\
\hspace{1em}$\text{1-Step } I_{n_i} \otimes {\mathbf{R}}_{\text{Fun}}$ & 0.74 $\pm$ 0.00\\
\hspace{1em}$\text{1-Step } I_{n_i} \otimes I_L$ & 0.75 $\pm$ 0.00\\
\hspace{1em}$\text{Full-1Step } {\mathbf{R}}_{\text{Lon}} \otimes {\mathbf{R}}_{\text{Fun}}$ & 0.58 $\pm$ 0.00\\
\hspace{1em}$\text{Full-1Step } {\mathbf{R}}_{\text{Lon}} \otimes I_L$ & 0.57 $\pm$ 0.00\\
\hspace{1em}$\text{Full-1Step } I_{n_i} \otimes {\mathbf{R}}_{\text{Fun}}$ & 0.74 $\pm$ 0.00\\
\hspace{1em}$\text{Full-1Step } I_{n_i} \otimes I_L$ & 0.75 $\pm$ 0.00\\
\addlinespace[0.6em]
\multicolumn{2}{l}{\textbf{$N = 500,\; n_i = 5$}}\\
\hspace{1em}\text{Li et. al } & 0.30 $\pm$ 0.00\\
\hspace{1em}$\text{1-Step } {\mathbf{R}}_{\text{Lon}} \otimes {\mathbf{R}}_{\text{Fun}}$ & 0.30 $\pm$ 0.00\\
\hspace{1em}$\text{1-Step } {\mathbf{R}}_{\text{Lon}} \otimes I_L$ & 0.30 $\pm$ 0.00\\
\hspace{1em}$\text{1-Step } I_{n_i} \otimes {\mathbf{R}}_{\text{Fun}}$ & 0.38 $\pm$ 0.00\\
\hspace{1em}$\text{1-Step } I_{n_i} \otimes I_L$ & 0.39 $\pm$ 0.00\\
\hspace{1em}$\text{Full-1Step } {\mathbf{R}}_{\text{Lon}} \otimes {\mathbf{R}}_{\text{Fun}}$ & 0.30 $\pm$ 0.00\\
\hspace{1em}$\text{Full-1Step } {\mathbf{R}}_{\text{Lon}} \otimes I_L$ & 0.30 $\pm$ 0.00\\
\hspace{1em}$\text{Full-1Step } I_{n_i} \otimes {\mathbf{R}}_{\text{Fun}}$ & 0.38 $\pm$ 0.00\\
\hspace{1em}$\text{Full-1Step } I_{n_i} \otimes I_L$ & 0.39 $\pm$ 0.00\\
\addlinespace[0.6em]
\multicolumn{2}{l}{\textbf{$N = 500,\; n_i = 100$}}\\
\hspace{1em}\text{Li et. al } & ~~~~~~---~~~~~~\\
\hspace{1em}$\text{1-Step } {\mathbf{R}}_{\text{Lon}} \otimes {\mathbf{R}}_{\text{Fun}}$ & 0.29 $\pm$ 0.00\\
\hspace{1em}$\text{1-Step } {\mathbf{R}}_{\text{Lon}} \otimes I_L$ & 0.29 $\pm$ 0.00\\
\hspace{1em}$\text{1-Step } I_{n_i} \otimes {\mathbf{R}}_{\text{Fun}}$ & 0.43 $\pm$ 0.00\\
\hspace{1em}$\text{1-Step } I_{n_i} \otimes I_L$ & 0.45 $\pm$ 0.00\\
\hspace{1em}$\text{Full-1Step } {\mathbf{R}}_{\text{Lon}} \otimes {\mathbf{R}}_{\text{Fun}}$ & 0.29 $\pm$ 0.00\\
\hspace{1em}$\text{Full-1Step } {\mathbf{R}}_{\text{Lon}} \otimes I_L$ & 0.29 $\pm$ 0.00\\
\hspace{1em}$\text{Full-1Step } I_{n_i} \otimes {\mathbf{R}}_{\text{Fun}}$ & 0.43 $\pm$ 0.00\\
\hspace{1em}$\text{Full-1Step } I_{n_i} \otimes I_L$ & 0.45 $\pm$ 0.00\\
\bottomrule
\end{tabular}}
\end{table}

\begin{table}[!h]
\centering
\caption{\label{tab:Li_time}  \footnotesize Computation time in seconds (mean $\pm$ SE) averaged across 300 simulation replicates. SE$=0.00$ indicates a value $<0.01$. We indicate out-of-memory (35Gb) with symbol --- . pffr (Wild) indicates CIs constructed for pffr coefficient estimates with quantiles obtained from a wild cluster bootstrap. pffr ($z_{1-\alpha/2}$) are standard Wald CIs constructed with Gaussian quantiles.}
\centering
\resizebox{\ifdim\width>\linewidth\linewidth\else\width\fi}{!}{
\fontsize{5}{7}\selectfont
\begin{tabular}[t]{>{\raggedright\arraybackslash}p{4.8cm}>{\raggedright\arraybackslash}p{2.1cm}}
\toprule
\multicolumn{1}{c}{ } & \multicolumn{1}{c}{Exchangeable} \\
\cmidrule(l{3pt}r{3pt}){2-2}
Method & Gaussian\\
\midrule
\addlinespace[0em]
\multicolumn{2}{l}{\textbf{$N = 25,\; n_i = 5$}}\\
\hspace{1em}\text{Li et. al } & 0.48 $\pm$ 0.01\\
\hspace{1em}$\text{1-Step } {\mathbf{R}}_{\text{Lon}} \otimes {\mathbf{R}}_{\text{Fun}}$ & 1.75 $\pm$ 0.01\\
\hspace{1em}$\text{1-Step } {\mathbf{R}}_{\text{Lon}} \otimes I_L$ & 4.75 $\pm$ 0.03\\
\hspace{1em}$\text{1-Step } I_{n_i} \otimes {\mathbf{R}}_{\text{Fun}}$ & 1.36 $\pm$ 0.01\\
\hspace{1em}$\text{1-Step } I_{n_i} \otimes I_L$ & 0.70 $\pm$ 0.00\\
\hspace{1em}$\text{Full-1Step } {\mathbf{R}}_{\text{Lon}} \otimes {\mathbf{R}}_{\text{Fun}}$ & 5.39 $\pm$ 0.16\\
\hspace{1em}$\text{Full-1Step } {\mathbf{R}}_{\text{Lon}} \otimes I_L$ & 11.40 $\pm$ 0.11\\
\hspace{1em}$\text{Full-1Step } I_{n_i} \otimes {\mathbf{R}}_{\text{Fun}}$ & 3.25 $\pm$ 0.15\\
\hspace{1em}$\text{Full-1Step } I_{n_i} \otimes I_L$ & 0.84 $\pm$ 0.01\\
\hspace{1em}$\text{pffr (Wild)}$ & 2.64 $\pm$ 0.01\\
\hspace{1em}$\text{pffr ($z_{1-\alpha/2}$)}$ & 0.40 $\pm$ 0.00\\
\addlinespace[0.6em]
\multicolumn{2}{l}{\textbf{$N = 25,\; n_i = 100$}}\\
\hspace{1em}\text{Li et. al } & 36.28 $\pm$ 0.38\\
\hspace{1em}$\text{1-Step } {\mathbf{R}}_{\text{Lon}} \otimes {\mathbf{R}}_{\text{Fun}}$ & 12.29 $\pm$ 0.20\\
\hspace{1em}$\text{1-Step } {\mathbf{R}}_{\text{Lon}} \otimes I_L$ & 9.38 $\pm$ 0.15\\
\hspace{1em}$\text{1-Step } I_{n_i} \otimes {\mathbf{R}}_{\text{Fun}}$ & 13.12 $\pm$ 0.21\\
\hspace{1em}$\text{1-Step } I_{n_i} \otimes I_L$ & 5.06 $\pm$ 0.06\\
\hspace{1em}$\text{Full-1Step } {\mathbf{R}}_{\text{Lon}} \otimes {\mathbf{R}}_{\text{Fun}}$ & 23.71 $\pm$ 0.43\\
\hspace{1em}$\text{Full-1Step } {\mathbf{R}}_{\text{Lon}} \otimes I_L$ & 16.62 $\pm$ 0.29\\
\hspace{1em}$\text{Full-1Step } I_{n_i} \otimes {\mathbf{R}}_{\text{Fun}}$ & 25.94 $\pm$ 0.45\\
\hspace{1em}$\text{Full-1Step } I_{n_i} \otimes I_L$ & 5.42 $\pm$ 0.08\\
\hspace{1em}$\text{pffr (Wild)}$ & 9.30 $\pm$ 0.13\\
\hspace{1em}$\text{pffr ($z_{1-\alpha/2}$)}$ & 5.17 $\pm$ 0.06\\
\addlinespace[0.6em]
\multicolumn{2}{l}{\textbf{$N = 50,\; n_i = 5$}}\\
\hspace{1em}\text{Li et. al } & 0.94 $\pm$ 0.01\\
\hspace{1em}$\text{1-Step } {\mathbf{R}}_{\text{Lon}} \otimes {\mathbf{R}}_{\text{Fun}}$ & 3.18 $\pm$ 0.01\\
\hspace{1em}$\text{1-Step } {\mathbf{R}}_{\text{Lon}} \otimes I_L$ & 7.66 $\pm$ 0.01\\
\hspace{1em}$\text{1-Step } I_{n_i} \otimes {\mathbf{R}}_{\text{Fun}}$ & 2.10 $\pm$ 0.01\\
\hspace{1em}$\text{1-Step } I_{n_i} \otimes I_L$ & 0.87 $\pm$ 0.00\\
\hspace{1em}$\text{Full-1Step } {\mathbf{R}}_{\text{Lon}} \otimes {\mathbf{R}}_{\text{Fun}}$ & 7.47 $\pm$ 0.04\\
\hspace{1em}$\text{Full-1Step } {\mathbf{R}}_{\text{Lon}} \otimes I_L$ & 17.50 $\pm$ 0.11\\
\hspace{1em}$\text{Full-1Step } I_{n_i} \otimes {\mathbf{R}}_{\text{Fun}}$ & 4.65 $\pm$ 0.02\\
\hspace{1em}$\text{Full-1Step } I_{n_i} \otimes I_L$ & 1.05 $\pm$ 0.00\\
\hspace{1em}$\text{pffr (Wild)}$ & 3.27 $\pm$ 0.01\\
\hspace{1em}$\text{pffr ($z_{1-\alpha/2}$)}$ & 0.93 $\pm$ 0.01\\
\addlinespace[0.6em]
\multicolumn{2}{l}{\textbf{$N = 50,\; n_i = 100$}}\\
\hspace{1em}\text{Li et. al } & 206.20 $\pm$ 1.89\\
\hspace{1em}$\text{1-Step } {\mathbf{R}}_{\text{Lon}} \otimes {\mathbf{R}}_{\text{Fun}}$ & 22.94 $\pm$ 0.39\\
\hspace{1em}$\text{1-Step } {\mathbf{R}}_{\text{Lon}} \otimes I_L$ & 17.91 $\pm$ 0.29\\
\hspace{1em}$\text{1-Step } I_{n_i} \otimes {\mathbf{R}}_{\text{Fun}}$ & 25.71 $\pm$ 0.40\\
\hspace{1em}$\text{1-Step } I_{n_i} \otimes I_L$ & 9.32 $\pm$ 0.12\\
\hspace{1em}$\text{Full-1Step } {\mathbf{R}}_{\text{Lon}} \otimes {\mathbf{R}}_{\text{Fun}}$ & 40.45 $\pm$ 0.75\\
\hspace{1em}$\text{Full-1Step } {\mathbf{R}}_{\text{Lon}} \otimes I_L$ & 29.65 $\pm$ 0.51\\
\hspace{1em}$\text{Full-1Step } I_{n_i} \otimes {\mathbf{R}}_{\text{Fun}}$ & 44.29 $\pm$ 0.72\\
\hspace{1em}$\text{Full-1Step } I_{n_i} \otimes I_L$ & 10.06 $\pm$ 0.14\\
\hspace{1em}$\text{pffr (Wild)}$ & 12.95 $\pm$ 0.19\\
\hspace{1em}$\text{pffr ($z_{1-\alpha/2}$)}$ & 7.46 $\pm$ 0.10\\
\addlinespace[0.6em]
\multicolumn{2}{l}{\textbf{$N = 500,\; n_i = 5$}}\\
\hspace{1em}\text{Li et. al } & 214.57 $\pm$ 1.71\\
\hspace{1em}$\text{1-Step } {\mathbf{R}}_{\text{Lon}} \otimes {\mathbf{R}}_{\text{Fun}}$ & 23.91 $\pm$ 0.03\\
\hspace{1em}$\text{1-Step } {\mathbf{R}}_{\text{Lon}} \otimes I_L$ & 62.74 $\pm$ 0.11\\
\hspace{1em}$\text{1-Step } I_{n_i} \otimes {\mathbf{R}}_{\text{Fun}}$ & 13.69 $\pm$ 0.02\\
\hspace{1em}$\text{1-Step } I_{n_i} \otimes I_L$ & 4.35 $\pm$ 0.01\\
\hspace{1em}$\text{Full-1Step } {\mathbf{R}}_{\text{Lon}} \otimes {\mathbf{R}}_{\text{Fun}}$ & 48.95 $\pm$ 0.29\\
\hspace{1em}$\text{Full-1Step } {\mathbf{R}}_{\text{Lon}} \otimes I_L$ & 125.31 $\pm$ 0.67\\
\hspace{1em}$\text{Full-1Step } I_{n_i} \otimes {\mathbf{R}}_{\text{Fun}}$ & 23.08 $\pm$ 0.12\\
\hspace{1em}$\text{Full-1Step } I_{n_i} \otimes I_L$ & 4.88 $\pm$ 0.01\\
\hspace{1em}$\text{pffr (Wild)}$ & 7.96 $\pm$ 0.02\\
\hspace{1em}$\text{pffr ($z_{1-\alpha/2}$)}$ & 4.48 $\pm$ 0.01\\
\addlinespace[0.6em]
\multicolumn{2}{l}{\textbf{$N = 500,\; n_i = 100$}}\\
\hspace{1em}\text{Li et. al } & ~~~~~~---~~~~~~\\
\hspace{1em}$\text{1-Step } {\mathbf{R}}_{\text{Lon}} \otimes {\mathbf{R}}_{\text{Fun}}$ & 163.34 $\pm$ 0.89\\
\hspace{1em}$\text{1-Step } {\mathbf{R}}_{\text{Lon}} \otimes I_L$ & 130.82 $\pm$ 0.52\\
\hspace{1em}$\text{1-Step } I_{n_i} \otimes {\mathbf{R}}_{\text{Fun}}$ & 208.25 $\pm$ 1.54\\
\hspace{1em}$\text{1-Step } I_{n_i} \otimes I_L$ & 55.85 $\pm$ 0.37\\
\hspace{1em}$\text{Full-1Step } {\mathbf{R}}_{\text{Lon}} \otimes {\mathbf{R}}_{\text{Fun}}$ & 263.90 $\pm$ 3.09\\
\hspace{1em}$\text{Full-1Step } {\mathbf{R}}_{\text{Lon}} \otimes I_L$ & 207.05 $\pm$ 1.16\\
\hspace{1em}$\text{Full-1Step } I_{n_i} \otimes {\mathbf{R}}_{\text{Fun}}$ & 281.65 $\pm$ 5.56\\
\hspace{1em}$\text{Full-1Step } I_{n_i} \otimes I_L$ & 59.26 $\pm$ 0.47\\
\hspace{1em}$\text{pffr (Wild)}$ & 60.79 $\pm$ 0.80\\
\hspace{1em}$\text{pffr ($z_{1-\alpha/2}$)}$ & 48.13 $\pm$ 0.35\\
\bottomrule
\end{tabular}}
\end{table}

\clearpage

\section{Additional Application Analyses} \label{app:whisker_analysis}
\subsection{Background for Calcium Imaging Data Analysis} \label{app:background_calcimImaging}

Since calcium imaging and electrophysiology record the activity of many neurons, and recordings are collected in several animals, analyses differ in how the target population is defined and the nesting of neurons within animal is modeled. For example, the \textit{neural pseudo-population} strategy, as we refer to it, fits a single model to a dataset that pools neurons across animals (e.g. see Figures 1, 3, and 3 of \cite{WILLMORE20233541, calImag, roesch2009ventral}, respectively). This conceptualizes neurons, both within and across animals, as exchangeable given covariates and model parameters. The \textit{animal-specific neural population} strategy, as we refer to it, summarizes the collection of neurons separately in each animal, and then summarizes the animal-specific statistics with a secondary pooled test statistic (e.g. see Figs 2H in \citep{cal_imag_photometry}, Fig 1G, 1I in \cite{nature_sl}). The animal-level summary is usually a model fit to, or an average of,  the activity of all neurons recorded from that animal (e.g. see Figs 2G in \citep{cal_imag_photometry}). This 
ignores uncertainty in the animal-level statistics when estimating a pooled test statistic. A third approach estimates a test statistic on data from each neuron separately and then fits a model to those statistics (e.g. see Figures 1K, 2E-H of \cite{nature_sl}). 
The pooled test ignores uncertainty in the neuron-level statistics, and models the neuron-level statistics estimated on data from neurons in the same animal as independent. 

Analysis strategies differ in how the longitudinal structure of experiments are modeled. One strategy is to treat the neural responses of cluster $i$ –– however  defined –– as exchangeable across trials given model parameters 
(e.g. see Figure 3E of \cite{science_da} for an example from photometry).
A second strategy averages the response across trials and analyzes those trial-averaged measures (e.g. see Figure 2 of \cite{coddington}).
This discards longitudinal information. 
A third strategy accounts for the longitudinal structure with random effects \cite{loewinger2025}, yielding conditional estimates in non-Gaussian outcome settings. 
Analyses vary in how the densely-sampled neural time-series of each trial is conceptualized. 
Arguably, the most common strategy analyzes univariate summaries of the time-series (e.g. a trial firing rate for each neuron $i$ and trial $j$: $\bar{Y}_{i,j} = \frac{1}{| \mathcal{S}|} \sum_{s \in \mathcal{S}} Y_{i,j}(s)$) pooled across animals and/or trials
(e.g. see Figures 1, 3, and 3 of \cite{WILLMORE20233541, calImag, roesch2009ventral}, respectively). This strategy can obscure behavior–brain associations and substantially change scientific conclusions because it discards timing information about how covariate-outcome relationships evolve across trial timepoints \citep{loewinger2025}.
A second strategy is to retain the time-series structure, but model covariate-neural activity associations as constant across trial timepoints for each neuron. For example, \cite{nature_sl} has the goal of identifying neurons associated with a particular behavior over time (e.g. see Figures 1-2). To that effect, they 
fit a Pearson correlation between behavior and neural activity in each cell separately. This is comparable 
to the linear regression $\mathbb{E}[\boldsymbol{Y}_{i,j}(s) \mid {x}_{i,j}(s)] = \gamma_0^{(i)} +\gamma_1^{(i)} x_{i,j}(s)$ (e.g. see figure 1K in \cite{nature_sl}). This models the covariate–outcome relationship as constant across trial timepoints. 
 A third strategy proposed in \citep{loewinger2025} addresses this by modeling each trial as a functional outcome. They demonstrate this  strategy on fiber photometry data, a recording technique that yields a single neural signal per animal. It would be desirable to apply longitudinal FDA strategies to other recording techniques, such as calcium imaging or electrophysiology data. This has not been done, to our knowledge, and we believe this is primarily due to the fact that those modalities record potentially tens of thousands of signals per animal, and existing approaches may not scale.
 
\subsection{Main Text Analyses with Different Correlation Structures} \label{app:neuro_analysis_supp}
{Here we show fGEE estimates fitted with a range of correlation structures to illustrate that the conclusions from the analysis in main text Figures~\ref{fig:neuro_fig}-\ref{fig:stim_fig} are not sensitive to the working correlation structure.}
\begin{figure*}[!t]
	\begin{subfigure}[t]{0.49\textwidth}
\includegraphics[width=0.95 \linewidth]{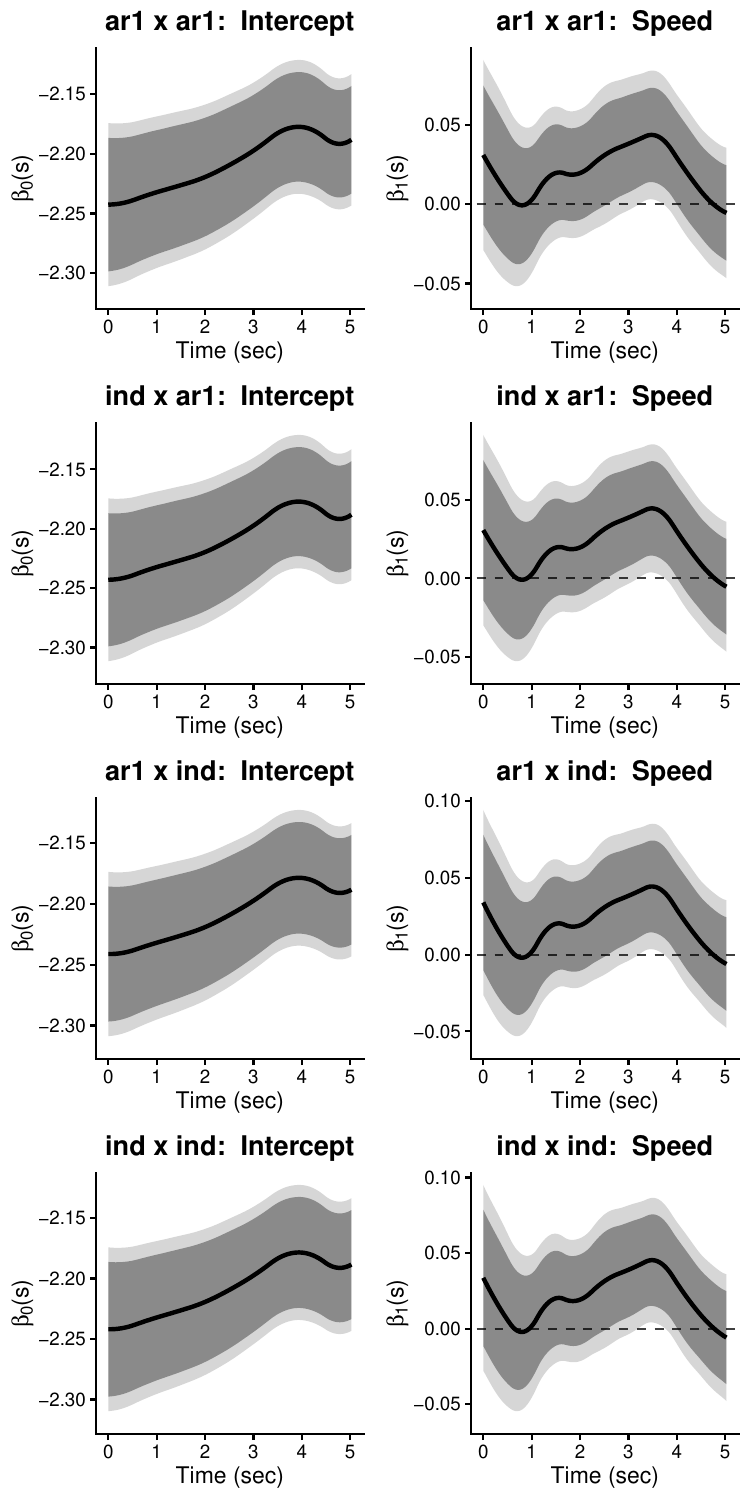}
\caption{\footnotesize \textbf{Speed–neural activity association.}}
	\end{subfigure}
	\begin{subfigure}[t]{0.49\textwidth}
\includegraphics[width=0.95 \linewidth]{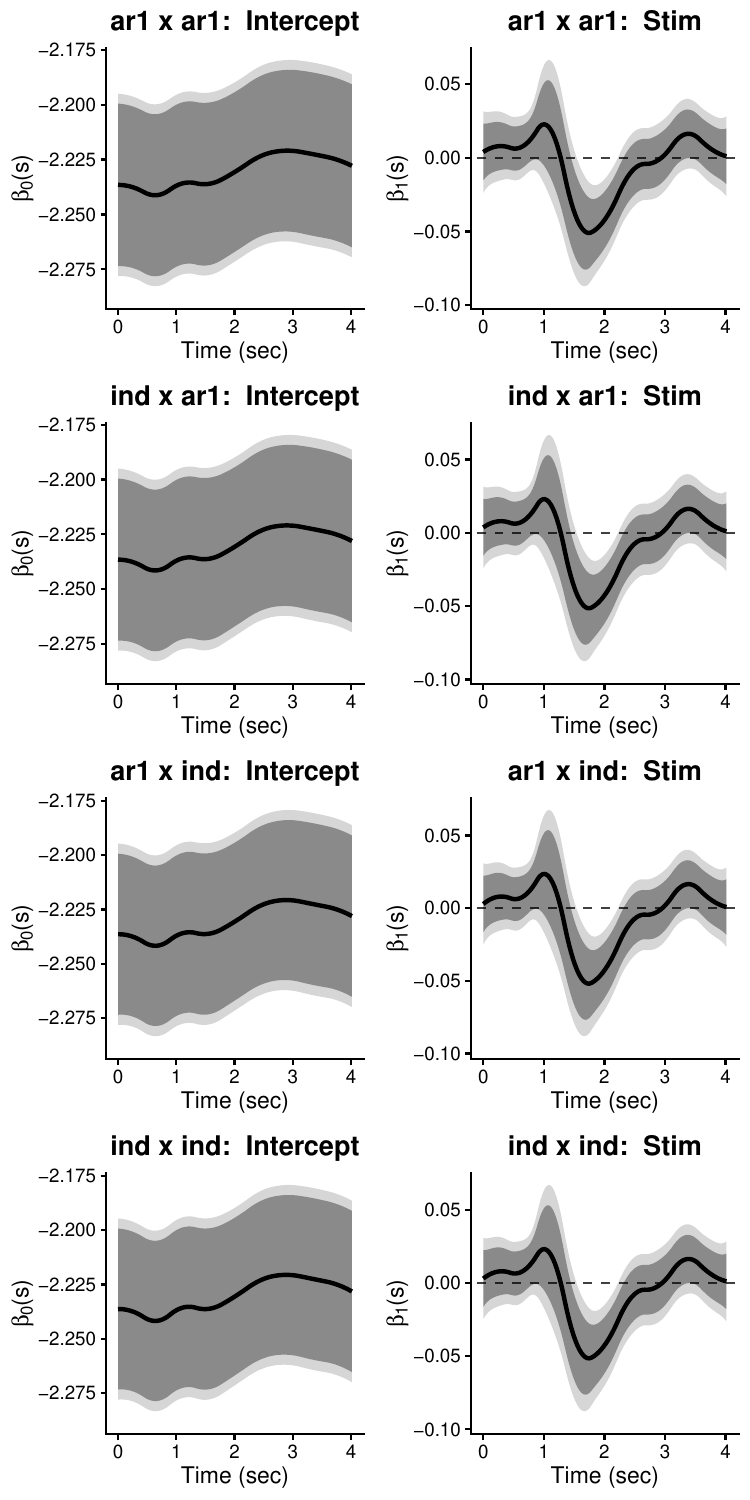}
\caption{\footnotesize 
\textbf{Whisker stimulation effect.}
}
	\end{subfigure}
\caption{\footnotesize
Functional coefficient one-step estimates with a working covariance $\mathbb{V} = \mathbb{V}_{Lon} \otimes \mathbb{V}_{Fun}$. ``ar1 $\times$ ind'' indicates that $\mathbb{V}_{Lon}$ has an AR1 structure, and $\mathbb{V}_{Fun}$ has a working independence correlation structure. (a) Fit times (sec) in order from top to bottom: 349.84, 394.15, 394.61, 188.86. (b) Fit times (sec) in order from top to bottom: 41.48, 36.64, 91.95, 16.65. } 
\end{figure*}

\end{appendices}

\setstretch{1.75} 
\bibliographystyle{biom}
\bibliography{refs}
\end{document}